\keywords{Session Types, Failure Handling, Concurrency, Session Calculus}
\DeclareFontFamily{OMX}{MnSymbolE}{}
\DeclareSymbolFont{MnLargeSymbols}{OMX}{MnSymbolE}{m}{n}
\DeclareFontShape{OMX}{MnSymbolE}{m}{n}{
    <-6>  MnSymbolE5
   <6-7>  MnSymbolE6
   <7-8>  MnSymbolE7
   <8-9>  MnSymbolE8
   <9-10> MnSymbolE9
  <10-12> MnSymbolE10
  <12->   MnSymbolE12
}{}
\DeclareFontShape{OMX}{MnSymbolE}{b}{n}{
    <-6>  MnSymbolE-Bold5
   <6-7>  MnSymbolE-Bold6
   <7-8>  MnSymbolE-Bold7
   <8-9>  MnSymbolE-Bold8
   <9-10> MnSymbolE-Bold9
  <10-12> MnSymbolE-Bold10
  <12->   MnSymbolE-Bold12
}{}
\let\llangle\@undefined
\let\rrangle\@undefined
\DeclareMathDelimiter{\llangle}{\mathopen}%
                     {MnLargeSymbols}{'164}{MnLargeSymbols}{'164}
\DeclareMathDelimiter{\rrangle}{\mathclose}%
                     {MnLargeSymbols}{'171}{MnLargeSymbols}{'171}
\newtcolorbox{myframe}[1][]{
  enhanced,
  arc=0pt,
  outer arc=0pt,
  colback=white,
  boxrule=0.5pt,
  boxsep=0mm,
  left=1mm,
  right=1mm,
  top=0.5mm,
  bottom=0.5mm,
  #1
}
\lstdefinelanguage{Scribble}{%
  basicstyle=\ttfamily,
  stringstyle=\color{Blue},
  showstringspaces=false,
  keywords={nested,new,calls,and,as,at,by,catches,choice,continue,do,from,global,import,instantiates,interruptible,local,module,or,par,protocol,rec,role,sig,throws,to,type,with,int,aux,reliable,crash},
  morestring=[b]",
  morestring=[b]',
  morecomment=[l][\color{greencomments}]{//},
}
\lstdefinelanguage{nuScr}{%
  basicstyle=\footnotesize\ttfamily,
  stringstyle=\color{Blue},
  showstringspaces=false,
  keywords={
    nested,new,calls,and,as,at,by,catches,choice,continue,do,from,global,import,instantiates,interruptible,local,module,or,par,protocol,rec,role,sig,throws,to,type,with,int,aux,
    safe
  },
  morestring=[b]",
  morestring=[b]',
  morecomment=[l][\color{greencomments}]{//},
  morecomment=[s][\color{magenta}]{(*}{*)},
}
\lstdefinelanguage{effpi}{
  keywords=[1]{
    case,class,sealed,abstract,object,extends,type,def,val,if,else,new,var,match
  },
  keywords=[2]{
    InChan,OutChan,RecVar,Rec,Out,In,InErr,Loop,
  },
  keywords=[3]{
    rec,send,receive,receiveErr,eval,par,Channel
  },
  keywordstyle=[1]{\color{blue}},
  keywordstyle=[2]{\color{ImperialIris}}, %
  keywordstyle=[3]{\color{OliveGreen}},
  otherkeywords={=>,.type,<:,>>:},
  morecomment=[l][\color{darkgray}]{//},
}
\definecolor{ImperialBlue}{HTML}{003E74}
\definecolor{ImperialDarkGreen}{HTML}{02893B}
\definecolor{ImperialTangerine}{HTML}{EC7300}
\definecolor{ImperialIris}{HTML}{751E66}
\definecolor{RYB1}{RGB}{141, 211, 199}
\definecolor{RYB2}{RGB}{255, 255, 179}
\definecolor{RYB3}{RGB}{190, 186, 218}
\definecolor{RYB4}{RGB}{251, 128, 114}
\definecolor{RYB5}{RGB}{128, 177, 211}
\definecolor{RYB6}{RGB}{253, 180, 98}
\definecolor{RYB7}{RGB}{179, 222, 105}
\tikzset{
  >=stealth,
  node distance=2cm,
  every state/.style={thick, fill=gray!10},
  initial text=$ $,
}
\pgfplotsset{
  compat=1.8,
  /pgfplots/bar cycle list/.style={/pgfplots/cycle list={%
    {brown!60!black,fill=brown!30!white,mark=none},
    {red,fill=red!30!white,mark=none},
    {blue,fill=blue!30!white,mark=none},
    {black,fill=gray,mark=none},
    }
  },
}
\newcolumntype{L}{>{$}l<{$}}
\newcolumntype{C}{>{$}c<{$}}
\newcolumntype{P}[1]{>{\centering\arraybackslash$}p{#1}<{$}}
\Crefname{section}{\S\!}{\S\!}%
\Crefname{subsection}{\S\!}{\S\!}%
\Crefname{subsubsection}{\S\!}{\S\!}%
\Crefname{appendix}{Appendix \S\!}{Appendix \S\!}
\Crefname{definition}{Def.\@}{Defs.\@}%
\Crefname{figure}{Fig.\@}{Figs.\@}%
\Crefname{example}{Ex.\@}{Exs.\@}%
\Crefname{corollary}{Cor.\@}{Cors.\@}%
\Crefname{theorem}{Thm.\@}{Thms.\@}%
\Crefname{proposition}{Prop.\@}{Props.\@}%
\Crefname{lemma}{Lem.\@}{Lems.\@}
\Crefname{equation}{Eq.\@}{Eqs.\@}
\crefname{section}{\S\!}{\S\!}%
\crefname{subsection}{\S\!}{\S\!}%
\crefname{subsubsection}{\S\!}{\S\!}%
\crefname{appendix}{Appendix \S\!}{Appendix \S\!}
\crefname{definition}{Def.\@}{Defs.\@}%
\crefname{figure}{Fig.\@}{Figs.\@}%
\crefname{example}{Ex.\@}{Exs.\@}%
\crefname{corollary}{Cor.\@}{Cors.\@}%
\crefname{theorem}{Thm.\@}{Thms.\@}%
\crefname{proposition}{Prop.\@}{Props.\@}%
\crefname{lemma}{Lem.\@}{Lems.\@}
\crefname{equation}{Eq.\@}{Eqs.\@}
\newif\ifdraft%
\newcommand{\ifempty}[3]{%
  \ifthenelse{\isempty{#1}}{#2}{#3}%
}%
\newcommand{\makeFunc}[2]{%
  \expandafter\newcommand\csname #1\endcsname[1]{%
    \operatorname{#2}\!\left({##1}\right)%
  }%
}%
\newcommand{\dom}[1]{{\color{black}\operatorname{dom}\!\left({#1}\right)}}%
\newcommand{\fv}[1]{\operatorname{fv}\!\left({#1}\right)}%
\newcommand{\unfoldOne}[1]{%
  {\color{black}\operatorname{unf}\!\left({#1}\right)}}%
\newcommand{\notImpliedBy}{\mathrel{{\kern 1em}{\not{\kern -1em}\impliedby}}}%
\newcommand{\coloncolonequals}{\Coloneqq}%
\newcommand{\bnfdef}{\coloncolonequals}%
\newcommand{\bnfsep}{\mathbin{\;\big|\;}}%
\newcommand{\Erlang}[0]{\textsc{Erlang}\xspace}
\def\aka{a.k.a.\@\xspace}%
\def\cf{cf.\@\xspace}%
\def\eg{e.g.\@\xspace}%
\def\ie{i.e.\@\xspace}%
\def\wrt{w.r.t.\@\xspace}%
\def\etal{\emph{et al.}\@\xspace}%
\definecolor{ruleColor}{rgb}{0.1, 0.3, 0.1}%
\newcommand{\inferrule}[1]{{\color{ruleColor}\text{\upshape\textsc{\scriptsize[#1]}}}}%
\newcommand{\inference}[3][]{\infer[\ifempty{#1}{}{\inferrule{#1}}]{#3}{#2}}%
\newcommand{\cinference}[3][]{\infer=[\ifempty{#1}{}{\inferrule{#1}}]{#3}{#2}}%
\newcommand{\setenum}[1]{\mathord{{\color{black}\left\{#1\right\}}}}%
\newcommand{\setcomp}[2]{\mathord{%
  {\color{black}\left\{{#1} \,\middle|\, {#2}\right\}}}}%
\newcommand{\predP}[1][]{\ifempty{#1}{\varphi}{\varphi_{#1}}}%
\newcommand{\predPApp}[2][]{\ifempty{#1}{\predP}{\predP[{#1}]}\!\left({#2}\right)}%
\newcommand{\bind}[2]{\nicefrac{#2}{#1}}%
\newcommand{\substenum}[1]{\mathord{\left\{{#1}\right\}}}%
\newcommand{\subst}[2]{\substenum{\bind{#1}{#2}}}%
\definecolor{hlColor}{rgb}{0.65, 1.0, 0.65}%
\newcommand{\highlight}[2][\highlightColour]{\mathchoice%
  {\setlength{\fboxsep}{0pt}\colorbox{#1}{$\displaystyle#2$}}%
  {\setlength{\fboxsep}{0pt}\colorbox{#1}{$\textstyle#2$}}%
  {\setlength{\fboxsep}{0pt}\colorbox{#1}{$\scriptstyle#2$}}%
  {\setlength{\fboxsep}{0pt}\colorbox{#1}{$\scriptscriptstyle#2$}}}%
\newcommand{\runtime}[2][\runtimeColour]{\mathchoice%
  {\setlength{\fboxsep}{0pt}\colorbox{#1}{$\displaystyle#2$}}%
  {\setlength{\fboxsep}{0pt}\colorbox{#1}{$\textstyle#2$}}%
  {\setlength{\fboxsep}{0pt}\colorbox{#1}{$\scriptstyle#2$}}%
  {\setlength{\fboxsep}{0pt}\colorbox{#1}{$\scriptscriptstyle#2$}}}%
\newcommand{\lbbar}{\{\kern-0.2em|}
\newcommand{\rbbar}{|\kern-0.2em\}}
\newcommand{\tyGroundSet}{\stFmt{\mathcal{B}}}
\newcommand{\tyGround}[1][]{\stFmt{\ifempty{#1}{B}{B_{#1}}}}%
\newcommand{\tyGroundi}[1][]{\stFmt{\ifempty{#1}{B'}{B'_{#1}}}}%
\newcommand{\tyGroundii}[1][]{\stFmt{\ifempty{#1}{B''}{B''_{#1}}}}%
\newcommand{\tyGroundiii}[1][]{\stFmt{\ifempty{#1}{B'''}{B'''_{#1}}}}%
\newcommand{\tyBool}{\stFmtC{bool}}%
\newcommand{\tyUnit}{\stFmtC{unit}}%
\newcommand{\tyInt}{\stFmtC{int}}%
\newcommand{\tyString}{\stFmtC{str}}%
\newcommand{\tyNat}{\stFmtC{nat}}%
{\centerline{\bf --- Begin Copied From Previous Paper ---} \hrule}%
{\hrule \centerline{\bf --- End Copied From Previous Paper ---}}%
  \hrule\vspace{1mm}}%
\hrule\vspace{1mm}\centerline{\bf --- End Discussion ---}}%
\definecolor{roleColor}{rgb}{0.5, 0.0, 0.0}%
\newcommand{\roleCol}[1]{{\color{roleColor}#1}}%
\newcommand{\roleSet}{\roleCol{\mathcal{R}}}%
\newcommand{\roleFmt}[1]{\ensuremath{{\boldsymbol{\roleCol{\mathtt{#1}}}}}\xspace}%
\newcommand{\roleCrashedSym}{\roleCol{\lightning}}
\newcommand{\roleMaybeCrashedSym}{\roleCol{\dagger}}
\newcommand{\roleP}[1][]{%
  \ifempty{#1}{{\color{roleColor}\roleFmt{p}}}{{\color{roleColor}\roleFmt{p}_{#1}}}%
}%
\newcommand{\rolePCrashed}[1][]{%
  \ifempty{#1}{{\color{roleColor}\roleFmt{p^{\roleCrashedSym}}}}{{\color{roleColor}\roleFmt{p^{\roleCrashedSym}}_{#1}}}%
}%
\newcommand{\rolePMaybeCrashed}[1][]{%
  \ifempty{#1}{{\color{roleColor}\roleFmt{p^{\roleMaybeCrashedSym}}}}{{\color{roleColor}\roleFmt{p^{\roleMaybeCrashedSym}}_{#1}}}%
}%
\newcommand{\roleQ}[1][]{%
  \ifempty{#1}{{\color{roleColor}\roleFmt{q}}}{{\color{roleColor}\roleFmt{q}_{#1}}}%
}%
\newcommand{\roleQCrashed}[1][]{%
  \ifempty{#1}{{\color{roleColor}\roleFmt{q^{\roleCrashedSym}}}}{{\color{roleColor}\roleFmt{q^{\roleCrashedSym}}_{#1}}}%
}%
\newcommand{\roleQMaybeCrashed}[1][]{%
  \ifempty{#1}{{\color{roleColor}\roleFmt{q^{\roleMaybeCrashedSym}}}}{{\color{roleColor}\roleFmt{q^{\roleMaybeCrashedSym}}_{#1}}}%
}%
\newcommand{\roleR}[1][]{%
  \ifempty{#1}{{\color{roleColor}\roleFmt{r}}}{{\color{roleColor}\roleFmt{r}_{\!#1}}}%
}%
\newcommand{\roleRMaybeCrashed}[1][]{%
  \ifempty{#1}{{\color{roleColor}\roleFmt{r^{\roleMaybeCrashedSym}}}}{{\color{roleColor}\roleFmt{r^{\roleMaybeCrashedSym}}_{#1}}}%
}%
\newcommand{\roleS}[1][]{%
  \ifempty{#1}{{\color{roleColor}\roleFmt{s}}}{{\color{roleColor}\roleFmt{s}_{\!#1}}}%
}%
\newcommand{\roleSMaybeCrashed}[1][]{%
  \ifempty{#1}{{\color{roleColor}\roleFmt{s^{\roleMaybeCrashedSym}}}}{{\color{roleColor}\roleFmt{s^{\roleMaybeCrashedSym}}_{#1}}}%
}%
\newcommand{\roleT}[1][]{%
  \ifempty{#1}{{\color{roleColor}\roleFmt{t}}}{{\color{roleColor}\roleFmt{t}_{\!#1}}}%
}%
\newcommand{\roleTCrashed}[1][]{%
  \ifempty{#1}{{\color{roleColor}\roleFmt{t^{\roleCrashedSym}}}}{{\color{roleColor}\roleFmt{t^{\roleCrashedSym}}_{#1}}}%
}%
\newcommand{\roleTMaybeCrashed}[1][]{%
  \ifempty{#1}{{\color{roleColor}\roleFmt{t^{\roleMaybeCrashedSym}}}}{{\color{roleColor}\roleFmt{t^{\roleMaybeCrashedSym}}_{#1}}}%
}%
\newcommand{\roleU}[1][]{%
  \ifempty{#1}{{\color{roleColor}\roleFmt{u}}}{{\color{roleColor}\roleFmt{u}_{\!#1}}}%
}%
\newcommand{\roleUMaybeCrashed}[1][]{%
  \ifempty{#1}{{\color{roleColor}\roleFmt{u^{\roleMaybeCrashedSym}}}}{{\color{roleColor}\roleFmt{u^{\roleMaybeCrashedSym}}_{#1}}}%
}%
\newcommand{\rolesEmpty}[0]{\gtFmt{\emptyset}}
\definecolor{gtColor}{rgb}{0.43, 0.21, 0.1}%
\newcommand{\gtFmt}[1]{\ensuremath{{\color{gtColor}#1}}\xspace}%
\newcommand{\gtMsgFmt}[1]{\gtFmt{\labFmt{#1}}}%
\newcommand{\gtLabFmt}[1]{\gtFmt{\labFmt{#1}}}%
\newcommand{\gtLab}[1][]{%
  \ifempty{#1}{\gtMsgFmt{m}}{{\color{gtColor}\gtMsgFmt{m}_{#1}}}%
}%
\newcommand{\gtLabi}[1][]{%
  \ifempty{#1}{\gtMsgFmt{m'}}{{\color{gtColor}\gtMsgFmt{m'}_{#1}}}%
}%
\newcommand{\gtLabProp}[0]{\gtLabFmt{req}}
\newcommand{\gtLabCommit}[0]{\gtLabFmt{accept}}
\newcommand{\gtLabVeto}[0]{\gtLabFmt{reject}}
\newcommand{\gtLabAbort}[0]{\gtLabFmt{abort}}
\newcommand{\gtLabSucc}[0]{\gtLabFmt{enact}}
\newcommand{\gtLabPromote}[0]{\gtLabFmt{promote}}
\newcommand{\gtLabNext}[0]{\gtLabFmt{retry}}
\newcommand{\gtSeq}{\mathbin{\gtFmt{.}}}%
\newcommand{\gtCommRaw}[3]{%
  \gtFmt{%
    {#1} {\to} {#2}{:}%
    \left\{%
      {#3}%
    \right\}%
  }%
}%
\newcommand{\gtCommRawNB}[3]{%
  \gtFmt{%
    {#1} {\to} {#2}{:}%
      {#3}%
  }%
}%
\newcommand{\gtComm}[6]{%
  \gtFmt{%
    \gtCommRaw{#1}{#2}{%
      \gtCommChoice{#4}{#5}{#6}%
    }_{#3}%
  }%
}%
\newcommand{\gtCommSmall}[6]{%
  \gtFmt{%
    \gtCommRaw{#1}{#2}{%
      \gtCommChoiceSmall{#4}{#5}{#6}%
    }_{#3}%
  }%
}%
\newcommand{\gtCommSingle}[5]{%
  \gtFmt{%
    {#1} {\to} {#2}{:}%
    \gtCommChoice{#3}{#4}{#5}%
  }%
}%
\newcommand{\gtCommTransitSingle}[5]{%
  \gtFmt{%
    {#1} {\rightsquigarrow} {#2}{:}%
    \gtCommChoice{#3}{#4}{#5}%
  }%
}%
\newcommand{\gtCommTransitRaw}[4]{%
  \gtFmt{%
    {#1} {\rightsquigarrow} {#2}{:}{#4}%
    \left\{%
      {#3}%
    \right\}%
  }%
}%
\newcommand{\gtCommTransit}[7]{%
  \gtFmt{%
    \gtCommTransitRaw{#1}{#2}{%
      \gtCommChoice{#4}{#5}{#6}%
    }{#7}_{#3}%
  }%
}%
\newcommand{\gtCommSingleErr}[6]{%
  \gtFmt{%
    {#1} {\to} {#2}{:}%
    \{\gtCommChoice{#3}{#4}{#5},\quad \gtErrKFmt{#6}\}%
  }%
}%
\newcommand{\gtCommChoice}[3]{%
  \gtFmt{%
    \gtMsgFmt{#1}\ifempty{#2}{}{({#2})}%
    \ifempty{#3}{}{\vphantom{x} \!\gtSeq\! {#3}}%
  }%
}%
\newcommand{\gtCommChoiceSmall}[3]{%
  \gtFmt{%
    \gtMsgFmt{#1}\ifempty{#2}{}{({#2})}%
    \ifempty{#3}{}{\vphantom{x} \!\gtSeq\! {#3}}%
  }%
}%
\newcommand{\gtEnd}{\gtFmt{\mathsf{end}}}%
\newcommand{\gtRec}[2]{\gtFmt{\mu{#1}.{#2}}}%
\newcommand{\gtRecVarBase}{\gtFmt{\mathbf{t}}}%
\newcommand{\gtRoles}[1]{{\color{roleColor} \operatorname{roles}(\gtFmt{#1})}}%
\newcommand{\gtRolesCrashed}[1]{{\color{roleColor}\operatorname{roles}^{\roleCrashedSym}(\gtFmt{#1})}}%
\newcommand{\gtProj}[3][]{%
  {\color{stColor}\gtFmt{#2} \ifempty{#1}{\upharpoonright}{\upharpoonright_{#1}} \roleFmt{#3}}%
}%
\newcommand{\gtStopLab}[0]{\gtMsgFmt{\mathsf{crash}}}
\newcommand{\gtCrashLab}[0]{\gtMsgFmt{\mathsf{crash}}}
\newcommand{\gtWithCrashedRoles}[2]{\langle{#1};{#2}\rangle}
\newcommand{\gtCrashRole}[2]{\gtFmt{#1}\gtFmt{\lightning}\roleFmt{#2}}
\newcommand{\gtMove}[2][\phantom{\stEnvAnnotGenericSym}]{\xrightarrow{#1}_{#2}} %
\newcommand{\gtMoveStar}[1][]{\ifempty{#1}{\gtMove[]{}^{\!\!\!*}}{\gtMove[]{#1}^{\!*}}} %
\newcommand{\iruleGtMove}[1]{GR-{#1}}
\newcommand{\iruleGtMoveCrash}[0]{\iruleGtMove{$\lightning$}}
\newcommand{\iruleGtMoveOut}[0]{\iruleGtMove{$\oplus$}}
\newcommand{\iruleGtMoveIn}[0]{\iruleGtMove{$\&$}}
\newcommand{\iruleGtMoveRec}[0]{\iruleGtMove{$\mu$}}
\newcommand{\iruleGtMoveCrDe}[0]{\iruleGtMove{$\odot$}}
\newcommand{\iruleGtMoveOrph}[0]{\iruleGtMove{$\lightning\gtLab$}}
\newcommand{\iruleGtMoveCtx}[0]{\iruleGtMove{Ctx-i}}
\newcommand{\iruleGtMoveCtxi}[0]{\iruleGtMove{Ctx-ii}}
\newcommand{\labFmt}[2][]{\ensuremath{\ifempty{#1}{\mathtt{#2}}{\mathtt{#2}\textsubscript{#1}}}\xspace}%
\newcommand{\labSet}{\gtFmt{\mathcal{M}}}
\newcommand{\gtErrKFmt}[1]{\gtStopLab.{#1}}
\definecolor{stColor}{rgb}{0, 0, 0.9}%
\newcommand{\stFmt}[1]{\ensuremath{{\color{stColor}#1}}\xspace}%
\newcommand{\stFmtC}[1]{\stFmt{\operatorname{#1}}}
\newcommand{\stIn}[4]{\ifempty{#1}{}{\roleFmt{#1}}\stFmt{\&\left\{{#2}\ifempty{#3}{}{({#3})} \ifempty{#4}{}{\stSeq #4}\right\}}}%
\newcommand{\stInNB}[4]{\ifempty{#1}{}{\roleFmt{#1}}\stFmt{\&{#2}\ifempty{#3}{}{({#3})} \ifempty{#4}{}{\stSeq #4}}}%
\newcommand{\stOut}[3]{\ifempty{#1}{}{\roleFmt{#1}}\stFmt{\oplus{#2}\ifempty{#3}{}{({#3})}}}%
\newcommand{\stChoice}[2]{\stLabFmt{#1}\ifempty{#2}{}{\stFmt{({#2})}}}%
\newcommand{\stSeq}{\mathbin{\!\stFmt{.}\!}}%
\newcommand{\stIntC}{\mathbin{\stFmt{\oplus}}}%
\newcommand{\stIntSum}[3]{\roleFmt{#1}\stFmt{\oplus\!\left\{#3\right\}_{#2}}}%
\newcommand{\stIntSumNB}[3]{\roleFmt{#1}\stFmt{\oplus\!#3_{#2}}}%
\newcommand{\stExtC}{\mathbin{\stFmt{\&}}}%
\newcommand{\stExtSum}[3]{\roleFmt{#1}\stFmt{\&\!\left\{#3\right\}_{#2}}}%
\newcommand{\stExtSumNB}[3]{\roleFmt{#1}\stFmt{\&\!#3_{#2}}}%
\newcommand{\stRec}[2]{\stFmt{\mu{#1}.{#2}}}%
\newcommand{\stEnd}{\stFmt{\mathsf{end}}}%
\newcommand{\stStop}{\stFmt{\mathsf{stop}}}%
\newcommand{\stLabFmt}[1]{\stFmt{\labFmt{#1}}}%
\newcommand{\stLab}[1][]{%
  \ifempty{#1}{\stLabFmt{m}}{\stLabFmt{m}_{{\color{stColor}#1}}}
}%
\newcommand{\stLabi}[1][]{%
  \ifempty{#1}{\stLabFmt{m'}}{\stLabFmt{m}'_{{\color{stColor}#1}}}
}%
\newcommand{\stLabii}[1][]{%
  \ifempty{#1}{\stLabFmt{m''}}{\stLabFmt{m}''_{{\color{stColor}#1}}}
}%
\newcommand{\stCrashLab}[0]{\stLabFmt{\mathsf{crash}}}
\newcommand{\stLabOK}[0]{\stLabFmt{ok}}
\newcommand{\stLabKO}[0]{\stLabFmt{ko}}
\newcommand{\stLabProp}[0]{\stLabFmt{req}}
\newcommand{\stLabCommit}[0]{\stLabFmt{accept}}
\newcommand{\stLabVeto}[0]{\stLabFmt{reject}}
\newcommand{\stLabAbort}[0]{\stLabFmt{abort}}
\newcommand{\stLabSucc}[0]{\stLabFmt{enact}}
\newcommand{\stLabPromote}[0]{\stLabFmt{promote}}
\newcommand{\stLabNext}[0]{\stLabFmt{retry}}
\newcommand{\stRecVarBase}{\stFmt{\mathbf{t}}}%
\newcommand{\stMerge}[2]{\stFmt{\bigsqcap_{#1}{#2}}}%
\newcommand{\stBinMerge}{\mathbin{\stFmt{\sqcap}}}%
\newcommand{\stSub}{\mathrel{\stFmt{\leqslant}}}%
\newcommand{\ltsCrDe}[3]{{#2}\stFmt{\mathord{\odot}}\roleFmt{#3}}
\newcommand{\ltsCrash}[2]{{#2}\stFmt{\lightning}}
\newcommand{\ltsCrashSmall}[2]{{#2}\stFmt{\lightning}}
\newcommand{\ltsSubject}[1]{{\color{roleColor} \operatorname{subj}({#1})}}%
\definecolor{mpColor}{rgb}{0, 0, 0}%
\newcommand{\mpFmt}[1]{{\color{mpColor}#1}}%
\newcommand{\mpLab}[1][]{%
  \mpFmt{\ifempty{#1}{\labFmt{m}}{{\labFmt{m}}_{\mathnormal #1}}}%
}%
\newcommand{\mpLabi}[1][]{%
  \mpFmt{\ifempty{#1}{\labFmt{m}'}{\labFmt{m}'_{\mathnormal #1}}}%
}%
\newcommand{\mpCrashLab}[0]{\mpFmt{\labFmt{\mathsf{crash}}}}
\newcommand{\mpLabFmt}[1]{\mpFmt{\labFmt{#1}}}%
\newcommand{\mpLabProp}[0]{\mpLabFmt{req}}
\newcommand{\mpLabCommit}[0]{\mpLabFmt{accept}}
\newcommand{\mpLabVeto}[0]{\mpLabFmt{reject}}
\newcommand{\mpLabAbort}[0]{\mpLabFmt{abort}}
\newcommand{\mpLabSucc}[0]{\mpLabFmt{enact}}
\newcommand{\mpLabPromote}[0]{\mpLabFmt{promote}}
\newcommand{\mpLabNext}[0]{\mpLabFmt{retry}}
\newcommand{\mpTrue}{\mpFmt{\text{\text{\texttt{true}}}}}%
\newcommand{\mpFalse}{\mpFmt{\text{\text{\texttt{false}}}}}%
\newcommand{\mpString}[1]{\text{\text{\texttt{"{#1}"}}}}%
\newcommand{\mpInt}{\mpFmt{\text{\texttt{i}}}}
\newcommand{\mpNat}{\mpFmt{\text{\texttt{n}}}}%
\newcommand{\mpUnit}[1]{\text{\texttt{({#1})}}}%
\newcommand{\mpSucc}[1]{\text{\texttt{succ}}({#1})}%
\newcommand{\mpNeg}[1]{\text{\texttt{neg}}({#1})}%
\newcommand{\mpChanRole}[2]{\mpFmt{{#1}[{#2}]}}%
\newcommand{\mpNil}{\mpFmt{\mathbf{0}}}%
\newcommand{\mpIf}[3]{%
  \mpFmt{\mathsf{if}\,{#1}\,\mathsf{then}}\,{#2}\,\mathsf{else}\,{#3}%
}%
\newcommand{\mpPar}{\mathbin{\mpFmt{\mid}}}%
\newcommand{\mpRec}[2]{\mpFmt{\mu{#1}.{#2}}}%
\newcommand{\mpPart}[2]{#1 \triangleleft  #2}
\newcommand{\mpCrash}[0]{\mpFmt{\ensuremath{\lightning}}}
\newcommand{\mpV}[1][]{\mpFmt{\ifempty{#1}{v}{v_{#1}}}}
\newcommand{\mpVi}[1][]{\mpFmt{\ifempty{#1}{v'}{v'_{#1}}}}
\newcommand{\mpW}[1][]{\mpFmt{\ifempty{#1}{w}{w_{#1}}}}
\newcommand{\mpQEmpty}{\mpFmt{\epsilon}}%
\newcommand{\mpQUnavail}{\mpFmt{\oslash}}
\newcommand{\mpS}[1][]{\mpFmt{\ifempty{#1}{s}{s_{#1}}}}%
\newcommand{\mpX}[1][]{\mpFmt{\ifempty{#1}{X}{X_{#1}}}}%
\newcommand{\mpP}[1][]{\mpFmt{\ifempty{#1}{P}{P_{#1}}}}%
\newcommand{\mpPi}[1][]{\mpFmt{\ifempty{#1}{P'}{P'_{#1}}}}%
\newcommand{\mpQ}[1][]{\mpFmt{\ifempty{#1}{Q}{Q_{#1}}}}%
\newcommand{\mpQi}[1][]{\mpFmt{\ifempty{#1}{Q'}{Q'_{#1}}}}%
\newcommand{\mpR}[1][]{\mpFmt{\ifempty{#1}{R}{R_{#1}}}}%
\newcommand{\mpMove}[1][]{\ifempty{#1}{\to}{\to_{#1}}}%
\newcommand{\mpMoveStar}[1][]{\mathrel{\mpMove[#1]^{*}}}%
\newcommand{\mpNotMove}[1][]{\mathrel{\not{\!\!\mpMove[#1]}}}
\newcommand{\iruleSafeComm}{S-${\stIntC}{\stExtC}$}%
\newcommand{\iruleSafeCrash}{S-${\stFmt{\lightning}}{\stExtC}$}%
\newcommand{\iruleSafeRec}{S-$\stFmt{\mu}$}%
\newcommand{\iruleSafeMove}{S-$\stEnvMoveMaybeCrash$}%
\newcommand{\iruleStSubEnd}{Sub-$\stEnd$}
\newcommand{\iruleStSubStop}{Sub-$\stStop$}
\newcommand{\iruleStSubRecL}{Sub-$\stFmt{\mu}$L}
\newcommand{\iruleStSubRecR}{Sub-$\stFmt{\mu}$R}
\newcommand{\iruleStSubOut}{Sub-$\stFmt{\oplus}$}
\newcommand{\iruleStSubIn}{Sub-$\stFmt{\&}$}
\newcommand{\stStopSym}{\stFmt{\ensuremath{\lightning}}}
\newcommand{\iruleTCtxOut}{$\stEnv$-$\stFmt{\oplus}$}%
\newcommand{\iruleTCtxIn}{$\stEnv$-$\stFmt{\&}$}%
\newcommand{\iruleTCtxRec}{$\stEnv$-$\mu$}%
\newcommand{\iruleTCtxCrash}{$\stEnv$-$\lightning$}
\newcommand{\iruleTCtxCrashDetect}{$\stEnv$-$\odot$}
\newcommand{\stEnvEmpty}{\stFmt{\emptyset}}%
\newcommand{\stEnvMap}[2]{\stFmt{\mpFmt{#1}\mathbin{\!\triangleright\!}{#2}}}%
\newcommand{\stEnvComp}{\mathpunct{\stFmt{,}}}%
\newcommand{\stEnvApp}[2]{\stFmt{#1\!\left(\mpFmt{#2}\right)}}%
\newcommand{\stEnvUpd}[3]{\stFmt{#1 {} [#2 \mapsto #3]}}
\newcommand{\stEnvAssoc}[3]{\stFmt{{#2} \mathrel{\stFmt{\sqsubseteq}_{#3}} {#1}}}
\newcommand{\stEnvMove}{\mathrel{\stFmt{\to}}}%
\newcommand{\stEnvMoveMaybeCrash}[1][]{%
  \ifempty{#1}{%
    \mathrel{\stFmt{\to_{\!\lightning}}}%
  }{%
    \mathrel{\stFmt{\to_{\!{#1}}}}%
  }
}%
\newcommand{\stEnvAnnotOutSym}{\stFmt{\oplus}}%
\newcommand{\stEnvAnnotInSym}{\stFmt{\&}}%
\newcommand{\stEnvAnnotGenericSym}[1][]{\stFmt{\ifempty{#1}{\alpha}{\alpha_{#1}}}}%
\newcommand{\stEnvMoveAnnot}[1]{\mathrel{\stFmt{\xrightarrow{#1}}}}
\newcommand{\stEnvMoveGenAnnot}{\stEnvMoveAnnot{\stEnvAnnotGenericSym}}%
\newcommand{\stEnvMoveInAnnot}[3]{%
  \stEnvMoveAnnot{\stEnvInAnnot{#1}{#2}{#3}}%
}%
\newcommand{\stEnvMoveOutAnnot}[3]{%
  \stEnvMoveAnnot{\stEnvOutAnnot{#1}{#2}{#3}}%
}%
\newcommand{\stEnvInAnnot}[3]{{#1}{\stEnvAnnotInSym}{#2}:{#3}}%
\newcommand{\stEnvOutAnnot}[3]{{#1}{\stEnvAnnotOutSym}{#2}:{#3}}%
\newcommand{\stEnvInAnnotSmall}[3]{{#1}{\stEnvAnnotInSym}{#2}:{#3}}%
\newcommand{\stEnvInAnnotSmallLab}[3]{{#1}{\stEnvAnnotInSym}{#2}\!:\!{#3}}%
\newcommand{\stEnvOutAnnotSmallLab}[3]{{#1}{\stEnvAnnotOutSym}{#2}\!:\!{#3}}%
\newcommand{\stEnvOutAnnotSmall}[3]{{#1}{\stEnvAnnotOutSym}{#2}:{#3}}%
\newcommand{\stEnvMoveP}[1]{{#1}\!\stEnvMove}%
\newcommand{\stEnvMoveMaybeCrashP}[2][]{{#2}\!\!\stEnvMoveMaybeCrash[#1]}%
\newcommand{\stEnvNotMoveP}[1]{{#1}\!\not\stEnvMove}%
\newcommand{\stEnvNotMoveMaybeCrashP}[2][]{{#2}\!\not\stEnvMoveMaybeCrash[#1]}%
\newcommand{\stEnvMoveStar}{\mathrel{\stFmt{\stEnvMove{}^{\!\!\!*}}}}%
\newcommand{\stEnvMoveMaybeCrashStar}[1][]{%
  \ifempty{#1}{%
    \mathrel{\stFmt{\to^*_{\!\lightning}}}%
  }{%
    \mathrel{\stFmt{\to^*_{\!{#1}}}}%
  }
}%
\newcommand{\stEnvMoveAnnotP}[2]{{#1}\!\stEnvMoveAnnot{#2}}%
\newcommand{\stEnvMoveGenAnnotP}[1]{{#1}\!\stEnvMoveGenAnnot}%
\newcommand{\stIdxRemoveCrash}[2]{{#1}^{{#2} \setminus \stCrashLab}}
\newcommand{\qApp}[3]{{#1}({#2}, {#3})}
\newcommand{\stQEmpty}{\stFmt{\epsilon}}%
\newcommand{\stQMsg}[2]{\stFmt{{#1}\ifempty{#2}{}{({#2})}}}%
\newcommand{\stQCons}[2]{\stFmt{{#1}\mathbin{\!\cdot\!}{#2}}}%
\newcommand{\stQUnavail}{\stFmt{\oslash}}
\definecolor{tyColorCustom}{rgb}{0.0, 0.0, 0.85}
\definecolor{purpleish}{rgb}{0.41, 0.16, 0.38}
\newcommand{\procin}[3]{#1 ? #2.#3}%
\newcommand{\procout}[4]{#1 ! #2\langle #3 \rangle.#4}%
\newcommand{\procoutNoVal}[3]{#1 ! #2.#3}%
\newcommand{\eval}[2]{#1 \downarrow #2}
\newcommand{\redLabel}[1]{\mpMove}
\newcommand{\redSend}[3]{\mpMove}
\newcommand{\redRecv}[3]{\mpMove}
\newcommand{\redCrash}[2]{\mpMove[#2]}
\newcommand{\redIf}[1]{\mpMove}
\newcommand{\msg}[3]{(#1,#2(#3))}
\newcommand{\iftoggleverb}[1]{%
  \ifcsdef{etb@tgl@#1}
    {\csname etb@tgl@#1\endcsname\iftrue\iffalse}
    {\etb@noglobal\etb@err@notoggle{#1}\iffalse}%
}
\begin{document}
\title{Crash-Stop Failures in Asynchronous Multiparty Session Types}
\thanks{}

\author[A.D.~Barwell]{Adam D. Barwell\lmcsorcid{0000-0003-1236-7160}}[a]
\author[P.~Hou]{Ping Hou\lmcsorcid{0000-0001-6899-9971}}[b]
\author[N.~Yoshida]{Nobuko Yoshida\lmcsorcid{0000-0002-3925-8557}}[b]
\author[F.~Zhou]{Fangyi Zhou\lmcsorcid{0000-0002-8973-0821}}[c]

\address{University of St Andrews, UK}
\email{adb23@st-andrews.ac.uk}

\address{University of Oxford, UK}
\email{ping.hou@cs.ox.ac.uk, nobuko.yoshida@cs.ox.ac.uk}

\address{No affiliation}
\email{me@fangyi.io}

\begin{abstract}
  Session types provide a typing discipline for message-passing systems.
  However,  their theory often assumes an ideal world: one in which
  everything is reliable and without failures. Yet this is in stark contrast with
  distributed systems in the real world. To address this limitation, 
  we introduce a new asynchronous 
  \emph{multiparty session types}~(MPST) theory with \emph{crash-stop failures}, 
  where processes may crash arbitrarily and cease to interact after crashing. %
  We augment asynchronous MPST and processes with \emph{crash handling} branches, and integrate 
crash-stop failure semantics into types and processes. Our approach
requires no user-level syntax extensions for global types, 
and features a formalisation of global semantics, which captures
complex behaviours induced by crashed/crash handling processes.

Our new theory covers the entire \emph{spectrum} of unreliability, ranging from the ideal world of total reliability 
to entirely unreliable scenarios where any process may crash, using \emph{optional reliability assumptions}. 
Under these assumptions, we demonstrate 
the sound and complete correspondence between global and local type semantics, 
which guarantee deadlock-freedom, protocol conformance, and liveness of 
well-typed processes \emph{by construction}, even in the presence of crashes.

\end{abstract}

\maketitle

\section{Introduction}
\label{sec:introduction}
\paragraph*{Background}
As distributed programming grows increasingly prevalent,
significant research effort has been devoted to improve the reliability of
distributed systems.
A key aspect of this research focuses on studying
\emph{un}reliability (or, more specifically, failures).
Modelling unreliability and failures
enables a distributed system to be designed to be more tolerant of failures,
and thus more resilient.

In pursuit of methods to achieve fundamental
reliability -- \emph{safety}
in distributed communication systems --
\emph{session types}~\cite{ESOP98Session} provide a lightweight,
type system--based approach to message-passing concurrency.
In particular,
\emph{Multiparty Session Types} (MPST)~\cite{HYC08} facilitate the
specification and verification of communication between message-passing
processes in concurrent and distributed systems.
The typing discipline
prevents common communication-based errors,
\eg deadlocks and communication mismatches~\cite{HYC16,POPL19LessIsMore}.

Nevertheless, the challenge to account for unreliability and failures persists
for session types:
most session type systems assume that both participants and message transmissions
are \emph{reliable}, \ie without failures.
In a real-world setting, however, participants may crash, communication %
channels may fail, and messages may be lost.
The lack of failure modelling in session type theories prevents
their application to large-scale distributed systems.

 Recent works~\cite{OOPSLA21FaultTolerantMPST, FORTE22FaultTolerant,
ECOOP22AffineMPST, ESOP23MAGPi, CONCUR22MPSTCrash}
address the gap in failure modelling within session
types with various techniques.
Viering \etal~\cite{OOPSLA21FaultTolerantMPST} introduce \emph{failure
suspicion},
where a participant may suspect their communication partner has failed,
and act accordingly.
Peters \etal~\cite{FORTE22FaultTolerant} introduce \emph{reliability
annotations} at type level, and fall back to a given \emph{default} value in
case of failures.
Lagaillardie \etal~\cite{ECOOP22AffineMPST} propose a framework of \emph{affine}
multiparty session types, where a session can terminate prematurely, \eg
in case of failures.
Barwell \etal~\cite{CONCUR22MPSTCrash} incorporate \emph{crash-stop failures}
in session types, where a generalised type system validates safety and liveness
properties. %
Le Brun and Dardha~\cite{ESOP23MAGPi}
adopt a similar approach but extend it to include additional failure types, \eg message losses, reordering, and delays.

\paragraph*{This Paper}
Unlike the aforementioned approaches, we advance failure modelling in session types
by introducing a new asynchronous multiparty session type theory
that incorporates \emph{crash-stop} failures~\cite[\S 2.2]{DBLP:books/daglib/0025983}, where
 a process may crash and cease to interact with others.
 This model is standard in distributed systems, and is used in
related work on session types with error-handling
capabilities~\cite{ESOP18CrashHandling,OOPSLA21FaultTolerantMPST}.
However, unlike previous work,
we allow \emph{any} process to crash arbitrarily,
and
support optional assumptions on non-crashing processes.
Our extended asynchronous MPST theory
effectively models failures with crash-stop
semantics,
and demonstrates that the usual session type guarantees remain
valid, \ie~communication safety, deadlock-freedom, and liveness.

 In our new theory, %
 we add \emph{crashing} and \emph{crash handling}
semantics to processes and types.
With \emph{minimal} changes to the standard surface syntax compared with the
original MPST theory,
we
model a variety of subtle, complex behaviours arising from
unreliable communicating
processes.
An active process $\mpP$ may crash arbitrarily, and a process $\mpQ$
interacting with $\mpP$ might need to be prepared to handle possible crashes.
Messages sent from $\mpQ$ to a crashed $\mpP$ are lost
-- but if $\mpQ$ tries to receive from $\mpP$, then $\mpQ$ can detect that
$\mpP$ has crashed, and take a crash handling branch.
Meanwhile, another process $\mpR$ may (or may not) have detected $\mpP$'s crash,
and may be handling it -- and in either case, any interaction between $\mpQ$
and $\mpR$ should remain correct.

A key design feature of our framework is
to support \emph{optional reliability assumptions}: a
programmer may declare that some peers will \emph{never} crash for the duration
of the protocol.
Such optional assumptions allow for simplifying protocols and programs: if a
participant is declared reliable, its peers can interact with it
without implementing crash detection and handling.
Moreover, by
making such assumptions explicit and customisable,
our theory supports a
\emph{spectrum} of assumptions ranging from only having sessions
with reliable participants (as in classic MPST works~\cite{HYC16,POPL19LessIsMore}),
to having no reliable participants at all.

Another fundamental aspect of our theory is its adherence to a \emph{top-down} methodology,
which stems from the original MPST theory \cite{HYC08}.
This approach starts the design of multiparty session types
with a given \emph{global} type
(top),
and processes rely on \emph{local} types (bottom) obtained from the
global type.
The global and local types reflect the global and local communication
behaviours respectively.
Well-typed processes that conform to a global type are
guaranteed to be \emph{correct by construction},
enjoying full guarantees (safety, deadlock-freedom, and liveness) from the theory.

The use of global types in our design for handling failures in multiparty session types
presents three distinct advantages:
\begin{enumerate*}[label=\emph{(\arabic*)}]
  \item there is no user-level syntax extension of global types compared with
    the original MPST global types, apart from a special label to signify crash
    handling branches;
  \item global types provide a simple, high-level means to both specify a protocol
    abstractly and automatically derive local types; and,
  \item under (optional) reliability assumptions, desirable behavioural properties such as communication safety,
    deadlock-freedom, and liveness are guaranteed by construction. %
  \end{enumerate*}

In contrast to the \emph{synchronous} semantics used by Barwell
\etal~\cite{CONCUR22MPSTCrash}, we model \emph{asynchronous} semantics,
where messages can be buffered whilst in transit.
We focus on asynchronous systems
since most communication in the real distributed world is asynchronous.
Although Le Brun and Dardha~\cite{ESOP23MAGPi} %
develop a generic typing system incorporating asynchronous semantics,
their approach results in  type-level properties
becoming undecidable~\cite[\S 4.4]{ESOP23MAGPi}.
With global types,
we restore decidability %
at a minor cost to
expressivity.

\paragraph*{Outline}
This article represents an extended rendition of the work on
 asynchronous multiparty session type theory with
 crash-stop failures as initially introduced in~\cite{DBLP:conf/ecoop/BarwellHY023}.
 Our enhanced version provides a more comprehensive and refined presentation, offering detailed definitions,
 an in-depth exploration of related work, and additional examples.
 Moreover, we incorporate a new section~(Section~\ref{sec:casestudy}) that illustrates our approach
 through an extensive case study of the Non-Blocking Atomic Commits abstraction,
 described in~\cite{DBLP:books/daglib/0025983}.
 In contrast to~\cite{DBLP:conf/ecoop/BarwellHY023},
 we have omitted the implementation of our theory, as our primary focus is on the theoretical aspects.

Note that, for clarity and simplicity, our framework does not support delegation (session passing), as our primary focus is on developing a top-down methodology for handling failures in multiparty session types. However, all the results presented in the paper are applicable to systems that do incorporate delegation.
For readers interested in the generalised MPST theory that includes both
crash-stop failures and delegation,
please refer to~\cite{CONCUR22MPSTCrash}.

\begin{itemize}[leftmargin=0pt]

\item[] {\bf Section~\ref{sec:overview}} We begin with an overview of our methodology.
  
\item[] %
{\bf Section~\ref{sec:process}} Addressing the challenge of defining
the semantics of crashing and crash handling
within the context of the crash-stop model,
we introduce an asynchronous multiparty session
calculus with minimal syntax changes.

\item[]
    {\bf Section~\ref{sec:gtype}}  We introduce an extended theory of asynchronous multiparty session
    types with semantic modelling of crash-stop failures.
    The challenges here involve capturing the non-trivial crash-stop semantics
     of global and local types under \emph{optional} reliability assumptions,
     as well as demonstrating that the sound and
     complete operational correspondence between the two semantics
     (Theorems~\ref{thm:gtype:proj-comp} and~\ref{thm:gtype:proj-sound}) can ensure desirable
     global type properties such as communication safety, deadlock-freedom, and
     liveness in local types through projection,
     even in the presence of crashes (\autoref{cor:allproperties}).

\item[] 
   {\bf Section~\ref{sec:typing_system}} We present a typing system for our asynchronous multiparty session calculus.
  The challenges comprise proving behavioural properties for typed sessions: subject reduction, session fidelity, deadlock-freedom, and liveness, in corresponding theorems
  (Theorems \ref{lem:sr}, \ref{lem:sf}, \ref{lem:session_deadlock_free}, and
  \ref{lem:session_live}).

\item[]
{\bf Section~\ref{sec:casestudy}}  We provide a case study on the Non-Blocking Atomic Commits abstraction,
which facilitates the reliable execution of transactions in distributed database systems.
\end{itemize}
We discuss related work in {\bf Section~\ref{sec:related}} and conclude in
{\bf Section~\ref{sec:conclusion}}.

\section{Overview}
\label{sec:overview}

In this work, 
we follow a standard top-down design approach enabling \emph{correctness by construction}, but enrich asynchronous MPST with crash-stop semantics.  As depicted in~\cref{fig:overview-of-topdown}, we formalise (asynchronous) multiparty protocols with crash-stop failures
as  global types with \emph{crash handling branches} ($\gtCrashLab$). These are projected onto local types, which may similarly contain crash handling branches ($\stCrashLab$).
The projected local types %
are, in turn, used to type-check
processes (also with crash handling branches ($\mpCrashLab$)) that are written in a session calculus.
As an example, we consider a simple
\emph{distributed logging} scenario, which is inspired by
the logging-management protocol~\cite{ECOOP22AffineMPST}, but extended with a third
participant. \iftoggle{full}{The full distributed logging protocol can be found in~\Cref{sec:appendix:eval}.}{}
\begin{figure}[t]
 \centering
  \begin{tikzpicture}
  \node (Gtext) {A Global Type $\gtG$ $\highlight{\text{with }\gtCrashLab}$};
  \node[below= 0.5mm of Gtext, xshift=-39mm, align=center] (proj) {
    \small{\bf{projection}}\,($\upharpoonright$)};
  \node[below=6mm of Gtext, xshift=-41mm] (LA) {\footnotesize Local Type for $\roleFmt{L}$
  \small \boxed{\stT_{\roleFmt{L}}}};
  \node[below=6mm of Gtext] (LB) {\footnotesize Local Type for $\roleFmt{I}$ $\highlight{\text{with }\stCrashLab}$ \small \boxed{\stT_{\roleFmt{I}}}};
  \node[below=6mm of Gtext, xshift=41mm] (LC) {\footnotesize Local Type for $\roleFmt{C}$
   \small  \boxed{\stT_{\roleFmt{C}}}};
  \draw[->] (Gtext) -- (LA);
  \draw[->] (Gtext) -- (LB);
  \draw[->] (Gtext) -- (LC);
   \node[below= 1mm of LA, xshift=-9.0mm, align=center] (typ) {
    \small {\bf{typing}}\,($\vdash$)};
   \node[below=6mm of LA, xshift=0mm] (PA) {\, \, \, \footnotesize Process for $\roleFmt{L}$ \small
    \boxed{P_{\roleFmt{L}}}};
   \node[below=6mm of LB, xshift=0mm] (PB) {\, \, \, \footnotesize Process for $\roleFmt{I}$ $\highlight{\text{with }\mpCrashLab}$ \small
    \boxed{P_{\roleFmt{I}}}};
     \node[below=6mm of LC, xshift=0mm] (PC) {\, \, \, \footnotesize Process for $\roleFmt{C}$ \small
    \boxed{P_{\roleFmt{C}}}};
   \draw[->] (LA) -- (PA);
   \draw[->] (LB) -- (PB);
   \draw[->] (LC) -- (PC);
\end{tikzpicture}
\caption{Top-down view of MPST with crash.}
\label{fig:overview-of-topdown}
\end{figure}

The Simpler Logging protocol consists of a \emph{logger} (\roleFmt{L}) that controls the
logging services, an \emph{interface} ($\roleFmt{I}$) that provides communications between
logger and client, and a \emph{client} ($\roleFmt{C}$) that requires logging services via interface.
Initially, the logger process $\roleFmt{L}$ sends a heartbeat message $\gtMsgFmt{trigger}$ to the
interface process $\roleFmt{I}$. Then the client $\roleFmt{C}$ sends a command to
the interface
to read the logs (\gtMsgFmt{read}).
When a \gtMsgFmt{read} request is
sent, it is forwarded to the logger, and the logger responds
with a \gtMsgFmt{report}, which is then forwarded onto the client.
When all participants (\ie logger, interface, and client) are assumed reliable, \ie without failures or crashes,
this logging behaviour can be represented by the \emph{global type}
$\gtG[0]$:

\smallskip
\centerline{\(
\gtG[0] \;=\; 
    \gtCommSingle{\roleFmt{L}}{\roleFmt{I}}{
  \gtMsgFmt{trigger}}{}{
    \gtCommSingle{\roleFmt{C}}{\roleFmt{I}}{
   \gtMsgFmt{read}}{}{
    \gtCommSingle{\roleFmt{I}}{\roleFmt{L}}
    {\gtMsgFmt{read}}{}{
    \gtCommSingle{\roleFmt{L}}{\roleFmt{I}}
    {\gtMsgFmt{report}}{\stFmtC{log}}{
    \gtCommSingle{\roleFmt{I}}{\roleFmt{C}}{\gtMsgFmt{report}}{\stFmtC{log}}{
    \gtEnd
    }}}}
       }
\)}

\smallskip
\noindent
Here, $\gtG[0]$ is a specification of the Simpler Logging protocol between multiple roles from
a global perspective.

In the real distributed world, all participants in the Simpler Logging system
may fail. Ergo, we may need to model protocols with failures or
crashes and handling behaviour thereof, \eg should the client fail after the logging has started, the interface
will inform the logger to stop and exit.
We follow~\cite[\S 2.2]{DBLP:books/daglib/0025983} to model
a \emph{crash-stop} semantics, where we assume that roles can crash \emph{at any time} unless assumed
\emph{reliable} (never fail or crash).
For simplicity, we assume the interface $\roleFmt{I}$ and the logger $\roleFmt{L}$ to be reliable.
The above logging behaviour, incorporating crash-stop failures, can be represented by extending
$\gtG[0]$ with a branch handling crashes of $\roleFmt{C}$:
\begin{equation}
\label{ex:overview-global-with-crash}
\gtG =
    \gtCommSingle{\roleFmt{L}}{\roleFmt{I}}{
  \gtMsgFmt{trigger}}{}{
    \gtCommRaw{\roleFmt{C}}{\roleFmt{I}}{
    \begin{array}{@{}l@{}}
    \gtCommChoice{\gtMsgFmt{read}}{}{
    \gtCommSingle{\roleFmt{I}}{\roleFmt{L}}
    {\gtMsgFmt{read}}{}{
    \gtCommSingle{\roleFmt{L}}{\roleFmt{I}}
    {\gtMsgFmt{report}}{\stFmtC{log}}{
    \gtCommSingle{\roleFmt{I}}{\roleFmt{C}}{\gtMsgFmt{report}}{\stFmtC{log}}{
    \gtEnd
    }}}}\\
     \gtCommChoice{\gtMsgFmt{crash}}{}{
    \gtCommSingle{\roleFmt{I}}{\roleFmt{L}}{\gtMsgFmt{fatal}}{}{
    \gtEnd
    }}
    \end{array}
   }}
   \end{equation}

We model crash detection on receiving roles: when a role $\roleFmt{I}$ is
waiting to receive a message from role $\roleFmt{C}$, the receiving role $\roleFmt{I}$ is
able to detect whether $\roleFmt{C}$ has $\gtCrashLab$ed.
Since crashes are detected only by the receiving role, we do not
require a crash handling branch on the communication step between $\roleFmt{I}$ and
$\roleFmt{C}$ -- nor do we require them on any interaction between $\roleFmt{L}$ and $\roleFmt{I}$ (since we
are assuming that $\roleFmt{L}$ and $\roleFmt{I}$ are reliable).
Note that our global type syntax has minimal changes from the standard syntax,
with additional crashing and crash handling semantics. We give details
in Section~\ref{sec:gtype}.

Following the MPST top-down methodology,
a global type is then \emph{projected} onto \emph{local types},
which describe communications from the perspective of a single role.
In our unreliable Simpler Logging example,
$\gtG$ is projected onto three local types (one for each role
$\roleFmt{C}$, $\roleFmt{L}$, $\roleFmt{I}$):

\smallskip
\centerline{\(
 \begin{array}{c}
  \stT[\roleFmt{C}] =
 \roleFmt{I} \stFmt{\oplus}
   \stLabFmt{read}
 \stSeq
  \roleFmt{I}
   \stFmt{\&}
    \stLabFmt{report(\stFmtC{log})}
 \stSeq
\stEnd
  \quad
   \stT[\roleFmt{L}] =
   \roleFmt{I} \stFmt{\oplus}
   \stLabFmt{trigger}
 \stSeq
  \stExtSum{\roleFmt{I}}{}{
  \begin{array}{@{}l@{}}
\stLabFmt{read}
  \stSeq
   \roleFmt{I} \stFmt{\oplus}
    \stLabFmt{report(\stFmtC{log})}
    \stSeq
    \stEnd
    \\
    \stLabFmt{fatal}
  \stSeq
   \stEnd
  \end{array}
  }
\\[3mm]
  \stT[\roleFmt{I}]  =
    \roleFmt{L}
  \stFmt{\&}
  \stLabFmt{trigger}
  \stSeq
  \stExtSum{\roleFmt{C}}{}{
  \begin{array}{@{}l@{}}
 \stLabFmt{read}
 \stSeq
 \roleFmt{L}
  \stFmt{\oplus}
  \stLabFmt{read}
  \stSeq
 \roleFmt{L}
\stFmt{\&}
\stLabFmt{report(\stFmtC{log})}
\stSeq
\roleFmt{C}
\stFmt{\oplus}
\stLabFmt{report(\stFmtC{log})}
\stSeq
\stEnd
 \\
  \stCrashLab
  \stSeq
  \roleFmt{L}
  \stFmt{\oplus}
  \stLabFmt{fatal}
 \stSeq
\stEnd
 \end{array}
 }
\end{array}
\)}
\smallskip

\noindent
Here,
$\stT[\roleFmt{I}]$ states that the interface
 $\roleFmt{I}$ first receives a trigger message from the logger
 $\roleFmt{L}$; then $\roleFmt{I}$ either expects a  $\stLabFmt{read}$ request from
 the client $\roleFmt{C}$, or detects the crash of $\roleFmt{C}$ and handles it (in $\stCrashLab$)
 by sending the $\stLabFmt{fatal}$ message to notify the logger $\roleFmt{L}$.
Again, the syntax of local types does not depart from the standard, but we add
additional crash modelling and introduce a $\stStop$ type for crashed
endpoints, explained in Section~\ref{sec:gtype}. 
We show the
    operational correspondence between global and local type
     semantics in Theorems~\ref{thm:gtype:proj-comp} and~\ref{thm:gtype:proj-sound}, 
     and demonstrate that
a projectable global type always produces a safe, deadlock-free, 
and live typing context in~\autoref{cor:allproperties}.

The next step in this top-down methodology is to use
local types to type-check processes $\mpP[i]$ executed by role $\roleP[i]$
in our session calculus.
For example, $\stT[\roleFmt{I}]$ can be used to type check the interface
$\roleFmt{I}$ that executes the process:

\smallskip
\centerline{\(
\procin{\roleFmt{L}}{\labFmt{trigger}}{\sum \setenum{
  \begin{array}{@{}l@{}}
 \procin{\roleFmt{C}}{\labFmt{read}}{
 \procoutNoVal{\roleFmt{L}}{\labFmt{read}}
 {\procin{\roleFmt{L}}{\labFmt{report}(\mpx)}
 {\procout{\roleFmt{C}}{\labFmt{report}}{\mpx}{\mpNil}}}
 }
\\
  \procin{\roleFmt{C}}{\mpCrashLab}{ \procoutNoVal{\roleFmt{L}}{\labFmt{fatal}}{\mpNil}}
  \end{array}
  }}
\)} %
\smallskip

\noindent
 In our operational semantics (Section~\ref{sec:process}),
we allow active processes executed by unreliable roles
to crash \emph{arbitrarily}.
Therefore, the role executing the crashed process
also crashes, and is assigned the local type $\stStop$.
To ensure that a communicating process is type-safe even in the presence of crashes,
we require that its typing context %
satisfies a \emph{safety property} accounting for possible crashes (\autoref{def:mpst-env-safe}),
which is preserved by
projection (\autoref{cor:allproperties}).
Despite minor changes in the surface syntax of types and processes,
additional semantics surrounding crashes adds subtleties even in standard
results. We prove subject reduction and session fidelity results,  accounting for crashes and
sets of reliable roles, in Theorems~\ref{lem:sr} and~\ref{lem:sf}.

\section{Asynchronous Multiparty Session Calculus with  Crash-Stop Semantics }
\label{sec:process}
\label{sec:session-calculus}
In this section, we formalise the syntax (Section~\ref{sec:process:syntax}) and operational semantics (Section~\ref{sec:process:semantics})
of our asynchronous multiparty session calculus with process failures and crash detection.
For clarity of presentation, delegation is not supported in our system.

\subsection{Syntax of Session Calculus with Crash-Stop Failures}
\label{sec:process:syntax}

Our asynchronous multiparty session calculus models processes that may crash arbitrarily.
Our formalisation is based on~\cite{POPL21AsyncMPSTSubtyping} -- but
additionally follows the \emph{fail-stop} model in \cite[\S 2.7]{DBLP:books/daglib/0025983},
where processes may crash and never recover, and where process failures can be detected by failure detectors~\cite[\S 2.6.2]{DBLP:books/daglib/0025983} \cite{JACM96FailureDetector} when attempting to receive messages.

The syntax of our calculus is presented in~\cref{fig:processes:syntax}.  %
Values and expressions are standard: a \emph{value}, ranged over by $\mpV, \mpVi, \ldots$,
can be a natural number $\mpNat$,
an integer $\mpInt$,
a boolean $\mpTrue$ or $\mpFalse$,
a string $\mpString{}$,
a unit $\mpUnit{}$~(often omitted for brevity),
or any other specifically tailored value;
an \emph{expression}, ranged over by $\mpE, \mpEi, \ldots$,
can be a variable, a value,
or a term built from expressions by applying operators, such as
$\text{\texttt{succ}}, \text{\texttt{neg}}, \neg$, and  $<$.
We fix a set of \emph{roles} $\roleSet$, ranged over by $\roleP, \roleQ,
\roleR, \ldots$.

\begin{figure}
 \noindent
  \centerline{\(
  \begin{array}{c}
  \mpV
   \bnfdef
  \mpNat \bnfsep \mpInt \bnfsep \mpTrue \bnfsep \mpFalse \bnfsep \mpString{} \bnfsep \mpUnit{} \bnfsep \cdots
  \\[.7mm]
  \mpE
  \coloncolonequals
  \mpFmt{x} \bnfsep \mpV \bnfsep \mpSucc{\mpE} \bnfsep \mpNeg{\mpE} \bnfsep
   \neg \mpE \bnfsep  \mpE < \mpEi \bnfsep \cdots
  \\[1.5mm]
  \begin{array}{@{}r@{\;}c@{\;}ll@{}}
    \mpP,\mpQ & \bnfdef & & \text{\em\bf Processes} \\
            & & \sum_{i\in I}\procin{\roleP}{\mpLab_i(\mpx_i)}{\mpP_i} & \text{\em external choice}\\
            & \bnfsep & \procout{\roleP}{\mpLab}{\mpE}{\mpP} \quad
            \highlight{\text{(\em where } \mpLab \neq \mpCrashLab\text{)}}
            & \text{\em output} \\
            & \bnfsep & \mpIf{\mpE} \mpP \mpQ & \text{\em conditional} \\
            & \bnfsep & \mpX & \text{\em process variable} \\
            & \bnfsep & \mu \mpX.\mpP & \text{\em recursion} \\
            & \bnfsep &  \mpNil  & \text{\em inaction} \\
            & \bnfsep & \highlight{\mpCrash} & \highlight{\text{\em crashed}}
  \end{array}%
  \quad%
  \begin{array}{@{}r@{\;}c@{\;}ll@{}}
    \mpM & \bnfdef &  & \text{\em\bf Sessions} \\
       &         & \mpPart\roleP\mpP \mpPar \mpPart\roleP \mpH & \text{\em
       participant} \\
       & \bnfsep & \mpM\mpPar\mpM  & \text{\em parallel} \\
    \mpH & \bnfdef & & \text{\em\bf %
    Queues} \\
       &         & \mpQEmpty & \text{\em empty %
       } \\
       & \bnfsep & \highlight{\mpQUnavail} & \highlight{\text{\em unavailable %
       }} \\
       & \bnfsep & \left(\roleP,\mpLab(\mpV)\right) & \text{\em message} \\
       & \bnfsep & \mpH\cdot \mpH  &\text{\em concatenation}
  \end{array}
  \end{array}
\)}
\caption{Syntax of values, expressions, sessions, processes, and queues.
Appreciable changes w.r.t.\@ the standard session calculus~\cite{POPL21AsyncMPSTSubtyping} are $\highlight{\text{highlighted}}$.}
\label{fig:processes:syntax}
\end{figure}

A process, ranged over by $\mpP, \mpQ, \ldots$, is a communication agent within a
session.
An \emph{output} process
$\procout{\roleP}{\mpLab}{\mpE}{\mpP}$
sends a message to another role $\roleP$ in the session,
where the message is labelled $\mpLab$, and carries a payload expression $\mpE$,
then the process continues as $\mpP$.
An \emph{external choice} (\emph{input}) process
$\sum_{i\in I}\procin{\roleP}{\mpLab_i(\mpx_i)}{\mpP_i}$
receives a message from another role $\roleP$ in the session,
among a finite set of indices $I$,
if the message is labelled $\mpLab[i]$, then the payload would be received as
$\mpx[i]$, and the process continues as $\mpP[i]$.
Note that our calculus uses $\mpCrashLab$ as a special message label
denoting that a participant (role) has crashed.
Such a label cannot be sent by any process, but a process can implement
crash detection and handling by receiving it.
Consequently, an output process \emph{cannot} send
 a $\mpCrashLab$ message (side condition $\mpLab \neq \mpCrashLab$),
 whereas an input process
may include a \emph{crash handling branch} of the form $\mpCrashLab.\mpPi$
meaning that $\mpPi$ is executed when the sending role has crashed.
A \emph{conditional} process
$\mpIf{\mpE} \mpP \mpQ$
continues as either $\mpP$ or $\mpQ$ depending on the evaluation of $\mpE$
(which is detailed later in Section~\ref{sec:process:semantics}).
We allow \emph{recursion} at the process level using
$\mu \mpX.\mpP$ and $\mpX$, requiring process recursion variables to be guarded by an input or
output action.
Finally, we write $\mpNil$ for an \emph{inactive} process, representing
a successful termination; and $\mpCrash$ for a \emph{crashed} process, representing a
termination due to failure.

An \emph{incoming queue}\footnotemark, ranged over by $\mpH, \mpHi, \mpH[i], \ldots$, is a sequence of messages
tagged with their origin.
We write $\mpQEmpty$ for an \emph{empty} queue;
$\mpQUnavail$ for an \emph{unavailable} queue;
and $(\roleP,\mpLab(\val))$
for a message sent from $\roleP$, labelled $\mpLab$, and containing a payload
value $\mpV$.
We write $\mpH[1] \cdot \mpH[2]$ to denote the concatenation of two queues
$\mpH[1]$ and $\mpH[2]$.
When describing incoming queues, we consider two messages from different
origins as swappable:
\(
  \mpH_1
  \cdot
  \msg{\roleQ[1]}{\mpLab_1}{\val_1}
  \cdot
  \msg{\roleQ[2]}{\mpLab_2}{\val_2}
  \cdot
  \mpH_2
  \)
 is structurally equivalent to  %
 \(
  \mpH_1
  \cdot
  \msg{\roleQ[2]}{\mpLab_2}{\val_2}
  \cdot
  \msg{\roleQ[1]}{\mpLab_1}{\val_1}
  \cdot
  \mpH_2
\)
whenever $\roleQ[1] \neq \roleQ[2]$.
Moreover, we consider concatenation $(\cdot)$ as associative, and the empty
queue $\mpQEmpty$ as the identity element for concatenation.

\footnotetext{In~\cite{POPL21AsyncMPSTSubtyping}, the queues are outgoing
instead of incoming.
We use incoming queues to model our crashing semantics more easily.}

A session, ranged over by $\mpM, \mpMi, \mpM[i], \ldots$, consists of processes and their
respective incoming queue, indexed by their roles.
A single entry for a role $\roleP$ is denoted $\mpPart{\roleP}{\mpP} \mpPar
\mpPart{\roleP}{\mpH}$, where $\mpP$ is the process for $\roleP$ and $\mpH$ is
the incoming queue.
Entries are composed together in parallel as $\mpM \mpPar \mpMi$, where the
roles in $\mpM$ and $\mpMi$ are disjoint.
We consider parallel composition as commutative and associative, with
$\mpPart{\roleP}{\mpNil} \mpPar \mpPart{\roleP}{\mpQEmpty}$ as a neutral
element of the operator.
We write
$\prod_{i\in I} (\mpPart{\roleP[i]}{\mpP[i]} \mpPar \mpPart{\roleP[i]}{\mpH[i]})$
for the parallel composition of multiple entries in a set.

\subsection{Operational Semantics of Session Calculus with Crash-Stop Failures}
\label{sec:process:semantics}

The \emph{evaluation} of an expression is illustrated in~\Cref{fig:expression_eval},
where $\eval{\mpE}{\mpV}$ indicates that the expression $\mpE$ evaluates to the value $\mpV$,
and an \emph{evaluation context} $\mathcal{E}$ is an expression containing exactly one hole.
The successor operation $\text{\texttt{succ}}$ is defined only for
natural numbers,
the negation $\text{\texttt{neg}}$ is defined for integers, the logical negation $\neg$ is defined only
for boolean values, and the less-than operator $<$ is applied only to integers.

We give the operational semantics of our session calculus in~\autoref{def:session:red},
using a \emph{structural precongruence} $\Rrightarrow$ defined in \Cref{tab:precongruence}.
Structural precongruence is a \emph{preorder} that relates sessions  based on inconsequential structural modifications.
A standard structural congruence $\equiv$ can be defined as the symmetric closure of $\Rrightarrow$:  $\Rrightarrow \cup \Rrightarrow^{-}$.

Our semantics parameterises on a (possibly empty) set of \emph{reliable} roles $\rolesR$,
\ie roles assumed to \emph{never crash}.
This approach enables us to model \emph{optional reliability}, allowing for the representation of
a \emph{spectrum} of reliability assumptions
that range from total process reliability~(as seen in standard session types
work, \ie $\rolesR = \roleSet$)
to total unreliability (\ie $\rolesR = \emptyset$).

\begin{defi}[Session Reductions]
\label{def:session:red}
The session reduction relation $\mpMove[\rolesR]$
  is inductively
  defined by the rules in \cref{fig:processes:reduction},
  parameterised by a fixed set $\rolesR$ of reliable roles.
  We write $\mpMove$
whenever $\rolesR$ is not significant.
We write $\mpMoveStar[\rolesR]$ (resp.  $\mpMoveStar$)
for the reflexive and transitive closure of $\mpMove[\rolesR]$ (resp. $\mpMove$).
\end{defi}

Our operational semantics retains the basic rules
in~\cite{POPL21AsyncMPSTSubtyping}, but also includes ($\highlight{\text{highlighted}}$)
rules for crash-stop
failures and crash handling, adapted from~\cite{CONCUR22MPSTCrash}.
Rules \inferrule{r-send} and \inferrule{r-rcv} model ordinary message delivery
and reception:
an output process located at $\roleP$ sending to $\roleQ$ would append a
message to the incoming queue of $\roleQ$; and an input process located at
$\roleP$ receiving from $\roleQ$ would consume the first message from the
incoming queue.
Rules \inferrule{r-cond-T} and \inferrule{r-cond-F} model the conditional
process; and rule \inferrule{r-struct} permits reductions up to structural
precongruence.

With regard to crashes and related behaviour, rule \inferrule{r-$\lightning$}
models process crashes: an active ($\mpP \neq \mpNil$) process located at an
unreliable role ($\roleP \notin \rolesR$) may reduce to a crashed process
$\mpPart{\roleP}{\mpCrash}$, with its incoming queue becoming unavailable
$\mpPart{\roleP}{\mpQUnavail}$.
Rule \inferrule{r-send-$\lightning$} models a
message delivery to a crashed role (and thus an unavailable queue),
and the message becomes lost and would not be added to the queue.
Rule \inferrule{r-rcv-$\odot$} models crash detection, which activates as a
`last resort':
an input process at $\roleP$ receiving from $\roleQ$ would first attempt find a
message from $\roleQ$ in the incoming queue, which engages the usual rule
\inferrule{r-recv};
if none exists and $\roleQ$ has crashed
($\mpPart{\roleQ}{\mpCrash}$), then the crash handling branch in the input
process at $\roleP$ can activate.
We draw attention to the interesting fact that \inferrule{r-recv} may engage
even if $\roleQ$ has crashed, in cases where a message from $\roleQ$ in the
incoming queue may be consumed.

\begin{figure}[t]
  \centerline{\(
    \begin{array}{@{\hskip 0mm}c@{\hskip 0mm}}
    \eval{\mpV}{\mpV} 
    \quad 
    \eval{\mpSucc{\mpNat}}{(\mpNat + 1)}
    \quad 
    \eval{\mpNeg{\mpInt}}{(-\mpInt)}
    \quad 
    \eval{\neg \mpTrue}{\mpFalse}
    \quad 
    \eval{\neg \mpFalse}{\mpTrue}
     \\[1.5mm]%
   \eval{\mpInt_{1} < \mpInt_{2}}{\left\{%
    \begin{array}{@{}l@{\hskip 5mm}l@{}}
      \mpTrue
      &\text{\small %
        if\, $\mpInt_{1} < \mpInt_{2}$}%
      \\[1mm]%
     \mpFalse
      &\text{\small%
      otherwise
      }
    \end{array}
    \right.
}
\qquad 
  \inference[]{\eval{\mpE}{\mpV} & \eval{\mathcal{E}(\mpV)}{\mpVi}}{\eval{\mathcal{E}(\mpE)}{\mpVi}} 
        \end{array}
    \)}%
\caption{%
Expression evaluation. 
}%
  \label{fig:expression_eval}%
\end{figure}

\begin{figure}
 \noindent
  \centerline{\(
\begin{array}{@{}llr@{}}
\highlight{\inferrule{r-$\lightning$}} &
\highlight{{
\mpPart\roleP{\mpP}
\mpPar
\mpPart\roleP{\mpH[\roleP]}
\mpPar
\mpM
\;\redCrash{\roleP}{\rolesR}\;
\mpPart\roleP{\mpCrash}
\mpPar
\mpPart\roleP{\mpQUnavail}
\mpPar
\mpM
}}
&
\hspace{-11em}
\highlight{(\mpP \neq \mpNil, \roleP \notin \rolesR)}
\\
\inferrule{r-send} &
{
\mpPart\roleP{\procout{\roleQ}{\mpLab}{\mpE}{\mpP}}
\mpPar
\mpPart\roleP{\mpH[\roleP]}
\mpPar
\mpPart\roleQ{\mpQ}
\mpPar
\mpPart\roleQ{\mpH[\roleQ]}
\mpPar
\mpM
}
\\
&
{%
\redSend{\roleP}{\roleQ}{\mpLab}\;
\mpPart\roleP{\mpP}
\mpPar
\mpPart\roleP{\mpH[\roleP]}
\mpPar
\mpPart\roleQ{\mpQ}
\mpPar
\mpPart{\roleQ}{\mpH[\roleQ]}\cdot(\roleP,\mpLab(\val))
\mpPar
\mpM
}
&
\hspace{-11em}
(\eval{\mpE} \val, \mpH[\roleQ] \neq \mpQUnavail)
\\[1mm]
\highlight{\inferrule{r-send-\mpCrash}} &
{\highlight{
\mpPart\roleP{\procout{\roleQ}{\mpLab}{\mpE}{\mpP}}
\mpPar
\mpPart\roleP{\mpH[\roleP]}
\mpPar
\mpPart\roleQ{\mpCrash}
\mpPar
\mpPart\roleQ{\mpQUnavail}
\mpPar
\mpM
\;\redSend{\roleP}{\roleQ}{\mpLab}\;
\mpPart\roleP{\mpP}
\mpPar
\mpPart\roleP{\mpH[\roleP]}
\mpPar
\mpPart\roleQ{\mpCrash}
\mpPar
\mpPart{\roleQ}{\mpQUnavail}
\mpPar
\mpM}
\hspace{-16em}
}
&
\\[1mm]
\inferrule{r-rcv}
&
\mpPart\roleP{\sum_{i\in I} \procin\roleQ{\mpLab_i(\mpx_i)}\mpP_i}
\mpPar
\mpPart\roleP{(\roleQ,\mpLab_k(\val)) \cdot \mpH[\roleP]}
\mpPar
\mpM
\redRecv{\roleP}{\roleQ}{\mpLab_k}\;
\mpPart\roleP \mpP_k\subst{\mpx_k}{\val}
\mpPar
\mpPart\roleP{\mpH_{\roleP}}
\mpPar
\mpM
\hspace{-21em}
&
(k \in I)
\\[1mm]
\highlight{\inferrule{r-rcv-$\odot$}}
&
\highlight{\mpPart\roleP{\sum_{i\in I} \procin\roleQ{\mpLab_i(\mpx_i)}\mpP_i}
\mpPar
\mpPart\roleP{\mpH[\roleP]}
\mpPar
\mpPart\roleQ\mpCrash
\mpPar
\mpPart\roleQ\mpQUnavail
\mpPar
\mpM}
\\
&
\highlight{\redRecv{\roleP}{\roleQ}{\mpLab_k}\;
\mpPart\roleP \mpP_k
\mpPar
\mpPart\roleP{\mpH_{\roleP}}
\mpPar
\mpPart\roleQ\mpCrash
\mpPar
\mpPart\roleQ\mpQUnavail
\mpPar
\mpM}
&
\hspace{-28em}
\highlight{(k \in I, \mpLab[k] = \mpCrashLab, \nexists \mpLab, \mpV: (\roleQ,
\mpLab(\mpV)) \in \mpH[\roleP])}
\\[1mm]
\inferrule{r-cond-T} &
\mpPart\roleP{\mpIf{\mpE}{\mpP}{\mpQ}}
\mpPar
\mpPart\roleP\mpH
\mpPar
\mpM
\;\redIf{\roleP}\;
\mpPart\roleP\mpP
\mpPar
\mpPart\roleP\mpH
\mpPar
\mpM
&
\hspace{-28em}
(\eval{\mpE}{\mpTrue})
\\[1mm]
\inferrule{r-cond-F} &
\mpPart\roleP{\mpIf{\mpE}{\mpP}{\mpQ}}
\mpPar
\mpPart\roleP\mpH
\mpPar
\mpM
\;\redIf{\roleP}\;
\mpPart\roleP\mpQ
\mpPar
\mpPart\roleP\mpH
\mpPar
\mpM
& 
\hspace{-28em}
(\eval{\mpE}{\mpFalse})
\\[1mm]
\inferrule{r-struct} &
\mpM_1\Rrightarrow \mpM_1'
\;\;\;\text{and}\;\;\;
\mpM_1'\mpMove \mpM_2'
\;\;\;\text{and}\;\;\;
\mpM_2' \Rrightarrow \mpM_2
\quad\implies\quad
\mpM_1\mpMove\mpM_2
\end{array}
\)}
\caption{Reduction relation on sessions with crash-stop failures.}%
\label{fig:processes:reduction}
\end{figure}

\begin{figure}
 \noindent
  \centerline{\(
\begin{array}{c}
\mpH_1
\cdot
\msg{\roleQ_i}{\mpLab_i}{\val_i}
\cdot
\msg{\roleQ_j}{\mpLab_j}{\val_j}
\cdot
\mpH_2
\Rrightarrow
\mpH_1
\cdot
\msg{\roleQ_j}{\mpLab_j}{\val_j}
\cdot
\msg{\roleQ_i}{\mpLab_i}{\val_i}
\cdot
\mpH_2
\\[1mm]
\text{(if $i \neq j$, $i, j \in \setenum{1, 2}$, $\roleQ_1 \neq \roleQ_2$)}
\\[1.5mm]
\mpQEmpty \cdot \mpH \Rrightarrow  \mpH \quad
\mpH \cdot \mpQEmpty \Rrightarrow  \mpH \quad
\mpH[1] \cdot (\mpH[2] \cdot \mpH[3])
\Rrightarrow
(\mpH[1] \cdot \mpH[2]) \cdot \mpH[3] \quad 
(\mpH[1] \cdot \mpH[2]) \cdot \mpH[3] 
\Rrightarrow
\mpH[1] \cdot (\mpH[2] \cdot \mpH[3])
\\[1.5mm]
\mpPart\roleP\mpNil
\mpPar
\mpPart\roleP\mpQEmpty
\mpPar
\mpM
\Rrightarrow
\mpM
\quad
\mu X.\mpP \Rrightarrow \mpP\subst{X}{\mu X.\mpP}
\quad
\mpM \Rrightarrow \mpMi
\;\text{ and }\;
\mpMi \Rrightarrow \mpMii
\;\implies\;
\mpM
\Rrightarrow
\mpMii
\\[1.5mm]
\prod_{i\in I} (\mpPart{\roleP[i]}{\mpP[i]} \mpPar \mpPart{\roleP[i]}{\mpH[i]}) 
\Rrightarrow
\prod_{j\in J} (\mpPart{\roleP[j]}{\mpP[j]} \mpPar \mpPart{\roleP[j]}{\mpH[j]}) 
\quad \text{(if $I$ is a permutation of $J$)}
\\[1.5mm]
\mpP \Rrightarrow \mpQ
\;\text{ and }\;
\mpH_1 \Rrightarrow \mpH_2
\;\implies\;
\mpPart\roleP\mpP
\mpPar
\mpPart\roleP\mpH_1
\mpPar
\mpM
\Rrightarrow
\mpPart\roleP\mpQ
\mpPar
\mpPart\roleP\mpH_2
\mpPar
\mpM
\end{array}
\)}
\caption{Structural precongruence rules for queues, processes, and sessions.}
\label{tab:precongruence}
\end{figure}

\begin{exa}
\label{ex:process_syntax_semantics}
We now illustrate our operational semantics of sessions with an example.
Consider the session:

\smallskip
\centerline{\(
\mpPart\roleP\mpP \mpPar  \mpPart\roleP \mpQEmpty
     \mpPar \mpPart \roleQ \mpQ \mpPar  \mpPart \roleQ \mpQEmpty
\)}

\smallskip
\noindent
where
$ \mpP = \procout{\roleQ}{\mpLab}{\text{``\texttt{abc}''}}{\mpPi}$
with
$\mpPi = \sum
\setenum{
\begin{array}{l}
\procin{\roleQ}{\mpLabi(\mpx)}{\mpNil}
\\
\procin{\roleQ}{\mpCrashLab}{\mpNil}
\end{array}
}
$,
 and
$
 \mpQ =
  \sum \setenum{
 \begin{array}{l}
   \procin{\roleP}{\mpLab(\mpx)}{\mpQi}
 \\
  \procin{\roleP}{\mpCrashLab}{\mpNil}
  \end{array}
  }
  $
  with
  $
  \mpQi = \procout{\roleP}{\mpLabi}{42}{\mpNil}
  $.

In this session,  the process $\mpQ$ for $\roleQ$ receives a message sent from $\roleP$ to $\roleQ$;
the process $\mpP$ for $\roleP$ sends a message from $\roleP$ to $\roleQ$, and then receives a message
sent from $\roleQ$ to  $\roleP$.

 On a successful reduction (without crashes),  %
     we have:

 \smallskip
 \centerline{\(
 \begin{array}{rll}
  \mpPart\roleP\mpP \mpPar  \mpPart\roleP \mpQEmpty
     \mpPar \mpPart \roleQ \mpQ \mpPar  \mpPart \roleQ \mpQEmpty
     & \redSend{\roleP}{\roleQ}{\mpLab} &
  \mpPart \roleP \mpPi
\mpPar
\mpPart\roleP \mpQEmpty  \mpPar
\mpPart \roleQ \mpQ \mpPar  \mpPart \roleQ (\roleP,\mpLab(\text{``\texttt{abc}''}))
\\
& \redSend{\roleP}{\roleQ}{\mpLab} &
  \mpPart \roleP \mpPi
\mpPar
\mpPart\roleP \mpQEmpty
\mpPar
\mpPart \roleQ \mpQi
\mpPar
\mpPart\roleQ \mpQEmpty
\\
& \redSend{\roleP}{\roleQ}{\mpLab} &
  \mpPart \roleP \mpPi
\mpPar
\mpPart\roleP (\roleQ, \mpLabi(42))
\mpPar
\mpPart \roleQ \mpNil
\mpPar
\mpPart\roleQ \mpQEmpty
\\
& \redSend{\roleP}{\roleQ}{\mpLab} &
\mpPart\roleP \mpNil
\mpPar
\mpPart\roleP \mpQEmpty
\mpPar
\mpPart \roleQ \mpNil
\mpPar
\mpPart\roleQ \mpQEmpty
\end{array}
\)}

\smallskip
\noindent
Let $\rolesR = \emptyset$ (\ie each role is unreliable).
Suppose that
$\mpP$ crashes before sending, which leads to the reduction:

\smallskip
 \centerline{\(
 \begin{array}{rll}
  \mpPart\roleP\mpP \mpPar  \mpPart\roleP \mpQEmpty
     \mpPar \mpPart \roleQ \mpQ \mpPar  \mpPart \roleQ \mpQEmpty
    &\redCrash{\roleP}{\rolesR} &
     \mpPart\roleP{\mpCrash}
\mpPar
\mpPart\roleP{\mpQUnavail}
\mpPar
\mpPart \roleQ \mpQ \mpPar  \mpPart \roleQ \mpQEmpty \\
    &\redSend{\roleP}{\roleQ}{\mpLab}&
\mpPart\roleP{\mpCrash}
\mpPar
\mpPart\roleP{\mpQUnavail}
\mpPar
\mpPart \roleQ \mpNil
\mpPar
\mpPart\roleQ \mpQEmpty
\end{array}
\)}

\smallskip
\noindent
We can observe that when the output (sending) process $\mpP$ located at an unreliable role $\roleP$ crashes (by
\inferrule{r-$\lightning$}), $\roleP$ also crashes ($\mpPart\roleP{\mpCrash}$),  with an unavailable incoming message
queue ($\mpPart\roleP{\mpQUnavail}$). Subsequently, the input (receiving) process $\mpQ$ located at $\roleQ$ can detect and
handle the crash by \inferrule{r-rcv-$\odot$} via its handling branch.
\end{exa}

\section{Asynchronous Multiparty Session Types with Crash-Stop Semantics}
\label{sec:gtype}
\label{SEC:GTYPE}

In this section, we present our asynchronous multiparty session types
with crash-stop semantics.
We give an overview of global and local types with crashes in Section~\ref{sec:gtype:syntax},
including syntax, projection, and subtyping.
Our key additions to the classic theory are \emph{crash handling branches} in
both global and local types,
and a special local type $\stStop$ to denote crashed processes.
We give a Labelled Transition System (LTS) semantics to both global types
(Section~\ref{sec:gtype:lts-gt}) and configurations (\ie a collection of local types
and point-to-point communication queues, Section~\ref{sec:gtype:lts-context}).
We discuss alternative design options of modelling crash-stop
failures in Section~\ref{sec:gtype:alternative}.
We relate the two semantics in Section~\ref{sec:gtype:relating}, and show that
a configuration obtained via projection is safe, deadlock-free, and live in Section~\ref{sec:gtype:pbp}.

\subsection{Global and Local Types with Crash-Stop Failures}\label{sec:gtype:syntax}

The top-down methodology begins with \emph{global types} to provide an
overview of
the communication between a number of \emph{roles}
($\roleP, \roleQ, \roleS, \roleT, \ldots$),
belonging to a (fixed) set $\roleSet$.
On the other end, we use \emph{local types} to describe how a \emph{single}
role communicates with other roles from a local perspective, and they are
obtained via \emph{projection} from a global type.
We give the syntax of both global and local types in \cref{fig:syntax-mpst},
which are similar to syntax used in~\cite{POPL19LessIsMore,CONCUR22MPSTCrash}.

\begin{figure}[t]%
   \noindent
  \centerline{\(
    \begin{array}{r@{\quad}c@{\quad}l@{\quad}l}
      \tyGround & \bnfdef & \tyNat \bnfsep \tyInt \bnfsep \tyBool \bnfsep \tyString \bnfsep%
      \tyUnit \bnfsep \ldots
        & \text{\small Basic types} 
        \\[.7mm]
      \gtG & \bnfdef &
        \gtComm{\roleP}{\roleFmt{q^{\runtime{\roleMaybeCrashedSym}}}}{i \in
        I}{\gtLab[i]}{\tyGround[i]}{\gtG[i]}
        &
        {\small \text{Transmission}} 
        \\[.7mm]
        & \bnfsep &
        \runtime{
          \gtCommTransit{\rolePMaybeCrashed}{\roleQ}{i \in
          I}{\gtLab[i]}{\tyGround[i]}{\gtG[i]}{j}
        }~(j \in I)
        &
        \runtime{\small \text{Transmission en route (Runtime)}} 
        \\[.7mm]
        & \bnfsep & \gtRec{\gtRecVar}{\gtG} \quad \bnfsep \quad \gtRecVar \quad
        \bnfsep \quad \gtEnd &
        \text{\small Recursion, Type variable, Termination} 
        \\[.7mm]
      \runtime{\roleMaybeCrashedSym} & \bnfdef & \cdot \quad
      \bnfsep \quad \runtime{\roleCrashedSym} & \runtime{\text{\small Crash
      annotation (Runtime)}}
        \\[1ex]
      \stS, \stT
        & \bnfdef & \stExtSum{\roleP}{i \in I}{\stChoice{\stLab[i]}{\tyGround[i]} \stSeq \stT[i]}
          & \text{\small External choice (Receive)} 
          \\[.7mm]
        & \bnfsep & \stIntSum{\roleP}{i \in I}{\stChoice{\stLab[i]}{\tyGround[i]} \stSeq \stT[i]}
          & \text{\small Internal choice (Send)}
          \\[.7mm]
          & \bnfsep & \stRec{\stRecVar}{\stT} \ \bnfsep \ \stRecVar 
          &\text{\small Recursion, Type variable} 
          \\[.7mm]
        & \bnfsep & \stEnd \ \bnfsep \ \runtime{\stStop} &
        \text{\small Termination, $\runtime{\text{Crash (Runtime)}}$} 
        \\
    \end{array}
  \)}
  \caption{Syntax of basic types, global types,  and local types. Runtime types are
  $\runtime{\text{shaded}}$.}
  \label{fig:syntax-global-type}%
  \label{fig:syntax-local-type}%
  \label{fig:syntax-mpst}
\end{figure}

\paragraph*{Basic Types}
\emph{Basic types} (types for payloads) are taken from a set $\tyGroundSet$,
and describe the types of values such as natural numbers, integers, booleans, strings, and units.

\paragraph*{Global Types} %
\emph{Global types}
are ranged over $\gtG, \gtGi, \gtG[i], \ldots$,
and describe the behaviour for all roles from a bird's eye view.
The syntactic constructs shown in  $\runtime{\text{shade}}$ are \emph{runtime} constructs, which are not used for
describing a system at design-time, but for describing the state of a system
during execution. %

We explain each global type syntactic construct:
a transmission
$\gtComm{\roleP}{\roleQMaybeCrashed}{i \in I}{\gtLab[i]}{\tyGround[i]}{\gtG[i]}$
denotes a message from role $\roleP$ to role $\roleQ$ (with possible crash
annotations), with labels $\gtLab[i]$, payload types $\tyGround[i]$,
and continuations $\gtG[i]$, where $i$ is taken from an index set $I$.
We require that the index set be non-empty ($I \neq \emptyset$), labels
$\gtLab[i]$ be pair-wise distinct and taken from a fixed set of  labels $\labSet$,
and self receptions be excluded (\ie
$\roleP \neq \roleQ$), as is standard in session type literature. %
Additionally, we require that the special $\gtCrashLab$ label (explained later)
not be the only label in a transmission, \ie $\setcomp{\gtLab[i]}{i \in I} \neq
\setenum{\gtCrashLab}$.
A transmission en route
$\gtCommTransit{\rolePMaybeCrashed}{\roleQ}{i \in
I}{\gtLab[i]}{\tyGround[i]}{\gtG[i]}{j}$
is a runtime construct representing a message $\gtLab[j]$ (index $j$) sent by
$\roleP$, and yet to be received by $\roleQ$.
Recursive types are represented via $\gtRec{\gtRecVar}{G}$ and $\gtRecVar$,
where contractive requirements apply~\cite[\S 21.8]{PierceTAPL}.
The type $\gtEnd$ describes a terminated type (omitted where
unambiguous).

To model crashes and crash handling, we use crash annotations $\roleCrashedSym$
and crash handling branches:
a \emph{crash annotation} $\roleCrashedSym$, a new addition in this work, marks
a \emph{crashed} role (only used in the \emph{runtime syntax}), and we omit
annotations for \emph{live}~(or \emph{active}) roles, \ie $\roleP$ is a live role, $\rolePCrashed$ is a
crashed role, and $\rolePMaybeCrashed$ represents a possibly crashed role,
namely either $\roleP$ or $\rolePCrashed$.
We use a special label $\gtCrashLab$ for handling crashes: this
continuation denotes the protocol to follow when the sender of a
message is detected to have crashed by the receiver.
The special label acts as a `pseudo'-message: when a sender role crashes, the
receiver can select the `pseudo'-message to enter crash handling.

\begin{defi}[Set of Live and Crashed Roles]
\label{def:active_crashed_roles}
The set of \emph{live} %
roles in a global
type $\gtG$, denoted $\gtRoles{\gtG}$,
and the set of \emph{crashed} roles, denoted $\gtRolesCrashed{\gtG}$,
are defined inductively as:

\smallskip
\centerline{\(
\footnotesize
\begin{array}{r@{\;}l@{\;}lr@{\;}l@{\;}l}
    \gtRoles{\gtComm{\roleP}{\roleQ}{i \in I}{\gtLab[i]}{\tyGround[i]}{\gtG[i]}
    }
    &= &
     \setenum{\roleP, \roleQ} \cup \bigcup\limits_{i \in I}{\gtRoles{\gtG[i]}}
    &
     \gtRolesCrashed{\gtComm{\roleP}{\roleQ}{i \in I}{\gtLab[i]}{\tyGround[i]}{\gtG[i]}
    }
    &=&
     \bigcup\limits_{i \in I}{\gtRolesCrashed{\gtG[i]}}
    \\[0.5mm]
    \gtRoles{
      \gtComm{\roleP}{\roleQCrashed}{i \in I}{\gtLab[i]}{\tyGround[i]}{\gtG[i]}
    }
    &=&
    \setenum{\roleP} \cup \bigcup\limits_{i \in I}{\gtRoles{\gtG[i]}}
    &
    \gtRolesCrashed{
      \gtComm{\roleP}{\roleQCrashed}{i \in I}{\gtLab[i]}{\tyGround[i]}{\gtG[i]}
    }
    &=&
     \setenum{\roleQ} \cup \bigcup\limits_{i \in I}{\gtRolesCrashed{\gtG[i]}}
    \\[0.5mm]
    \gtRoles{
      \gtCommTransit{\rolePMaybeCrashed}{\roleQ}{i \in I}{\gtLab[i]}{\tyGround[i]}{\gtG[i]}{j}
    }
     &=&
     \setenum{\roleQ} \cup \bigcup\limits_{i \in I}{\gtRoles{\gtG[i]}}
    &
    \gtRolesCrashed{
      \gtCommTransit{\rolePMaybeCrashed}{\roleQ}{i \in I}{\gtLab[i]}{\tyGround[i]}{\gtG[i]}{j}
    }
     &=&
    \bigcup\limits_{i \in I}{\gtRolesCrashed{\gtG[i]}}
    \\[0.5mm]
    \gtRoles{\gtEnd} = \gtRoles{\gtRecVar}
     &=&
     \emptyset
    &
     \gtRolesCrashed{\gtEnd} = \gtRolesCrashed{\gtRecVar}
     &=&
     \emptyset
    \\[0.5mm]
    \gtRoles{\gtRec{\gtRecVar}{\gtG}}
     &=&
     \gtRoles{\gtG\subst{\gtRecVar}{\gtRec{\gtRecVar}{\gtG}}}
    &
     \gtRolesCrashed{\gtRec{\gtRecVar}{\gtG}}
     &=&
     \gtRolesCrashed{\gtG\subst{\gtRecVar}{\gtRec{\gtRecVar}{\gtG}}}
  \end{array}
 \)}
\end{defi}

\begin{rem}
  Even if the sender $\roleP$ occurs as crashed during a communication in
  transit, it is not considered
 as crashed unless it appears
 crashed in a continuation:
  \[
  \gtRolesCrashed{
    \gtCommTransit{\rolePCrashed}{\roleQ}{i \in I}{\gtLab[i]}{\tyGround[i]}{\gtG[i]}{j}
  }
  = \bigcup\limits_{i \in I}{\gtRolesCrashed{\gtG[i]}}.
\]
\end{rem}

\paragraph*{Local Types}  \emph{Local types} (or \emph{session types}) are ranged over by $\stS, \stT, \stU,
\ldots$, and describe the behaviour of a single role.
An internal choice (selection)
$\stIntSum{\roleP}{i \in I}{\stChoice{\stLab[i]}{\tyGround[i]} \stSeq \stT[i]}$
(resp.~an external choice (branching)
$\stExtSum{\roleP}{i \in I}{\stChoice{\stLab[i]}{\tyGround[i]} \stSeq \stT[i]}$%
)
indicates that the \emph{current} role is to \emph{send} to (resp.~%
\emph{receive} from) the role $\roleP$.
Similarly to global types, we require pairwise-distinct,
non-empty labels in local types.
Moreover, we require that the $\stCrashLab$ label not appear in \emph{internal}
choices, reflecting that a $\stCrashLab$ `pseudo'-message can never be
sent; and that singleton $\stCrashLab$ labels are not permitted in external choices.
The type $\stEnd$ indicates a \emph{successful} termination (omitted where
unambiguous), and recursive types follow a similar fashion to global types.
We use a new \emph{runtime} type $\stStop$ to denote crashes.

\paragraph*{Typographical Conventions}
Whenever a payload type $\tyGround$ is insignificant, we choose to omit it from
the global or local type.
This is frequently the case when we discuss the branches with $\gtCrashLab$ (in
a global type) or $\stCrashLab$ (in a local types) labels,
where the payload type is irrelevant.

\paragraph*{Projection}  \emph{Projection} gives the local type of a participating role in a global
type,
defined as a \emph{partial} function that takes a global type
$\gtG$ and a
role $\roleP$ to project on, and returns a local type, given by~\autoref{def:global-proj}.

\begin{defi}[Global Type Projection]%
  \label{def:global-proj}%
  \label{def:local-type-merge}%
  \label{def:removing-crash-label}%
  The \emph{projection of a global type $\gtG$ onto a role $\roleP$}, %
  with respect to a set of \emph{reliable} roles $\rolesR$,
  written \;$\gtProj[\rolesR]{\gtG}{\roleP}$,\; %
  is:

  \smallskip%
  \centerline{\(%
  \small%
  \begin{array}{c}
    \gtProj[\rolesR]{\left(%
      \gtCommSmall{\roleQ}{\roleRMaybeCrashed}
      {i \in I}{\gtLab[i]}{\tyGround[i]}{\gtG[i]}%
      \right)}{\roleP}%
    =\!%
    \left\{%
    \begin{array}{@{}l@{\hskip 3mm}l@{}}
      \stIntSum{\roleR}{\highlight{i \in \setcomp{j \in I}{\stFmt{\stLab[j]} \neq \stCrashLab}}}{ %
        \stChoice{\stLab[i]}{\tyGround[i]} \stSeq (\gtProj[\rolesR]{\gtG[i]}{\roleP})%
      }%
      \hspace{-1.5cm}
      &\text{\footnotesize%
        if\, $\roleP = \roleQ$%
      }%
      \\[3mm]%
      \stExtSum{\roleQ}{i \in I}{%
        \stChoice{\stLab[i]}{\tyGround[i]} \stSeq (\gtProj[\rolesR]{\gtG[i]}{\roleP})%
      }%
      &
      \begin{array}{@{}r@{}}
        \text{\footnotesize if \,} \roleP = \roleR,\text{\footnotesize
        \,and \,}
       \highlight{\roleQ \notin \rolesR \text{\footnotesize \, implies}}
        \\
        \highlight{\exists k \in I : \gtLab[k] = \gtCrashLab}
      \end{array}
      \\[3mm]%
      \stMerge{i \in I}{\gtProj[\rolesR]{\gtG[i]}{\roleP}}%
      &
      \text{\footnotesize%
        if \,} \roleP \neq \roleQ,\text{\footnotesize \,and \,} \roleP \neq \roleR%
    \end{array}
    \right.
    \\[12mm]%
     \gtProj[\rolesR]{\left(%
      \gtCommTransit{\roleQMaybeCrashed}{\roleR}
      {i \in I}{\gtLab[i]}{\tyGround[i]}{\gtG[i]}{j}%
      \right)}{\roleP}%
    =\!%
    \left\{%
    \begin{array}{@{}l@{\hskip 12mm}l@{}}
      \gtProj[\rolesR]{\gtG[j]}{\roleP}%
      &\text{\footnotesize%
        if\, $\roleP = \roleQ$%
      }%
      \\[3mm]%
      \stExtSum{\roleQ}{i \in I}{%
        \stChoice{\stLab[i]}{\tyGround[i]} \stSeq (\gtProj[\rolesR]{\gtG[i]}{\roleP})%
      }%
      &
      \begin{array}{@{}r@{}}
        \text{\footnotesize if \,} \roleP = \roleR,\text{\footnotesize
        \,and \,}
       \highlight{\roleQ \notin \rolesR \text{\footnotesize \, implies}}
        \\
       \highlight{\exists k \in I : \gtLab[k] = \gtCrashLab}
      \end{array}
      \\[3mm]%
     \stMerge{i \in I}{\gtProj[\rolesR]{\gtG[i]}{\roleP}}%
      &
      \text{\footnotesize%
        if \,} \roleP \neq \roleQ, \text{\footnotesize \,and \,} \roleP \neq \roleR%
    \end{array}
    \right.
    \\[12mm]%
    \gtProj[\rolesR]{(\gtRec{\gtRecVar}{\gtG})}{\roleP}%
    \;=\;%
    \left\{%
    \begin{array}{@{\hskip 0.5mm}l@{\hskip 5mm}l@{}}
      \stRec{\stRecVar}{(\gtProj[\rolesR]{\gtG}{\roleP})}%
      &%
      \text{\footnotesize%
        if\,
        $\roleP \in \gtG$ \,or\,
        $\fv{\gtRec{\gtRecVar}{\gtG}} \neq \emptyset$%
      }%
      \\%
      \stEnd%
      &%
      \text{\footnotesize%
        otherwise}
    \end{array}
    \right.%
    \quad\qquad%
    \begin{array}{@{}r@{\hskip 1mm}c@{\hskip 1mm}l@{}}
      \gtProj[\rolesR]{\gtRecVar}{\roleP}%
      &=&%
      \stRecVar%
      \\%
      \gtProj[\rolesR]{\gtEnd}{\roleP}%
      &=&%
      \stEnd%
    \end{array}
  \end{array}
  \)}%
  \smallskip%

  \noindent%
  where
  $\stMerge{}{}$ is %
  the \emph{merge operator for session types} %
  (\emph{full merging}):

  \smallskip%
  \centerline{\(%
  \small%
  \begin{array}{c}%
    \textstyle%
    \stExtSum{\roleP}{i \in I}{\stChoice{\stLab[i]}{\tyGround[i]} \stSeq \stSi[i]}%
    \!\stBinMerge\!%
    \stExtSum{\roleP}{\!j \in J}{\stChoice{\stLab[j]}{\tyGround[j]} \stSeq \stTi[j]}%
   \\
    \;=\;%
    \stExtSum{\roleP}{k \in I \cap J}{\stChoice{\stLab[k]}{\tyGround[k]} \stSeq%
      (\stSi[k] \!\stBinMerge\! \stTi[k])%
    }%
    \stExtC%
    \stExtSum{\roleP}{i \in I \setminus J}{\stChoice{\stLab[i]}{\tyGround[i]} \stSeq \stSi[i]}%
    \stExtC%
    \stExtSum{\roleP}{\!j \in J \setminus I}{\stChoice{\stLab[j]}{\tyGround[j]} \stSeq \stTi[j]}%
    \\[3mm]%
    \stIntSum{\roleP}{i \in I}{\stChoice{\stLab[i]}{\tyGround[i]} \stSeq \stSi[i]}%
    \,\stBinMerge\,%
    \stIntSum{\roleP}{i \in I}{\stChoice{\stLab[i]}{\tyGround[i]} \stSeq \stTi[i]}%
    \;=\;%
    \stIntSum{\roleP}{i \in I}{\stChoice{\stLab[i]}{\tyGround[i]} \stSeq (\stSi[i]
    \stBinMerge \stTi[i])}%
    \\[1mm]%
    \stRec{\stRecVar}{\stS} \,\stBinMerge\, \stRec{\stRecVar}{\stT}%
    \,=\,%
    \stRec{\stRecVar}{(\stS \stBinMerge \stT)}%
    \qquad%
    \stRecVar \,\stBinMerge\, \stRecVar%
    \,=\,%
    \stRecVar%
    \qquad%
    \stEnd \,\stBinMerge\, \stEnd%
    \,=\,%
    \stEnd%
  \end{array}
  \)}%
\end{defi}

We parameterise our theory on a (fixed) set of \emph{reliable} roles
$\rolesR$, \ie roles assumed to \emph{never crash}, for modelling optional reliability assumptions:
if $\rolesR = \emptyset$, we assume every role to be unreliable and
susceptible to crash;
if $\gtRoles{\gtG} \subseteq \rolesR$, we assume every role in $\gtG$ to be
reliable, and we simulate the results from the original MPST theory without
crashes\footnote{%
  Here we consider the original MPST theory without delegation (session
  passing).
}.
We base our definition of projection on the standard
definition~\cite{POPL19LessIsMore}, but include more ($\highlight{\text{highlighted}}$) cases to account for reliable
roles, $\stCrashLab$ branches, and runtime global types.

When projecting a transmission from $\roleQ$ to $\roleR$, we remove the
$\stCrashLab$ label from the internal choice at $\roleQ$, reflecting our
model that a $\stCrashLab$ `pseudo'-message cannot be sent.
Dually, we require a $\stCrashLab$ label to be present in the external choice
at $\roleR$ -- unless the sender role $\roleQ$ is assumed to be reliable.
Our definition of projection enforces that transmissions, whenever
an unreliable role is the sender ($\roleQ \notin \rolesR$),
\emph{must include} a crash handling branch
($\exists k \in I: \gtLab[k] = \gtCrashLab$).
This requirement ensures that the receiving role $\roleR$ can \emph{always} handle
crashes whenever it happens, so that processes are not stuck when crashes occur.
We explain how these requirements help us achieve various properties by
projection, such as safety, deadlock-freedom, and liveness,  in Section~\ref{sec:gtype:pbp}.
The rest of the rules, including the definition of the merge operator, are
taken from the literature~\cite{POPL19LessIsMore, VanGlabbeekLICS2021},
without much modification.

\paragraph*{Subtyping}  We define a \emph{subtyping} relation $\stSub$ on local types in~\autoref{def:subtyping},
which will be used later to %
relate the semantics of global types and configurations in Section~\ref{sec:gtype:relating}.
Our subtyping relation is mostly standard~\cite[Def.\@ 2.5]{POPL19LessIsMore},
except for the ($\highlight{\text{highlighted}}$) addition of the rule $\inferrule{\iruleStSubStop}$ and
extra requirements in \inferrule{\iruleStSubIn}.
In \inferrule{\iruleStSubIn}, we add two additional requirements:
\emph{(1)} the supertype cannot be a `pure' crash handling branch;
and \emph{(2)} if the subtype has a crash handling branch, then the supertype
must also have one.
For simplicity, we do not consider subtyping on basic types $\tyGround$.

\begin{defi}[Subtyping]
\label{def:subtyping}
The subtyping relation $\stSub$ is coinductively defined:

\smallskip
\centerline{\(
\begin{array}{c}
\cinference[\iruleStSubEnd]{}{
  \stEnd \stSub \stEnd
}
\quad

\cinference[\iruleStSubIn]{
  \forall i \in I
  &
  \stT[i] \stSub \stTi[i]
  &
  \highlight{\setcomp{\stLab[k]}{k \in I} \neq \setenum{\stCrashLab}}
  &
 \highlight{\nexists j \in J: \stLab[j] = \stCrashLab}
}{
  \stExtSum{\roleP}{i \in I \cup J}{\stChoice{\stLab[i]}{\tyGround[i]} \stSeq \stT[i]}%
  \stSub
  \stExtSum{\roleP}{i \in I}{\stChoice{\stLab[i]}{\tyGround[i]} \stSeq \stTi[i]}%
}

\\[2ex]
\highlight{
\cinference[\iruleStSubStop]{}{
  \stStop \stSub \stStop
}}
\quad

\cinference[\iruleStSubOut]{
  \forall i \in I
  &
  \stT[i] \stSub \stTi[i]
}{
  \stIntSum{\roleP}{i \in I}{\stChoice{\stLab[i]}{\tyGround[i]} \stSeq \stT[i]}
  \stSub
  \stIntSum{\roleP}{i \in I \cup J}{\stChoice{\stLab[i]}{\tyGround[i]} \stSeq \stTi[i]}
}

\\[2ex]
\cinference[\iruleStSubRecL]{
  \stT{}\subst{\stRecVar}{\stRec{\stRecVar}{\stT}} \stSub \stTi
}
{
  \stRec{\stRecVar}{\stT} \stSub \stTi
}
\quad

\cinference[\iruleStSubRecR]{
  \stT \stSub \stTi{}\subst{\stRecVar}{\stRec{\stRecVar}{\stTi}}
}{
  \stT \stSub \stRec{\stRecVar}{\stTi}
}
\end{array}
\)}
\end{defi}

As standard, our subtyping relation is reflexive and transitive.
Additionally,  the properties of subtyping related to merges are
demonstrated in Lemmas~\ref{lem:merge-subtyping},~\ref{lem:merge-upper-bound},
and~\ref{lem:subtype:merge-subty}.  The proofs are available in Appendix~\ref{sec:app:subtyping}.

\begin{restatable}[Reflexivity and Transitivity of Subtyping]{lem}{propSubtyping}%
\label{prop:subtyping-ref-tran}
The subtyping relation $\stSub$ is reflexive and transitive.
\end{restatable}

\begin{restatable}{lem}{mergeSubtyping}%
\label{lem:merge-subtyping}
  Given a collection of mergable local types $\stT[i]$ ($i \in I$).
  For all $j \in I$, $\stMerge{i \in I}{\stT[i]} \stSub \stT[j]$ holds.
\end{restatable}

\begin{restatable}{lem}{mergeUpperSubtyping}%
\label{lem:merge-upper-bound}
  Given a collection of mergable local types $\stT[i]$ ($i \in I$).
  If for all $i \in I$, $\stS \stSub \stT[i]$ for some local type $\stS$,
  then $\stS \stSub \stMerge{i \in I}{\stT[i]}$.
\end{restatable}

\begin{restatable}{lem}{submergeSubtyping}%
\label{lem:subtype:merge-subty}
  Given two collections of mergable local types $\stS[i],  \stT[i]$ ($i \in I$).
  If for all $i \in I$, $\stS[i] \stSub \stT[i]$, then
  $\stMerge{i \in I} {\stS[i]} \stSub \stMerge{i \in I}{\stT[i]}$.
\end{restatable}

\subsection{Crash-Stop Semantics of Global Types}
\label{sec:gtype:lts-gt}
We now give a Labelled Transition System (LTS) semantics to global types,
with crash-stop semantics.
To this end, we first introduce some
auxiliary definitions.
We define the transition labels in~\autoref{def:mpst-env-reduction-label},
which are also used in the LTS semantics of configurations
(later in Section~\ref{sec:gtype:lts-context}).

\begin{defi}[Transition Labels]
  \label{def:mpst-env-reduction-label}%
  \label{def:mpst-label-subject}%
  Let $\stEnvAnnotGenericSym$ %
  be a transition label of the form:

\smallskip
\centerline{\(
  \begin{array}{rclllll}
  \stEnvAnnotGenericSym &\bnfdef&
    \stEnvInAnnotSmallLab{\roleP}{\roleQ}{\stChoice{\stLab}{\tyGround}}&
    (\text{$\roleP$ receives $\stChoice{\stLab}{\tyGround}$ from $\roleQ$})
    &\bnfsep&\stEnvOutAnnotSmallLab{\roleP}{\roleQ}{\stChoice{\stLab}{\tyGround}}&
    (\text{$\roleP$ sends $\stChoice{\stLab}{\tyGround}$ to $\roleQ$})
    \\
    &\bnfsep&\ltsCrash{\mpS}{\roleP}&%
    (\text{$\roleP$ crashes}) %
    &\bnfsep&\ltsCrDe{\mpS}{\roleP}{\roleQ}&%
    (\text{$\roleP$ detects the crash of $\roleQ$})
  \end{array}
  \)}

\smallskip
\noindent
The subject of a transition label, written
\;$\ltsSubject{\stEnvAnnotGenericSym}$,\; is defined as:

\smallskip
\centerline{\(
    \ltsSubject{\stEnvInAnnotSmallLab{\roleP}{\roleQ}{\stChoice{\stLab}{\tyGround}}}
    =
    \ltsSubject{\stEnvOutAnnotSmallLab{\roleP}{\roleQ}{\stChoice{\stLab}{\tyGround}}}
    =
    \ltsSubject{\ltsCrash{\mpS}{\roleP}}
    =
    \ltsSubject{\ltsCrDe{\mpS}{\roleP}{\roleQ}}
    =
    \roleP. 
\)}
\end{defi}

The labels
$\stEnvOutAnnotSmallLab{\roleP}{\roleQ}{\stChoice{\stLab}{\tyGround}}$
and
$\stEnvInAnnotSmallLab{\roleP}{\roleQ}{\stChoice{\stLab}{\tyGround}}$
describe sending and receiving actions respectively.
The crash of a role $\roleP$ is denoted by the %
label
$\ltsCrashSmall{\mpS}{\roleP}$, and
the detection of a crash by %
label $\ltsCrDe{\mpS}{\roleP}{\roleQ}$:
we model crash detection at \emph{reception}, the label
contains a \emph{detecting} role $\roleP$ and a \emph{crashed} role $\roleQ$.
We use these labels when giving the semantics for global types and
configurations.

We then define an operator to \emph{remove} a role from a
global type in~\autoref{def:gtype:remove-role}: the intuition is to remove any interaction of a
crashed role from the given global type.
When a role has crashed, we remove it by attaching a \emph{crashed annotation},
and removing infeasible actions, \eg when the sender and receiver of a
transmission have both crashed.
The removal operator is a partial function that takes a global type $\gtG$ and a
live role $\roleR$ ($\roleR \in \gtRoles{\gtG}$) and gives a global type
$\gtCrashRole{\gtG}{\roleR}$.

\begin{defi}[Role Removal]%
  \label{def:gtype:remove-role}
  The removal of a live role $\roleP$ in a global type
  $\gtG$, written \;$\gtCrashRole{\gtG}{\roleP}$,\; is defined as follows:
\newlength{\gtRemovalLHSWidth}
\settowidth{\gtRemovalLHSWidth}{$
\gtCommTransit{\rolePCrashed}{\roleQ}{i \in I}{\gtLab[i]}{\tyGround[i]}{
(\gtCrashRole{\gtG[i]}{\roleR})}{j}
$}

\smallskip
  \centerline{\(
\small
  \begin{array}{@{}r@{~}c@{~}l@{}}
    \gtCrashRole{
      (\gtCommSmall{\roleP}{\roleQ}{i \in I}{\gtLab[i]}{\tyGround[i]}{\gtG[i]})
    }{
      \roleR
    }
    &=&
    \left\{
      \begin{array}{@{}ll@{}}
        \gtCommTransit{\rolePCrashed}{\roleQ}{i \in I}{\gtLab[i]}{\tyGround[i]}{
          (\gtCrashRole{\gtG[i]}{\roleR})}{j}
          &
        \text{if~} \roleP = \roleR \text{~and~} \exists j \in I: \gtLab[j] =
        \gtCrashLab \\
        \gtCommSmall{\roleP}{\roleQCrashed}{i \in I}{\gtLab[i]}{\tyGround[i]}{
          (\gtCrashRole{\gtG[i]}{\roleR})
        } &
        \text{if~} \roleQ = \roleR \\
        \makebox[\gtRemovalLHSWidth][l]{%
          \gtCommSmall{\roleP}{\roleQ}{i \in I}{\gtLab[i]}{\tyGround[i]}{
            (\gtCrashRole{\gtG[i]}{\roleR})
          }
        }
        &
        \text{if~} \roleP \neq \roleR \text{~and~} \roleQ \neq \roleR
      \end{array}
    \right.
    \\[8mm]
    \gtCrashRole{
      (\gtCommTransit{\roleP}{\roleQ}{i \in
      I}{\gtLab[i]}{\tyGround[i]}{\gtG[i]}{j})
    }{
      \roleR
    }
    &=&
    \left\{
      \begin{array}{@{}ll@{}}
        \gtCommTransit{\rolePCrashed}{\roleQ}{i \in I}{\gtLab[i]}{\tyGround[i]}{
          (\gtCrashRole{\gtG[i]}{\roleR})}{j}
          &
        \text{if~} \roleP = \roleR
        \\
        \makebox[\gtRemovalLHSWidth][l]{\gtCrashRole{\gtG[j]}{\roleR}}
          &
        \text{if~} \roleQ = \roleR
        \\
        \gtCommTransit{\roleP}{\roleQ}{i \in I}{\gtLab[i]}{\tyGround[i]}{
          (\gtCrashRole{\gtG[i]}{\roleR})}{j}
          &
        \text{if~} \roleP \neq \roleR \text{~and~} \roleQ \neq \roleR
      \end{array}
    \right.
    \\[8mm]
    \gtCrashRole{
      (\gtCommSmall{\roleP}{\roleQCrashed}{i \in I}{\gtLab[i]}{\tyGround[i]}{\gtG[i]})
    }{
      \roleR
    }
    &=&
    \left\{
      \begin{array}{@{}ll@{}}
        \makebox[\gtRemovalLHSWidth][l]{\gtCrashRole{\gtG[j]}{\roleR}}
        &
        \text{if~} \roleP = \roleR \text{~and~} \exists j \in I: \gtLab[j] = \gtCrashLab \\
        \gtCommSmall{\roleP}{\roleQCrashed}{i \in I}{\gtLab[i]}{\tyGround[i]}{
          (\gtCrashRole{\gtG[i]}{\roleR})
        } &
        \text{if~} \roleP \neq \roleR \text{~and~} \roleQ \neq \roleR
      \end{array}
    \right.
    \\[6mm]
    \gtCrashRole{
      (\gtCommTransit{\rolePCrashed}{\roleQ}{i \in
      I}{\gtLab[i]}{\tyGround[i]}{\gtG[i]}{j})
    }{
      \roleR
    }
    &=&
    \left\{
      \begin{array}{@{}ll@{}}
        \makebox[\gtRemovalLHSWidth][l]{\gtCrashRole{\gtG[j]}{\roleR}} &
        \text{if~} \roleQ = \roleR
        \\
        \gtCommTransit{\rolePCrashed}{\roleQ}{i \in I}{\gtLab[i]}{\tyGround[i]}{
          (\gtCrashRole{\gtG[i]}{\roleR})}{j}
          &
        \text{if~} \roleP \neq \roleR \text{~and~} \roleQ \neq \roleR
      \end{array}
    \right.
    \\[6mm]
    \gtCrashRole{
      (\gtRec{\gtRecVar}{\gtG})
    }{
      \roleR
    }
    &=&
    \left\{
      \begin{array}{@{}ll@{}}
        \makebox[\gtRemovalLHSWidth][l]{\gtRec{\gtRecVar}{(\gtCrashRole{\gtG}{\roleR})}} & \text{if~}
        \fv{\gtRec{\gtRecVar}{\gtG}} \neq \emptyset \text{~or~}
        \gtRoles{\gtCrashRole{\gtG}{\roleR}}
        \neq \emptyset \\
        \gtEnd &
        \text{otherwise}
      \end{array}
    \right.
    \\[6mm]
    \gtCrashRole{
      \gtRecVar
    }{
      \roleR
    }
    &=&
    \gtRecVar
    \hspace{4cm}
    \gtCrashRole{\gtEnd}{\roleR} = \gtEnd
  \end{array}
\)}
\end{defi}

We now explain the definition in detail.
For simple cases, the removal of a role $\gtCrashRole{\gtG}{\roleR}$ attaches
crash annotations $\roleCrashedSym$ on all occurrences of the removed role
$\roleR$ throughout global type $\gtG$ inductively.

We draw attention to some interesting cases:
when we remove the sender role $\roleP$ from a transmission prefix
$\gtFmt{\roleP \!\to\! \roleQ}$,
the result is a `pseudo'-transmission en route prefix
$\gtFmt{\rolePCrashed \!\rightsquigarrow\! \roleQ : j}$
where $\gtLab[j] = \gtCrashLab$.
This enables the receiver $\roleQ$ to `receive' the special $\gtCrashLab$
after the crash of $\roleP$, hence triggering the crash handling branch.
Recall that our definition of projection requires that a crash handling branch
be present whenever a crash may occur ($\roleQ \notin \rolesR$).

When we remove the sender role $\roleP$ from a transmission en route prefix
$\gtFmt{\roleP \!\rightsquigarrow\! \roleQ : j}$,
the result \emph{retains} the index $j$ that was selected by $\roleP$,
instead of the index associated with $\gtCrashLab$ handling.
This is crucial to our crash modelling: when a role crashes, the messages that
the role \emph{has sent} to other roles are still available.
We discuss alternative models later in Section~\ref{sec:gtype:alternative}.

In other cases, where removing the role $\roleR$ would render a transmission
(regardless of being en route or not) meaningless, \eg both sender and receiver
have crashed, we simply remove the prefix entirely.

\begin{exa}
\label{ex:role_removal}
We remove role $\roleFmt{C}$ in the global type $\gtG$ in Equation~\eqref{ex:overview-global-with-crash} (defined in Section~\ref{sec:overview}).

\smallskip
\centerline{\(
\gtCrashRole{\gtG}{\roleFmt{C}} =
    \gtCommSingle{\roleFmt{L}}{\roleFmt{I}}
    {\gtMsgFmt{trigger}}{}{\gtCommTransitSingle{\roleFmt{C^\lightning}}{\roleFmt{I}}
     {\gtCrashLab}{}
     {\gtCommSingle{\roleFmt{I}}{\roleFmt{L}}{\gtMsgFmt{fatal}}{}{
    \gtEnd
    }
}}
\)}

\smallskip
\noindent
Role $\roleFmt{C}$ now carries a crash annotation $\roleCrashedSym$ in the
 resulting global type, denoting it has crashed.
 Crash annotations change the reductions available for global types.
  \end{exa}

We now give a Labelled Transition System (LTS) semantics to a
global type $\gtG$,
by defining the semantics with a tuple $\gtWithCrashedRoles{\rolesC}{\gtG}$,
where
$\rolesC$ is a set of \emph{crashed} roles,
with all roles in $\rolesC$ belonging to the fixed role set  $\roleSet$, \ie $\rolesC \subseteq \roleSet$.
The transition system is parameterised by reliability assumptions, in the form
of a fixed set of reliable roles $\rolesR$.
Where it is not ambiguous, we write
$\gtG$ as an abbreviation of $\gtWithCrashedRoles{\rolesEmpty}{\gtG}$.
We define the reduction rules of global types in~\autoref{def:gtype:lts-gt}.

\begin{figure}[t]
 \noindent
  \centerline{\(
\begin{array}{c}
 \highlight{\inference[\iruleGtMoveCrash]{
    \roleP \notin \rolesR
      &
    \roleP \in \gtRoles{\gtG}
    &
    \gtG \neq \gtRec{\gtRecVar}{\gtGi}
  }{
    \gtWithCrashedRoles{\rolesC}{\gtG}
    \gtMove[\ltsCrashSmall{\mpS}{\roleP}]{\rolesR}
    \gtWithCrashedRoles{\rolesC \cup \setenum{\roleP}}{\gtCrashRole{\gtG}{\roleP}}
  }}
  \qquad

  \inference[\iruleGtMoveRec]{
    \gtWithCrashedRoles{\rolesC}{\gtG{}\subst{\gtRecVar}{\gtRec{\gtRecVar}{\gtG}}}
    \gtMove[\stEnvAnnotGenericSym]{\rolesR}
    \gtWithCrashedRoles{\rolesCi}{\gtGi}}
  {
    \gtWithCrashedRoles{\rolesC}{\gtRec{\gtRecVar}{\gtG}}
    \gtMove[\stEnvAnnotGenericSym]{\rolesR}
    \gtWithCrashedRoles{\rolesCi}{\gtGi}
  }
  \\[2ex]

  \inference[\iruleGtMoveOut]{
    j \in I
    &
    \highlight{\gtLab[j] \neq \gtCrashLab}
  }{
    \gtWithCrashedRoles{\rolesC}{
      \gtCommSmall{\roleP}{\roleQ}{i \in I}{\gtLab[i]}{\tyGround[i]}{\gtGi[i]}
    }
    \gtMove[
      \stEnvOutAnnotSmall{\roleP}{\roleQ}{\stChoice{\gtLab[j]}{\tyGround[j]}}
    ]{
      \rolesR
    }
    \gtWithCrashedRoles{\rolesC}{
      \gtCommTransit{\roleP}{\roleQ}{i \in I}{\gtLab[i]}{\tyGround[i]}{\gtGi[i]}{j}
    }
  }\\[2ex]

  \inference[\iruleGtMoveIn]{
    j \in I
    &
    \highlight{\gtLab[j] \neq \gtCrashLab}
  }{
    \gtWithCrashedRoles{\rolesC}{
      \gtCommTransit{\rolePMaybeCrashed}{\roleQ}{i \in I}{\gtLab[i]}{\tyGround[i]}{\gtGi[i]}{j}
    }
    \gtMove[
      \stEnvInAnnotSmall{\roleQ}{\roleP}{\stChoice{\gtLab[j]}{\tyGround[j]}}
    ]{
      \rolesR
    }
    \gtWithCrashedRoles{\rolesC}{\gtGi[j]}
  }\\[2ex]

\highlight{\inference[\iruleGtMoveCrDe]{
    j \in I
    &
    \gtLab[j] = \gtCrashLab
  }{
    \gtWithCrashedRoles{\rolesC}{
      \gtCommTransit{\rolePCrashed}{\roleQ}{i \in I}{\gtLab[i]}{\tyGround[i]}{\gtGi[i]}{j}
    }
    \gtMove[\ltsCrDe{\mpS}{\roleQ}{\roleP}]{\rolesR}
    \gtWithCrashedRoles{\rolesC}{\gtGi[j]}
  }}\\[2ex]

  \highlight{\inference[\iruleGtMoveOrph]{
    j \in I
    &
    \gtLab[j] \neq \gtCrashLab
  }{
    \gtWithCrashedRoles{\rolesC}{\gtCommSmall{\roleP}{\roleQCrashed}{i \in
    I}{\gtLab[i]}{\tyGround[i]}{\gtGi[i]}}
    \gtMove[\stEnvOutAnnotSmall{\roleP}{\roleQ}{\stChoice{\gtLab[j]}{\tyGround[j]}}]{
      \rolesR
    }
    \gtWithCrashedRoles{\rolesC}{\gtGi[j]}
  }}\\[2ex]

  \inference[\iruleGtMoveCtx]{
    \forall i \in I :
    \gtWithCrashedRoles{\rolesC}{\gtGi[i]}
    \gtMove[\stEnvAnnotGenericSym]{\rolesR}
    \gtWithCrashedRoles{\rolesCi}{\gtGii[i]}
    &
    \ltsSubject{\stEnvAnnotGenericSym} \notin \setenum{\roleP, \roleQ}
  }{
    \gtWithCrashedRoles{\rolesC}{
      \gtCommSmall{\roleP}{\roleQMaybeCrashed}{i \in
      I}{\gtLab[i]}{\tyGround[i]}{\gtGi[i]}
    }
    \gtMove[\stEnvAnnotGenericSym]{\rolesR}
    \gtWithCrashedRoles{\rolesCi}{
      \gtCommSmall{\roleP}{\roleQMaybeCrashed}{i \in
      I}{\gtLab[i]}{\tyGround[i]}{\gtGii[i]}
    }
  }\\[2ex]

  \inference[\iruleGtMoveCtxi]{
    \forall i \in I :
    \gtWithCrashedRoles{\rolesC}{\gtGi[i]}
    \gtMove[\stEnvAnnotGenericSym]{\rolesR}
    \gtWithCrashedRoles{\rolesCi}{\gtGii[i]}
    &
    \ltsSubject{\stEnvAnnotGenericSym} \neq \roleQ
  }{
    \gtWithCrashedRoles{\rolesC}{
      \gtCommTransit{\rolePMaybeCrashed}{\roleQ}{i \in
      I}{\gtLab[i]}{\tyGround[i]}{\gtGi[i]}{j}
    }
    \gtMove[\stEnvAnnotGenericSym]{\rolesR}
    \gtWithCrashedRoles{\rolesCi}{
      \gtCommTransit{\rolePMaybeCrashed}{\roleQ}{i \in
      I}{\gtLab[i]}{\tyGround[i]}{\gtGii[i]}{j}
    }
  }
\end{array}
\)}
\caption{Global type reduction rules.}
\label{fig:gtype:red-rules}
\end{figure}

\begin{defi}[Global Type Reductions]
  \label{def:gtype:lts-gt}
  The global type (annotated with a set of crashed roles $\rolesC$)
  transition relation
  $\gtMove[\stEnvAnnotGenericSym]{\rolesR}$
  is inductively
  defined by the rules in~\Cref{fig:gtype:red-rules},
  parameterised by a fixed set $\rolesR$ of reliable roles.
  We write
  $\gtWithCrashedRoles{\rolesC}{\gtG} \gtMove{\rolesR}
  \gtWithCrashedRoles{\rolesCi}{\gtGi}$
  if there
  exists $\stEnvAnnotGenericSym$ such that
  $\gtWithCrashedRoles{\rolesC}{\gtG}
  \gtMove[\stEnvAnnotGenericSym]{\rolesR}
  \gtWithCrashedRoles{\rolesCi}{\gtGi}$;
  we write
  $\gtWithCrashedRoles{\rolesC}{\gtG} \gtMove{\rolesR}$
  if there
  exists $\rolesCi$, $\gtGi$, and $\stEnvAnnotGenericSym$ such that
  $\gtWithCrashedRoles{\rolesC}{\gtG}
  \gtMove[\stEnvAnnotGenericSym]{\rolesR}
  \gtWithCrashedRoles{\rolesCi}{\gtGi}$,
  and $\gtMoveStar[\rolesR]$ for the transitive and reflexive closure of
  $\gtMove{\rolesR}$.
\end{defi}

Rules \inferrule{\iruleGtMoveOut} and \inferrule{\iruleGtMoveIn} model sending
and receiving messages respectively, as are standard in existing
works~\cite{ICALP13CFSM}.
We add a ($\highlight{\text{highlighted}}$) extra condition that the message exchanged not be a
`pseudo'-message carrying the $\gtCrashLab$ label.
$\inferrule{\iruleGtMoveRec}$ is a standard rule that deals with recursion.

We introduce ($\highlight{\text{highlighted}}$) rules to account for crash and consequential behaviour.
\begin{itemize}[leftmargin=*, nosep]
\item
Rule $\inferrule{\iruleGtMoveCrash}$ models crashes, where a live ($\roleP \in
\gtRoles{\gtG}$), but unreliable ($\roleP \notin \rolesR$) role $\roleP$ may crash.
The crashed role $\roleP$ is added into the set of crashed roles ($\rolesC \cup
\setenum{\roleP}$), and removed
from the global type, resulting in a global type $\gtCrashRole{\gtG}{\roleP}$.
\item
Rule $\inferrule{\iruleGtMoveCrDe}$ is for \emph{crash detection}, where a live
role $\roleQ$ may detect that $\roleP$ has crashed at reception,
and then continues with the crash handling continuation labelled $\gtCrashLab$.
This rule only applies when the message en route is a `pseudo'-message, since
otherwise a message rests in the queue %
of the receiver and can be received
despite the crash of the sender (\cf~\inferrule{\iruleGtMoveIn}).
\item
Rule $\inferrule{\iruleGtMoveOrph}$ models the orphaning of a message sent from a
live role $\roleP$ to a crashed role $\roleQ$.
Similar to the requirement in \inferrule{\iruleGtMoveOut}, we add the side
condition that the message sent is not a `pseudo'-message.
\end{itemize}

Finally, rules $\inferrule{\iruleGtMoveCtx}$ and $\inferrule{\iruleGtMoveCtxi}$
allow non-interfering reductions of (intermediate) global types
under prefix, provided that all of the continuations can be reduced by
that label.

\begin{rem}[Necessity of $\rolesC$ in Semantics]
  While we can obtain the set of crashed roles in any global type $\gtG$ via
  $\gtRolesCrashed{\gtG}$, we need a separate $\rolesC$ for bookkeeping
  purposes.

  Let
  $\gtG = \gtCommSingleErr{\roleP}{\roleQ}{\gtLab}{}{\gtEnd}{\gtEnd}$,
  we can have the following reductions:

\smallskip
 \centerline{\(
    \gtWithCrashedRoles{\rolesEmpty}{\gtG}
    \gtMove[\ltsCrash{\mpS}{\roleQ}]{\rolesEmpty}
    \gtWithCrashedRoles{\setenum{\roleQ}}{
      \gtCommSingleErr{\roleP}{\roleQCrashed}{\gtLab}{}{\gtEnd}{\gtEnd}
    }
    \gtMove[\stEnvOutAnnot{\roleP}{\roleQ}{\gtLab}{}]{\rolesEmpty}
    \gtWithCrashedRoles{\setenum{\roleQ}}{\gtEnd}
\)}

\smallskip
\noindent
  While we can deduce $\roleQ$ is a crashed role in the interim global type,
  the same information cannot be recovered from the final global type $\gtEnd$.
\end{rem}

Both live and crashed roles in a global type remain consistent throughout a transition,
except for those directly involved in the crash transition action, as demonstrated in~\autoref{lem:no-revival-roles}.

\begin{restatable}[No Revival Or Unexpected Crashes]{lem}{lemNorevival}%
\label{lem:no-revival-roles}
  Assume
  \;$\gtWithCrashedRoles{\rolesC}{\gtG}
  \gtMove[\stEnvAnnotGenericSym]{\rolesR}
  \gtWithCrashedRoles{\rolesCi}{\gtGi}
  $.\;
  \begin{enumerate}
    \item If $\roleP \in \gtRolesCrashed{\gtGi}$ and $\stEnvAnnotGenericSym \neq
      \ltsCrash{}{\roleP}$, then $\roleP \in \gtRolesCrashed{\gtG}$;
      \label{item:crash-roles-remain-crash}
    \item If $\roleP \in \gtRoles{\gtGi}$ and $\stEnvAnnotGenericSym \neq
      \ltsCrash{}{\roleP}$, then $\roleP \in \gtRoles{\gtG}$;
      \label{item:live-roles-remain-live}
    \item If $\roleP \in \gtRolesCrashed{\gtGi}$ and $\stEnvAnnotGenericSym =
      \ltsCrash{}{\roleP}$, then $\roleP \in \gtRoles{\gtG}$.
      \label{item:crash-role-crash}
  \end{enumerate}
\end{restatable}
\begin{proof}
By induction on global type reductions. See Appendix~\ref{sec:proof:semantics:gty} for details.
\end{proof}

We introduce an auxiliary concept of \emph{well-annotated} global
types in~\autoref{def:globaltypes:well-anno}, as a consistency requirement for crash
annotations $\roleCrashedSym$ in a global type $\gtG$, and the set of crashed
roles $\rolesC$, and a fixed set of reliable roles $\rolesR$.
We show that well-annotatedness \wrt \rolesR  is preserved by global type
reductions in~\autoref{lem:well-annotated-preserve}.
It follows that, a global type $\gtG$ without runtime constructs is trivially
well-annotated, and all reducta
 $\gtG \gtMoveStar[\rolesR]{}
\gtWithCrashedRoles{\rolesC}{\gtGi}$ are also well-annotated.
\iftoggle{full}{The proof of \cref{lem:well-annotated-preserve} is available in~\Cref{sec:proof:semantics:gty}.
}{}

\begin{defi}[Well-Annotated Global Types]
\label{def:globaltypes:well-anno}
  A global type $\gtG$ with crashed roles $\rolesC$ is \emph{well-annotated}
  \wrt %
  a (fixed) set of reliable roles $\rolesR$, iff:
  \begin{enumerate}[label=(WA\arabic*), leftmargin=12mm, nosep]
    \item No reliable roles are crashed, \;$\gtRolesCrashed{\gtG} \cap \rolesR =
      \emptyset$;\; and,
      \label{item:wa:reliable-no-crash}
    \item All roles with crash annotations are in the crashed set,
      \;$\gtRolesCrashed{\gtG} \subseteq \rolesC$;\; and,
      \label{item:wa:crash-annot-crash}
    \item A role cannot be live and crashed simultaneously, \;$\gtRoles{\gtG}
      \cap \gtRolesCrashed{\gtG} = \emptyset$.
      \label{item:wa:live-no-crash}
  \end{enumerate}
\end{defi}

\begin{restatable}[Preservation of Well-Annotated Global Types]{lem}{lemWellAnnoPreserve}%
  \label{lem:well-annotated-preserve}
  If
  \;$\gtWithCrashedRoles{\rolesC}{\gtG}
  \gtMove[\stEnvAnnotGenericSym]{\rolesR}
  \gtWithCrashedRoles{\rolesCi}{\gtGi}
  $,\; and
  \;$\gtWithCrashedRoles{\rolesC}{\gtG}$\; is well-annotated \wrt $\rolesR$, then
  \;$\gtWithCrashedRoles{\rolesCi}{\gtGi}$\; is also well-annotated \wrt
  $\rolesR$.
\end{restatable}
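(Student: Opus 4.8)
The plan is to argue by induction on the derivation of the reduction $\gtWithCrashedRoles{\rolesC}{\gtG} \gtMove[\stEnvAnnotGenericSym]{\rolesR} \gtWithCrashedRoles{\rolesCi}{\gtGi}$, by a case analysis on the last rule of \cref{fig:gtype:red-rules}, checking in each case that \ref{item:wa:reliable-no-crash}, \ref{item:wa:crash-annot-crash}, and \ref{item:wa:live-no-crash} are re-established for $\gtWithCrashedRoles{\rolesCi}{\gtGi}$. Before the main argument I would record two auxiliary facts. First, recursion unfolding leaves the role sets untouched: $\gtRoles{\gtG\subst{\gtRecVar}{\gtRec{\gtRecVar}{\gtG}}} = \gtRoles{\gtRec{\gtRecVar}{\gtG}}$, and likewise for $\gtRolesCrashed{\cdot}$, so well-annotatedness is invariant under unfolding. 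Second --- and this is the workhorse --- whenever $\gtCrashRole{\gtG}{\roleP}$ is defined on a well-annotated $\gtG$ with $\roleP \in \gtRoles{\gtG}$, one has the inclusions $\gtRolesCrashed{\gtCrashRole{\gtG}{\roleP}} \subseteq \gtRolesCrashed{\gtG} \cup \setenum{\roleP}$ and $\gtRoles{\gtCrashRole{\gtG}{\roleP}} \subseteq \gtRoles{\gtG} \setminus \setenum{\roleP}$; intuitively, role removal only relabels $\roleP$ as $\rolePCrashed$ wherever it occurred and possibly deletes whole prefixes, never turning a hitherto-crashed role active or vice versa, and never introducing a crash annotation on a role other than $\roleP$. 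Both facts follow by a routine structural induction mirroring \cref{def:gtype:remove-role}.

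The non-crash rules are straightforward. For \inferrule{\iruleGtMoveOut} we have $\rolesCi = \rolesC$, and $\gtGi$ merely turns a transmission prefix into a transmission-en-route prefix over the \emph{same} (uncrashed) endpoints and the \emph{same} continuations, so $\gtRoles{\gtGi} = \gtRoles{\gtG}$ and $\gtRolesCrashed{\gtGi} = \gtRolesCrashed{\gtG}$. For \inferrule{\iruleGtMoveIn}, \inferrule{\iruleGtMoveCrDe}, and \inferrule{\iruleGtMoveOrph}, again $\rolesCi = \rolesC$ and $\gtGi$ is one of the continuations of the reduced prefix, hence --- modulo the elided prefix --- a subterm of $\gtG$; consequently (using \ref{item:wa:live-no-crash} to exclude mixed occurrences) $\gtRoles{\gtGi} \subseteq \gtRoles{\gtG}$ and $\gtRolesCrashed{\gtGi} \subseteq \gtRolesCrashed{\gtG}$, from which \ref{item:wa:reliable-no-crash} and \ref{item:wa:crash-annot-crash} are inherited from the hypothesis, and \ref{item:wa:live-no-crash} for $\gtGi$ follows from \ref{item:wa:live-no-crash} for $\gtG$. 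The only point worth noting is that in \inferrule{\iruleGtMoveCrDe} the sender $\roleP$ is crash-annotated in $\gtG$ and in \inferrule{\iruleGtMoveOrph} the receiver $\roleQ$ is, so by the hypothesis these roles already lie in $\rolesC$ and outside $\rolesR$, consistently with $\rolesCi = \rolesC$. Rule \inferrule{\iruleGtMoveRec} is closed under the induction hypothesis via the unfolding invariance above, and \inferrule{\iruleGtMoveCtx}/\inferrule{\iruleGtMoveCtxi} follow from the induction hypothesis applied to each continuation: the outer prefix is untouched and contributes no new annotations, and the side conditions $\ltsSubject{\stEnvAnnotGenericSym} \notin \setenum{\roleP,\roleQ}$ (\resp $\neq \roleQ$) guarantee the prefix roles keep their status.

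The substantive case is \inferrule{\iruleGtMoveCrash}, where $\rolesCi = \rolesC \cup \setenum{\roleP}$, $\gtGi = \gtCrashRole{\gtG}{\roleP}$, $\roleP \notin \rolesR$, and $\roleP \in \gtRoles{\gtG}$. From the workhorse inclusions, $\gtRoles{\gtGi} \subseteq \gtRoles{\gtG} \setminus \setenum{\roleP}$ and $\gtRolesCrashed{\gtGi} \subseteq \gtRolesCrashed{\gtG} \cup \setenum{\roleP}$. Then \ref{item:wa:crash-annot-crash} holds since $\gtRolesCrashed{\gtGi} \subseteq \gtRolesCrashed{\gtG} \cup \setenum{\roleP} \subseteq \rolesC \cup \setenum{\roleP} = \rolesCi$; \ref{item:wa:reliable-no-crash} holds since $\gtRolesCrashed{\gtGi} \cap \rolesR \subseteq (\gtRolesCrashed{\gtG} \cap \rolesR) \cup (\setenum{\roleP} \cap \rolesR) = \emptyset$ by the hypothesis and $\roleP \notin \rolesR$; and \ref{item:wa:live-no-crash} holds because $\gtRoles{\gtGi}$ avoids $\roleP$ and, being contained in $\gtRoles{\gtG}$, is disjoint from $\gtRolesCrashed{\gtG}$ by \ref{item:wa:live-no-crash} of the hypothesis, hence disjoint from $\gtRolesCrashed{\gtG} \cup \setenum{\roleP} \supseteq \gtRolesCrashed{\gtGi}$.

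I expect essentially all the difficulty to be concentrated in the two role-removal inclusions. The definition of $\gtCrashRole{\cdot}{\cdot}$ has many clauses, several of which \emph{delete} a communication prefix outright --- \eg when both endpoints would be crashed, or when a transmission-en-route from a crashed sender is received by the removed role. One must verify that deleting a prefix never causes a role to migrate between the active and crashed sets, and that after removal $\roleP$ really is annotated in every surviving occurrence (so $\roleP \notin \gtRoles{\gtCrashRole{\gtG}{\roleP}}$). This is a tedious but routine structural induction following the case split of the definition; once it is in place, the remainder of the proof is the set-inclusion bookkeeping sketched above.
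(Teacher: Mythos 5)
Your proof follows essentially the same route as the paper's: induction on the reduction derivation, with the work concentrated in exactly the two role-removal inclusions that the paper isolates as auxiliary lemmas ($\gtRoles{\gtCrashRole{\gtG}{\roleP}} \subseteq \gtRoles{\gtG}$ with $\roleP \notin \gtRoles{\gtCrashRole{\gtG}{\roleP}}$, and $\gtRolesCrashed{\gtCrashRole{\gtG}{\roleP}} \setminus \setenum{\roleP} \subseteq \gtRolesCrashed{\gtG}$), followed by the same set-inclusion bookkeeping in the $\inferrule{\iruleGtMoveCrash}$ case and an appeal to unfolding-invariance for $\inferrule{\iruleGtMoveRec}$. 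The only nit is that in $\inferrule{\iruleGtMoveOut}$ the active-role set is not preserved on the nose (the sender's contribution from the prefix disappears once the transmission is en route, so a sender occurring nowhere else drops out), but the inclusion $\gtRoles{\gtGi} \subseteq \gtRoles{\gtG}$ that actually holds is all your argument needs.
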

\begin{proof}
By induction on global type reductions~(\autoref{def:gtype:lts-gt}).
See Appendix~\ref{sec:proof:semantics:gty} for details.
\qedhere
\end{proof}

\subsection{Crash-Stop Semantics of Configurations}\label{sec:gtype:lts-context}
After giving semantics to global types, %
we now give an
LTS semantics to \emph{configurations},
\ie a collection of local types and communication queues %
across roles.
We first give a definition of configurations
in~\autoref{def:mpst-env}, followed by their reduction rules in~\autoref{def:mpst-env-reduction}.

\begin{defi}[Configurations]%
  \label{def:mpst-env}%
  \label{def:mpst-env-closed}%
  \label{def:mpst-env-comp}%
  \label{def:mpst-env-subtype}%
  A configuration is a tuple \;$\stEnv; \qEnv$,\; where
  $\stEnv$ is a \emph{typing context}, denoting a partial mapping
  from roles to local types,
  defined as:

  \smallskip
  \centerline{\(%
  \stEnv
  \,\coloncolonequals\,
  \stEnvEmpty
  \bnfsep
  \stEnv \stEnvComp \stEnvMap{\roleP}{\stT}
  \)}%

  \smallskip
  \noindent
   The \emph{context composition} $\stEnv[1] \stEnvComp \stEnv[2]$
   is defined iff $\dom{\stEnv[1]} \cap \dom{\stEnv[2]} = \emptyset$.
  A typing context  $\stEnv$ can be \emph{decomposed} (or \emph{split}) into sub-contexts
   $\stEnv[1]$ and $\stEnv[2]$, written $\stEnv = \stEnv[1] \stEnvComp \stEnv[2]$, if $\dom{\stEnv} = \dom{\stEnv[1]} \cup \dom{\stEnv[2]}$,
   and $\forall \roleP \in \dom{\stEnv}$, $\stEnvApp{\stEnv}{\roleP} = \stEnvApp{(\stEnv[1] \stEnvComp \stEnv[2])}{\roleP}$.
   We write \;$\stEnvUpd{\stEnv}{\roleP}{\stT}$\; for typing context updates,
  namely
  \;$\stEnvApp{\stEnvUpd{\stEnv}{\roleP}{\stT}}{\roleP} = \stT$\; and
  \;$\stEnvApp{\stEnvUpd{\stEnv}{\roleP}{\stT}}{\roleQ} =
  \stEnvApp{\stEnv}{\roleQ}$\; (where \;$\roleP \neq \roleQ$).

  A \emph{queue}, denoted $\stQ$, is either a (possibly empty) sequence of
  messages
  \;$\stM[1] \stFmt{\cdot} \stM[2] \stFmt{\cdot} \cdots \stFmt{\cdot}
  \stM[n]$,\;
  or an unavailable queue $\stQUnavail$.
  We write $\stQEmpty$ for an empty queue, and
  \;$\stQCons{\stM}{\stQi}$\;
  for a non-empty queue with message $\stM$ at the beginning.
  A \emph{queue message} $\stM$ is of form
  \;$\stQMsg{\stLab}{\tyGround}$,\;
  denoting a message with label $\stLab$ and payload $\tyGround$.
  We sometimes omit $\tyGround$ when the payload is not of specific
  interest.

  We write $\qEnv$ to denote a \emph{queue environment}, a collection of
  peer-to-peer queues.
  A queue from $\roleP$ to $\roleQ$ at environment $\qEnv$ is denoted
  \;$\qApp{\qEnv}{\roleP}{\roleQ}$.\;
  We define queue environment updates \;$\stEnvUpd{\qEnv}{\roleP,
  \roleQ}{\stQ}$\;
  similarly.

  We also write \;$\stQCons{\stQi}{\stM}$\; for appending a message at the end
  of a queue:
  the message is appended to the sequence when $\stQi$ is available, or
  discarded when $\stQi$ is unavailable
  (\ie \;$\stQCons{\stQUnavail}{\stM} = \stQUnavail$).\;
  Additionally, we write
  \;$\stEnvUpd{\qEnv}{\cdot, \roleQ}{\stQUnavail}$\;
  for
  making all the queues to $\roleQ$ unavailable: \ie
  \;$\stEnvUpd{
    \stEnvUpd{
      \stEnvUpd{\qEnv}{\roleP[1], \roleQ}{\stQUnavail}
    }{\roleP[2], \roleQ}{\stQUnavail} \cdots
  }{\roleP[n], \roleQ}{\stQUnavail}$.

  We write $\qEnv[\stQEmpty]$
  to denote an \emph{empty} queue environment,
  where  \;$\qApp{\qEnv[\stQEmpty]}{\roleP}{\roleQ} = \stQEmpty$\;  %
  for any
  $\roleP$ and $\roleQ$ in the domain.
\end{defi}

We give an LTS semantics of configurations in~\autoref{def:mpst-env-reduction}.
Similar to that of global types, we model the semantics of configurations
in an asynchronous (\aka message-passing) fashion, using a queue environment to
represent the communication queues %
among all roles.

\begin{figure}[t]
  \noindent
  \centerline{\(
  \begin{array}{@{}c@{}}
    \inference[\iruleTCtxOut]{%
      \stEnvApp{\stEnv}{%
        \roleP%
      } =
      \stIntSum{\roleQ}{i \in I}{\stChoice{\stLab[i]}{\tyGround[i]} \stSeq \stT[i]}%
      &
      k \in I%
    }{%
      \stEnv; \qEnv
      \,\stEnvMoveOutAnnot{\roleP}{\roleQ}{\stChoice{\stLab[k]}{\tyGround[k]}}\,%
      \stEnvUpd{\stEnv}{\roleP}{\stT[k]};
      \stEnvUpd{\qEnv}{\roleP, \roleQ}{
        \stQCons{
          \stEnvApp{\qEnv}{\roleP, \roleQ}
        }{
          \stQMsg{\stLab[k]}{\tyGround[k]}
        }
      }%
    }%
    \\[2mm]%
    \inference[\iruleTCtxIn]{%
      \stEnvApp{\stEnv}{%
        \roleP%
      } =
      \stExtSum{\roleQ}{i \in I}{\stChoice{\stLab[i]}{\tyGround[i]} \stSeq \stT[i]}%
      &
      k \in I%
      &
      \stEnvApp{\qEnv}{\roleQ, \roleP}
      =
      \stQCons{\stQMsg{\stLab[k]}{\tyGround[k]}}{\stQi}
      \neq
      \stQUnavail
    }{%
      \stEnv; \qEnv
      \,\stEnvMoveInAnnot{\roleP}{\roleQ}{\stChoice{\stLab[k]}{\tyGround[k]}}\,%
      \stEnvUpd{\stEnv}{\roleP}{\stT[k]};
      \stEnvUpd{\qEnv}{\roleQ, \roleP}{\stQi}
    }%
    \\[2mm]%
    \inference[\iruleTCtxRec]{%
      \stEnvApp{\stEnv}{\roleP} = \stRec{\stRecVar}{\stT}%
      &
      \stEnvUpd{\stEnv}{\roleP}{
        \stT\subst{\stRecVar}{\stRec{\stRecVar}{\stT}}%
      }; \qEnv%
      \stEnvMoveGenAnnot
      \stEnvi; \qEnvi%
    }{%
      \stEnv; \qEnv
      \stEnvMoveGenAnnot
      \stEnvi; \qEnvi%
    }%
    \qquad
    \highlight{%
    \inference[\iruleTCtxCrash]{%
      \stEnvApp{\stEnv}{\roleP} \neq \stEnd
      &
      \stEnvApp{\stEnv}{\roleP} \neq \stStop
    }{
      \stEnv; \qEnv
      \stEnvMoveAnnot{\ltsCrash{\mpS}{\roleP}}
      \stEnvUpd{\stEnv}{\roleP}{\stStop};
      \stEnvUpd{\qEnv}{\cdot, \roleP}{\stQUnavail}
    }
    }%
    \\
    \highlight{%
    \inference[\iruleTCtxCrashDetect]{%
      \stEnvApp{\stEnv}{\roleQ} =
      \stExtSum{\roleP}{i \in I}{\stChoice{\stLab[i]}{\tyGround[i]} \stSeq \stT[i]}%
      &
      \stEnvApp{\stEnv}{\roleP} = \stStop
      &
      k \in I
      &
      \stLab[k] = \stCrashLab
      &
      \stEnvApp{\qEnv}{\roleP, \roleQ} = \stQEmpty
    }{%
      \stEnv; \qEnv
      \,\stEnvMoveAnnot{\ltsCrDe{\mpS}{\roleQ}{\roleP}}\,%
      \stEnvUpd{\stEnv}{\roleQ}{\stT[k]}; \qEnv
    }%
    }%
  \end{array}
  \)}%
 \caption{Configuration semantics.}
  \label{fig:gtype:tc-red-rules}
\end{figure}

\begin{defi}[Configuration Semantics]%
  \label{def:mpst-env-reduction}%
  The \emph{configuration transition relation \;$\stEnvMoveGenAnnot$\;} %
  is defined in \cref{fig:gtype:tc-red-rules}.
  We write \;$\stEnvMoveGenAnnotP{\stEnv; \qEnv}$\; %
  iff \;$\stEnv; \qEnv \!\stEnvMoveGenAnnot\! \stEnvi; \qEnvi$\; for some
  \;$\stEnvi$\;
  and \;$\qEnvi$.\; %
  We define two \emph{reductions} \;$\stEnvMove$\; and
  \;$\stEnvMoveMaybeCrash[\rolesR]$\; (where \;$\rolesR$\; is a fixed set of
  reliable roles)
  as follows.
   \begin{itemize}[left=0pt, topsep=0pt]
   \item We write \;$\stEnv; \qEnv \!\stEnvMove\! \stEnvi; \qEnvi$\;
     for
     \;$\stEnv; \qEnv \stEnvMoveGenAnnot \stEnvi; \qEnvi$\;
     with
     \;\(\stEnvAnnotGenericSym \!\in\! \setenum{\stEnvInAnnotSmall{\roleP}{\roleQ}{\stChoice{\stLab}{\tyGround}},
     \stEnvOutAnnotSmall{\roleP}{\roleQ}{\stChoice{\stLab}{\tyGround}},
     \ltsCrDe{\mpS}{\roleP}{\roleQ}}\).\;
     We write
     \;$\stEnvMoveP{\stEnv; \qEnv}$\;
     iff
     \;$\stEnv; \qEnv \!\stEnvMove\! \stEnvi; \qEnvi$\;
     for some
     \;$\stEnvi; \qEnvi$,\; %
     and
     \;$\stEnvNotMoveP{\stEnv; \qEnv}$\; for its negation,  %
     and \;$\stEnvMoveStar$\; %
     for the reflexive and transitive closure of \;$\stEnvMove$;

   \item We write
    \;$\stEnv; \qEnv \!\stEnvMoveMaybeCrash[\rolesR]\! \stEnvi; \qEnvi$\;
    for
    \;$\stEnv; \qEnv \stEnvMoveGenAnnot \stEnvi; \qEnvi$\;
    with
    \;\(
    \stEnvAnnotGenericSym \!\notin\! \setcomp{ \ltsCrash{\mpS}{\roleR} }{
      \roleR \!\in\!  \rolesR}
    \).\;
    We write \;$\stEnvMoveMaybeCrashP[\rolesR]{\stEnv; \qEnv}$ %
    iff
    \;$\stEnv; \qEnv \!\stEnvMoveMaybeCrash[\rolesR]\! \stEnvi; \qEnvi$\;
    for some
    \;$\stEnvi; \qEnvi$,\;
    and
    \;$\stEnvNotMoveMaybeCrashP[\rolesR]{\stEnv; \qEnvi}$\;
    for its negation.   %
    We define
    \;$\stEnvMoveMaybeCrashStar[\rolesR]$\;
    as the reflexive and transitive closure of \;$\stEnvMoveMaybeCrash[\rolesR]$.

   \end{itemize}
\end{defi}

We first explain the standard rules:
rule $\inferrule{\iruleTCtxOut}$ (resp.~$\inferrule{\iruleTCtxIn}$)
says that a role can perform an output (resp.~input) transition by appending
(resp.~consuming) a message at the corresponding queue.
Recall that whenever a queue is unavailable, the resulting queue remains
unavailable after appending ($\stQCons{\stQUnavail}{\stM} = \stQUnavail$).
Therefore, the rule $\inferrule{\iruleTCtxOut}$ covers delivery to both crashed
and live roles, whereas two separate rules are used in modelling global type
semantics (\inferrule{\iruleGtMoveOut} and \inferrule{\iruleGtMoveOrph}).
We also include a standard rule $\inferrule{\iruleTCtxRec}$ for recursive types.

The key innovations are the ($\highlight{\text{highlighted}}$) rules modelling crashes and crash
detection:
by rule~\inferrule{\iruleTCtxCrash}, a role $\roleP$ may crash
and become $\stStop$ at any time
(unless it is already $\stEnd$ed or $\stStop$ped).
All of $\roleP$'s receiving queues become unavailable $\stQUnavail$, so that
future messages to $\roleP$ would be discarded.
Note that any pending messages in the receiving queue are discarded (since the
queue becomes unavailable $\stQUnavail$),
since the messages cannot be processed after the crash.

Rule \inferrule{\iruleTCtxCrashDetect} models crash detection and handling:
if ${\roleP}$ is crashed and stopped,
another role ${\roleQ}$ attempting to receive from $\roleP$
can then take its $\stCrashLab$ handling branch. %
However, this rule only applies when the corresponding queue is empty: it is
still possible to receive messages sent before crashing via
\inferrule{\iruleTCtxIn}.

\subsection{Alternative Modellings for Crash-Stop Failures}\label{sec:gtype:alternative}
Before we dive into the relation between two semantics, let us have a short
digression to discuss our modelling choices and alternatives.
In this work, we mostly follow the assumptions laid out in
\cite{CONCUR22MPSTCrash},
where a crash is detected at reception. However, they opt to use a synchronous
(rendez-vous) semantics, whereas we give an asynchronous (message-passing)
semantics, which entails interesting scenarios that would not arise in
a synchronous semantics.

Specifically,
consider the case where a role $\roleP$ sends a message to $\roleQ$, and
then $\roleP$ crashes after sending, but before $\roleQ$ receives the message.
The situation does not arise under a synchronous semantics, since sending and
receiving actions are combined into a single transmission action.

Intuitively, there are two possibilities to handle this scenario.
The questions are
whether the message sent immediately before crashing is deliverable to $\roleQ$,
and consequentially,
at what time $\roleQ$ detects the crash of $\roleP$.

In our semantics (\cref{fig:gtype:red-rules,fig:gtype:tc-red-rules}), we
opt to answer the first question positively:
we argue that this model is more consistent with our `passive' crash detection
design.
For example, if a role $\roleP$ never receives from another role $\roleQ$, then
$\roleP$ does not need to react in the event of $\roleQ$'s crash.
Following a similar line of reasoning,
if the message sent by $\roleP$ arrives in the receiving queue of
$\roleQ$,
then $\roleQ$ should be able to receive the message, without triggering
a crash detection (although it may be triggered later).
As a consequence, we require in \inferrule{\iruleTCtxCrashDetect} that the
queue $\stEnvApp{\qEnv}{\roleP, \roleQ}$ be empty, to reflect the idea that
crash detection should be a `last resort'.

For an alternative model,
we can opt to detect the crash after it has occurred.
This is possibly better modelled with using outgoing queues
(\cf\cite{FSTTCS15MPSTExpressiveness}),
instead of incoming queues in the semantics presented.
Practically, this may be the scenario that a TCP connection is closed (or
reset) when a peer has crashed,
and the content in the queue %
is lost.
 It is worth noting that this kind of alternative model
will not affect our main theoretical results: the operational correspondence
between global and local type semantics, and furthermore, global type properties guaranteed by projection.

\subsection{Relating Global Type and Configuration Semantics}%
\label{sec:gtype:relating}
We have given LTS semantics for both global types
(\autoref{def:gtype:lts-gt}) and configurations (\autoref{def:mpst-env-reduction}).
We will now relate these two semantics with the help of the projection operator
$\gtProj[]{}{}$~(\autoref{def:global-proj})
and the subtyping relation $\stSub$~(\autoref{def:subtyping}).

We associate configurations $\stEnv; \qEnv$ with global types $\gtG$ (as
annotated with a set of crashed roles $\rolesC$) by projection, written
$\stEnvAssoc{\gtWithCrashedRoles{\rolesC}{\gtG}}{\stEnv; \qEnv}{\rolesR}$.
Naturally, there are two components of the association: \emph{(1)} the local
types in $\stEnv$ need to correspond to the projections of the global type
$\gtG$ and the set of crashed roles $\rolesC$; and \emph{(2)} the queues in
$\qEnv$ corresponds to the transmissions en route in the global type $\gtG$ and
also the set of crashed roles $\rolesC$.

\begin{defi}[Association of Global Types and Configurations]%
  \label{def:assoc}
  \label{def:assoc-queue}
  A configuration \;$\stEnv; \qEnv$\; is associated to a
  (well-annotated \wrt$\rolesR$) global type
  \;$\gtWithCrashedRoles{\rolesC}{\gtG}$,\;
  written
  \;$\stEnvAssoc{\gtWithCrashedRoles{\rolesC}{\gtG}}{\stEnv; \qEnv}{\rolesR}$,\;
  iff
  \begin{enumerate}[leftmargin=*, nosep]
    \item
      \label{item:assoc:stenv}
      $\stEnv$ can be split into disjoint (possibly empty) sub-contexts
      \;$\stEnv = \stEnv[\gtG] \stEnvComp \stEnv[\stStopSym] \stEnvComp
      \stEnv[\stEnd]$\; where:
      \begin{enumerate}[label={(A\arabic*)}, leftmargin=*, ref={(A\arabic*)},
        nosep]
        \item\label{item:assoc:alive-sub}
          $\stEnv[\gtG]$\; contains projections of \;$\gtG$:
          \;$ \dom{\stEnv[\gtG]}
            =
            \gtRoles{\gtG}
          $,\; and
          \;$ \forall \roleP \in \dom{\stEnv[\gtG]}: \linebreak
            \stEnvApp{\stEnv}{{\roleP}}
            \stSub
            \gtProj[\rolesR]{\gtG}{\roleP}
          $;
        \item\label{item:assoc:crash-stop}
          $\stEnv[\stStopSym]$\; contains crashed roles:
          \;$ \dom{\stEnv[\stStopSym]} =
            \rolesC$,\;
            and
            \;$ \forall \roleP \in \dom{\stEnv[\stStopSym]}:
            \stEnvApp{\stEnv}{{\roleP}} = \stStop$;
        \item\label{item:assoc:end}
          $\stEnv[\stEnd]$\; contains only \;$\stEnd$\; endpoints:
          \;$\forall \roleP \in \stEnv[\stEnd]: \stEnvApp{\stEnv}{\roleP}
          = \stEnd$.
        \end{enumerate}
    \item
        \label{item:assoc:qenv}
        \begin{enumerate}[label={(A\arabic*)}, leftmargin=*, ref={(A\arabic*)},
          resume, nosep]
        \item\label{item:assoc:queue}
          $\qEnv$ is associated with global type
          \;$\gtWithCrashedRoles{\rolesC}{\gtG}$,\; given as follows:
          \begin{enumerate}[leftmargin=0pt, nosep]
            \item
              Receiving queues for a role is unavailable if and only if
              it has crashed:
              \;$\forall \roleQ : \linebreak
              \roleQ \in \rolesC \iff
              \stEnvApp{\qEnv}{\cdot, \roleQ} = \stQUnavail$;
            \item
              If \;$\gtG = \gtEnd$\; or \;$\gtG = \gtRec{\gtRecVar}{\gtGi}$,\;
              then queues between all roles are empty (expect receiving queue
              for crashed roles):
              \;$\forall \roleP, \roleQ: \roleQ \notin \rolesC \implies
              \stEnvApp{\qEnv}{\roleP, \roleQ} = \stQEmpty$;
            \item
              If
              \;$\gtG = \gtComm{\roleP}{\roleQMaybeCrashed}{i \in I}{\gtLab[i]}{\tyGround[i]}{\gtGi[i]}$,\;
              or
              \;$\gtG =
              \gtCommTransit{\rolePMaybeCrashed}{\roleQ}{i \in I}{\gtLab[i]}{\tyGround[i]}{\gtGi[i]}{j}
              $\; with \;$\gtLab[j] = \gtCrashLab$\;
              (\ie a `pseudo'-message is en route),
              then
              \begin{enumerate*}[label=\emph{(\roman*)}]
                \item
                  if $\roleQ$ is live, then the queue from $\roleP$ to $\roleQ$ is
                  empty:
                  \;$\roleQMaybeCrashed \neq \roleQCrashed \implies
                  \stEnvApp{\qEnv}{\roleP, \roleQ} = \stQEmpty$, and
                \item $\forall i \in I: \qEnv$\; is associated with
                  \;$\gtWithCrashedRoles{\rolesC}{\gtGi[i]}$;
              \end{enumerate*}
              and,
            \item
              If
              \;$\gtG =
              \gtCommTransit{\rolePMaybeCrashed}{\roleQ}{i \in I}{\gtLab[i]}{\tyGround[i]}{\gtGi[i]}{j}
              $\; with \;$\gtLab[j] \neq \gtCrashLab$,\;
              then
              \begin{enumerate*}[label=\emph{(\roman*)}]
                \item
                  the queue from $\roleP$ to $\roleQ$ begins with the message
                  \;$\stQMsg{\gtLab[j]}{\tyGround[j]}$:\;
                  $\stEnvApp{\qEnv}{\roleP, \roleQ} =
                  \stQCons{\stQMsg{\gtLab[j]}{\tyGround[j]}}{\stQ}$;
                \item
                  $\forall i \in I:$\; removing the message from the head of the
                  queue,
                  \;$\stEnvUpd{\qEnv}{\roleP, \roleQ}{\stQ}$\; is associated with
                  \;$\gtWithCrashedRoles{\rolesC}{\gtGi[i]}$.
              \end{enumerate*}
        \end{enumerate}
      \end{enumerate}
  \end{enumerate}
  We write \;$\stEnvAssoc{\gtG}{\stEnv}{\rolesR}$\; as an abbreviation of
  \;$\stEnvAssoc{\gtWithCrashedRoles{\emptyset}{\gtG}}{\stEnv; \qEnv[\stQEmpty]}{\rolesR}$. %
  We sometimes say a $\stEnv$ (resp.~$\qEnv$) is associated with
  \;$\gtWithCrashedRoles{\rolesC}{\gtG}$\; for stating \cref{item:assoc:stenv}
  (resp.~\cref{item:assoc:qenv}) is
  satisfied.
\end{defi}

We demonstrate %
the relation between the two semantics via association, by showing two
main theorems:
all possible reductions of a configuration have a corresponding action in
reductions of the associated global type (\autoref{thm:gtype:proj-comp});
and the reducibility of a global type is the same as its associated
configuration (\autoref{thm:gtype:proj-sound}).

\begin{restatable}[Completeness of Association]{thm}{thmProjCompleteness}%
  \label{thm:gtype:proj-comp}
  Given associated global type $\gtG$ and configuration $\stEnv; \qEnv$:
  \;$\stEnvAssoc{\gtWithCrashedRoles{\rolesC}{\gtG}}{\stEnv; \qEnv}{\rolesR}$.\;
  If \;$\stEnv; \qEnv \stEnvMoveGenAnnot \stEnvi; \qEnvi$,\;
  where \;$\stEnvAnnotGenericSym \neq \ltsCrash{\mpS}{\roleP}$\;
  for all \;$\roleP \in \rolesR$,\;
  then there exists \;$\gtWithCrashedRoles{\rolesCi}{\gtGi}$\; such that
  \;$\stEnvAssoc{\gtWithCrashedRoles{\rolesCi}{\gtGi}}{\stEnvi;
  \qEnvi}{\rolesR}$\;
  and
  \;$\gtWithCrashedRoles{\rolesC}{\gtG} \gtMove[\stEnvAnnotGenericSym]{\rolesR}
    \gtWithCrashedRoles{\rolesCi}{\gtGi}$.
\end{restatable}
\begin{proof}
  By induction on configuration reductions (\autoref{def:mpst-env-reduction}).
  See Appendix~\ref{sec:proof:relating} for details.
  \qedhere
\iftoggle{full}{See \cref{sec:proof:relating} for detailed proof.}{}
\end{proof}
\begin{restatable}[Soundness of Association]{thm}{thmProjSoundness}%
  \label{thm:gtype:proj-sound}
  Given associated global type $\gtG$ and configuration $\stEnv; \qEnv$:%
  \;$\stEnvAssoc{\gtWithCrashedRoles{\rolesC}{\gtG}}{\stEnv; \qEnv}{\rolesR}$.\;
  If
  \;$\gtWithCrashedRoles{\rolesC}{\gtG} \gtMove[]{\rolesR}$,\;
  then there exists \;$\stEnvi; \qEnvi$, \;$\stEnvAnnotGenericSym$\; and
  \;$\gtWithCrashedRoles{\rolesCi}{\gtGi}$,\; such that
  \;$\gtWithCrashedRoles{\rolesC}{\gtG} \gtMove[\stEnvAnnotGenericSym]{\rolesR}
   \gtWithCrashedRoles{\rolesCi}{\gtGi}$,\;
  $\stEnvAssoc{\gtWithCrashedRoles{\rolesCi}{\gtGi}}{\stEnvi;
  \qEnvi}{\rolesR}$,\;
  and
  \;$\stEnv; \qEnv \stEnvMoveGenAnnot \stEnvi; \qEnvi$.
\end{restatable}
\begin{proof}
  By induction on global type reductions (\autoref{def:gtype:lts-gt}).
  See Appendix~\ref{sec:proof:relating} for details.
  \qedhere
 \iftoggle{full}{See \cref{sec:proof:relating} for detailed proof.}{}
\end{proof}

By Theorems~\ref{thm:gtype:proj-comp} and~\ref{thm:gtype:proj-sound}, we obtain, as a corollary, that
a global type $\gtG$ is in operational correspondence with the typing context
$\stEnv = \setenum{\stEnvMap{\roleP}{\gtProj[\rolesR]{\gtG}{\roleP}}}_{\roleP \in \gtRoles{\gtG}}$,
which contains the projections of all roles in $\gtG$.

\begin{rem}[Sufficiency of Soundness Theorem]
  Curious readers may wonder why we proved a soundness theorem that is not
 the dual of the completeness theorem,  \eg as seen in the literature~\cite{ICALP13CFSM}.
  This is a consequence of using `full' subtyping (\autoref{def:subtyping}, notably
  \inferrule{\iruleStSubOut}).
  A local type in the typing context may have fewer branches to choose from
  than the projected local type,
  resulting in uninhabited sending actions in the global type.

  For example, let $\gtG = \gtCommRaw{\roleP}{\roleQ}{
    \gtCommChoice{\gtLab[1]}{}{\gtEnd}; \; \;
    \gtCommChoice{\gtLab[2]}{}{\gtEnd}
  }$.
  An associated typing context $\stEnv$ (assuming $\roleP$ reliable) may have
  $\stEnvApp{\stEnv}{\roleP} =
  \stIntSum{\roleQ}{}{\stChoice{\stLab[1]}{}
  \stSeq {\stEnd}}
  \stSub
  \stIntSum{\roleQ}{}{
    \stChoice{\stLab[1]}{} \stSeq {\stEnd}; \; \;
    \stChoice{\stLab[2]}{} \stSeq {\stEnd}
  }$ (via \inferrule{\iruleStSubOut}).
  The global type $\gtG$ may make a transition
  $\stEnvOutAnnot{\roleP}{\roleQ}{\gtLab[2]}$, where an associated
  configuration $\stEnv; \qEnv[\stQEmpty]$ cannot.

  Our soundness theorem is nevertheless \emph{sufficient} for concluding
  that desired properties are guaranteed via association, \eg safety, deadlock-freedom, and liveness,
  as illustrated in Section~\ref{sec:gtype:pbp}.
\end{rem}

\begin{rem}[Relation between Well-Annotated and Well-Formed Global Types]
In the multiparty session type literature, a global type is \emph{well-formed} if it can be projected onto
every declared protocol participant.
Readers may wonder how well-formedness is applied to the global type in this paper
and how well-annotated global types~(\autoref{def:globaltypes:well-anno}) relate to well-formed ones.
Our definition of association~(\autoref{def:assoc}) ensures that a global type associated with a configuration,
with respect to a set of reliable roles $\rolesR$, is also well-formed with respect to $\rolesR$.
This is because:
  \begin{enumerate*}
  \item every global type is closed, \ie it has no free type variables;
  \item every global type is contractive; and
  \item condition (A1) %
  in~\autoref{def:assoc} ensures that the associated
    global type is projectable onto all roles with respect to a set of reliable
    roles, which is a key requirement for well-formedness.
  \end{enumerate*}

Since the soundness~(\autoref{thm:gtype:proj-sound}) and completeness~(\autoref{thm:gtype:proj-comp}) of association, along with the results on typed session properties (discussed in Section~\ref{sec:typing_system}),
depend on the concept of association, it follows that all involved global types are well-formed.

Furthermore, there is no direct relationship between well-formedness and well-annotation; a well-annotated global type may not be well-formed, and vice versa. All main results apply to global types that are both well-formed and well-annotated.
\end{rem}

\subsection{Properties Guaranteed by Projection}\label{sec:gtype:pbp}

A key benefit of our top-down approach of multiparty protocol design is that
desirable properties are guaranteed by the methodology.
As a consequence, processes
following the local types obtained from projections are correct \emph{by
construction}.
In this subsection, we focus on three properties: \emph{communication safety},
\emph{deadlock-freedom},  and \emph{liveness}, and show that the
three properties are guaranteed from
configurations associated with global types.

\paragraph*{Communication Safety}
\label{sec:type-system-safety}

We %
begin by defining communication safety for configurations
(\autoref{def:mpst-env-safe}).
We focus on the following two safety requirements:
\begin{enumerate*}[label=(\roman*)]
  \item each role must be able to handle any message that may end up in their
    receiving queue (so that there are no label mismatches); and
 \item each receiver must be able to handle the potential crash of the
    sender, unless the sender is reliable.
\end{enumerate*}

\begin{defi}[Configuration Safety]
\label{def:mpst-env-safe}%
  Given a fixed set of reliable roles $\rolesR$, we say that
  $\predP$ is an \emph{$\rolesR$-safety property} of configurations %
  iff, whenever \;$\predPApp{\stEnv; \qEnv}$,\; we have:

  \noindent%
  \begin{tabular}{@{}r@{\hskip 2mm}l}
    \inferrule{\iruleSafeComm}%
    &%
    $ \stEnvApp{\stEnv}{\roleQ} =
        \stExtSum{\roleP}{i \in I}{\stChoice{\stLab[i]}{\tyGround[i]} \stSeq \stSi[i]}%
    $
    \,and\,
    $\stEnvApp{\qEnv}{\roleP, \roleQ} \neq \stQUnavail
    $
    \,and\,
    $
      \stEnvApp{\qEnv}{\roleP, \roleQ} \neq \stQEmpty
    $
    \,implies\, %
    $\stEnvMoveAnnotP{\stEnv;
    \qEnv}{\stEnvInAnnot{\roleQ}{\roleP}{\stChoice{\stLabi}{\tyGroundi}}}$;
    \\%
    \inferrule{\iruleSafeCrash}%
    &%
    $\stEnvApp{\stEnv}{\roleP} = \stStop$
    \,and\,
    $ \stEnvApp{\stEnv}{\roleQ} =
        \stExtSum{\roleP}{i \in I}{\stChoice{\stLab[i]}{\stS[i]} \stSeq \stSi[i]}%
    $
    \,and\,
    $
      \stEnvApp{\qEnv}{\roleP, \roleQ} = \stQEmpty
    $
    \,implies\, %
    $\stEnvMoveAnnotP{\stEnv; \qEnv}{\ltsCrDe{\mpS}{\roleQ}{\roleP}}$;
    \\[1mm]
    \inferrule{\iruleSafeRec}%
    &%
    $
      \stEnvApp{\stEnv}{%
        \roleP%
      } =
      \stRec{\stRecVar}{\stS}%
    $ %
    \,implies\, %
    $\predPApp{%
      \stEnvUpd{\stEnv}{%
        \roleP%
      }{%
        \stS\subst{\stRecVar}{\stRec{\stRecVar}{\stS}}%
      }; \qEnv%
    }$;
    \\[1mm]
    \inferrule{\iruleSafeMove}%
    &%
    $\stEnv; \qEnv \stEnvMoveMaybeCrash[\rolesR] \stEnvi; \qEnvi$
    \,implies\, %
    $\predPApp{\stEnvi; \qEnvi}$.
  \end{tabular}

\smallskip
 \noindent%
  We say \emph{\;$\stEnv; \qEnv$\; is $\rolesR$-safe}, %
  if \;$\predPApp{\stEnv; \qEnv}$\; holds %
  for some $\rolesR$-safety property $\predP$. %
\end{defi}

We use a coinductive view of the safety property~\cite{SangiorgiBiSimCoInd},
where the predicate of $\rolesR$-safe configurations is the %
largest $\rolesR$-safety property, by taking the union of all safety properties
$\predP$.
For a configuration $\stEnv; \qEnv$ to be $\rolesR$-safe, it has to satisfy
all clauses defined in \autoref{def:mpst-env-safe}.

By clause~\inferrule{\iruleSafeComm}, whenever a role $\roleQ$
receives from another role $\roleP$,
and a message is present in the queue,
the receiving action must be possible for some label $\gtLabi$,
\ie the receiver $\roleQ$ must support all output messages that may appear
at the head of the queue sent from $\roleP$.

Clause~\inferrule{\iruleSafeCrash} states that if a role $\roleQ$ receives
from a crashed role $\roleP$, and there is nothing in the queue,
then $\roleQ$ must have a $\stCrashLab$ branch, and a crash detection action
can be fired.
(Note that $\inferrule{\iruleSafeComm}$ applies when the queue is non-empty,
despite the crash of sender $\roleP$.)

Finally,
clause~\inferrule{\iruleSafeRec} extends the previous clauses
by unfolding any recursive entries; and
clause \inferrule{\iruleSafeMove}
states that any configuration $\stEnvi; \qEnvi$ which $\stEnv; \qEnv$ transitions to
must also be $\rolesR$-safe.
By using transition $\stEnvMoveMaybeCrash[\rolesR]$, we
ignore crash transitions $\ltsCrash{\mpS}{\roleP}$ for
any reliable role $\roleP \in \rolesR$.%

\begin{exa}
\label{ex:configuration_safety}
Recall the local types of the Simpler Logging example in Section~\ref{sec:overview}:

\smallskip
\centerline{\(
{\small{
 \begin{array}{c}
   \begin{array}{l}
     \stT[\roleFmt{C}]=
     \roleFmt{I} \stFmt{\oplus}
       \stLabFmt{read}
     \stSeq
      \stTi[\roleFmt{C}]
      \\[2mm]
      \stTi[\roleFmt{C}] =
      \roleFmt{I}
       \stFmt{\&}
        \stLabFmt{report(\stFmtC{log})}
     \stSeq
    \stEnd
  \end{array}
  \qquad
  \begin{array}{l}
     \stT[\roleFmt{L}]
       =
     \roleFmt{I} \stFmt{\oplus} \stLabFmt{trigger} \stSeq \stTi[\roleFmt{L}]
     \\[2mm]
     \stTi[\roleFmt{L}]
       =
       \stExtSum{\roleFmt{I}}{}{
         \begin{array}{@{}l@{}}
           \stLabFmt{fatal} \stSeq \stEnd \\
           \stLabFmt{read} \stSeq \roleFmt{I} \stFmt{\oplus} \stLabFmt{report(\stFmtC{log})} \stSeq \stEnd
         \end{array}
        }
  \end{array}
\\[6mm]
\begin{array}{l}
  \stT[\roleFmt{I}]
  =
  \roleFmt{L} \stFmt{\&} \stLabFmt{trigger} \stSeq \stTi[\roleFmt{I}]
  \\[2mm]
  \stTi[\roleFmt{I}]
  =
  \stExtSum{\roleFmt{C}}{}{
    \begin{array}{@{}l@{}}
      \stLabFmt{read} \stSeq
      \roleFmt{L} \stFmt{\oplus} \stLabFmt{read} \stSeq \roleFmt{L} \stFmt{\&} \stLabFmt{report(\stFmtC{log})} \stSeq \roleFmt{C} \stFmt{\oplus} \stLabFmt{report(\stFmtC{log})} \stSeq \stEnd
       \\
      \stCrashLab \stSeq
      \stTii[\roleFmt{I}]
    \end{array}
  }
  \\[4mm]
  \stTii[\roleFmt{I}]
  =
  \roleFmt{L} \stFmt{\oplus} \stLabFmt{fatal} \stSeq \stEnd
\end{array}
\end{array}
 }
}
\)}

\smallskip
\noindent
The configuration $\stEnv; \qEnv$, where
$\stEnv =  \stEnvMap{\roleFmt{C}}{\stT[\roleFmt{C}]}  \stEnvComp
\stEnvMap{\roleFmt{L}}{\stT[\roleFmt{L}]}   \stEnvComp
\stEnvMap{\roleFmt{I}}{\stT[\roleFmt{I}]}$ and
$\qEnv =  \qEnv[\stQEmpty]$, is
$\setenum{\roleFmt{L}, \roleFmt{I}}$-safe.
This can be verified by checking its possible
reductions.  For example,
in the case where $\roleFmt{C}$ crashes immediately, we have:

\smallskip
{\small{
\centerline{\(
\begin{array}{@{}r@{}cl}
\stEnv; \qEnv
&
\stEnvMoveAnnot{\ltsCrash{\mpS}{\roleFmt{C}}}
&
 \stEnvUpd{\stEnv}{\roleFmt{C}}{
\stStop};
       \stEnvUpd{\qEnv}{\cdot, \roleFmt{C}}{\stQUnavail}
       \\
&
\stEnvMoveOutAnnot{\roleFmt{L}}{\roleFmt{I}}{\stChoice{\stLabFmt{trigger}}{}}
&
\stEnvUpd{\stEnvUpd{\stEnv}{\roleFmt{C}}{\stStop}}{
  \roleFmt{L}}{\stTi[\roleFmt{L}]
};  \stEnvUpd{\stEnvUpd{\qEnv}{\cdot, \roleFmt{C}}{\stQUnavail}}{\roleFmt{L}, \roleFmt{I}}{\stLabFmt{trigger}}
\\
    &
     \stEnvMoveInAnnot{\roleFmt{I}}{\roleFmt{L}}{\stChoice{\stLabFmt{trigger}}{}}
     &
 \stEnvUpd{
   \stEnvUpd{
     \stEnvUpd{\stEnv}{\roleFmt{C}}{\stStop}
   }{
     \roleFmt{L}
   }{
     \stTi[\roleFmt{L}]
   }
 }{
   \roleFmt{I}
 }{
    \stTi[\roleFmt{I}]
 }
; \stEnvUpd{\qEnv}{\cdot, \roleFmt{C}}{\stQUnavail}
 \\
 &
 \stEnvMoveAnnot{\ltsCrDe{\mpS}{\roleFmt{I}}{\roleFmt{C}}}
 &
  \stEnvUpd{\stEnvUpd{\stEnvUpd{\stEnv}{\roleFmt{C}}{\stStop}}{
 \roleFmt{L}}{
 \stTi[\roleFmt{L}]
  }}{\roleFmt{I}}{\stTii[\roleFmt{I}]}; \stEnvUpd{\qEnv}{\cdot, \roleFmt{C}}{\stQUnavail}
 \\
  &
     \stEnvMoveOutAnnot{\roleFmt{I}}{\roleFmt{L}}{\stChoice{\stLabFmt{fatal}}{}}
     &
    \stEnvUpd{\stEnvUpd{\stEnvUpd{\stEnv}{\roleFmt{C}}{\stStop}}{
 \roleFmt{L}}{
    \stTi[\roleFmt{L}]
  }}{\roleFmt{I}}{\stEnd};
   \stEnvUpd{\stEnvUpd{\qEnv}{\cdot, \roleFmt{C}}{\stQUnavail}}{\roleFmt{I}, \roleFmt{L}}{\stLabFmt{fatal}}
   \\
   &
    \stEnvMoveInAnnot{\roleFmt{L}}{\roleFmt{I}}{\stChoice{\stLabFmt{fatal}}{}}
     &
    \stEnvUpd{\stEnvUpd{\stEnvUpd{\stEnv}{\roleFmt{C}}{\stStop}}{
 \roleFmt{L}}{
  \stEnd
  }}{\roleFmt{I}}{\stEnd};
  \stEnvUpd{\qEnv}{\cdot, \roleFmt{C}}{\stQUnavail}
 \end{array}
\)}
}}

\smallskip
\noindent
and each reductum satisfies all clauses of~\autoref{def:mpst-env-safe}.
The cases where $\roleFmt{C}$ crashes after sending the $\stLabFmt{read}\text{ing}$ message
to $\roleFmt{I}$ are similar.  There are no other crash reductions to consider, since both
$\roleFmt{L}$ and $\roleFmt{I}$ are assumed to be reliable.
The cases where no crashes occur are similar as well,
except that \inferrule{\iruleTCtxCrashDetect} and \inferrule{\iruleTCtxCrash} are not applied in the non-crash reductions.
\end{exa}

\paragraph*{Deadlock-Freedom}
\label{sec:type-system-deadlock-free}

The property of deadlock-freedom, sometimes also known as progress, describes
whether a configuration can keep reducing unless it is a terminal configuration.
We give its formal definition in~\autoref{def:mpst-env-deadlock-free}.

\begin{defi}[Configuration Deadlock-Freedom]
\label{def:mpst-env-deadlock-free}%
Given a set of reliable roles $\rolesR$, we say that a configuration \;$\stEnv;
\qEnv$\; is
\emph{$\rolesR$-deadlock-free} iff:
\begin{enumerate}[leftmargin=*, nosep]
  \item $\stEnv; \qEnv$\; is $\rolesR$-safe; and,
  \item
    \label{item:df:reduces}
    If \;$\stEnv; \qEnv$\; can reduce to a configuration \;$\stEnvi; \qEnvi$\;
    without further reductions:
    \;$\stEnv; \qEnv \!\stEnvMoveMaybeCrashStar[\rolesR]\! \stEnvi; \qEnvi
    \!\not\stEnvMoveMaybeCrash[\rolesR]$,\; then:
    \begin{enumerate}[leftmargin=*, nosep]
    \item
      \label{item:df:context}
      $\stEnvi$ can be split into two disjoint contexts,
      one with only $\stEnd$
      entries, and one with only $\stStop$ entries:
      \;$\stEnvi =  \stEnvi[\stEnd] \stEnvComp \stEnvi[\stStopSym]$,\; where
      \;$\dom{\stEnvi[\stEnd]} =
      \setcomp{\roleP}{\stEnvApp{\stEnvi}{\roleP} = \stEnd}$\; and
      \;$\dom{\stEnvi[\stStopSym]} =
      \setcomp{\roleP}{\stEnvApp{\stEnvi}{\roleP} = \stStop}$;\; and,
    \item
      \label{item:df:queues}
      $\qEnvi$ is empty for all pairs of roles, except for the receiving queues
      of crashed roles, which are unavailable:
      \;$\forall \roleP, \roleQ:  \stEnvApp{\qEnvi}{\cdot, \roleQ} =
      \stQUnavail$\;
      if
       \;$\stEnvApp{\stEnvi}{\roleQ} = \stStop$,\; and
       \;$\stEnvApp{\qEnvi}{\roleP, \roleQ} = \stQEmpty$,\; otherwise.
    \end{enumerate}
\end{enumerate}
\end{defi}

It is worth noting that a (safe) configuration that reduces infinitely
satisfies deadlock-freedom, as \cref{item:df:reduces} in the premise does not hold.
Otherwise, whenever a terminal configuration is reached, it must satisfy
\cref{item:df:context} that all local types in the typing context be
terminated (either successfully $\stEnd$, or crashed $\stStop$), and
\cref{item:df:queues} that all queues be empty (unless unavailable due to
crash).
As a consequence,
a deadlock-free configuration $\stEnv; \qEnv$ either does not stop reducing, or
terminates in a stable configuration.

\paragraph*{Liveness}
\label{sec:type-system-live}
The property of liveness describes that every pending internal/external choice
is eventually triggered by means of a message transmission or crash detection.
Our liveness property is based on \emph{fairness},
which guarantees that every enabled message transmission, including crash detection,
is performed successfully.
We give the definitions of non-crashing, fair,  and live paths of configurations respectively in~\autoref{def:non-crash-fair-live-path}, and use these paths to
formalise the liveness for configurations in~\autoref{def:mpst-env-live}.

\begin{defi}[Non-crashing, Fair, Live Paths]
\label{def:non-crash-fair-live-path}
 A \emph{non-crashing path} is a possibly infinite
sequence of configurations \;$(\stEnv[n]; \qEnv[n])_{n \in N}$,\; where
\;$N = \setenum{0, 1, 2, \ldots}$\;
is a set of consecutive natural numbers, and
\;$\forall n \in N$, $\stEnv[n]; \qEnv[n] \!\stEnvMove\!  \stEnv[n+1]; \qEnv[n+1]$.

We say that a non-crashing path \;$(\stEnv[n]; \qEnv[n])_{n \in N}$\; is
\emph{fair} iff, \;$\forall n \in N$:
 \begin{enumerate}[label={(F\arabic*)}, leftmargin=*, ref={(F\arabic*)},nosep]
 \item
 \label{item:fairness_send}
 $\stEnv[n]; \qEnv[n] \stEnvMoveOutAnnot{\roleP}{\roleQ}{\stChoice{\stLab}{\tyGround}}$\;
 implies \;$\exists k, \stLabi, \tyGroundi$\; such that \;$n \leq k \in N$\;
 and
 \;$\stEnv[k]; \qEnv[k] \stEnvMoveOutAnnot{\roleP}{\roleQ}{\stChoice{\stLabi}{\tyGroundi}} \stEnv[k+1]; \qEnv[k+1]$;
 \item
 \label{item:fairness_receive}
 $\stEnv[n]; \qEnv[n] \stEnvMoveInAnnot{\roleP}{\roleQ}{\stChoice{\stLab}{\tyGround}}$\;
 implies \;$\exists k$\; such that \;$n \leq k \in N$\;
 and
 \;$\stEnv[k]; \qEnv[k] \stEnvMoveInAnnot{\roleP}{\roleQ}{\stChoice{\stLab}{\tyGround}} \stEnv[k+1]; \qEnv[k+1]$;
 \item
 \label{item:fairness_crash_detection}
 $\stEnv[n]; \qEnv[n] \stEnvMoveAnnot{\ltsCrDe{\mpS}{\roleP}{\roleQ}}$\;
 implies \;$\exists k$\; such that \;$n \leq k \in N$\;
 and
 \;$\stEnv[k]; \qEnv[k] \stEnvMoveAnnot{\ltsCrDe{\mpS}{\roleP}{\roleQ}}\stEnv[k+1]; \qEnv[k+1]$.
 \end{enumerate}

We say that a non-crashing path \;$(\stEnv[n]; \qEnv[n])_{n \in N}$\;
is \emph{live} iff, \;$\forall n \in N$:
 \begin{enumerate}[label={(L\arabic*)}, leftmargin=*, ref={(L\arabic*)},nosep]
 \item
 \label{item:liveness_consume}
 $\stEnvApp{\qEnv[n]}{\roleP, \roleQ}
      =
      \stQCons{\stQMsg{\stLab}{\tyGround}}{\stQ}
      \neq
      \stQUnavail$\; and \;$\stLab \neq \stCrashLab$\;
 implies \;$\exists k$\; such that \;$n \leq k \in N$\;
 and \;
 $\stEnv[k]; \qEnv[k] \stEnvMoveInAnnot{\roleQ}{\roleP}{\stChoice{\stLab}{\tyGround}} \stEnv[k+1]; \qEnv[k+1]$;
\item
\label{item:liveness_receiving_cd}
$\stEnvApp{\stEnv[n]}{%
        \roleP%
      } =
      \stExtSum{\roleQ}{i \in I}{\stChoice{\stLab[i]}{\tyGround[i]} \stSeq \stT[i]}$\;    %
 implies \;$\exists k, \stLabi, \tyGroundi$\; such that \;$n \leq k \in N$\; and\\
 $\stEnv[k]; \qEnv[k]
 \stEnvMoveInAnnot{\roleP}{\roleQ}{\stChoice{\stLabi}{\tyGroundi}} \stEnv[k+1];
 \qEnv[k+1]$\;
 or
\;$\stEnv[k]; \qEnv[k]
      \;\stEnvMoveAnnot{\ltsCrDe{\mpS}{\roleP}{\roleQ}}\;%
      \stEnv[k+1]; \qEnv[k+1]$.
 \end{enumerate}
 \end{defi}
A non-crashing path is a (possibly infinite) sequence of reductions of
a configuration without crashes. A non-crashing path is fair if along the path, every internal choice eventually
sends a message~\ref{item:fairness_send}, every external choice eventually receives a message~\ref{item:fairness_receive},
and every crash detection is eventually performed~\ref{item:fairness_crash_detection}.  A non-crashing path is live if
along the path, every non-crash message in the queue is eventually consumed~\ref{item:liveness_consume}, and every
hanging external choice eventually consumes a message or performs a crash detection~\ref{item:liveness_receiving_cd}.

 \begin{defi}[Configuration Liveness]
 \label{def:mpst-env-live}%
Given a set of reliable roles $\rolesR$, we say that a configuration $\stEnv;
\qEnv$ is
\emph{$\rolesR$-live} iff:
\begin{enumerate*}[leftmargin=*, nosep]
  \item $\stEnv; \qEnv$ is $\rolesR$-safe; and,
  \item
    \label{item:live:reduces}
  $\stEnv; \qEnv \stEnvMoveMaybeCrashStar[\rolesR]
  \stEnvi; \qEnvi$\; implies all non-crashing paths starting with
  $\stEnvi; \qEnvi$ that are fair are also live.
   \end{enumerate*}
\end{defi}
A configuration $\stEnv; \qEnv$ is $\rolesR$-live when it is $\rolesR$-safe and
any reductum of $\stEnv; \qEnv$ (via transition $\stEnvMoveMaybeCrashStar[\rolesR]$)
consistently leads to a live path if it is fair.

\begin{exa} 
\label{ex:dl-liveness}
We illustrate safety, deadlock-freedom, and liveness over configurations via a series of small examples. 
We consider the configuration $\stEnv[\stFmt{A}]; \qEnv[\stFmt{A}]$, where 
$\stEnv[\stFmt{A}] = \stEnv[\stFmt{A\roleP}] \stEnvComp \stEnv[\stFmt{A\roleQ}] \stEnvComp \stEnv[\stFmt{A\roleR}]$ 
and 
$\qEnv[\stFmt{A}] =  \qEnv[\stQEmpty]$ with:  

\smallskip
\centerline{\(
\begin{array}{rcl}
\stEnv[\stFmt{A\roleP}] &=&
\stEnvMap{\roleP}{
  \stRec{\stRecVar[\roleP]}{
    \stIntSum{\roleQ}{}{
      \stChoice{\stLabOK}{} \stSeq
      \stExtSum{\roleQ}{}{
        \stChoice{\stLabOK}{} \stSeq
        \stRecVar[\roleP]
        \stEnvComp\;
        \stChoice{\stLabKO}{} \stSeq
        \stEnd
        \stEnvComp\;
        \stChoice{\stCrashLab}{} \stSeq
        \stEnd
      }
      \stEnvComp\;
      \stChoice{\stLabKO}{} \stSeq \stEnd
    }
  }
}
\\
\stEnv[\stFmt{A\roleQ}] &=&
\stEnvMap{\roleQ}{
  \stRec{\stRecVar[\roleQ]}{
    \stExtSum{\roleP}{}{
      \stChoice{\stLabOK}{} \stSeq
      \stIntSum{\roleP}{}{
        \stChoice{\stLabOK}{} \stSeq
        \stRecVar[\roleQ]
        \stEnvComp\;
        \stChoice{\stLabKO}{} \stSeq
        \stEnd
      }
      \stEnvComp\;
      \stChoice{\stLabKO}{} \stSeq
      \stEnd
      \stEnvComp
      \stChoice{\stCrashLab}{} \stSeq
      \stOut{\roleR}{\stLabOK}{} \stSeq
      \stEnd
    }
  }
}
\\
\stEnv[\stFmt{A\roleR}] &=&
\stEnvMap{\roleR}{
  \stIn{\roleP}{\stCrashLab}{}{
    \stIn{\roleQ}{\stLabOK}{}{
    \stEnd
    \stEnvComp\;
    \stChoice{\stCrashLab}{} \stSeq
    \stEnd
    }
  }
}
\end{array}
\)}\smallskip

\noindent
If we assume that all roles %
are unreliable, \ie $\rolesR = \emptyset$, 
$\stEnv[\stFmt{A}]; \qEnv[\stFmt{A}]$ is $\emptyset$-safe 
since the inputs/outputs in the typing context $\stEnv[\stFmt{A}]$ are dual and the queue environment $\qEnv[\stFmt{A}]$ is empty. 
However, $\stEnv[\stFmt{A}]; \qEnv[\stFmt{A}]$ is \emph{neither} $\emptyset$-deadlock-free \emph{nor} $\emptyset$-live since it is possible for $\roleP$ to crash immediately before $\roleQ$ sends $\stLabKO$ to $\roleP$. In such cases, $\roleQ$ will \emph{not} detect that $\roleP$ has crashed (since we only detect crashes on receive actions) and terminate \emph{without} sending a message to the backup process $\roleR$. This results in a deadlock because  $\roleR$ \emph{will} detect that $\roleP$ has crashed, and \emph{will} expect a message from $\roleQ$.

We observe that changing the reliability assumptions, without changing the configuration, may influence whether a configuration property holds. For example, in the case of $\stEnv[\stFmt{A}]; \qEnv[\stFmt{A}]$, 
we \emph{can} obtain liveness by adjusting the reliability assumptions: in fact, if we assume $\roleR \in \rolesR$, then 
$\stEnv[\stFmt{A}]; \qEnv[\stFmt{A}]$ is both $\rolesR$-deadlock-free and $\rolesR$-live.

Then consider the configuration $\stEnv[\stFmt{B}];  \qEnv[\stFmt{B}]$, where 
$\stEnv[\stFmt{B}] = \stEnv[\stFmt{B\roleP}] \stEnvComp \stEnv[\stFmt{B\roleQ}] \stEnvComp \stEnv[\stFmt{B\roleR}]$ 
and $\qEnv[\stFmt{B}] = \qEnv[\stQEmpty]$ with:

\smallskip
\centerline{\(
\begin{array}{rcl}
\stEnv[\stFmt{B\roleP}] &=&
\stEnvMap{\roleP}{
  \stRec{\stRecVar[\roleP]}{
    \stOut{\roleQ}{\stLabOK}{} \stSeq
    \stRecVar[\roleP]
  }
}
\\
\stEnv[\stFmt{B\roleQ}] &=&
\stEnvMap{\roleQ}{
  \stRec{\stRecVar[\roleQ]}{
    \stIn{\roleP}{\stLabOK}{}{
      \stRecVar[\roleQ]
      \stEnvComp\;
      \stChoice{\stCrashLab}{} \stSeq
      \stRec{\stRecVari[\roleQ]}{
        \stIn{\roleR}{\stLabOK}{}{
          \stRecVari[\roleQ]
          \stEnvComp\;
          \stChoice{\stCrashLab}{} \stSeq
          \stEnd
        }
      }
    }
 }
}
\\
\stEnv[\stFmt{B\roleR}] &=&
\stEnvMap{\roleR}{
  \stRec{\stRecVar[\roleR]}{
    \stOut{\roleQ}{\stLabOK}{} \stSeq
    \stRecVar[\roleR]
  }
}
\end{array}
\)}\smallskip

\noindent
If we assume that all roles are unreliable, \ie $\rolesR = \emptyset$, 
$\stEnv[\stFmt{B}]; \qEnv[\stFmt{B}]$ is $\emptyset$-safe and $\emptyset$-deadlock-free but \emph{not} 
$\emptyset$-live -- because 
$\roleP$ may never crash, and in this case, $\roleR$'s outputs are never received by $\roleQ$.  %
Notice that, in the case of $\stEnv[\stFmt{B}]; \qEnv[\stFmt{B}]$, we are unable to make liveness hold purely via combinations of reliable roles: this is because (unless $\roleP$ crashes) $\roleR$'s output will never be received by 
$\roleQ$, irrespective of reliability assumptions. The configuration itself must instead be adapted accordingly, \eg 
in $\stEnv[\stFmt{B}]$, $\roleR$ should be permitted to send only once it has detected that $\roleP$ has crashed.

Finally, consider the configuration $\stEnv[\stFmt{C}]; \qEnv[\stFmt{C}]$, where $\stEnv[\stFmt{C}] = \stEnv[\stFmt{C\roleP}] \stEnvComp \stEnv[\stFmt{C\roleQ}] \stEnvComp \stEnv[\stFmt{C\roleR}]$ and $\qEnv[\stFmt{C}] =  \qEnv[\stQEmpty]$ with: 

\smallskip
\centerline{\(
\begin{array}{rcl}
\stEnv[\stFmt{C\roleP}] &=&
\stEnvMap{\roleP}{
  \stOut{\roleQ}{\stLab[1]}{} \stSeq
  \stIn{\roleQ}{\stLab[2]}{}{
    \stEnd
    \stEnvComp\;
    \stChoice{\stCrashLab}{} \stSeq
    \stRec{\stRecVar[\roleP]}{
      \stOut{\roleR}{\stLabOK}{} \stSeq
      \stRecVar[\roleP]
    }
  }
}
\\
\stEnv[\stFmt{C\roleQ}] &=&
\stEnvMap{\roleQ}{
  \stIn{\roleP}{\stLab[1]}{}{
    \stOut{\roleP}{\stLab[2]}{} \stSeq
    \stEnd
  }
}
\\
\stEnv[\stFmt{C\roleR}] &=&
\stEnvMap{\roleR}{
  \stIn{\roleP}{\stCrashLab}{}{
    \stRec{\stRecVar[\roleQ]}{
      \stIn{\roleP}{\stLabOK}{}{
        \stRecVar[\roleQ]
      }
    }
  }
}
\end{array}
\)}\smallskip

\noindent
 When \emph{all} roles are assumed to be reliable, \ie $\rolesR = \setenum{\roleP, \roleQ, \roleR}$, 
 $\stEnv[\stFmt{C}]; \qEnv[\stFmt{C}]$ satisfies $\setenum{\roleP, \roleQ, \roleR}$-safety and $\setenum{\roleP, \roleQ, \roleR}$-deadlock-freedom. %
However, should we instead assume that no roles are reliable, \ie $\rolesR = \emptyset$, $\stEnv[\stFmt{C}]; \qEnv[\stFmt{C}]$ satisfies only $\emptyset$-safety since external choices in $\stEnv[\stFmt{C}]$ do not feature a crash handling branch when receiving from $\roleP$.

\end{exa}

\paragraph*{Properties by Projection}
We conclude by showing the guarantee of safety,
deadlock-freedom, and liveness in configurations
associated with global types in~\autoref{lem:ext-proj}.
Furthermore, as a corollary,~\autoref{cor:allproperties} demonstrates that
a typing context projected from a global type (without runtime
constructs) is inherently safe, deadlock-free, and live by construction.  %
\iftoggle{full}{The detailed proofs for~\Cref{lem:ext-proj,cor:allproperties}
are available in~\cref{sec:proof:propbyproj}.}{}

\begin{restatable}{lem}{lemProj}
 \label{lem:ext-proj}
  If $\stEnvAssoc{\gtWithCrashedRoles{\rolesC}{\gtG}}{\stEnv; \qEnv}{\rolesR}$,
  then $\stEnv; \qEnv$ is $\rolesR$-safe, $\rolesR$-deadlock-free, and $\rolesR$-live.%
\end{restatable}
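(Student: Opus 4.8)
The plan is to establish the three properties in sequence, with $\rolesR$-safety carrying most of the weight and then being reused as the first conjunct of deadlock-freedom and liveness. Write $\mathcal{A}$ for the set of all configurations $\stEnv; \qEnv$ for which there is \emph{some} well-annotated (\wrt $\rolesR$, see~\autoref{def:globaltypes:well-anno}) global type $\gtWithCrashedRoles{\rolesC}{\gtG}$ with $\stEnvAssoc{\gtWithCrashedRoles{\rolesC}{\gtG}}{\stEnv; \qEnv}{\rolesR}$; by the hypothesis of the lemma (and the fact that a design-time $\gtG$ is trivially well-annotated, which subsumes the runtime case by~\autoref{lem:well-annotated-preserve}), the given $\stEnv; \qEnv$ lies in $\mathcal{A}$, so it suffices to show each predicate holds on all of $\mathcal{A}$.

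\textbf{Safety.} I would prove that $\mathcal{A}$ is itself an $\rolesR$-safety property, i.e.\ that it satisfies clauses \inferrule{\iruleSafeComm}, \inferrule{\iruleSafeCrash}, \inferrule{\iruleSafeRec} and \inferrule{\iruleSafeMove} of~\autoref{def:mpst-env-safe}. For \inferrule{\iruleSafeComm}: if $\roleQ$ offers an external choice on $\roleP$ and $\stEnvApp{\qEnv}{\roleP, \roleQ}$ is available and non-empty, then by the queue part of~\autoref{def:assoc} (\cref{item:assoc:queue}) its head message $\stQMsg{\stLabi}{\tyGroundi}$ comes from a transmission en route $\gtCommTransit{\rolePMaybeCrashed}{\roleQ}{i\in I}{\gtLab[i]}{\tyGround[i]}{\gtG[i]}{j}$ with $\gtLab[j] = \stLabi \neq \gtCrashLab$ inside $\gtG$; projection (\autoref{def:global-proj}) puts the branch $\gtLab[j]$ into $\gtProj[\rolesR]{\gtG}{\roleQ}$, and by \inferrule{\iruleStSubIn} subtyping only \emph{adds} input branches going down, so $\stEnvApp{\stEnv}{\roleQ}$ also offers $\gtLab[j]$ and the input transition is enabled. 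For \inferrule{\iruleSafeCrash}: if $\stEnvApp{\stEnv}{\roleP} = \stStop$ then $\roleP \in \rolesC$, hence $\roleP \notin \rolesR$ by well-annotatedness; since $\roleP$ has crashed, the role-removal operator (\autoref{def:gtype:remove-role}) turned every transmission $\roleP \to \roleQ'$ carrying a $\gtCrashLab$ label into a pseudo-transmission-en-route from $\rolePCrashed$, and projection onto $\roleQ$ then requires a $\stCrashLab$ branch (as $\roleP \notin \rolesR$), which \inferrule{\iruleStSubIn} again propagates down to $\stEnvApp{\stEnv}{\roleQ}$; with the queue empty, crash detection is enabled. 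Clause \inferrule{\iruleSafeRec} holds because subtyping is closed under unfolding, so unfolding a recursive local type in $\stEnv$ keeps $\stEnv; \qEnv \in \mathcal{A}$ with the same global type. Finally \inferrule{\iruleSafeMove}: any step $\stEnv; \qEnv \stEnvMoveMaybeCrash[\rolesR] \stEnvi; \qEnvi$ carries a label that is not a crash of a reliable role, so \autoref{thm:gtype:proj-comp} yields $\gtWithCrashedRoles{\rolesC}{\gtG} \gtMove[\stEnvAnnotGenericSym]{\rolesR} \gtWithCrashedRoles{\rolesCi}{\gtGi}$ with $\stEnvAssoc{\gtWithCrashedRoles{\rolesCi}{\gtGi}}{\stEnvi; \qEnvi}{\rolesR}$, and $\gtWithCrashedRoles{\rolesCi}{\gtGi}$ is still well-annotated by~\autoref{lem:well-annotated-preserve}; hence $\stEnvi; \qEnvi \in \mathcal{A}$. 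Therefore $\mathcal{A}$ is an $\rolesR$-safety property and every member is $\rolesR$-safe.

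\textbf{Deadlock-freedom.} Safety is already in hand, so by~\autoref{def:mpst-env-deadlock-free} only the terminal-state condition remains. Given $\stEnv; \qEnv \stEnvMoveMaybeCrashStar[\rolesR] \stEnvi; \qEnvi \!\not\stEnvMoveMaybeCrash[\rolesR]$, iterating the \inferrule{\iruleSafeMove} argument along the $\stEnvMoveMaybeCrashStar[\rolesR]$ chain gives $\stEnvi; \qEnvi \in \mathcal{A}$, say $\stEnvAssoc{\gtWithCrashedRoles{\rolesCi}{\gtGi}}{\stEnvi; \qEnvi}{\rolesR}$ with $\gtWithCrashedRoles{\rolesCi}{\gtGi}$ well-annotated. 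I claim $\gtWithCrashedRoles{\rolesCi}{\gtGi}$ admits no reduction: otherwise \autoref{thm:gtype:proj-sound} produces a configuration step $\stEnvi; \qEnvi \stEnvMoveGenAnnot \stEnvii; \qEnvii$ whose label, coming from a global-type rule, is never a crash of a reliable role (rule \inferrule{\iruleGtMoveCrash} forbids $\roleP \in \rolesR$), contradicting $\stEnvi; \qEnvi \!\not\stEnvMoveMaybeCrash[\rolesR]$. A short case analysis of a well-annotated non-reducible global type — using that every transmission offers a non-$\gtCrashLab$ label (hence \inferrule{\iruleGtMoveOut}), that a transmission en route fires via \inferrule{\iruleGtMoveIn} or \inferrule{\iruleGtMoveCrDe}, that delivery to a crashed receiver fires via \inferrule{\iruleGtMoveOrph}, and that contractiveness forces recursion to unfold to one of these or to $\gtEnd$ — forces $\gtRoles{\gtGi} = \emptyset$ with no transmissions en route. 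Feeding this back through~\autoref{def:assoc} (\cref{item:assoc:stenv} and \cref{item:assoc:qenv}) gives exactly \cref{item:df:context} ($\stEnvi$ splits into $\stEnd$- and $\stStop$-entries) and \cref{item:df:queues} (every queue empty, except unavailable queues of crashed roles).

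\textbf{Liveness.} Again safety is done; take $\stEnvi; \qEnvi$ with $\stEnv; \qEnv \stEnvMoveMaybeCrashStar[\rolesR] \stEnvi; \qEnvi$, so $\stEnvi; \qEnvi \in \mathcal{A}$, and let $(\stEnv[n]; \qEnv[n])_{n \in N}$ be a fair non-crashing path from it. Each step uses $\stEnvMove$ (input, output, or crash detection, never a crash), so \autoref{thm:gtype:proj-comp} lifts the path to a chain of well-annotated associated global types $\gtWithCrashedRoles{\rolesC[n]}{\gtG[n]}$ related by $\gtMove{\rolesR}$ containing no crash steps. The crux is an auxiliary liveness lemma for such chains: along any fair non-crashing chain of associated global types, every transmission en route is eventually received or crash-detected, and every role sitting at an external choice eventually performs an input or crash detection. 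I would prove this by contradiction using a well-founded measure on the global type — the finite (by contractiveness) number of prefixes guarding the pending action — showing that fairness \ref{item:fairness_send}--\ref{item:fairness_crash_detection} discharges those guards one by one, so the pending action becomes enabled and is then taken by fairness. Transferring this through the queue clauses of~\autoref{def:assoc} yields \ref{item:liveness_consume} and \ref{item:liveness_receiving_cd}, i.e.\ the path is live. The asynchronous mismatch between a message resting in a queue and the global type's bookkeeping of it — in particular, showing that such a message reliably percolates to the head of its queue — is the step I expect to be the main obstacle; everything else is routine lifting along the operational correspondence.
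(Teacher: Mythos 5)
Your proposal is correct and, for safety and deadlock-freedom, follows the paper's own proof essentially verbatim: the paper likewise takes the collection of configurations associated with (reducts of) the global type, checks the four clauses of \autoref{def:mpst-env-safe} using the queue clauses of \autoref{def:assoc}, the inversion of projection, \autoref{lem:crash-lab-exists} and the input-subtyping condition for \inferrule{\iruleSafeCrash}, and closes under \inferrule{\iruleSafeMove} via \autoref{thm:gtype:proj-comp} and \autoref{lem:well-annotated-preserve}; and for deadlock-freedom it argues exactly as you do that a terminal configuration forces (by soundness and progress of global types) the associated global type to be $\gtEnd$, from which \autoref{def:assoc} yields the $\stEnd$/$\stStop$ split and the empty/unavailable queues. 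The one place you genuinely diverge is the core of liveness: you propose a well-founded measure on the prefixes guarding a pending action, discharged one by one via fairness, whereas the paper instead uses the structural extraction lemma \autoref{lem:eventual-global-form} to argue that the relevant reception or crash-detection is already \emph{enabled} at the current configuration and then invokes fairness clauses \ref{item:fairness_send}--\ref{item:fairness_crash_detection} once. Two remarks on your version: the obstacle you flag (a message ``percolating to the head of its queue'') is vacuous, since queues are point-to-point and clause \ref{item:liveness_consume} already assumes the message is at the head; the real work --- which your measure argument would have to carry, and which is the only part of your sketch left unproved --- is showing that the receiver's local type eventually reaches the matching external choice when the pending transmission is nested under other prefixes. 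Your measure-based route is arguably more robust in exactly that nested situation, but as written it remains a sketch of the auxiliary lemma rather than a proof of it.
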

\begin{proof}
The results follow from the operational correspondence between
global types and configurations~(Theorems \ref{thm:gtype:proj-comp}
and~\ref{thm:gtype:proj-sound}),
and the respective definitions for each property~(Definitions \ref{def:mpst-env-safe},
\ref{def:mpst-env-deadlock-free}, and \ref{def:mpst-env-live}).
See Appendix~\ref{sec:proof:safety},~\ref{sec:proof:deadlockfree}, and~\ref{sec:proof:liveness} for details.
\qedhere
\end{proof}

\begin{restatable}[Safety, Deadlock-Freedom, and Liveness by Projection]{thm}{colProjAll}%
  \label{cor:allproperties}
  Let $\gtG$ be a global type without runtime constructs,
  and $\rolesR$ be a set of reliable roles.
  If $\stEnv$ is a typing context
  associated with the global type $\gtG$:
  $\stEnvAssoc{\gtG}{\stEnv}{\rolesR}$,
  then $\stEnv; \qEnv[\stQEmpty]$ is $\rolesR$-safe, $\rolesR$-deadlock-free, and $\rolesR$-live.
\end{restatable}
\begin{proof}
Note $\stEnvAssoc{\gtG}{\stEnv}{\rolesR}$ is an abbreviation of
  $\stEnvAssoc{\gtWithCrashedRoles{\emptyset}{\gtG}}{\stEnv; \qEnv[\stQEmpty]}{\rolesR}$,
  apply~\autoref{lem:ext-proj}.
  \qedhere
\end{proof}

\section{A Type System with Crash-Stop Semantics}
\label{sec:typing_system}
In this section, we present a type system for our asynchronous multiparty session
calculus. Our typing system %
extends the one in~\cite{POPL21AsyncMPSTSubtyping} with
crash-stop failures -- typing rules for crashed process and unavailable queues, respectively.
We introduce the typing rules in Section~\ref{sec:typingrules}, and show various
properties of typed sessions: subject reduction (\autoref{lem:sr}), session
fidelity (\autoref{lem:sf}), deadlock-freedom (\autoref{lem:session_deadlock_free}),
and liveness (\autoref{lem:session_live}) in Section~\ref{sec:type:system:results}.

\subsection{Typing Rules}
\label{sec:typingrules}

Our type system involves four kinds of typing judgements:
\emph{(1)} for expressions;
\emph{(2)} for processes;
\emph{(3)} for queues;
and \emph{(4)} for sessions,
and is defined inductively by the typing rules in~\cref{fig:processes:typesystem}.

\begin{figure}
 \noindent
  \centerline{\(
\begin{array}{@{}c@{}}
\inference{}{
\Theta \vdash \mpNat : \tyNat}
  \quad 
   \inference{}{\Theta \vdash \mpInt : \tyInt}
   \quad 
    \inference{}{\Theta \vdash \mpTrue : \tyBool}
    \quad 
    \inference{}{\Theta \vdash \mpFalse : \tyBool}
    \quad 
    \inference{}{\Theta \vdash \mpString{} : \tyString}
    \\[3mm]
    \inference{}{\Theta \vdash \mpUnit{} : \tyUnit}
 \quad 
     \inference{}{\Theta , \mpx : \tyGround \vdash \mpx : \tyGround}
     \quad 
     \inference{\Theta \vdash \mpE : \tyNat}{\Theta \vdash \mpSucc{\mpE} : \tyNat}
     \quad 
     \inference{\Theta \vdash \mpE : \tyInt}{\Theta \vdash \mpNeg{\mpE} : \tyInt}
     \\[3mm]
      \inference{\Theta \vdash \mpE : \tyBool}{\Theta \vdash \neg \mpE : \tyBool}
      \qquad 
       \inference{\Theta \vdash \mpE[1] : \tyInt & \Theta \vdash \mpE[2] : \tyInt}{\Theta \vdash \mpE[1] < \mpE[2]: \tyBool}
     \\[3mm]
  \inference[{t-$\mpQEmpty$}]{}{\vdash \mpQEmpty : \stQEmpty}
  \qquad
  \highlight{\inference[{t-$\mpQUnavail$}]{ }{\vdash \mpQUnavail : \stQUnavail}}
\qquad
  \inference[{t-$\cdot$}]{
    \vdash \mpH_1:\qEnvPartial[1] \quad \vdash \mpH_2:\qEnvPartial[2]
  }{
    \vdash \mpH_1 \cdot \mpH_2: \qEnvPartial[1] \cdot \qEnvPartial[2]
  }
\\[3mm]
    \inference[{t-msg}]{
    \vdash \mpV:\tyGround
    &
    \stEnvApp{\qEnvPartial}{\roleQ} = \stQMsg\stLab\tyGround
    &
    \forall \roleR \neq \roleQ:
    \stEnvApp{\qEnvPartial}{\roleR} = \stQEmpty
  }{
    \vdash (\roleQ , \mpLab(\mpV)) : \qEnvPartial
  }
\\[3mm]
\highlight{\inference[{t-$\mpCrash$}]{$\;$}{\Theta \vdash \mpCrash : \stStop}}
  \qquad
\inference[{t-$\mpNil$}]{$\;$}{\Theta \vdash \mpNil : \stEnd}
  \qquad
\inference[{t-out}]{
  \Theta \vdash \mpE:\tyGround
  \quad
  \Theta \vdash \mpP:\stT
}{
  \Theta  \vdash \procout\roleQ{\mpLab}{\mpE}{\mpP}:
  \stOut{\roleQ}{\stLab}{\tyGround}\stSeq{\stT}
}
  \\[3mm]%
\inference[{t-ext}]{
  \forall i\in I\;\;\; \Theta, x_i:\tyGround_i \vdash \mpP_i:\stT_i
}{
  \Theta \vdash  \sum_{i\in I}\procin{\roleQ}{\mpLab_i(\mpx_i)}{\mpP_i}:
  \stExtSum{\roleQ}{i\in I}{\stChoice{\stLab[i]}{\tyGround_i}\stSeq{\stT_i}}
}
  \quad
\inference[{t-cond}]{\Theta \vdash \mpE:\tyBool
\quad \Theta  \vdash \mpP_i:\stT \ \text{\footnotesize $(i=1,2)$}
}{\Theta \vdash \mpIf\mpE{\mpP_1}{\mpP_2}:\stT}
\\[3mm]
\inference[{t-rec}]{\Theta, X:\stT  \vdash \mpP:\stT}
{\Theta  \vdash  \mu X.\mpP: \stT}
\qquad
\inference[{t-var}]{$\;$}{\Theta, X:\stT  \vdash X:\stT}
\qquad
\inference[{t-sub}]{\Theta \vdash \mpP:\stT \quad \stT\stSub \stT' }{\Theta \vdash \mpP:\stT'}
\\[3mm]
\inference[{t-sess}]{
  \highlight{\stEnvAssoc{\gtWithCrashedRoles{\rolesC}{\gtG}}{\stEnv; \qEnv}{\rolesR}}
  \quad
  \forall i\in I %
  \quad 
  \highlight{\vdash \mpP_i:\stEnvApp{\stEnv}{\roleP[i]}}
   \quad  %
 \highlight{\vdash \mpH[i]: \stEnvApp{\qEnv}{-, \roleP[i]}} %
  &
 \highlight{ \dom{\stEnv} \subseteq
  \setcomp{\roleP[i]}{i \in I}}
}{
  \gtWithCrashedRoles{\rolesC}{\gtG}
  \vdash \prod_{i\in I} (\mpPart{\roleP[i]}{\mpP[i]} \mpPar
  \mpPart{\roleP[i]}{\mpH[i]})
}
\end{array}
\)}
\caption{
  Typing rules for expressions, queues,
  processes, and sessions.%
}%
\label{fig:processes:typesystem}
\end{figure}

Typing judgments for expressions and processes take the forms
$\Theta \vdash \mpE : \tyGround$ and $\Theta \vdash \mpP : \stT$, respectively, where $\Theta$ is a typing context for expression and process variables,
defined as $\Theta \bnfdef
\emptyset \bnfsep \Theta , \mpx : \tyGround \bnfsep \Theta , \mpX : \stT$.

With regard to queues, we use judgments of form $\vdash \mpH : \qEnvPartial$,
where we use $\qEnvPartial$ denote a partially applied queue lookup function.
We write $\qEnvPartial = \stEnvApp{\qEnv}{-, \roleP}$ to describe the incoming
queue for a role $\roleP$, as a partially
applied function $\qEnvPartial = \stEnvApp{\qEnv}{-, \roleP}$ such that
$\stEnvApp{\qEnvPartial}{\roleQ} = \stEnvApp{\qEnv}{\roleQ,\roleP}$.
We write $\qEnvPartial[1] \cdot \qEnvPartial[2]$ to denote the point-wise
application of concatenation.
We lift the process-level constructs for empty queues~($\mpQEmpty$), unavailable queues~($\mpQUnavail$), 
queue concatenations~($\cdot$), 
and the structural precongruence relation~($\Rrightarrow$) on queues to their corresponding type-level counterparts.
For a singleton message $(\roleQ, \mpLab(\mpV))$, the appropriate partial queue
$\qEnvPartial$ would be a singleton of $\stQMsg{\stLab}{\tyGround}$ (where
$\tyGround$ is the type of $\mpV$) for $\roleQ$, and an empty queue
($\stQEmpty$) for any other role.

Finally, we use judgments of form $\gtWithCrashedRoles{\rolesC}{\gtG} \vdash
\mpM$ for sessions.
We use a global type-guided judgment, effectively asserting that all
participants in the session respect the prescribed global type, as is the case
in~\cite{JLAMP19SyncSubtyping}.
As $\highlight{\text{highlighted}}$, the global type with crashed roles $\gtWithCrashedRoles{\rolesC}{\gtG}$ must
have some associated configuration $\stEnv; \qEnv$, which are used to type the
processes and the queues respectively.
Moreover, all the entries in the configuration must be present in the session.

We explain the rules in~\cref{fig:processes:typesystem}
that assign session types based on process behaviour~(other rules are mostly self-explanatory).
Rule \inferrule{t-$\mpQUnavail$} ($\highlight{\text{highlighted}}$)
assigns the unavailable queue type $\stQUnavail$
 to an unavailable queue
$\mpQUnavail$.
Rules $\inferrule{t-out}$ and $\inferrule{t-ext}$ assign internal
and external choice types to input and output processes, respectively.
Additionally, rule \inferrule{t-$\mpCrash$} ($\highlight{\text{highlighted}}$)
assigns the crash termination
$\stStop$ to a crashed process $\mpCrash$,
while rule \inferrule{t-$\mpNil$} assigns the successful termination $\stEnd$ to an inactive process $\mpNil$.

\begin{exa}
\label{ex:typing_system}
Consider our Simpler Logging example (Section~\ref{sec:overview} and \autoref{ex:configuration_safety}).
Specifically, consider the processes that act as the roles $\roleFmt{C}$, $\roleFmt{L}$, and
$\roleFmt{I}$ respectively:

\smallskip
 \centerline{\(
 \begin{array}{c}
\mpP[\roleFmt{C}]  =
\procoutNoVal{\roleFmt{I}}{\labFmt{read}}{\procin{\roleFmt{I}}{\labFmt{report}(\mpx)}{\mpNil}}
\quad
\mpP[\roleFmt{L}] =
\procoutNoVal{\roleFmt{I}}{\labFmt{trigger}}{\sum \setenum{
 \begin{array}{l}
 \procin{\roleFmt{I}}{\labFmt{fatal}}{\mpNil}
 \\
 \procin{\roleFmt{I}}{\labFmt{read}}{
 \procout{\roleFmt{I}}{\labFmt{report}}{\mpFmt{log}}{\mpNil}}
  \end{array}
  }
  }
  \\[3mm]
 \mpP[\roleFmt{I}] =
\procin{\roleFmt{L}}{\labFmt{trigger}}{\sum \setenum{
 \begin{array}{l}
 \procin{\roleFmt{C}}{\labFmt{read}}{
 \procoutNoVal{\roleFmt{L}}{\labFmt{read}}
 {\procin{\roleFmt{L}}{\labFmt{report}(\mpx)}
 {\procout{\roleFmt{C}}{\labFmt{report}}{\mpFmt{log}}{\mpNil}}}
 }
 \\
  \procin{\roleFmt{C}}{\mpCrashLab}{ \procoutNoVal{\roleFmt{L}}{\labFmt{fatal}}{\mpNil}}
  \end{array}
  }
  }
\end{array}
\)}

\smallskip
\noindent
and message queues $\mpH[\roleFmt{C}] = \mpH[\roleFmt{L}] = \mpH[\roleFmt{I}] = \mpQEmpty$,
and %
 the configuration $\stEnv; \qEnv$ as in \autoref{ex:configuration_safety}.

Process $\mpP[\roleFmt{C}]$ (resp. $\mpP[\roleFmt{L}]$, $\mpP[\roleFmt{I}]$)
has the type $\stEnvApp{\stEnv}{\roleFmt{C}}$ (resp. $\stEnvApp{\stEnv}{\roleFmt{L}}$,
$\stEnvApp{\stEnv}{\roleFmt{I}}$),
and
queue $\mpH[\roleFmt{C}]$ (resp. $\mpH[\roleFmt{L}]$, $\mpH[\roleFmt{I}]$) %
has the type
$\stEnvApp{\qEnv}{-, \roleFmt{C}}$ (resp. $\stEnvApp{\qEnv}{-, \roleFmt{L}}$, $\stEnvApp{\qEnv}{-, \roleFmt{I}}$),
which can be verified in the standard way. Then, together with the association
 $\stEnvAssoc{\gtWithCrashedRoles{\emptyset}{\gtG[s]}}{\stEnv; \qEnv}{\setenum{\roleFmt{L}, \roleFmt{I}}}$,
we can use \inferrule{{t-sess}} to assert that the session
$\mpPart{\roleFmt{C}}{\mpP[\roleFmt{C}]} \mpPar
  \mpPart{\roleFmt{C}}{\mpH[\roleFmt{C}]} \mpPar
  \mpPart{\roleFmt{L}}{\mpP[\roleFmt{L}]} \mpPar
  \mpPart{\roleFmt{L}}{\mpH[\roleFmt{L}]} \mpPar
  \mpPart{\roleFmt{I}}{\mpP[\roleFmt{I}]} \mpPar
 \mpPart{\roleFmt{I}}{\mpH[\roleFmt{I}]}$
 is governed by the global type
  $\gtWithCrashedRoles{\emptyset}{\gtG[s]}$.
  If we follow a crash reduction, \eg by the rule \inferrule{r-$\lightning$},
  the session evolves as $\mpPart{\roleFmt{C}}{\mpP[\roleFmt{C}]} \mpPar
  \mpPart{\roleFmt{C}}{\mpH[\roleFmt{C}]} \mpPar
  \mpPart{\roleFmt{L}}{\mpP[\roleFmt{L}]} \mpPar
  \mpPart{\roleFmt{L}}{\mpH[\roleFmt{L}]} \mpPar
  \mpPart{\roleFmt{I}}{\mpP[\roleFmt{I}]} \mpPar
 \mpPart{\roleFmt{I}}{\mpH[\roleFmt{I}]}
\;\redCrash{\roleP}{\rolesR}\;
\mpPart{\roleFmt{C}}{\mpCrash} \mpPar
  \mpPart{\roleFmt{C}}{\mpQUnavail} \mpPar
  \mpPart{\roleFmt{L}}{\mpP[\roleFmt{L}]} \mpPar
  \mpPart{\roleFmt{L}}{\mpH[\roleFmt{L}]} \mpPar
  \mpPart{\roleFmt{I}}{\mpP[\roleFmt{I}]} \mpPar
 \mpPart{\roleFmt{I}}{\mpH[\roleFmt{I}]}$, where, by
 \inferrule{t-$\mpCrash$}, $\mpP[\roleFmt{C}]$ is typed by $\stStop$,  and
 $\mpH[\roleFmt{C}]$ is typed by $\stQUnavail$.
  \end{exa}

\subsection{Properties of Typed Sessions}
\label{sec:type:system:results}
We present the main properties of typed sessions: \emph{subject reduction} (\autoref{lem:sr}),
\emph{session fidelity} (\autoref{lem:sf}), \emph{deadlock-freedom} (\autoref{lem:session_deadlock_free}), and
\emph{liveness} (\autoref{lem:session_live}). \iftoggle{full}{The proofs for~\Cref{lem:sr,lem:sf,lem:session_deadlock_free,lem:session_live} are available in~\cref{sec:proof:typesystem}.}{}

\emph{Subject reduction} states that well-typedness of sessions is preserved by reduction, \ie a session that is governed by a global type continues to be
governed by a global type.

\begin{restatable}[Subject Reduction]{thm}{lemSubjectReduction}%
  \label{lem:sr}
  If
  \;$\gtWithCrashedRoles{\rolesC}{\gtG} \vdash \mpM$\;
  and
  \;$\mpM \mpMove[\rolesR] \mpMi$,\;
  then either
  \;$\gtWithCrashedRoles{\rolesC}{\gtG} \vdash \mpMi$,\;
  or there exists \;$\gtWithCrashedRoles{\rolesCi}{\gtGi}$\; such that
  \;$\gtWithCrashedRoles{\rolesC}{\gtG} \gtMove{\rolesR}
  \gtWithCrashedRoles{\rolesCi}{\gtGi}$\;
  and
  \;$\gtWithCrashedRoles{\rolesCi}{\gtGi} \vdash \mpMi$.
\end{restatable}
\begin{proof}
By induction on the derivation of $\mpM \mpMove[\rolesR] \mpMi$.
See Appendix~\ref{sec:proof:typesystem} for details.
\qedhere
\end{proof}

\emph{Session fidelity} states the opposite implication with regard to subject reduction:
sessions respect the progress of the governing global type.

\begin{restatable}[Session Fidelity]{thm}{lemSessionFidelity}
  \label{lem:sf}
  If
  \;$\gtWithCrashedRoles{\rolesC}{\gtG} \vdash \mpM$\; and
  \;$\gtWithCrashedRoles{\rolesC}{\gtG} \gtMove{\rolesR}$,\;
  then there exists $\mpMi$ and \;$\gtWithCrashedRoles{\rolesCi}{\gtGi}$\; such that
  \;$\gtWithCrashedRoles{\rolesC}{\gtG} \gtMove{\rolesR}
  \gtWithCrashedRoles{\rolesCi}{\gtGi}$,
  \;$\mpM \mpMoveStar[\rolesR] \mpMi$\; and
  \;$\gtWithCrashedRoles{\rolesCi}{\gtGi} \vdash \mpMi$.
\end{restatable}
\begin{proof}
By induction on the derivation of $\gtWithCrashedRoles{\rolesC}{\gtG} \gtMove{\rolesR}$.
See Appendix~\ref{sec:proof:typesystem} for details.
\qedhere
\end{proof}

Session \emph{deadlock-freedom} means that
the `successful' termination of a session may include crashed processes and their respective unavailable
incoming queues -- but reliable roles (which cannot crash) can only successfully terminate by reaching
inactive processes with empty incoming queues.
We formalise the definition of deadlock-free sessions in~\autoref{def:session_df} and show that a well-typed session is deadlock-free
in~\autoref{lem:session_deadlock_free}.

\begin{defi}[Deadlock-Free Sessions]
\label{def:session_df}
A session $\mpM$ is \emph{deadlock-free} iff
\;$\mpM \mpMoveStar[\rolesR] \mpMi \mpNotMove[\rolesR]$\;
 implies
either \;$\mpMi \Rrightarrow \mpPart\roleP\mpNil \mpPar \mpPart\roleP \mpQEmpty$ for some $\roleP$,\;
or
\;$\mpMi \Rrightarrow \prod_{i\in I} (\mpPart{\roleP[i]}{\mpCrash} \mpPar \mpPart{\roleP[i]}{\mpQUnavail})$ with $I \neq \emptyset$.
\end{defi}

\begin{restatable}[Session Deadlock-Freedom]{thm}{lemSessionDF}%
\label{lem:session_deadlock_free}
If
 \;$\gtWithCrashedRoles{\rolesC}{\gtG} \vdash \mpM$,\;
 then $\mpM$ is deadlock-free.
\end{restatable}
\begin{proof}
The result follows by~\autoref{lem:ext-proj}~(deadlock-freedom by projection),
\autoref{lem:sr}~(subject reduction), and~\autoref{lem:sf}~(session fidelity).
See Appendix~\ref{sec:proof:typesystem} for details.
\qedhere
\end{proof}

Finally, we show that well-typed sessions guarantee the property of \emph{liveness}:
a session is \emph{live} when all its input processes will be performed eventually,
and all its queued messages will be consumed eventually.
We formalise the definition of live sessions in~\autoref{def:session_live} and conclude by showing
that a well-typed session is live
in~\autoref{lem:session_live}.

\begin{defi}[Live Sessions]
\label{def:session_live}
A session $\mpM$ is \emph{live} iff
\;$\mpM \mpMoveStar[\rolesR] \mpMi
\Rrightarrow \mpPart\roleP\mpP \mpPar \mpPart\roleP \mpH[\roleP] \mpPar \mpMii$\;
 implies:
 \begin{enumerate}[label=\emph{(\roman*)}]
 \item if \;$\mpH[\roleP] = \left(\roleQ,\mpLab(\mpV)\right)\cdot\mpHi[\roleP]$, \;then $\exists \mpPi, \mpMiii:
 \mpMi \mpMoveStar[\rolesR] \mpPart\roleP\mpPi \mpPar \mpPart\roleP \mpHi[\roleP] \mpPar \mpMiii$;
 \item if \;$\mpP =  \sum_{i\in I}\procin{\roleQ}{\mpLab_i(\mpx_i)}{\mpP_i}$,  %
 \;then $\exists k \in I, \mpW, \mpHi[\roleP], \mpMiii:
 \mpMi \mpMoveStar[\rolesR] \mpPart\roleP \mpP_k\subst{\mpx_k}{\mpW} \mpPar \mpPart\roleP \mpHi[\roleP]
 \mpPar \mpMiii$.
 \end{enumerate}
\end{defi}

\begin{restatable}[Session Liveness]{thm}{lemSessionLive}%
\label{lem:session_live}
If
 \;$\gtWithCrashedRoles{\rolesC}{\gtG} \vdash \mpM$,\;
 then $\mpM$ is live.
\end{restatable}
\begin{proof}
The result follows by~\autoref{lem:ext-proj}~(liveness by projection), \autoref{lem:sr}~(subject reduction), and~\autoref{lem:sf}~(session fidelity).
See Appendix~\ref{sec:proof:typesystem} for details.
\qedhere
\end{proof}

\section{Case Study: Non-Blocking Atomic Commits}
\label{sec:casestudy}

\begin{figure}
  \[
    \begin{array}{rcl}
    \gtG &=& \gtRec{\gtRecVar}{
      \gtCommSingle{\roleC}{\roleL,\roleMhd,\roleMtl}{\gtLabProp}{}{
        \gtCommRawNB{\roleMhd}{\roleL}{
          \begin{cases}
          \gtCommChoice{\gtLabCommit}{}{
            \gtCommRawNB{\roleMtl}{\roleL}{
              \begin{cases}
              \gtCommChoice{\gtLabCommit}{}{
                \gtCommRawNB{\roleL}{\roleC}{
                  \begin{cases}
                  \gtCommChoice{\gtLabCommit}{}{\gtG[1]}
                  \\
                  \gtCommChoice{\gtLabVeto}{}{\gtG[2]}
                  \\
                  \gtCommChoice{\gtCrashLab}{}{\gtG[3]}
                  \end{cases}
                }
              }
              \\[7mm]
              \gtCommChoice{\gtLabVeto}{}{
                \gtCommRawNB{\roleL}{\roleC}{
                  \begin{cases}
                  \gtCommChoice{\gtLabVeto}{}{\gtG[2]}
                  \\
                  \gtCommChoice{\gtCrashLab}{}{\gtG[5]}
                  \end{cases}
                }
              }
              \end{cases}
            }
          }
          \\[15mm]
          \gtCommChoice{\gtLabVeto}{}{
            \gtCommRawNB{\roleMtl}{\roleL}{
              \begin{cases}
              \gtCommChoice{\gtLabCommit}{}{
                \gtCommRawNB{\roleL}{\roleC}{
                  \begin{cases}
                  \gtCommChoice{\gtLabVeto}{}{\gtG[2]}
                  \\
                  \gtCommChoice{\gtCrashLab}{}{\gtG[4]}
                  \end{cases}
                }
              }
              \\[5mm]
              \gtCommChoice{\gtLabVeto}{}{
                \gtCommRawNB{\roleL}{\roleC}{
                  \begin{cases}
                  \gtCommChoice{\gtLabVeto}{}{\gtG[2]}
                  \\
                  \gtCommChoice{\gtCrashLab}{}{\gtG[5]}
                  \end{cases}
                }
              }
              \end{cases}
            }
          }
          \end{cases}
        }
      }
    }
    \\[10mm]
    \gtG[1] &=& \gtCommSingle{\roleC}{\roleL,\roleMhd,\roleMtl}{\gtLabSucc}{}{
      \gtRecVar
    }
    \\
    \gtG[2] &=& \gtCommSingle{\roleC}{\roleL,\roleMhd,\roleMtl}{\gtLabAbort}{}{
      \gtRecVar
    }
    \\
    \gtG[3] &=& \gtCommSingle{\roleC}{\roleMhd}{\gtLabPromote}{}{
      \gtCommSingle{\roleC}{\roleMtl}{\gtLabNext}{}{
        \gtCommSingle{\roleMtl}{\roleMhd}{\gtLabCommit}{}{
          \gtCommSingle{\roleMhd}{\roleC}{\gtLabCommit}{}{
            \gtCommSingle{\roleC}{\roleMhd,\roleMtl}{\gtLabSucc}{}{
              \gtEnd
            }
          }
        }
      }
    }
    \\
    \gtG[4] &=& \gtCommSingle{\roleC}{\roleMhd}{\gtLabPromote}{}{
      \gtCommSingle{\roleC}{\roleMtl}{\gtLabNext}{}{
        \gtCommSingle{\roleMtl}{\roleMhd}{\gtLabCommit}{}{
          \gtCommSingle{\roleMhd}{\roleC}{\gtLabVeto}{}{
            \gtCommSingle{\roleC}{\roleMhd,\roleMtl}{\gtLabAbort}{}{
              \gtEnd
            }
          }
        }
      }
    }
    \\
    \gtG[5] &=& \gtCommSingle{\roleC}{\roleMhd}{\gtLabPromote}{}{
      \gtCommSingle{\roleC}{\roleMtl}{\gtLabNext}{}{
        \gtCommSingle{\roleMtl}{\roleMhd}{\gtLabVeto}{}{
          \gtCommSingle{\roleMhd}{\roleC}{\gtLabVeto}{}{
            \gtCommSingle{\roleC}{\roleMhd,\roleMtl}{\gtLabAbort}{}{
              \gtEnd
            }
          }
        }
      }
    }
    \end{array}
  \]
  \caption{A global type $\gtG$ for a variant of the NBAC abstraction.}
  \label{fig:case:gtype}
\end{figure}

\begin{figure}
  \setlength{\aboverulesep}{2.5mm}
 \setlength{\belowrulesep}{2.5mm}
  \[
    \begin{tabular}{>{$}r<{$}c>{$}l<{$}}
      \stT[\roleC] &=& \stRec{\stRecVar}{
        \stOut{\roleL,\roleMhd,\roleMtl}{\stLabProp}{}
          \stSeq \stExtSumNB{\roleL}{}{
            \begin{cases}
            \stChoice{\stLabCommit}{}
              \stSeq \stOut{\roleL,\roleMhd,\roleMtl}{\stLabSucc}{}
              \stSeq \stRecVar
            \\
            \stChoice{\stLabVeto}{}
              \stSeq \stOut{\roleL,\roleMhd,\roleMtl}{\stLabAbort}{}
              \stSeq \stRecVar
            \\
            \stChoice{\stCrashLab}{}
              \stSeq \stS[\roleC]
            \end{cases}
          }
      }
      \\[8mm]
      \stS[\roleC] &=& \stOut{\roleMhd}{\stLabPromote}{}
        \stSeq \stOut{\roleMtl}{\stLabNext}{}
        \stSeq \stExtSumNB{\roleMhd}{}{
          \begin{cases}
          \stChoice{\stLabCommit}{}
            \stSeq \stOut{\roleMhd,\roleMtl}{\stLabSucc}{}
            \stSeq \stEnd
          \\
          \stChoice{\stLabVeto}{}
            \stSeq \stOut{\roleMhd,\roleMtl}{\stLabAbort}{}
            \stSeq \stEnd
          \end{cases}
        }
      \\
      \midrule
      \stT[\roleL] &=& \stRec{\stRecVar}{
        \stInNB{\roleC}{\stLabProp}{}{
          \stExtSumNB{\roleMhd}{}{
            \begin{cases}
            \stChoice{\stLabCommit}{}
              \stSeq \stExtSumNB{\roleMtl}{}{
                \begin{cases}
                \stChoice{\stLabCommit}{}
                  \stSeq \stIntSumNB{\roleC}{}{
                    \begin{cases}
                    \stChoice{\stLabCommit}{}
                      \stSeq \stInNB{\roleC}{\stLabSucc}{}{\stRecVar}
                    \\
                    \stChoice{\stLabVeto}{}
                      \stSeq \stInNB{\roleC}{\stLabAbort}{}{\stRecVar}
                    \end{cases}
                  }
                \\
                \stChoice{\stLabVeto}{}
                  \stSeq \stOut{\roleC}{\stLabVeto}{}
                  \stSeq \stInNB{\roleC}{\stLabAbort}{}{\stRecVar}
                \end{cases}
              }
            \\[8mm]
            \stChoice{\stLabVeto}{}
              \stSeq \stExtSumNB{\roleMtl}{}{
                \begin{cases}
                \stChoice{\stLabCommit}{}
                  \stSeq \stOut{\roleC}{\stLabVeto}{}
                  \stSeq \stInNB{\roleC}{\stLabAbort}{}{\stRecVar}
                \\
                \stChoice{\stLabVeto}{}
                  \stSeq \stOut{\roleC}{\stLabVeto}{}
                  \stSeq \stInNB{\roleC}{\stLabAbort}{}{\stRecVar}
                \end{cases}
              }
            \end{cases}
          }
        }
      }
    \\\midrule
    \stT[\roleMhd] &=& \stRec{\stRecVar}{
      \stInNB{\roleC}{\stLabProp}{}{
        \stIntSumNB{\roleL}{}{
          \begin{cases}
          \stChoice{\stLabCommit}{}
            \stSeq \stExtSum{\roleC}{}{
              \stChoice{\stLabAbort}{} \stSeq \stRecVar
              ,\;
              \stChoice{\stLabSucc}{} \stSeq \stRecVar
              ,\;
              \stChoice{\stLabPromote}{}
                \stSeq \stS[\roleMhd]
            }
          \\
          \stChoice{\stLabVeto}{}
            \stSeq \stExtSum{\roleC}{}{
              \stChoice{\stLabAbort}{} \stSeq \stRecVar
              ,\;
              \stChoice{\stLabSucc}{} \stSeq \stRecVar
              ,\;
              \stChoice{\stLabPromote}{}
                \stSeq \stSi[\roleMhd]
            }
          \end{cases}
        }
      }
    }
    \\[6.5mm]
    \stS[\roleMhd] &=& \stExtSumNB{\roleMtl}{}{
      \begin{cases}
      \stChoice{\stLabCommit}{}
        \stSeq \stOut{\roleC}{\stLabCommit}{}
        \stSeq \stInNB{\roleC}{\stLabSucc}{}{\stEnd}
      \\
      \stChoice{\stLabVeto}{}
        \stSeq \stOut{\roleC}{\stLabVeto}{}
        \stSeq \stInNB{\roleC}{\stLabAbort}{}{\stEnd}
      \end{cases}
    }
    \\[6mm]
    \stSi[\roleMhd] &=& \stExtSumNB{\roleMtl}{}{
      \begin{cases}
      \stChoice{\stLabCommit}{}
        \stSeq \stOut{\roleC}{\stLabVeto}{}
        \stSeq \stInNB{\roleC}{\stLabAbort}{}{\stEnd}
      \\
      \stChoice{\stLabVeto}{}
        \stSeq \stOut{\roleC}{\stLabVeto}{}
        \stSeq \stInNB{\roleC}{\stLabAbort}{}{\stEnd}
      \end{cases}
    }
    \\\midrule
    \stT[\roleMtl] &=& \stRec{\stRecVar}{
      \stInNB{\roleC}{\stLabProp}{}{
        \stIntSumNB{\roleL}{}{
          \begin{cases}
          \stChoice{\stLabCommit}{}
            \stSeq \stExtSumNB{\roleC}{}{
              \begin{cases}
              \stChoice{\stLabAbort}{} \stSeq \stRecVar
              \\
              \stChoice{\stLabSucc}{} \stSeq \stRecVar
              \\
              \stChoice{\stLabNext}{}
                \stSeq \stOut{\roleMhd}{\stLabCommit}{}
                \stSeq \stExtSumNB{\roleC}{}{
                  \begin{cases}
                  \stChoice{\stLabAbort}{} \stSeq \stEnd
                  \\
                  \stChoice{\stLabSucc}{} \stSeq \stEnd
                  \end{cases}
                }
              \end{cases}
            }
          \\[11mm]
          \stChoice{\stLabVeto}{}
            \stSeq \stExtSumNB{\roleC}{}{
              \begin{cases}
              \stChoice{\stLabAbort}{} \stSeq \stRecVar
              \\
              \stChoice{\stLabNext}{}
                \stSeq \stOut{\roleMhd}{\stLabVeto}{}
                \stSeq \stInNB{\roleC}{\stLabAbort}{}{\stEnd}
              \end{cases}
            }
          \end{cases}
        }
      }
    }
    \end{tabular}
  \]
  \caption{Projected local types for the NBAC abstraction $\gtG$.} %
  \label{fig:case:ltype}
\end{figure}

We demonstrate our approach via the Non-Blocking Atomic Commits (NBAC)
abstraction~\cite{DBLP:books/daglib/0025983}, which enables consistent
application of transactions in distributed database systems. The abstraction is
straightforward: a given transaction is applied iff all data managers -- each
representing part of the database -- accept the transaction.
We present a variant of this abstraction via the global type $\gtG$ in
\cref{fig:case:gtype}, which involves a co-ordinator $\roleC$, two data managers
$\roleMhd, \roleMtl$, and one unreliable leading data manager $\roleL$.  
The
protocol exhibits failover behaviour, where $\roleMhd$ is the failover role for
$\roleL$. For clarity, we opt to eliminate the right-hand braces from global types.

The recursive protocol in \cref{fig:case:gtype} begins with the co-ordinator $\roleC$
broadcasting a transaction proposal, denoted $\gtLabProp$, to all three data
managers. For brevity, we omit payload types and write
$\gtCommSingle{\roleC}{\roleL,\roleMhd,\roleMtl}{\gtLabProp}{}{\gtGi}$ for
$\gtCommSingle{\roleC}{\roleL}{\gtLabProp}{}{\gtCommSingle{\roleC}{\roleMhd}{\gtLabProp}{}{\gtCommSingle{\roleC}{\roleMtl}{\gtLabProp}{}{\gtGi}}}$.
Data managers $\roleMhd$ and $\roleMtl$ send their votes to the leading data manager $\roleL$.
In cases where $\roleL$ does not crash, it sends $\gtLabVeto$ to $\roleC$ when
it receives a $\gtLabVeto$ion from its peers, or if $\roleL$ itself
$\gtLabVeto$s the transaction; $\roleC$ subsequently broadcasts an $\gtLabAbort$
message (in $\gtG[2]$) prior to the protocol recursing. In cases where $\roleL$,
$\roleMhd$, and $\roleMtl$ all vote to $\gtLabCommit$ the transaction, $\roleC$
broadcasts a success message (in $\gtG[1]$).
In cases where $\roleC$ detects that $\roleL$ has crashed, it promotes
$\roleMhd$ to leader. Once promoted, $\roleMhd$ receives a vote from $\roleMtl$,
and informs $\roleC$ of the data managers' decision. Should the transaction be
rejected (resp.\@ accepted), $\roleC$ broadcasts an abort (resp.\@ a success)
message to all live data managers, before the protocol terminates.
The global types $\gtG[3],\gtG[4]$, and $\gtG[5]$ define this crash handling behaviour.

All projected local types for the NBAC protocol $\gtG$ are listed in
\cref{fig:case:ltype}. 
The type obtained by projecting onto $\roleC$, $\stT[\roleC]$, is the sole local
type that contains a crash handling branch since $\roleL$ is the only unreliable
role, and both $\roleMhd$ and $\roleMtl$ do not receive messages from $\roleL$.
Its crash handling branch is defined by $\stS[\roleC]$. Similarly to the global
type above, broadcast operations are denoted as 
$\stOut{\roleP[1],\dots,\roleP[n]}{\stLab}{}$, and stand for 
$\stOut{\roleP[1]}{\stLab}{} \stSeq \dots \stSeq \stOut{\roleP[n]}{\stLab}{}.$ 
Additionally, we remove the right-hand braces from local types as well.
Projections for roles $\roleL$, $\roleMhd$, and $\roleMtl$ are as standard,
where labels $\stLabPromote$ and $\stLabNext$ trigger the crash handling
behaviour in $\roleMhd$ and $\roleMtl$ respectively.

\begin{figure}
  \[
    \begin{array}{rcl}
      \mpP[\roleC] &=& \mpRec{\mpX}{
        \procoutNoVal{\roleFmt{\roleL,\roleMhd,\roleMtl}}{\mpLabProp}{
          \sum\setenum{
            \begin{array}{l}
            \procin{\roleL}{\mpLabCommit}{
              \procoutNoVal{\roleFmt{\roleL,\roleMhd,\roleMtl}}{\mpLabSucc}{
                \mpX
              }
            }
            \\
            \procin{\roleL}{\mpLabVeto}{
              \procoutNoVal{\roleFmt{\roleL,\roleMhd,\roleMtl}}{\mpLabAbort}{
                \mpX
              }
            }
            \\
            \procin{\roleL}{\mpCrashLab}{
              \mpQ[\roleC]
            }
            \end{array}
          }
        }
      }
      \\[6mm]
      \mpQ[\roleC] &=& \procoutNoVal{\roleMhd}{\mpLabPromote}{
        \procoutNoVal{\roleMtl}{\mpLabNext}{
          \sum\setenum{
            \begin{array}{l}
            \procin{\roleMhd}{\mpLabCommit}{
              \procoutNoVal{\roleFmt{\roleMhd,\roleMtl}}{\mpLabSucc}{
                \mpNil
              }
            }
            \\
            \procin{\roleMhd}{\mpLabVeto}{
              \procoutNoVal{\roleFmt{\roleMhd,\roleMtl}}{\mpLabAbort}{
                \mpNil
              }
            }
            \end{array}
          }
        }
      }
      \\[6mm]
      \mpP[\roleMhd] &=& \mpRec{\mpX}{
        \procin{\roleC}{\mpLabProp}{
          \procoutNoVal{\roleL}{\mpLabCommit}{
            \sum\setenum{
              \begin{array}{l}
                \procin{\roleC}{\mpLabSucc}{\mpX}
                ,\;
                \procin{\roleC}{\mpLabAbort}{\mpX}
                ,\;
                \procin{\roleC}{\mpLabPromote}{\mpQ[\roleMhd]}
              \end{array}
            }
          }
        }
      }
      \\[2mm]
      \mpQ[\roleMhd] &=& \sum\setenum{
        \begin{array}{l}
          \procin{\roleMtl}{\mpLabCommit}{
            \procoutNoVal{\roleC}{\mpLabCommit}{
              \procin{\roleC}{\mpLabSucc}{\mpNil}
            }
          }
          \\
          \procin{\roleMtl}{\mpLabVeto}{
            \procoutNoVal{\roleC}{\mpLabVeto}{
              \procin{\roleC}{\mpLabAbort}{\mpNil}
            }
          }
        \end{array}
      }
    \end{array}
  \]
  \caption{Process definitions for roles $\roleC$ and $\roleMhd$ in $\gtG$.}
  \label{fig:case:prcss}
\end{figure}

Let $\stEnv; \qEnv$ be the $\{\roleC,\roleMhd,\roleMtl\}$-safe configuration,
where
\(
  \stEnv =
    \stEnvMap{\roleC}{\stT[\roleC]}
    \stEnvComp \stEnvMap{\roleL}{\stT[\roleL]}
    \stEnvComp \stEnvMap{\roleMhd}{\stT[\roleMhd]}
    \stEnvComp \stEnvMap{\roleMtl}{\stT[\roleMtl]}
\), 
and $\qEnv =  \qEnv[\stQEmpty]$.
The processes $\mpP[\roleC]$ and $\mpP[\roleMhd]$ in \cref{fig:case:prcss}, with
message queues $\mpH[\roleC] =  \mpH[\roleMhd] =
\mpQEmpty$ 
act as roles $\roleC$ and $\roleMhd$ in $\gtG$. The former,
$\mpP[\roleC]$, has type $\stEnvApp{\stEnv}{\roleC}$; the latter,
$\mpP[\roleMhd]$ will always accept a proposed transaction, and has type
$\stEnvApp{\stEnv}{\roleMhd}$. Processes $\mpP[\roleL]$ and $\mpP[\roleMtl]$ can
be defined similarly such that $\mpP[\roleL]$ has type
$\stEnvApp{\stEnv}{\roleL}$, and $\mpP[\roleMtl]$ has type
$\stEnvApp{\stEnv}{\roleMtl}$.
Since the local types in $\stEnv$ are derived via projection from $\gtG$, we
have the association
\(
  \stEnvAssoc{\gtWithCrashedRoles{\emptyset}{\gtG}}{
    \stEnv; \qEnv
  }{\setenum{\roleC, \roleMhd, \roleMtl}}
\). 
This allows us to assert that the session 
\(
  \mpPart{\roleC}{\mpP[\roleC]} 
  \mpPar \mpPart{\roleC}{\mpH[\roleC]} 
  \mpPar \mpPart{\roleL}{\mpP[\roleL]} 
  \mpPar \mpPart{\roleL}{\mpH[\roleL]} 
  \mpPar \mpPart{\roleMhd}{\mpP[\roleMhd]} 
  \mpPar \mpPart{\roleMhd}{\mpH[\roleMhd]}
  \mpPar \mpPart{\roleMtl}{\mpP[\roleMtl]} 
  \mpPar \mpPart{\roleMtl}{\mpH[\roleMtl]}
\), where  $\mpH[\roleL] =  \mpH[\roleMtl] = \mpQEmpty$, 
is governed by the global type $\gtWithCrashedRoles{\emptyset}{\gtG}$. 
Consequently, this session is both deadlock-free (by~\autoref{lem:session_deadlock_free}) 
and live (by~\autoref{lem:session_live}).

\section{Related Work}\label{sec:related}

We summarise related work on session types with failure handling. 
While most of the literature discussed includes delegation, our framework does not. 
However, this distinction does not impact the core comparison, which focuses on failure handling within session types.
The discussion starts with
the closest related 
work~\cite{OOPSLA21FaultTolerantMPST,FORTE22FaultTolerant,CONCUR22MPSTCrash,ESOP23MAGPi}
in which multiparty session types are extended to model crashes or failures. 
\paragraph{Multiparty Session Types with Failures} 
Peters \etal~\cite{FORTE22FaultTolerant} propose an MPST framework to
model fine-grained unreliability.
In their work, each transmission in a global type is parameterised by a
reliability annotation,
which can be unreliable (sender/receiver can crash, and messages can be
lost),
weakly reliable (sender/receiver can crash, messages are not lost), or
reliable (no crashes or message losses).
The design choice taken in our work roughly falls under weakly
reliable in their work, where a $\gtCrashLab$ handling branch (in their work, a
default branch) needs to be present to handle failures.
In our work, the reliability assumptions operate on a coarse level, but
nonetheless are \emph{consistent} within a given global type -- if a role
$\roleP$ is assumed reliable, $\roleP \in \rolesR$, then it does not crash for
the duration of the protocol, and vice versa.
Therefore, in a transmission $\roleP \to \roleQ$, our model allows \emph{one}
of the two roles to be unreliable, whereas their work does not permit the
`mixing' of reliability of sending and receiving roles. 

Viering \etal~\cite{OOPSLA21FaultTolerantMPST} utilise MPST as a guidance for
fault-tolerant distributed systems with recovery mechanisms.
Their framework includes various features, such as sub-sessions,
event-driven programming, dynamic role assignments, and most importantly
failure handling.
Our work handles unreliability in distributed programming, with the
following differences.
\begin{enumerate}%
  \item Failure detection \emph{assumptions} and \emph{models} are different:
    in our work, we assume a perfect failure detector, where all detected
    crashes are genuine.
    Their work uses a less strict assumption to allow \emph{false suspicions}.
    This difference subsequently gives rise to how failures are handled in both
    approaches:
    in our work, we use a special $\gtCrashLab$ handling branch in %
    global types to specify how the global protocol should progress after a
    crash has been detected.
    In contrast, they use a \emph{try-catch} construct in global types.
    In such a try-catch construct, crash detection within a sub-session
    $\gtG[1]$ is specified
    $\gtG[1]~\mathsf{with}~\roleP @ \roleQ~.~\gtG[2]$, where $\gtG[2]$ is a
    global type for failure handling (where $\roleP$ cannot occur), and
    $\roleQ$ is a \emph{monitor} that monitors $\roleP$ for possible failures.
    Moreover, the well-formedness condition (2) in~\cite[\S
    4.1]{OOPSLA21FaultTolerantMPST} requires the first message in $\gtG[2]$ to
    be a message broadcast of failure notification from the monitor $\roleQ$
    to all roles participating in the sub-session (except the crashed role
    $\roleP$).

    On this matter, we consider our framework more \emph{flexible} when
    detecting and handling crashes: every communication construct can have a
    crash handling branch %
    (when the sender is not assumed reliable), and the 
    failure broadcast is not necessary (failure detection only occurs when
    receiving from a crashed role).

  \item The \emph{merge} operators, used when projecting global types to obtain
    local types, are different: we use a more expressive \emph{full} merge
    operator~(\autoref{def:local-type-merge}), whereas they use a \emph{plain}
    merge operator, \ie requiring all continuations to project to the
    \emph{same} local type.

  \item  \emph{Reliability assumptions} are different: in our work, we
    support a range of assumptions from every role being unreliable, to totally
    reliable (as in the literature).
    In their work, they require \emph{at least one} reliable role, because
    they use a \emph{monitoring tree} for detecting crashes.
    Our work allows a role to detect the crash of its communication partner
    during reception, thus requiring neither such trees nor reliable roles.
\end{enumerate}

Barwell \etal~\cite{CONCUR22MPSTCrash} %
develop a theory of multiparty session types with crash-stop
failures. Their theory models crash-stop failures in the semantics of processes
and session
types, where the type system uses a model checker to validate type safety.
Our theory follows a similar model of crash-stop failures, but differs in the
following ways.
\begin{enumerate}%
  \item We model an asynchronous (message-passing) semantics, whereas they
    model a synchronous (rendez-vous) semantics.
    We focus on asynchronous systems, where a message can be buffered while in
    transit, since most of the interactions in the real distributed world are asynchronous.
  \item We follow a top-down methodology, beginning with protocol specification
    using global types, whereas they follow~\cite{POPL19LessIsMore} to analyse
    only local types.
    Our method dispenses with the need to use a model checker.
    More specifically, it is not feasible to model check asynchronous systems
    with buffers, since the model may be infinite~\cite[Appendix \S
    G]{POPL19LessIsMore}. 
    \end{enumerate}

Le Brun and Dardha~\cite{ESOP23MAGPi} 
follow a similar framework
to~\cite{CONCUR22MPSTCrash}.
They model an asynchronous semantics, and support more patterns of failure,
including message losses, delays, reordering, as well as link failures and
network partitioning.
However, their typing system suffers from its genericity, when type-level
properties become undecidable~\cite[\S 4.4]{ESOP23MAGPi}.
Our work uses global types for guidance, and recovers decidability of
properties at a small expense of expressivity.

\paragraph{Other Session Types with Failures} 
Most other session type works on modelling failures can be briefly categorised into
two: using \emph{affine types} or \emph{exceptions}~\cite{ECOOP22AffineMPST, LMCS18Affine,
DBLP:journals/pacmpl/FowlerLMD19}, %
and using \emph{coordinators} or \emph{supervision}%
~\cite{DBLP:conf/forte/AdameitPN17,ESOP18CrashHandling}.
The former adapts session types to an \emph{affine}
representation, in which endpoints may cease prematurely;
the latter, instead, are usually reliant on one or more \emph{reliable} processes that \emph{coordinate}
in the event of failure. 

\subparagraph{\emph{Affine Failure Handling}}
Mostrous and Vasconcelos~\cite{LMCS18Affine} 
first proposed the affine approach
to failure handling.
They present a $\pi$-calculus with affine binary sessions (\ie limited to two participants)
and concomitant reduction rules that represent a
minimal extension to standard (binary) session types.
Their extension is primarily comprised of a \emph{cancel operator}, which is
semantically similar to our crash construct: it represents a process
that has terminated early.
Besides these similarities, our work differs from \cite{LMCS18Affine} in several ways.
\begin{enumerate}%
  \item We address multiparty protocols and sessions rather than binary session types.
  \item In \cite{LMCS18Affine}, a cancel operator can have an arbitrary session type;
  consequently, crashes are not visible at the type level.
  Instead, we type crashed session endpoints with the special type $\stStop$,
  which lets us model crashes in the type semantics, and
  helps us in ensuring that a process implements its failure handling as expected in its (global or local) type.
  \item The reduction rules of \cite{LMCS18Affine}  do not permit a process to
    terminate early arbitrarily: cancellations must be raised explicitly by the
    programmer (or automatically by attempting to receive messages from crashed
    endpoints).
  \item Finally, cancellations in \cite{LMCS18Affine} may be caught and handled via a \emph{do-catch}
construct. This construct catches only the first cancellation and cannot be nested, thus providing little help in handling failure across
multiple roles. Our global and local types seamlessly support protocols where the failure of a role is detected (and handled) while handling the failure of another role.
\end{enumerate}

Fowler \etal~\cite{DBLP:journals/pacmpl/FowlerLMD19} 
present a concurrent
$\lambda$-calculus, \emph{EGV}, with asynchronous session-typed communication
and exception handling. %
Their approach is based on~\cite{LMCS18Affine}, and
therefore shares many of the same differences to our approach:
the use of the cancel operator and binary
session types, and the lack of a reduction rule enabling a process to crash
arbitrarily;
the cancel operator used in EGV takes an arbitrary session type
-- whereas we reflect the crashed status with the dedicated $\stStop$ type.
Similar to  our work, \cite{DBLP:journals/pacmpl/FowlerLMD19}
has asynchronous communication channels:
messages are queued when sent, and delivered at a later stage.

Lagaillardie \etal~\cite{ECOOP22AffineMPST} 
propose a framework of
\emph{affine} multiparty session types. 
They utilise the affine type
system to enforce 
that failures are handled.
In their system, a multiparty session can terminate prematurely.
While their theory can be used to model crash-stop failures, such failures
are not built into the semantics, so manual encoding of failures is necessary.
Moreover, there is no way to recover from a cancellation (\ie failure) besides
propagating the cancellation.
In our work, we provide the ability to follow a \emph{different} protocol
when a crash is detected, which gives rise to more flexibility and
expressivity.

There are some other works adopting \emph{exception-based} approaches, \eg 
Capecchi \etal~\cite{DBLP:journals/mscs/CapecchiGY16} and
Carbone \etal~\cite{DBLP:conf/concur/CarboneHY08}
model failure
at the \emph{application} level (similar to~\cite{LMCS18Affine,DBLP:journals/pacmpl/FowlerLMD19})
by equipping processes with constructs to actively
produce or catch failures. Conversely, our approach models
arbitrary failures (such as hardware failures).

\subparagraph{\emph{Coordinator Model Failure Handling}}
Coordinator model approaches find their roots in work by
Demangeon \etal~\cite{DBLP:journals/fmsd/DemangeonHHNY15}. 
In their model, global types are extended by \emph{interrupt blocks}
wherein one role may interrupt the protocol.
Interruptions are broadcast to all active roles within the block, and, once
interrupted, the roles follow a continuation specified separately.
Although their original intention was primarily to interrupt streaming behaviour, 
interrupt blocks (or similar constructs) have been used
to model crashes and failure handling in~\cite{DBLP:conf/forte/AdameitPN17,ESOP18CrashHandling}.

Adameit \etal~\cite{DBLP:conf/forte/AdameitPN17} 
extend the standard MPST syntax with  \emph{optional blocks}, representing regions of a protocol that are susceptible to communication failures. 
In their approach, if a process $\mpP$ expects a value from an optional block which fails, then a
\emph{default value} is provided to $\mpP$, so $\mpP$ can continue running.
This ensures termination and deadlock-freedom.
Although this approach does not feature an explicit reliable coordinator
process, we describe it here due to the inherent coordination required for multiple processes to start and end an optional block. %
Our approach differs in three key ways.
\begin{enumerate}%
  \item We model crash-stop process failures instead of impermanent link failures.
  \item We extend the semantics of communications in lieu of introducing a
    new syntactic construct to enclose the potentially crashing regions of a protocol --
    our global type projections and typing context safety ensure that crash detection is performed at every pertinent communication point.
  \item We allow crashes to significantly affect the evolution of protocols:
    our global and local types can have crash detection branches specifying
    significantly different behaviours \wrt non-crashing executions.
    Conversely, the approach in~\cite{DBLP:conf/forte/AdameitPN17} does not
    discriminate between the presence and absence of failures: both have the
    same protocol in the optional block's continuation. %
\end{enumerate}

Viering \etal~\cite{ESOP18CrashHandling} 
similarly extend the standard
global type syntax with a \emph{try-handle} construct, which is facilitated by the
presence of a reliable coordinator process, and via a construct to specify
reliable processes.
When the coordinator detects a failure, it broadcasts notifications to all remaining live processes;
then, the protocol proceeds according to the failure handling continuation
specified as part of the try-handle construct.
Our approach and \cite{ESOP18CrashHandling} share several modelling choices: crash-stop
semantics, perfect links, and the possibility of specifying reliable processes.
However, unlike \cite{ESOP18CrashHandling}, our approach does \emph{not} depend on a reliable coordinator that broadcasts failure notifications: all roles in a protocol can be unreliable, all processes may crash.

Besides the differences discussed above, we decided not to adopt coordinator processes nor failure broadcasts in order to avoid their inherent drawbacks.
The use of a coordinator requires additional run-time resources and increases the overall complexity of a distributed system.  Furthermore, the broadcasting of failure notifications introduces an effective synchronisation point for all roles, with additional overheads. Such synchronisation points may also make it harder to extend the theory to support scenarios
with unreliable communication.

\subparagraph{\emph{Alternative Session Types with Failure Handling}} 
Caires and P{\'{e}}rez~\cite{DBLP:conf/esop/CairesP17} present a linearly-typed
calculus that
supports non-determinism and process failures.
They present a core calculus with binary session types (based on linear logic), 
extended with constructs permitting non-determinism and control effects (\eg exceptions).
Failure is simulated via a non-deterministic choice operator.
The main difference is that we model arbitrary failures, and support
multiparty sessions.

Neykova and Yoshida~\cite{NY2017Recovery} build upon the \emph{Let it Crash} 
and supervisor-based design of
the \Erlang programming language.
Using an MPST specification, they build a dependency graph
between the processes interacting in a system; in case of failure, 
this information is then used by supervisor processes
to determine which subset of processes should be restarted, and which messages
should be re-sent.
As in the coordinator model approaches above, \Erlang supervisors result in
additional overhead compared with our approach.
Furthermore, as in the affine session types
approach~\cite{LMCS18Affine,DBLP:journals/pacmpl/FowlerLMD19}, failure states
are not reflected in the type system. Instead, there is an underlying
assumption that the supervisors will restart (a subset of) the system
until it terminates successfully.

Chen \etal~\cite{DBLP:conf/forte/ChenVBZE16} handle partial failures in
MPST by transforming programs to include synchronisation
points, at which failures can be detected and handled.
Our approach is less rigid due to the handling of
failures: synchronisation points are not needed (unless required by the
protocol), and the original protocols do not need to meet certain criteria to
permit their transformation.

\section{Conclusion and Future Work}
\label{sec:conclusion}
To overcome the challenge of accounting for failure handling in 
distributed systems using session types, 
we present an asynchronous multiparty session type 
framework with the ability to model and handle
crash-stop failures, \eg caused by hardware faults. 
In our session calculus, processes may crash
arbitrarily, and other processes can detect their crash and handle the event. 
Additionally, our system supports the specification of \emph{optional
reliability assumptions}:
by influencing type-checking, they allow our approach
to cover the spectrum between idealised scenarios where no process ever fails
(as typically assumed in standard session type works),
and more realistic scenarios
where any process can fail at any time. 
With only minimal changes to 
the syntax of standard asynchronous MPST in our theory, 
desirable global type properties such as deadlock-freedom, protocol conformance, 
and liveness are preserved by construction in typed processes, 
even in the presence of crashes.

This work is a new step towards modelling and handling real-world failures 
by utilising  session types, 
effectively bridging the gap between session type theory and 
applications. 
As future work, we plan to study different crash models (\eg crash-recover) and
failures of other components (\eg link failures). 
These further steps would lead us to our longer term goal, \ie to eventually
model and type-check implementations of well-known consensus algorithms used
in large-scale distributed systems.

\bigskip
\noindent
\emph{Acknowledgments}. 
{\small 
We thank the reviewers for their detailed and helpful comments. 
Additionally, we thank Yoshinori Kamegai for highlighting an issue related to structural congruence.
The work is partially supported by
  EPSRC grants EP/T006544/2, EP/K011715/1,
EP/K034413/1, EP/L00058X/1, EP/N027833/2,
EP/N028201/1, EP/T014709/2, EP/V000462/1, 
EP/X015955/1,  EP/Y005244/1, NCSS/EPSRC VeTSS, Horizon EU TaRDIS 101093006, 
Advanced Research and Invention Agency (ARIA) Safeguarded AI, and a grant from the Simons Foundation.}

\bibliographystyle{alphaurl}
\bibliography{references}

\newpage
\appendix
\section*{Table of Contents for Appendix}
\startcontents[sections]
\printcontents[sections]{l}{1}{\setcounter{tocdepth}{2}}
\section{Proofs for Section \ref{sec:gtype}}
\label{sec:app:types}

With regard to recursive global types, we define their \emph{unfolding} as
$\unfoldOne{\gtRec{\gtRecVar}{\gtG}} =
\unfoldOne{\gtG\subst{\gtRecVar}{\gtRec{\gtRecVar}{\gtG}}}$, and
$\unfoldOne{\gtG} = \gtG$ otherwise.
A recursive type $\gtRec{\gtRecVar}{\gtG}$ must be guarded (or
contractive), \ie the unfolding leads to a progressive prefix, \eg a
transmission.
Unguarded types, such as $\gtRec{\gtRecVar}{\gtRecVar}$ and
$\gtRec{\gtRecVar}{\gtRec{\gtRecVari}{\gtRecVar}}$, are excluded.
Similar definitions and requirements apply for local types.

\subsection{Live and Crashed Roles}
\begin{lem}
\label{lem:gtype:crash-remove-role}
  If \;$\roleP \in \gtRoles{\gtG}$, \;then \;$\roleP \notin
  \gtRoles{\gtCrashRole{\gtG}{\roleP}}$ and
  $\gtRoles{\gtCrashRole{\gtG}{\roleP}} \subseteq \gtRoles{\gtG}$.
\end{lem}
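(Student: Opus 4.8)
\textbf{Proof proposal for \cref{lem:gtype:crash-remove-role}.}

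The plan is to proceed by induction on the structure of $\gtG$, following exactly the case split in the definition of role removal (\autoref{def:gtype:remove-role}), and simultaneously establishing both conclusions: $\roleP \notin \gtRoles{\gtCrashRole{\gtG}{\roleP}}$ and $\gtRoles{\gtCrashRole{\gtG}{\roleP}} \subseteq \gtRoles{\gtG}$. Since $\gtRoles{-}$ is defined via unfolding on recursive types (\autoref{def:active_crashed_roles}), the induction should really be on a suitable well-founded measure — the number of prefixes before reaching $\gtEnd$, type variables, or the first recursion binder, with the recursion case handled by appeal to contractivity so that one unfolding strictly decreases the measure. I would state this measure explicitly at the start to make the recursion case rigorous.

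First I would dispatch the base cases: $\gtCrashRole{\gtEnd}{\roleP} = \gtEnd$ and $\gtCrashRole{\gtRecVar}{\roleP} = \gtRecVar$, for which $\gtRoles{-} = \emptyset$, so both conclusions hold vacuously. Next, the transmission cases $\gtCommSmall{\roleQ}{\roleS}{i \in I}{\gtLab[i]}{\tyGround[i]}{\gtG[i]}$ (and its en-route and crashed-receiver variants). Here I would split on whether the removed role $\roleP$ equals the sender $\roleQ$, the receiver $\roleS$, or neither. If $\roleP = \roleQ$ in an ordinary transmission, the result is $\gtCommTransit{\rolePCrashed}{\roleS}{\dots}{(\gtCrashRole{\gtG[i]}{\roleP})}{j}$; by the remark after \autoref{def:active_crashed_roles}, a crashed subject in transit contributes nothing to $\gtRoles{-}$, so $\gtRoles{\gtCrashRole{\gtG}{\roleP}} = \setenum{\roleS} \cup \bigcup_i \gtRoles{\gtCrashRole{\gtG[i]}{\roleP}}$; apply the IH to each $\gtG[i]$ (using $\roleP \in \gtRoles{\gtG[i]}$ only where it holds — care is needed here, see below). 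If $\roleP = \roleS$, the result has $\roleQCrashed$, so again $\roleP$ disappears from the head, and $\gtRoles{-} = \setenum{\roleQ} \cup \bigcup_i \gtRoles{\gtCrashRole{\gtG[i]}{\roleP}}$. If $\roleP \notin \setenum{\roleQ,\roleS}$, the head is preserved and we recurse into continuations. The en-route and crashed-receiver cases follow the same pattern, with the subcase where both sender and receiver are the removed role (or already crashed) collapsing the prefix to $\gtCrashRole{\gtG[j]}{\roleP}$, which is handled directly by the IH on $\gtG[j]$.

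The main obstacle I anticipate is the mismatch between the hypothesis $\roleP \in \gtRoles{\gtG}$ and what is available when recursing: $\roleP$ need not occur in every continuation $\gtG[i]$, so the IH in the stated form ($\roleP \in \gtRoles{\gtG[i]}$) does not directly apply to those branches. The clean fix is to prove a slightly stronger statement by induction, dropping the hypothesis: for \emph{every} global type $\gtG$ and \emph{every} live role $\roleP$ (in the domain where $\gtCrashRole{\gtG}{\roleP}$ is defined), $\roleP \notin \gtRoles{\gtCrashRole{\gtG}{\roleP}}$ and $\gtRoles{\gtCrashRole{\gtG}{\roleP}} \subseteq \gtRoles{\gtG}$ — noting that when $\roleP \notin \gtRoles{\gtG}$, removal acts as (essentially) the identity on the set of roles and the statement is trivial. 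Then the lemma as stated is an immediate instance. The other delicate point is the recursion case $\gtCrashRole{\gtRec{\gtRecVar}{\gtG}}{\roleP}$: when the removal yields $\gtRec{\gtRecVar}{(\gtCrashRole{\gtG}{\roleP})}$, I would compute $\gtRoles{-}$ via one unfolding, commute the substitution $\subst{\gtRecVar}{\gtRec{\gtRecVar}{\gtG}}$ past the removal operator (which requires a small auxiliary observation that $\gtCrashRole{-}{\roleP}$ commutes with recursion-variable substitution, since $\roleP$ is a role, not a type variable), and then invoke the IH at the strictly smaller measure; when it yields $\gtEnd$ the conclusion is immediate.
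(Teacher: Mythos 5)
Your proposal is correct and follows essentially the same route as the paper's proof, which is a terse induction on the definition of role removal with the same sender/receiver/other case split. Your two refinements — generalising the statement so the inductive hypothesis applies to continuations where $\roleP$ may not occur, and handling the recursion case via a substitution-commutation observation — address details the paper's proof silently glosses over (it applies the IH to every $\gtG[i]$ without checking $\roleP \in \gtRoles{\gtG[i]}$), so your version is if anything more rigorous.
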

\begin{proof}
  By induction on \autoref{def:gtype:remove-role}. We detail interesting cases
  here:
\begin{enumerate}[leftmargin=*]
  \item
\[
  \textstyle
  \gtRoles{
    \gtCrashRole{
      (\gtCommSmall{\roleP}{\roleQ}{i \in I}{\gtLab[i]}{\tyGround[i]}{\gtG[i]})
    }{\roleP}
  }
  =
  \gtRoles{
    \gtCommTransit{\rolePCrashed}{\roleQ}{i \in I}{\gtLab[i]}{\tyGround[i]}{
      (\gtCrashRole{\gtG[i]}{\roleP})
    }{j}
  }
  =
  \setenum{\roleQ}
  \cup
  \bigcup\limits_{i \in I}{\gtRoles{\gtCrashRole{\gtG[i]}{\roleP}}}.
\]
\noindent
The required result follows by inductive hypothesis that $\roleP \notin
\gtRoles{\gtCrashRole{\gtG[i]}{\roleP}}$, and \linebreak
$\gtRoles{\gtCrashRole{\gtG[i]}{\roleP}} \subseteq \gtRoles{\gtG[i]}$.

  \item
\[
  \gtRoles{
    \gtCrashRole{
      (\gtCommTransit{\rolePCrashed}{\roleQ}{i \in
      I}{\gtLab[i]}{\tyGround[i]}{\gtG[i]}{j})
    }{\roleQ}
  }
  =
  \gtRoles{
    \gtCrashRole{\gtG[j]}{\roleQ}
  }
\]
The required result follows by inductive hypothesis that $\roleQ \notin
\gtRoles{\gtCrashRole{\gtG[j]}{\roleQ}}$, and \linebreak
$\gtRoles{\gtCrashRole{\gtG[j]}{\roleQ}} \subseteq \gtRoles{\gtG[j]}$.

\end{enumerate}
The rest of the cases are similar or straightforward.
\qedhere
\end{proof}

\begin{lem}
\label{lem:gtype:crashed-crash-remove-role}
  If \;$\roleP \in \gtRoles{\gtG}$, \;then\; %
  $\gtRolesCrashed{\gtCrashRole{\gtG}{\roleP}} \setminus \setenum{\roleP} \subseteq \gtRolesCrashed{\gtG}$.
\end{lem}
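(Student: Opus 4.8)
The statement to prove, \cref{lem:gtype:crashed-crash-remove-role}, is a companion to the immediately preceding \cref{lem:gtype:crash-remove-role}, so the plan is to mirror its proof structure: induction on the definition of role removal (\autoref{def:gtype:remove-role}), with a case for each clause, treating the recursion clause by unfolding (using the fact that $\gtRolesCrashed{\gtRec{\gtRecVar}{\gtG}} = \gtRolesCrashed{\gtG\subst{\gtRecVar}{\gtRec{\gtRecVar}{\gtG}}}$ from \autoref{def:active_crashed_roles}, together with contractivity so the induction is well-founded). First I would state the induction and dispatch the trivial base cases $\gtCrashRole{\gtEnd}{\roleP} = \gtEnd$ and $\gtCrashRole{\gtRecVar}{\roleP} = \gtRecVar$, where $\gtRolesCrashed{\cdot}$ is empty on both sides.

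For the transmission cases, I would work through the three sub-cases of the first clause of \autoref{def:gtype:remove-role}. When $\roleP = \roleR$ (sender removed), $\gtCrashRole{(\gtCommSmall{\roleR}{\roleQ}{i\in I}{\gtLab[i]}{\tyGround[i]}{\gtG[i]})}{\roleR} = \gtCommTransit{\roleRCrashed}{\roleQ}{i\in I}{\gtLab[i]}{\tyGround[i]}{(\gtCrashRole{\gtG[i]}{\roleR})}{j}$; by the remark after \autoref{def:active_crashed_roles}, a crashed sender in a transmission-en-route does \emph{not} contribute to $\gtRolesCrashed{\cdot}$, so the crashed-role set of the result is exactly $\bigcup_{i\in I}\gtRolesCrashed{\gtCrashRole{\gtG[i]}{\roleR}}$, and each term is contained in $\setenum{\roleP}\cup\gtRolesCrashed{\gtG[i]}$ by the induction hypothesis — note here $\roleR=\roleP$, so removing $\setenum{\roleP}$ gives containment in $\bigcup_i\gtRolesCrashed{\gtG[i]} = \gtRolesCrashed{\gtCommSmall{\roleR}{\roleQ}{\ldots}{\ldots}{\ldots}{\gtG[i]}}$. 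When $\roleQ = \roleR$ (receiver removed), the result is $\gtCommSmall{\roleP'}{\roleRCrashed}{i\in I}{\gtLab[i]}{\tyGround[i]}{(\gtCrashRole{\gtG[i]}{\roleR})}$ where $\roleP'\neq\roleR$; here $\roleR$ \emph{is} added to the crashed set, but $\roleR = \roleP$ is precisely the role we remove on the left side of the inclusion, so after deleting $\setenum{\roleP}$ we again reduce to the union of inductive hypotheses. The third sub-case ($\roleP\neq\roleR$, $\roleQ\neq\roleR$) is a pure congruence and follows directly from the inductive hypotheses on the continuations. The transmission-en-route clauses and the already-crashed variants are handled identically, using the remark to control when a crashed prefix role is (not) counted.

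The step I expect to be the main obstacle is the recursion clause $\gtCrashRole{(\gtRec{\gtRecVar}{\gtG})}{\roleR}$, which branches on whether $\fv{\gtRec{\gtRecVar}{\gtG}} \neq \emptyset$ or $\gtRoles{\gtCrashRole{\gtG}{\roleR}} \neq \emptyset$: in the ``otherwise'' branch the result collapses to $\gtEnd$, so the crashed set becomes empty and the inclusion is trivial; in the non-trivial branch one has $\gtRolesCrashed{\gtRec{\gtRecVar}{(\gtCrashRole{\gtG}{\roleR})}} = \gtRolesCrashed{(\gtCrashRole{\gtG}{\roleR})\subst{\gtRecVar}{\gtRec{\gtRecVar}{(\gtCrashRole{\gtG}{\roleR})}}}$, and one must argue this equals $\gtRolesCrashed{\gtCrashRole{\gtG\subst{\gtRecVar}{\gtRec{\gtRecVar}{\gtG}}}{\roleR}}$ — i.e.\ that removal commutes suitably with unfolding up to the crashed-role set — so that the induction hypothesis applied to (the unfolding of) $\gtG$ transfers. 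I would discharge this with a small auxiliary observation that substituting a recursion variable does not change which roles are active or crashed in a way that affects $\gtRolesCrashed{\gtCrashRole{\cdot}{\roleR}}$, or alternatively set the induction up on the structure of $\unfoldOne{\gtG}$ from the outset (as the appendix already does for related lemmas), after which the remaining cases close routinely. The hypothesis $\roleP\in\gtRoles{\gtG}$ is used exactly to ensure that removal actually produces a crash annotation somewhere (so the lemma is vacuous/degenerate otherwise), and it is preserved into the relevant continuation by \cref{lem:gtype:crash-remove-role} where needed.
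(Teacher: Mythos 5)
Your proposal is correct and follows essentially the same route as the paper: induction on the clauses of \autoref{def:gtype:remove-role}, observing that a crashed sender in a transmission en route does not contribute to $\gtRolesCrashed{\cdot}$ and that the one freshly annotated role in the receiver case is exactly $\roleP$, which the left-hand side subtracts. The paper only details the sender-removal and en-route-receiver cases and dismisses the rest (including the recursion/unfolding subtlety you flag) as "similar or straightforward", so your extra care there is fine but not a departure.
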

\begin{proof}
By induction on \autoref{def:gtype:remove-role}. We detail interesting cases here:
\begin{enumerate}[leftmargin=*]
 \item
\[
  \textstyle
  \gtRolesCrashed{
  \gtCrashRole{
      (\gtCommSmall{\roleP}{\roleQ}{i \in I}{\gtLab[i]}{\tyGround[i]}{\gtG[i]})
  }{\roleP}
  }
 =
 \gtRolesCrashed{
  \gtCommTransit{\rolePCrashed}{\roleQ}{i \in I}{\gtLab[i]}{\tyGround[i]}{
    (\gtCrashRole{\gtG[i]}{\roleP})
   }{j}
 }
  =
  \bigcup\limits_{i \in I}{\gtRolesCrashed{\gtCrashRole{\gtG[i]}{\roleP}}}.
\]
\noindent
The required result follows by inductive hypothesis that
$\gtRolesCrashed{\gtCrashRole{\gtG[i]}{\roleP}} \setminus \setenum{\roleP} \subseteq \gtRolesCrashed{\gtG[i]}$.
 \item
\[
  \gtRolesCrashed{
   \gtCrashRole{
         (\gtCommTransit{\rolePCrashed}{\roleQ}{i \in
      I}{\gtLab[i]}{\tyGround[i]}{\gtG[i]}{j})
    }{\roleQ}
 }
  =
  \gtRolesCrashed{
    \gtCrashRole{\gtG[j]}{\roleQ}
  }
\]
The required result follows by inductive hypothesis that
$\gtRolesCrashed{\gtCrashRole{\gtG[j]}{\roleQ}} \setminus \setenum{\roleQ} \subseteq \gtRolesCrashed{\gtG[j]}$.
\end{enumerate}
The rest of the cases are similar or straightforward.
\qedhere
\end{proof}

\begin{lem}
\label{lem:in-roles-not-end}
If \;$\gtG \neq \gtRec{\gtRecVar}{\gtGi}$ and\;
$\roleP \in  \gtRoles{\gtG}$, then
\;$\gtProj[\rolesR]{\gtG}{\roleP} \neq \stEnd$.
\end{lem}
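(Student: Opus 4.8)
The plan is to proceed by induction on the structure of $\gtG$, guided by the cases of the projection \autoref{def:global-proj} and the definition of $\gtRoles{\cdot}$ in \autoref{def:active_crashed_roles}. Since $\gtG \neq \gtRec{\gtRecVar}{\gtGi}$ by hypothesis, the top-level constructor of $\gtG$ is one of: a transmission $\gtCommSmall{\roleQ}{\roleRMaybeCrashed}{i \in I}{\gtLab[i]}{\tyGround[i]}{\gtG[i]}$, a transmission en route $\gtCommTransit{\roleQMaybeCrashed}{\roleR}{i \in I}{\gtLab[i]}{\tyGround[i]}{\gtG[i]}{j}$, a recursion variable $\gtRecVar$, or $\gtEnd$. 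The last two are immediately impossible: $\gtRoles{\gtRecVar} = \gtRoles{\gtEnd} = \emptyset$, so the premise $\roleP \in \gtRoles{\gtG}$ fails vacuously. This leaves the two communication constructors.

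For the transmission case $\gtG = \gtCommSmall{\roleQ}{\roleRMaybeCrashed}{i \in I}{\gtLab[i]}{\tyGround[i]}{\gtG[i]}$, I would split on how $\roleP$ relates to $\roleQ$ and $\roleR$, matching the three branches of the projection definition. If $\roleP = \roleQ$, the projection is an internal choice $\stIntSum{\roleR}{i \in \setcomp{j \in I}{\stLab[j] \neq \stCrashLab}}{\cdots}$; this index set is non-empty because the syntax of global types forbids $\setcomp{\gtLab[i]}{i \in I} = \setenum{\gtCrashLab}$, so the projection is a genuine internal choice, hence $\neq \stEnd$. If $\roleP = \roleR$ (and the side conditions for projectability hold, which we may assume since we are computing a defined projection), the projection is an external choice over the non-empty index set $I$, again $\neq \stEnd$. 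The remaining subcase is $\roleP \neq \roleQ$ and $\roleP \neq \roleR$: here the projection is the merge $\stMerge{i \in I}{\gtProj[\rolesR]{\gtG[i]}{\roleP}}$. The key observation is that $\roleP \in \gtRoles{\gtG} = \setenum{\roleQ,\roleR} \cup \bigcup_{i} \gtRoles{\gtG[i]}$ together with $\roleP \notin \setenum{\roleQ,\roleR}$ forces $\roleP \in \gtRoles{\gtG[k]}$ for some $k \in I$; I would then apply the induction hypothesis to $\gtG[k]$ — noting it may itself be a recursion, so I would first argue via its unfolding, or more cleanly phrase the induction so that the IH already covers $\gtProj{}{}$ being non-$\stEnd$ on a global type containing $\roleP$ — to get $\gtProj[\rolesR]{\gtG[k]}{\roleP} \neq \stEnd$, and finally check that the merge of local types, one of which is not $\stEnd$, cannot be $\stEnd$ (inspecting the merge clauses: $\stEnd$ only arises from merging $\stEnd$ with $\stEnd$). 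The transmission-en-route case is entirely analogous, using $\gtRoles{\gtCommTransit{\roleQMaybeCrashed}{\roleR}{\cdots}{j}} = \setenum{\roleR} \cup \bigcup_i \gtRoles{\gtG[i]}$ and the parallel three-way split in the projection definition.

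The main obstacle I anticipate is the interaction between the induction and recursive subterms: projection is not purely structural because $\gtProj[\rolesR]{\gtRec{\gtRecVar}{\gtG}}{\roleP}$ recurses under the $\mu$-binder and $\gtRoles{\cdot}$ is defined via unfolding, so a naive structural induction on $\gtG$ does not directly hand me the IH for a recursive continuation $\gtG[k] = \gtRec{\gtRecVar}{\gtGi[k]}$. The fix is to strengthen the statement slightly or to induct on a well-founded measure (e.g.\ on $\unfoldOne{\cdot}$, or on the number of prefixes before the first occurrence of a non-recursive constructor, using contractiveness) so that when $\gtG[k]$ is recursive I may replace it by $\unfoldOne{\gtG[k]}$, which has the same active-role set and the same projection; then $\unfoldOne{\gtG[k]} \neq \gtRec{\gtRecVar}{\cdots}$ and the IH applies. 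A secondary, minor point is verifying that all projections invoked are defined (so that the three-way case analysis is exhaustive and the merge operator is applied to well-formed arguments); this is fine because the lemma is used only for projectable global types, so every $\gtProj{}{}$ appearing is defined, and I would state that assumption explicitly at the start of the proof.
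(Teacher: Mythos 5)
Your proposal is correct and follows essentially the same route as the paper: a structural induction on $\gtG$ with a three-way case split ($\roleP$ as sender, as receiver, or as neither) matching the clauses of \autoref{def:global-proj}, using non-emptiness of the index sets for the choice cases and the inductive hypothesis together with the observation that $\stEnd$ only arises from merging $\stEnd$ with $\stEnd$ for the merge case. The subtlety you flag about applying the IH to a recursive continuation $\gtG[k]=\gtRec{\gtRecVar}{\gtGii[k]}$ is real (the paper silently elides it under ``by applying inductive hypothesis'' and ``other cases are similar''), and your fix via unfolding works; the even shorter route is to note that when $\roleP$ occurs in such a $\gtG[k]$ its projection is syntactically $\stRec{\stRecVar}{\cdots}\neq\stEnd$ by the recursion clause of \autoref{def:global-proj}, so no IH is needed there.
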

\begin{proof}
We know that $\gtG \neq \gtEnd$; otherwise, we may have $\gtRoles{\gtG} = \emptyset$, a contradiction
to $\roleP \in  \gtRoles{\gtG}$. By induction on the structure of $\gtG$:
\begin{itemize}[leftmargin=*]
\item Case $\gtG = \gtComm{\roleQ}{\roleR}{i \in I}{\gtLab[i]}{\tyGround[i]}{\gtG[i]}$:

We perform case analysis on $\roleP$:
\begin{itemize}[leftmargin=*]
\item $\roleP = \roleQ$: we have  $\gtProj[\rolesR]{\gtG}{\roleP} =
 \stIntSum{\roleR}{i \in \setcomp{j \in I}{\stFmt{\stLab[j]} \neq \stCrashLab}}{ %
        \stChoice{\stLab[i]}{\tyGround[i]} \stSeq (\gtProj[\rolesR]{\gtG[i]}{\roleP})%
      } \neq \stEnd$.
 \item $\roleP = \roleR$: we have $\gtProj[\rolesR]{\gtG}{\roleP} =
 \stExtSum{\roleQ}{i \in I}{%
        \stChoice{\stLab[i]}{\tyGround[i]} \stSeq (\gtProj[\rolesR]{\gtG[i]}{\roleP})%
      }
      \neq \stEnd$.
  \item $\roleP \neq \roleQ$ and $\roleP \neq \roleR$: we have
  $\gtProj[\rolesR]{\gtG}{\roleP} =
   \stMerge{i \in I}{\gtProj[\rolesR]{\gtG[i]}{\roleP}}$.
   Since $\roleP \in \gtRoles{\gtG}$, $\roleP \neq \roleQ$, and $\roleP \neq \roleR$,
   there exists $j \in I$ such that $\roleP \in \gtRoles{\gtG[j]}$. Then, by applying inductive hypothesis,
   $\gtProj[\rolesR]{\gtG[j]}{\roleP} \neq \gtEnd$, and therefore, we have
   $\gtProj[\rolesR]{\gtG}{\roleP} =
    \stMerge{i \in I}{\gtProj[\rolesR]{\gtG[i]}{\roleP}} = \linebreak
    \gtProj[\rolesR]{\gtG[j]}{\roleP} \,\stBinMerge\,
    \stMerge{i \in I \setminus \{j\} }{\gtProj[\rolesR]{\gtG[i]}{\roleP}} \neq \stEnd$.
  \end{itemize}
  \item Case $\gtG =  \gtCommTransit{\roleQ}{\roleR}{i \in
          I}{\gtLab[i]}{\tyGround[i]}{\gtG[i]}{j}$:

          We perform case analysis on $\roleP$:
\begin{itemize}[leftmargin=*]
  \item $\roleP \neq \roleQ$ and $\roleP \neq \roleR$: we have
  $\gtProj[\rolesR]{\gtG}{\roleP} =
   \stMerge{i \in I}{\gtProj[\rolesR]{\gtG[i]}{\roleP}}$.
   Since $\roleP \in \gtRoles{\gtG}$, $\roleP \neq \roleQ$, and $\roleP \neq \roleR$,
   there exists $k \in I$ such that $\roleP \in \gtRoles{\gtG[k]}$. Then, by applying inductive hypothesis,
   $\gtProj[\rolesR]{\gtG[k]}{\roleP} \neq \gtEnd$, and therefore, we have
   $\gtProj[\rolesR]{\gtG}{\roleP} =
    \stMerge{i \in I}{\gtProj[\rolesR]{\gtG[i]}{\roleP}} = \linebreak
    \gtProj[\rolesR]{\gtG[k]}{\roleP} \,\stBinMerge\,
    \stMerge{i \in I \setminus \{k\} }{\gtProj[\rolesR]{\gtG[i]}{\roleP}} \neq \stEnd$, as desired. Meanwhile, we obtain that
    $\forall l \in I: \gtProj[\rolesR]{\gtG[l]}{\roleP} \neq \stEnd$.

\item $\roleP = \roleQ$: we have  $\gtProj[\rolesR]{\gtG}{\roleP} = \gtProj[\rolesR]{\gtG[j]}{\roleP}$, which follows that
$\gtProj[\rolesR]{\gtG}{\roleP} \neq \stEnd$ by the fact that $\forall l \in I: \gtProj[\rolesR]{\gtG[l]}{\roleP} \neq \stEnd$.

\item $\roleP = \roleR$: we have $\gtProj[\rolesR]{\gtG}{\roleP} =
 \stExtSum{\roleQ}{i \in I}{%
        \stChoice{\stLab[i]}{\tyGround[i]} \stSeq (\gtProj[\rolesR]{\gtG[i]}{\roleP})%
      }
      \neq \stEnd$.
  \end{itemize}
  \end{itemize}
Other cases are similar.
  \qedhere
\end{proof}

\begin{lem}
\label{lem:not-in-roles-end}
If \;$\roleP \notin  \gtRoles{\gtG}$ and\;
$\roleP \notin \gtRolesCrashed{\gtG}$, then
\;$\gtProj[\rolesR]{\gtG}{\roleP} = \stEnd$.
\end{lem}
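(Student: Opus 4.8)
The plan is to prove this by structural induction on $\gtG$, mirroring the case analysis of the projection definition (\autoref{def:global-proj}). The hypothesis $\roleP \notin \gtRoles{\gtG}$ and $\roleP \notin \gtRolesCrashed{\gtG}$ will be maintained through each recursive call, so the main work is checking that in every syntactic form of $\gtG$, the branch of the projection definition that applies is the one returning $\stEnd$ (directly, or via merging of $\stEnd$s).

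First I would handle the base cases: $\gtG = \gtEnd$ gives $\gtProj[\rolesR]{\gtEnd}{\roleP} = \stEnd$ immediately, and $\gtG = \gtRecVar$ gives $\gtProj[\rolesR]{\gtRecVar}{\roleP} = \stRecVar$ — but note that if $\roleP \notin \gtRoles{\gtG}$ for a \emph{closed} global type then free type variables should not be reachable; I would either restrict attention to closed $\gtG$ (as is standard, and as the use-sites require) or observe that a bare $\stRecVar$ cannot arise under the well-formedness/guardedness conventions recalled in the appendix. Next, for a transmission $\gtG = \gtCommSmall{\roleQ}{\roleRMaybeCrashed}{i \in I}{\gtLab[i]}{\tyGround[i]}{\gtG[i]}$: since $\roleP \notin \gtRoles{\gtG}$, by \autoref{def:active_crashed_roles} we have $\roleP \neq \roleQ$ and $\roleP \neq \roleR$, so the third clause of projection applies, yielding $\stMerge{i \in I}{\gtProj[\rolesR]{\gtG[i]}{\roleP}}$. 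Moreover $\roleP \notin \gtRoles{\gtG[i]}$ and $\roleP \notin \gtRolesCrashed{\gtG[i]}$ for each $i \in I$ (again by \autoref{def:active_crashed_roles}, since those sets are contained in the corresponding sets for $\gtG$), so the inductive hypothesis gives $\gtProj[\rolesR]{\gtG[i]}{\roleP} = \stEnd$ for all $i$, and finally $\stMerge{i \in I}{\stEnd} = \stEnd$ by the defining equation $\stEnd \stBinMerge \stEnd = \stEnd$ of the merge operator. The transmission-en-route cases ($\gtFmt{\roleQ \rightsquigarrow \roleR}$, with and without crash annotation on $\roleQ$) are entirely analogous: $\roleP \neq \roleR$ and (in the live-$\roleQ$ sub-case) $\roleP \neq \roleQ$ force the merge clause, and one needs $\roleP \notin \gtRoles{\gtG[i]}$ and $\roleP \notin \gtRolesCrashed{\gtG[i]}$ for every $i$, which again follows from \autoref{def:active_crashed_roles}.

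For the recursion case $\gtG = \gtRec{\gtRecVar}{\gtGi}$, I would unfold: $\gtRoles{\gtRec{\gtRecVar}{\gtGi}} = \gtRoles{\gtGi\subst{\gtRecVar}{\gtRec{\gtRecVar}{\gtGi}}}$ and similarly for crashed roles, so $\roleP$ is absent from the unfolding as well. The projection clause for $\gtRec{\gtRecVar}{\gtGi}$ returns $\stEnd$ precisely when $\roleP \notin \gtGi$ (i.e.\ $\roleP \notin \gtRoles{\gtGi}$, modulo the unfolding reading used throughout) and $\fv{\gtRec{\gtRecVar}{\gtGi}} = \emptyset$; the first conjunct holds by hypothesis. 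The handling of the free-variable side condition is the one slightly delicate point — one has to either restrict to closed types or argue that an open $\gtRec{\gtRecVar}{\gtGi}$ with $\roleP$ absent still projects to $\stEnd$ because the only surviving variable is $\gtRecVar$, which gets abstracted. I would state the lemma for closed $\gtG$ (its only uses are on well-formed global types and their reducts, which are closed), and then the $\stEnd$-branch applies cleanly.

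The main obstacle I anticipate is not any single case but the bookkeeping of how $\gtRoles{\cdot}$ and $\gtRolesCrashed{\cdot}$ interact with recursion unfolding: \autoref{def:active_crashed_roles} defines these via unfolding, so one must be careful that "$\roleP \notin \gtRoles{\gtG}$" is genuinely inherited by the body $\gtGi$ of a $\mu$ (it is, since the sets are defined by unfolding and unfolding $\gtGi$ once does not introduce new roles beyond those in $\gtG$), and that the induction is well-founded — this is standard and handled by the guardedness assumption recalled at the start of the appendix, which ensures finitely many distinct unfoldings up to the relevant equivalence. Everything else is a direct match against the projection clauses plus the idempotence $\stEnd \stBinMerge \stEnd = \stEnd$.
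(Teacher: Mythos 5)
Your proof is correct and follows essentially the same route as the paper's: structural induction on $\gtG$, propagating $\roleP \notin \gtRoles{\gtG[i]}$ and $\roleP \notin \gtRolesCrashed{\gtG[i]}$ into the continuations via \autoref{def:active_crashed_roles}, and collapsing the merge using $\stEnd \stBinMerge \stEnd = \stEnd$. The only divergence is the recursion case: instead of restricting to closed types, the paper keeps the subcase $\fv{\gtRec{\gtRecVar}{\gtGi}} \neq \emptyset$ and discharges it by computing $\gtProj[\rolesR]{\gtG}{\roleP} = \stRec{\stRecVar}{(\gtProj[\rolesR]{\gtGi}{\roleP})} = \stRec{\stRecVar}{\stEnd} = \stEnd$ via the inductive hypothesis, leaving the $\gtRecVar$ and remaining cases as ``similar or trivial'' --- your closedness restriction handles that corner (and the bare-$\gtRecVar$ base case) at least as cleanly.
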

\begin{proof}
By induction on the structure of $\gtG$:
\begin{itemize}[leftmargin=*]
     \item Case $\gtG = \gtComm{\roleQ}{\roleR}{i \in I}{\gtLab[i]}{\tyGround[i]}{\gtG[i]}$:
     since $\roleP \notin \gtRoles{\gtG}$ and $\roleP \notin \gtRolesCrashed{\gtG}$, we have $\roleP \neq \roleQ$,
     $\roleP \neq \roleR$, and for all $i \in I$, $\roleP \notin \gtRoles{\gtG[i]}$ and
     $\roleP \notin \gtRolesCrashed{\gtG[i]}$ by~\autoref{def:active_crashed_roles}.
     Thus, $\gtProj[\rolesR]{\gtG}{\roleP} = \stMerge{i \in I}{\gtProj[\rolesR]{\gtG[i]}{\roleP}} = \stEnd$ by applying inductive hypothesis
     and $ \stEnd \,\stBinMerge\, \stEnd%
    \,=\,%
    \stEnd$.
         \item Case $\gtG = \gtRec{\gtRecVar}{\gtGi}$: since $\roleP \notin \gtRoles{\gtG}$ and $\roleP \notin \gtRolesCrashed{\gtG}$, we have
      $\roleP \notin \gtRoles{\gtGi}$ and $\roleP \notin \gtRolesCrashed{\gtGi}$ by~\autoref{def:active_crashed_roles}.
      We have two further subcases to consider:
       \begin{itemize}[leftmargin=*]
       \item If $\fv{\gtRec{\gtRecVar}{\gtGi}} \neq \emptyset$, we have $\gtProj[\rolesR]{\gtG}{\roleP} =
       \stRec{\stRecVar}{(\gtProj[\rolesR]{\gtGi}{\roleP})} =
       \stRec{\stRecVar}{\stEnd} = \stEnd$ by applying inductive hypothesis.
       \item Otherwise, we have $\gtProj[\rolesR]{\gtG}{\roleP} = \stEnd$ immediately.
       \end{itemize}
\end{itemize}
Other cases are similar or trivial.
\qedhere
\end{proof}

\subsection{Subtyping}
\label{sec:app:subtyping}

\begin{lem}[Subtyping is Reflexive]
\label{lem:reflexive-subtyping}
For any closed, well-guarded local type $\stT$, $\stT \stSub \stT$ holds.
\end{lem}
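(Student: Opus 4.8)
The plan is to prove reflexivity of $\stSub$ by coinduction, exhibiting a concrete relation that is closed under the subtyping rules and contains every pair $(\stT, \stT)$ with $\stT$ closed and well-guarded. Concretely, I would define $\relR = \setcomp{(\stT, \stT)}{\stT \text{ is closed and well-guarded}}$ and show that $\relR$ is a post-fixed point of the monotone operator whose greatest fixed point is $\stSub$ — i.e.\ that whenever $(\stT, \stT) \in \relR$, one of the coinductive rules of \autoref{def:subtyping} applies with all premises supplied by pairs again in $\relR$ (or, where a premise is itself of the form $\stT' \stSub \stT''$, by pairs in $\relR \cup \stSub$, which suffices by the coinductive principle). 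Since $\stSub$ is the largest such relation, $\relR \subseteq \stSub$, giving $\stT \stSub \stT$ for all such $\stT$.

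The verification proceeds by case analysis on the top-level shape of the unfolding of $\stT$ (well-guardedness guarantees the unfolding exists and exposes a non-recursive prefix, as noted in the appendix preamble). If $\stT = \stEnd$, rule \inferrule{\iruleStSubEnd} closes the case immediately; similarly $\stStop$ is handled by \inferrule{\iruleStSubStop}. If $\stT = \stExtSum{\roleP}{i \in I}{\stChoice{\stLab[i]}{\tyGround[i]} \stSeq \stT[i]}$, I apply \inferrule{\iruleStSubIn} with $J = \emptyset$ (so $I \cup J = I$ on both sides): the branchwise premises $\stT[i] \stSub \stT[i]$ come from $(\stT[i], \stT[i]) \in \relR$ (each $\stT[i]$ is again closed and well-guarded), the side condition $\setcomp{\stLab[k]}{k \in I} \neq \setenum{\stCrashLab}$ holds because it is part of the well-formedness of local types (singleton $\stCrashLab$ external choices are disallowed), and $\nexists j \in J : \stLab[j] = \stCrashLab$ holds vacuously since $J = \emptyset$. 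If $\stT = \stIntSum{\roleP}{i \in I}{\stChoice{\stLab[i]}{\tyGround[i]} \stSeq \stT[i]}$, rule \inferrule{\iruleStSubOut} applies symmetrically with $J = \emptyset$. Finally, if $\stT = \stRec{\stRecVar}{\stTi}$, I use \inferrule{\iruleStSubRecL} (or \inferrule{\iruleStSubRecR}) to reduce to showing $\stTi\subst{\stRecVar}{\stRec{\stRecVar}{\stTi}} \stSub \stRec{\stRecVar}{\stTi}$, then peel the right-hand recursion with \inferrule{\iruleStSubRecR}, arriving at the pair $(\unfoldOne{\stT}, \unfoldOne{\stT})$, which again lies in $\relR$ since unfolding preserves closedness and well-guardedness.

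The one subtlety worth flagging is the interaction of the two recursion rules \inferrule{\iruleStSubRecL}/\inferrule{\iruleStSubRecR} with guardedness: applying them naively risks an infinite regress of unfoldings without ever reaching a prefix rule, which would not constitute a valid (well-founded within each ``step'') construction. The standard remedy, which I would adopt, is to work with a relation already closed under unfolding — e.g.\ take $\relR = \setcomp{(\stT[1], \stT[2])}{\unfoldOne{\stT[1]} = \unfoldOne{\stT[2]}, \text{ both closed and well-guarded}}$, or equivalently phrase the coinductive argument so that from $(\stT, \stT)$ one first unfolds both sides to expose a common prefix and then applies exactly one non-recursion rule. With that formulation the case analysis above goes through with no essential difficulty, since contractivity ensures finitely many unfolding steps suffice to expose the prefix. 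The bulk of the work is this bookkeeping around recursion; the prefix cases are entirely routine given the side conditions built into the syntax of local types.
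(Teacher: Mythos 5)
Your proposal is correct and, in substance, matches the paper's own proof, which consists of the single line ``by induction on the structure of local type $\stT$''; your case analysis on the shape of (the unfolding of) $\stT$, with $J = \emptyset$ in the choice rules and the singleton-$\stCrashLab$ side condition discharged by well-formedness, is exactly what that one-liner elides. Your explicitly coinductive framing --- exhibiting the identity relation (closed under unfolding) as a post-fixed point --- is in fact the technically proper reading, since $\stSub$ is coinductively defined and a naive structural induction would not terminate through the recursion rules; this is the subtlety you correctly flag and resolve.
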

\begin{proof}
We construct a relation $R = \setenum{(\stT, \stT)}$. 
It is straightforward 
that $R$ satisfies all clauses of~\autoref{def:subtyping}. 
 Hence, since $\stSub$ is the largest relation satisfying such rules, $R \subseteq \stSub$. 
\qedhere 
\end{proof}

\begin{lem}[Subtyping is Transitive]
\label{transitive-subtyping}
  For any closed, well-guarded local type $\stS$, $\stT$, $\stU$,
  if $\stS \stSub \stT$ and $\stT \stSub \stU$ hold, then $\stS \stSub \stU$
  holds.
\end{lem}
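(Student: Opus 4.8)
The plan is to use the standard coinductive proof technique for transitivity of a coinductively-defined relation. First I would introduce the candidate relation
\[
  \mathcal{R} \;=\; \setcomp{(\stS,\stU)}{\exists \stT \text{ closed and well-guarded}:\ \stS \stSub \stT \ \text{and}\ \stT \stSub \stU},
\]
and prove that $\mathcal{R}$ is backward-closed under the rules of \autoref{def:subtyping}: for every pair $\stS \relR \stU$ there must be an instance of some subtyping rule whose conclusion is $\stS \stSub \stU$, with each $\stSub$-premise weakened to $\mathcal{R}$ and all side conditions satisfied. Since $\stSub$ is the greatest fixed point of those rules, this yields $\mathcal{R} \subseteq {\stSub}$, which is exactly the claim for the given $\stS$, $\stT$, $\stU$.

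To control the recursion rules \inferrule{\iruleStSubRecL} and \inferrule{\iruleStSubRecR}, I would first record an \emph{invariance-under-unfolding} lemma — straightforward from those two rules together with well-guardedness — stating $\stS \stSub \stTi \iff \unfold{\stS} \stSub \unfold{\stTi}$. This lets me always pick the witness $\stT$ in the definition of $\mathcal{R}$ so that its head is not a $\mu$. Then, fixing $\stS \relR \stU$ with witnessing derivations $\someDeriv_1 :: \stS \stSub \stT$ and $\someDeriv_2 :: \stT \stSub \stU$, I do a case analysis on the last rules of $\someDeriv_1$ and $\someDeriv_2$. If $\stS = \stRec{\stRecVar}{\stSi}$, then $\someDeriv_1$ ends with \inferrule{\iruleStSubRecL}, so $\stSi\subst{\stRecVar}{\stS} \stSub \stT$, hence $\stSi\subst{\stRecVar}{\stS} \relR \stU$, and I re-apply \inferrule{\iruleStSubRecL}; dually, if $\stU$ is recursive, $\someDeriv_2$ ends with \inferrule{\iruleStSubRecR} and I re-apply it. Otherwise all of $\stS$, $\stT$, $\stU$ have non-$\mu$ heads (for $\stT$ by the choice of witness), so $\someDeriv_1$ and $\someDeriv_2$ each end with \inferrule{\iruleStSubEnd}, \inferrule{\iruleStSubStop}, \inferrule{\iruleStSubOut}, or \inferrule{\iruleStSubIn}, and the heads of $\stS$, $\stT$, $\stU$ must coincide; the $\stEnd$/$\stEnd$ and $\stStop$/$\stStop$ cases are immediate, while the \inferrule{\iruleStSubOut}/\inferrule{\iruleStSubOut} and \inferrule{\iruleStSubIn}/\inferrule{\iruleStSubIn} cases follow by composing nested index sets and feeding the common-label continuations back into $\mathcal{R}$.

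I expect the main obstacle to be two pieces of bookkeeping. The first is the recursion handling: making the reduction to $\mu$-free-headed witnesses rigorous (this is where invariance under unfolding and well-guardedness are genuinely needed, since otherwise the ``peel a $\mu$ off the middle type $\stT$'' step has no obvious well-founded measure). The second, and more delicate, is verifying that the two refinements built into \inferrule{\iruleStSubIn} compose: from $\someDeriv_1$ I obtain that every extra label of $\stS$ over $\stT$ is not $\stCrashLab$ (so $\stCrashLab \in \stS$ implies $\stCrashLab \in \stT$) and that $\stT$ is not a pure $\stCrashLab$ branch, and from $\someDeriv_2$ the analogous facts for $\stT$ over $\stU$; I must then chase the index-set identities — the label set of $\stT$ being simultaneously the shared set of $\someDeriv_1$ and the ``$\stU$-plus-extras'' set of $\someDeriv_2$ — to conclude that $\stCrashLab \in \stS$ implies $\stCrashLab \in \stU$ and that $\stU$ is not pure $\stCrashLab$, which are precisely the side conditions required to re-apply \inferrule{\iruleStSubIn}. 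Everything else is routine structural case analysis; reflexivity (\autoref{lem:reflexive-subtyping}) is not needed for this argument.
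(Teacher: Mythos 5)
Your proposal is correct, and it is a genuinely different (and more rigorous) route than the one the paper takes: the paper's entire proof of \Cref{transitive-subtyping} is the single sentence ``by induction on the structure of $\stS$'', whereas you prove that the composed relation $\mathcal{R}$ is a post-fixed point of the rule functional and invoke the coinduction principle for the greatest fixed point. For a relation defined coinductively with the unfolding rules \inferrule{\iruleStSubRecL}/\inferrule{\iruleStSubRecR}, your approach is the standard safe one: a naive structural induction on $\stS$ does not obviously go through, because the premise of \inferrule{\iruleStSubRecL} mentions $\stSi\subst{\stRecVar}{\stRec{\stRecVar}{\stSi}}$, which is not structurally smaller than $\stRec{\stRecVar}{\stSi}$, and because the continuations under a prefix are related only up to the coinductive closure. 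Your two flagged points of care are exactly the right ones and both resolve as you expect: well-guardedness bounds the number of inversions needed to replace the middle witness $\stT$ by a non-$\mu$-headed unfolding (and that normalisation uses only inversion and re-application of the two recursion rules, so there is no hidden appeal to transitivity); and in the \inferrule{\iruleStSubIn}/\inferrule{\iruleStSubIn} case the identity $I_1 = I_2 \cup J_2$ between the label sets of the two derivations lets the extra-label set of $\stS$ over $\stU$ be written as $J_1 \cup J_2$, neither part of which may contain $\stCrashLab$, while the ``not a pure $\stCrashLab$ branch'' condition for $\stU$ comes directly from the second derivation. What the paper's phrasing buys is brevity; what yours buys is an argument that actually discharges the recursion cases and makes explicit why the crash-related side conditions of \inferrule{\iruleStSubIn} compose, which is the only place where this lemma differs from textbook session subtyping.
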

\begin{proof}
We construct a relation 
$R = \setcomp{(\stS, \stU)}{\exists \stT \text{ such that } \stS \stSub \stT \text{ and } \stT \stSub \stU}$. 
By showing that $R$ satisfies all clauses of~\autoref{def:subtyping}, it follows that $R \subseteq \stSub$. 
\qedhere
\end{proof}

\propSubtyping*
\begin{proof}
Immediate consequence of \autoref{lem:reflexive-subtyping} and~\autoref{transitive-subtyping}. 
\qedhere 
\end{proof}

\begin{lem}
\label{lem:unfold-subtyping}
  For any closed, well-guarded local type $\stT$,
  \begin{enumerate*}
    \item $\unfoldOne{\stT} \stSub \stT$; and
    \item $\stT \stSub \unfoldOne{\stT}$.
  \end{enumerate*}
\end{lem}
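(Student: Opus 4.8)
\textbf{Proof plan for \autoref{lem:unfold-subtyping}.}

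The plan is to prove both items by a case analysis on whether $\stT$ is of the form $\stRec{\stRecVar}{\stTi}$ or not, using the recursive subtyping rules $\inferrule{\iruleStSubRecL}$ and $\inferrule{\iruleStSubRecR}$ together with the reflexivity of subtyping (\autoref{lem:reflexive-subtyping}). If $\stT$ is not a recursion, then $\unfoldOne{\stT} = \stT$ by definition, and both claims follow immediately from \autoref{lem:reflexive-subtyping}. So the interesting case is $\stT = \stRec{\stRecVar}{\stTi}$, where $\unfoldOne{\stT} = \unfoldOne{\stTi\subst{\stRecVar}{\stRec{\stRecVar}{\stTi}}}$.

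For item (2), $\stT \stSub \unfoldOne{\stT}$: first I would observe that it suffices to show $\stT \stSub \stTi\subst{\stRecVar}{\stT}$, since one further appeal to the same reasoning (iterating the unfolding finitely many times, which terminates because $\stT$ is well-guarded/contractive) reduces $\stTi\subst{\stRecVar}{\stT}$ to its full unfolding $\unfoldOne{\stT}$, and transitivity of subtyping (\autoref{transitive-subtyping}) chains these together. To get $\stRec{\stRecVar}{\stTi} \stSub \stTi\subst{\stRecVar}{\stRec{\stRecVar}{\stTi}}$, apply $\inferrule{\iruleStSubRecL}$, which reduces the goal to $\stTi\subst{\stRecVar}{\stRec{\stRecVar}{\stTi}} \stSub \stTi\subst{\stRecVar}{\stRec{\stRecVar}{\stTi}}$, i.e.\@ reflexivity on the (closed, well-guarded) local type $\stTi\subst{\stRecVar}{\stRec{\stRecVar}{\stTi}}$, which holds by \autoref{lem:reflexive-subtyping}. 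Item (1), $\unfoldOne{\stT} \stSub \stT$, is entirely symmetric, using $\inferrule{\iruleStSubRecR}$ instead of $\inferrule{\iruleStSubRecL}$.

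Since the subtyping relation is defined coinductively, the cleanest way to handle the ``iterate until fully unfolded'' step rigorously is to exhibit a concrete candidate relation and check it is a subtyping: for item (2), take $\relR = \setcomp{(\stT, \unfoldN{n}{\stT})}{\stT \text{ closed well-guarded}, n \in \setNat} \cup {\stSub}$ (with $\unfoldN{n}{\cdot}$ denoting $n$-fold single unfolding) and verify it satisfies the rules in \autoref{def:subtyping}, using contractivity to ensure that after finitely many steps the head constructor of $\unfoldN{n}{\stT}$ matches that of $\unfoldOne{\stT}$. Alternatively, and more simply, one notes that $\unfoldOne{\stT}$ differs from $\stT$ only by repeatedly replacing a leading $\mu$-binder, so a straightforward induction on the (finite, by guardedness) number of leading recursions suffices, with \autoref{lem:reflexive-subtyping} as the base case and \autoref{transitive-subtyping} stitching the steps.

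The main obstacle is the termination/well-definedness of $\unfoldOne{\cdot}$: one must invoke the contractivity (guardedness) assumption on $\stT$ to know that $\unfoldOne{\stT}$ is reached after finitely many single unfoldings and has a genuine progressive prefix, so that the inductive argument bottoms out and reflexivity applies at a type with a concrete top-level constructor. Everything else is a routine application of $\inferrule{\iruleStSubRecL}$/$\inferrule{\iruleStSubRecR}$ and the already-established reflexivity and transitivity of $\stSub$ (\autoref{lem:reflexive-subtyping} and \autoref{transitive-subtyping}).
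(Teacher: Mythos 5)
Your proposal is correct and takes essentially the same route as the paper: a case split on whether $\stT$ is headed by a recursion, discharging the non-recursive case by reflexivity (\autoref{lem:reflexive-subtyping}) and the recursive case by $\inferrule{\iruleStSubRecL}$/$\inferrule{\iruleStSubRecR}$. The paper's one-line proof leaves the iteration of the unfolding implicit, whereas you spell out the finitely-many-steps argument via guardedness; that is a harmless (and arguably welcome) elaboration rather than a different approach.
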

 \begin{proof}
 \begin{enumerate*}
     \item If $\stT = \stRec{\stRecVar}{\stTi}$,  $\unfoldOne{\stT} \stSub \stT$ holds by $\inferrule{\iruleStSubRecR}$. Otherwise,
      by \autoref{lem:reflexive-subtyping}.
    \item If $\stT = \stRec{\stRecVar}{\stTi}$, $\stT \stSub \unfoldOne{\stT}$ holds by $\inferrule{\iruleStSubRecL}$.  Otherwise,
       by \autoref{lem:reflexive-subtyping}. 
       \qedhere 
  \end{enumerate*}
 \end{proof}

\mergeSubtyping*
\begin{proof}
 By constructing a relation 
$R = \setcomp{(\stMerge{i \in I}{\stT[i]}, \stT[j])}{j \in I}$, and showing 
that $R$ satisfies all clauses of~\autoref{def:subtyping}. 
\qedhere 
\end{proof}

\mergeUpperSubtyping*
\begin{proof}
 By constructing a relation $R = \setenum{(\stS, \stMerge{i \in I}{\stT[i]})}$, and showing 
that $R$ satisfies all clauses of~\autoref{def:subtyping}. %
\qedhere 
\end{proof}

\submergeSubtyping*
\begin{proof}
 By constructing a relation $R = \setenum{(\stMerge{i \in I}{\stS[i]}, \stMerge{i \in I}{\stT[i]})}$, and showing 
that $R$ satisfies all clauses of~\autoref{def:subtyping}. %
\qedhere
\end{proof}

\begin{lem}
\label{lem:proj-non-crashing-role-preserve}
  If \; $\roleP, \roleQ \in \gtRoles{\gtG}$ \; with \; $\roleP \neq \roleQ$ \;
  and \; $\roleQ \notin \rolesR$, \;
  then \;
  $
    \gtProj[\rolesR]{\gtG}{\roleP}
    \stSub
    \gtProj[\rolesR]{(\gtCrashRole{\gtG}{\roleQ})}{\roleP}
  $.
\end{lem}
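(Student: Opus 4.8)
The statement is a structural fact about how projection interacts with the role-removal operator $\gtCrashRole{\cdot}{\roleQ}$: removing a crashed (but non-reliable) role $\roleQ$ from a global type $\gtG$ produces a global type whose projection onto another live role $\roleP$ is a supertype of the original projection. The plan is to proceed by induction on the structure of $\gtG$ (using the unfolding convention for $\gtRec{\gtRecVar}{\gtGi}$, so that the recursion case reduces to a transmission prefix), with a case analysis mirroring the case splits in \autoref{def:gtype:remove-role} and \autoref{def:global-proj}. In each case, I would compute both $\gtProj[\rolesR]{\gtG}{\roleP}$ and $\gtProj[\rolesR]{(\gtCrashRole{\gtG}{\roleQ})}{\roleP}$ explicitly, and exhibit the subtyping using reflexivity (\autoref{lem:reflexive-subtyping}), the merge lemmas (\autoref{lem:merge-subtyping}, \autoref{lem:merge-lower-bound}, \autoref{lem:merge-upper-bound}, \autoref{lem:subtype:merge-subty}), and the inductive hypothesis, together with the crash-related bookkeeping lemmas \autoref{lem:gtype:crash-remove-role} and \autoref{lem:gtype:crashed-crash-remove-role} to track which roles remain live/crashed in the reduct.

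\textbf{Key cases.} Take $\gtG = \gtCommSmall{\roleS}{\roleTMaybeCrashed}{i\in I}{\gtLab[i]}{\tyGround[i]}{\gtG[i]}$ (the transmission-en-route case is analogous with one fewer continuation to track). The relevant subcases are: (i) $\roleQ = \roleS$ (removed role is the sender): then $\gtCrashRole{\gtG}{\roleQ} = \gtCommTransit{\roleSCrashed}{\roleT}{i\in I}{\gtLab[i]}{\tyGround[i]}{(\gtCrashRole{\gtG[i]}{\roleQ})}{j}$ with $\gtLab[j] = \gtCrashLab$ — here if $\roleP = \roleT$ the LHS is an external sum over all of $I$ and the RHS is the \emph{same} external sum (just with crash-removed continuations), so we use \inferrule{\iruleStSubIn} plus IH on each branch; if $\roleP \ne \roleT$, then $\roleP = \roleS$ is impossible (since $\roleP \ne \roleQ = \roleS$), so $\roleP$ is a third party and both sides are merges, handled by \autoref{lem:subtype:merge-subty} and IH (noting the RHS merges only over $I$ as well). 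The subtle subcase is when $\roleP = \roleT$ and $\roleS = \roleQ$: the original projection $\gtProj[\rolesR]{\gtG}{\roleT}$ includes a $\gtCrashLab$ branch (required because $\roleQ = \roleS \notin \rolesR$), and the reduct's projection, going through the transmission-en-route case at $\roleP = \roleT = \roleR$, gives back the \emph{same} external sum over $I$ including that $\gtCrashLab$ branch, so this is a reflexive-style step; (ii) $\roleQ = \roleT$ (removed role is the receiver): $\gtCrashRole{\gtG}{\roleQ} = \gtCommSmall{\roleS}{\roleTCrashed}{i\in I}{\gtLab[i]}{\tyGround[i]}{(\gtCrashRole{\gtG[i]}{\roleQ})}$, so projecting onto $\roleP$: if $\roleP = \roleS$ both sides are internal sums over $\setcomp{j\in I}{\stLab[j]\ne\stCrashLab}$ and we use \inferrule{\iruleStSubOut} with IH; if $\roleP\ne\roleS$ both sides are merges, IH plus \autoref{lem:subtype:merge-subty}; (iii) $\roleQ \notin \setenum{\roleS,\roleT}$: the prefix is untouched, projection is structurally identical, recurse into continuations. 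The $\gtEnd$ and $\gtRecVar$ leaf cases are trivial by \autoref{lem:reflexive-subtyping}, and the $\gtRec{\gtRecVar}{\gtGi}$ case follows by the unfolding convention (noting $\gtCrashRole{\cdot}{\roleQ}$ commutes with unfolding up to the side-condition on $\fv{}$ and $\gtRoles{}$, which \autoref{lem:gtype:crash-remove-role} keeps non-empty since $\roleP$ stays live).

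\textbf{Main obstacle.} The trickiest part is the sender-removal subcase with $\roleP = \roleT$: one must verify that the $\gtCrashLab$ branch that reappears in $\gtProj[\rolesR]{(\gtCrashRole{\gtG}{\roleQ})}{\roleT}$ is exactly matched (same label, subtype continuations) by the $\gtCrashLab$ branch already present in $\gtProj[\rolesR]{\gtG}{\roleT}$, and that no spurious branch mismatch violates the extra side-conditions of \inferrule{\iruleStSubIn} (the supertype not being a pure crash branch, and the crash-branch-containment condition). This requires carefully invoking the well-definedness assumptions on $\gtG$ (that $\gtCrashLab$ is never the unique label, that the projection is defined, hence the merges involved are all defined) and checking that $\roleP \neq \roleQ$ genuinely rules out the one problematic alignment. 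Everything else is a routine, if lengthy, diagram chase through the definitions.
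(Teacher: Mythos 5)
Your case analysis matches the paper's proof essentially case for case: sender-removal yielding a pseudo-transmission-en-route whose projection onto the receiver is the same external sum (closed by \inferrule{\iruleStSubIn} with an empty extra index set, the side conditions holding because singleton $\gtCrashLab$ branches are excluded and the crash branch is forced to exist by projectability since $\roleQ \notin \rolesR$), receiver-removal closed by \inferrule{\iruleStSubOut}, and third-party projections closed by \autoref{lem:subtype:merge-subty}. The one point to repair is the proof's skeleton: you frame it as structural induction with an unfolding convention for $\gtRec{\gtRecVar}{\gtGi}$, but unfolding does not decrease any structural measure, so that induction is not well-founded on recursive types. The paper instead defines the relation $R = \setcomp{(\gtProj[\rolesR]{\gtG}{\roleP},\, \gtProj[\rolesR]{(\gtCrashRole{\gtG}{\roleQ})}{\roleP})}{\roleP \neq \roleQ,\ \roleQ \notin \rolesR}$ and shows $R \subseteq\ \stSub$ by checking that each pair is backward-closed under the (coinductively defined) subtyping rules, invoking the \emph{coinductive} hypothesis in each case; your case analysis slots directly into that framework and becomes sound once restated that way.
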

\begin{proof}
  We construct a relation 
  $
  R = \setcomp{
    (\gtProj[\rolesR]{\gtG}{\roleP}
    ,
    \gtProj[\rolesR]{(\gtCrashRole{\gtG}{\roleQ})}{\roleP})
  }{
    \roleP, \roleQ \in \roleSet, \roleP \neq \roleQ, \roleQ \in \rolesR
  }$, and show that $R \subseteq \;\stSub$. 
  Consider all possible shapes of $\gtG$: 
  \begin{itemize}[leftmargin=*]
      \item Case $\gtG = \gtComm{\roleP}{\roleQ}{i \in I}{\gtLab[i]}{\tyGround[i]}{\gtG[i]}$:

      We perform case analysis on the role being projected upon:
      \begin{itemize}[leftmargin=*]
        \item
          On LHS,
          we have $\gtProj[\rolesR]{\gtG}{\roleP} =
          \stIntSum{\roleQ}{i \in \setcomp{j \in I}{\stFmt{\stLab[j]} \neq \stCrashLab}}{ %
            \stChoice{\stLab[i]}{\tyGround[i]} \stSeq (\gtProj[\rolesR]{\gtG[i]}{\roleP})%
          }%
          $.

          On RHS, we perform case analysis on the role being removed:

          \begin{enumerate}[leftmargin=*]
            \item we have
          $\gtCrashRole{\gtG}{\roleQ} =
          \gtComm{\roleP}{\roleQCrashed}{i \in I}{\gtLab[i]}{\tyGround[i]}{
            (\gtCrashRole{\gtG[i]}{\roleQ})
          }
          $, and thus 
          $\gtProj[\rolesR]{(\gtCrashRole{\gtG}{\roleQ})}{\roleP} = $ \linebreak
          $\stIntSum{\roleQ}{i \in \setcomp{j \in I}{\stFmt{\stLab[j]} \neq \stCrashLab}}{ %
            \stChoice{\stLab[i]}{\tyGround[i]} \stSeq
            (\gtProj[\rolesR]{(\gtCrashRole{\gtG[i]}{\roleQ})}{\roleP})%
          }%
          $,
          apply \inferrule{\iruleStSubOut} and coinductive
          hypothesis.

            \item ($\roleR \neq \roleQ$) we have
          $\gtCrashRole{\gtG}{\roleR} =
          \gtComm{\roleP}{\roleQ}{i \in I}{\gtLab[i]}{\tyGround[i]}{
            (\gtCrashRole{\gtG[i]}{\roleR})
          }
          $, and thus 
          $
          \gtProj[\rolesR]{(\gtCrashRole{\gtG}{\roleR})}{\roleP} = $ \linebreak
          $\stIntSum{\roleQ}{i \in \setcomp{j \in I}{\stFmt{\stLab[j]} \neq \stCrashLab}}{ %
            \stChoice{\stLab[i]}{\tyGround[i]} \stSeq
            (\gtProj[\rolesR]{(\gtCrashRole{\gtG[i]}{\roleR})}{\roleP})%
          }
          $, 
          apply $\inferrule{\iruleStSubOut}$ and coinductive
          hypothesis.
          \end{enumerate}
        \item
          On LHS,
          we have $\gtProj[\rolesR]{\gtG}{\roleQ} =
          \stExtSum{\roleP}{i \in I}{%
            \stChoice{\stLab[i]}{\tyGround[i]} \stSeq (\gtProj[\rolesR]{\gtG[i]}{\roleQ})%
          }%
          $.

          On RHS, we perform case analysis on the role being removed:
          \begin{enumerate}[leftmargin=*]
          \item we have $
          \gtCrashRole{\gtG}{\roleP} =
          \gtCommTransit{\rolePCrashed}{\roleQ}{i \in I}{\gtLab[i]}{\tyGround[i]}{
            (\gtCrashRole{\gtG[i]}{\roleP})
          }{j}
          $, and thus 
          $
          \gtProj[\rolesR]{(\gtCrashRole{\gtG}{\roleP})}{\roleQ} = $ \linebreak
          $\stExtSum{\roleP}{i \in I}{
            \stChoice{\stLab[i]}{\tyGround[i]} \stSeq
            (\gtProj[\rolesR]{(\gtCrashRole{\gtG[i]}{\roleP})}{\roleQ})%
          }
          $,
          apply \inferrule{\iruleStSubIn} and coinductive
          hypothesis.

          \item ($\roleR \neq \roleP$) we have $
          \gtCrashRole{\gtG}{\roleR} =
          \gtComm{\roleP}{\roleQ}{i \in I}{\gtLab[i]}{\tyGround[i]}{
            (\gtCrashRole{\gtG[i]}{\roleR})
          }
          $, and thus 
          $
          \gtProj[\rolesR]{(\gtCrashRole{\gtG}{\roleR})}{\roleQ} = $ \linebreak 
          $\stExtSum{\roleP}{i \in I}{
            \stChoice{\stLab[i]}{\tyGround[i]} \stSeq
            (\gtProj[\rolesR]{(\gtCrashRole{\gtG[i]}{\roleR})}{\roleQ})%
          }
          $,
          apply $\inferrule{\iruleStSubIn}$ and coinductive
          hypothesis.
          \end{enumerate}
        \item ($\roleR \notin \setenum{\roleP, \roleQ}$)
          On LHS, we have $
          \gtProj[\rolesR]{\gtG}{\roleR} =
          \stMerge{i \in I}{\gtProj[\rolesR]{\gtG[i]}{\roleR}}%
          $.

          On RHS, we perform case analysis on the role being removed:

          \begin{enumerate}[leftmargin=*]
          \item we have $
          \gtCrashRole{\gtG}{\roleP} =
          \gtCommTransit{\rolePCrashed}{\roleQ}{i \in I}{\gtLab[i]}{\tyGround[i]}{
            (\gtCrashRole{\gtG[i]}{\roleP})
          }{j}
          $, and thus 
          $
          \gtProj[\rolesR]{(\gtCrashRole{\gtG}{\roleP})}{\roleR} = $ \linebreak
          $\stMerge{i \in I}{
            (\gtProj[\rolesR]{(\gtCrashRole{\gtG[i]}{\roleP})}{\roleR})%
          }
          $,
          apply \autoref{lem:subtype:merge-subty} and coinductive
          hypothesis.

          \item we have
          $\gtCrashRole{\gtG}{\roleQ} =
          \gtComm{\roleP}{\roleQCrashed}{i \in I}{\gtLab[i]}{\tyGround[i]}{
            (\gtCrashRole{\gtG[i]}{\roleQ})
          }
          $, and thus $
          \gtProj[\rolesR]{(\gtCrashRole{\gtG}{\roleQ})}{\roleR} = $ \linebreak 
          $\stMerge{i \in I}{
            (\gtProj[\rolesR]{(\gtCrashRole{\gtG[i]}{\roleQ})}{\roleR})%
          }
          $,
          apply \autoref{lem:subtype:merge-subty} and coinductive
          hypothesis.

          \item ($\roleS \notin \setenum{\roleP, \roleQ, \roleR}$) we have $
          \gtCrashRole{\gtG}{\roleS} =
          \gtComm{\roleP}{\roleQ}{i \in I}{\gtLab[i]}{\tyGround[i]}{
            (\gtCrashRole{\gtG[i]}{\roleS})
          }
          $, and thus $
          \gtProj[\rolesR]{(\gtCrashRole{\gtG}{\roleS})}{\roleR} =
          \stMerge{i \in I}{
            (\gtProj[\rolesR]{(\gtCrashRole{\gtG[i]}{\roleS})}{\roleR})%
          }
          $,
          apply \autoref{lem:subtype:merge-subty} and coinductive
          hypothesis.

          \end{enumerate}
      \end{itemize}
          \item Case $\gtG = \gtRec{\gtRecVar}{\gtGi}$: 
      
      By coinductive hypothesis.
  \end{itemize}
  Other cases are similar or trivial. 
    \qedhere
\end{proof}

\begin{lem}[Inversion of Subtyping]
\label{lem:subtyping-invert}
  ~
  \begin{enumerate}
    \item If
      $\stS \stSub
       \stIntSum{\roleP}{i \in I}{\stChoice{\stLab[i]}{\tyGround[i]} \stSeq \stT[i]}
      $, then
      $\unfoldOne{\stS} =
       \stIntSum{\roleP}{j \in J}{\stChoice{\stLabi[j]}{\tyGroundi[j]} \stSeq
       \stTi[j]}
      $, and $J \subseteq I$,
      and $\forall i \in J: \stLab[i] = \stLabi[i], \tyGround[i] =
      \tyGroundi[i]$ and $\stTi[i] \stSub \stT[i]$.
    \item If
      $\stS \stSub
       \stExtSum{\roleP}{i \in I}{\stChoice{\stLab[i]}{\tyGround[i]} \stSeq \stT[i]}
      $, then
      $\unfoldOne{\stS} =
       \stExtSum{\roleP}{j \in J}{\stChoice{\stLabi[j]}{\tyGroundi[j]} \stSeq
       \stTi[j]}
      $, and $I \subseteq J$,
      and $\forall i \in I: \stLab[i] = \stLabi[i], \tyGround[i] =
      \tyGroundi[i]$ and $\stTi[i] \stSub \stT[i]$.
  \end{enumerate}
\end{lem}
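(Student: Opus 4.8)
The plan is to prove both parts of the inversion lemma simultaneously by induction on the derivation of the coinductive subtyping judgement $\stS \stSub \stT$, following the structure of the rules in \autoref{def:subtyping}. Since subtyping is defined coinductively but the rules $\inferrule{\iruleStSubRecL}$ and $\inferrule{\iruleStSubRecR}$ only manipulate the top-level recursion constructs, the key observation is that from any valid derivation concluding $\stS \stSub \stT$ where $\stT$ has a choice at the top, I can repeatedly strip off $\inferrule{\iruleStSubRecL}$ applications on the left-hand side (and, where $\stT$ itself were recursive, $\inferrule{\iruleStSubRecR}$ applications, but here $\stT$ is already unfolded to a choice) until I reach an application of a "structural" rule, either $\inferrule{\iruleStSubOut}$ or $\inferrule{\iruleStSubIn}$. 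This is where well-guardedness of $\stS$ is essential: it guarantees that the chain of $\inferrule{\iruleStSubRecL}$ applications terminates, so $\unfoldOne{\stS}$ is a finite object and is itself a choice with the same polarity and same partner role $\roleP$ as $\stT$.

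First I would handle part (1). Suppose $\stS \stSub \stIntSum{\roleP}{i \in I}{\stChoice{\stLab[i]}{\tyGround[i]} \stSeq \stT[i]}$. The target is an internal choice, so the only rules that could conclude this judgement are $\inferrule{\iruleStSubOut}$, $\inferrule{\iruleStSubRecL}$, and $\inferrule{\iruleStSubRecR}$ — but $\inferrule{\iruleStSubRecR}$ is ruled out because the right-hand type is not of the form $\stRec{\stRecVar}{\stTi}$. If the last rule is $\inferrule{\iruleStSubRecL}$, then $\stS = \stRec{\stRecVar}{\stSi}$ and the premise gives $\stSi\subst{\stRecVar}{\stRec{\stRecVar}{\stSi}} \stSub \stT$; since $\unfoldOne{\stRec{\stRecVar}{\stSi}} = \unfoldOne{\stSi\subst{\stRecVar}{\stRec{\stRecVar}{\stSi}}}$, I apply the induction hypothesis to the (structurally, in terms of the derivation) smaller premise and conclude directly. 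If the last rule is $\inferrule{\iruleStSubOut}$, then $\stS = \stIntSum{\roleP}{j \in J}{\stChoice{\stLabi[j]}{\tyGroundi[j]} \stSeq \stTi[j]}$ is already in the desired form (so $\unfoldOne{\stS} = \stS$), the rule's shape forces $J \subseteq I$ with matching labels and payloads on $J$, and the premise $\forall j \in J: \stTi[j] \stSub \stT[j]$ is exactly what is claimed. Part (2) is entirely dual: the target is an external choice, $\inferrule{\iruleStSubRecR}$ is again excluded, $\inferrule{\iruleStSubRecL}$ is absorbed into the induction hypothesis via unfolding, and $\inferrule{\iruleStSubIn}$ gives $\unfoldOne{\stS}$ as an external choice on an index set $J \supseteq I$ with matching labels, payloads, and $\stTi[i] \stSub \stT[i]$ for $i \in I$.

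The main obstacle is making the induction rigorous in the presence of coinduction: the subtyping relation is the greatest fixed point, so strictly speaking I should phrase the argument either as an induction on the (finite, by guardedness) number of leading $\inferrule{\iruleStSubRecL}$ steps before a structural rule fires, or appeal to a standard "coinductive inversion" principle — from membership in the greatest fixed point, one functor-unfolding step is available, and guardedness bounds how many of those steps are consumed by recursion administration. I would make this precise by first proving an auxiliary observation: for well-guarded $\stS$, there is a finite $n$ such that $n$ applications of the identity $\unfoldOne{\stRec{\stRecVar}{\stSi}} = \unfoldOne{\stSi\subst{\stRecVar}{\stRec{\stRecVar}{\stSi}}}$ bring $\stS$ to a non-recursive head form, and this bound strictly decreases along the $\inferrule{\iruleStSubRecL}$ chain; then the induction is on that bound. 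The cases $\inferrule{\iruleStSubEnd}$ and $\inferrule{\iruleStSubStop}$ are vacuous for both parts since their conclusions have the wrong shape. Everything else is routine bookkeeping of index sets and the side conditions in $\inferrule{\iruleStSubIn}$, which do not interfere with the inversion statement as phrased.
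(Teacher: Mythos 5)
Your proof is correct and establishes exactly the inversion facts the lemma claims, but it gets there by a somewhat different mechanism than the paper. The paper's proof is a composition of two previously established lemmas: from $\stS \stSub \stT$ it obtains $\unfoldOne{\stS} \stSub \stT$ by combining $\unfoldOne{\stS} \stSub \stS$ (\autoref{lem:unfold-subtyping}) with transitivity of $\stSub$ (\autoref{transitive-subtyping}), and then performs a single inversion of the structural rule ($\inferrule{\iruleStSubOut}$ or $\inferrule{\iruleStSubIn}$) on a judgement whose left-hand side already has a non-recursive head; that one inversion step directly yields the index-set inclusion, the label/payload agreement, and the componentwise subtyping. You instead peel the leading $\inferrule{\iruleStSubRecL}$ applications off one at a time, inducting on the (finite, by contractivity) unfolding depth of $\stS$, and you correctly observe that $\inferrule{\iruleStSubRecR}$ cannot be the last rule because the supertype is a choice rather than a $\mu$-type, and that $\inferrule{\iruleStSubEnd}$ and $\inferrule{\iruleStSubStop}$ are excluded by shape. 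Your worry about making induction rigorous over a coinductively defined relation is well placed and well handled: one functor-unfolding of the greatest fixed point is always available for case analysis on the last rule applied, and the only genuinely inductive content is the termination of the $\inferrule{\iruleStSubRecL}$ chain, which is precisely where well-guardedness is consumed. The trade-off between the two arguments is that yours is self-contained and does not rely on transitivity (itself a nontrivial coinductive lemma), at the cost of setting up the explicit bound on unfolding depth, whereas the paper's is shorter given that \autoref{lem:unfold-subtyping} and \autoref{transitive-subtyping} are already proved; both are sound.
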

\begin{proof}
By \autoref{lem:unfold-subtyping}, the transitivity of subtyping, and \autoref{def:subtyping} (\inferrule{\iruleStSubIn}, \inferrule{\iruleStSubOut}).
\end{proof}

\subsection{Semantics of Global Types}
\label{sec:proof:semantics:gty}

\lemNorevival*
\begin{proof}
\begin{enumerate}[leftmargin=*]
  \item By induction on global type reductions: since $\stEnvAnnotGenericSym \neq
      \ltsCrash{}{\roleP}$, we start from \inferrule{\iruleGtMoveRec}.
      \begin{itemize}[leftmargin=*]
      \item Case \inferrule{\iruleGtMoveRec}:
     we have
      $\gtG = \gtRec{\gtRecVar}{\gtGii}$ and
      $\gtWithCrashedRoles{\rolesC}{\gtGii{}\subst{\gtRecVar}{\gtRec{\gtRecVar}{\gtGii}}}
      \gtMove[\stEnvAnnotGenericSym]{\rolesR} \gtWithCrashedRoles{\rolesCi}{\gtGi}$
    by \inferrule{\iruleGtMoveRec} and its inversion.
    Hence, by $\gtWithCrashedRoles{\rolesC}{\gtGii{}\subst{\gtRecVar}{\gtRec{\gtRecVar}{\gtGii}}}
    \gtMove[\stEnvAnnotGenericSym]{\rolesR} \gtWithCrashedRoles{\rolesCi}{\gtGi}$, $\roleP \in  \gtRolesCrashed{\gtGi}$, and inductive hypothesis,
    we have $\roleP \in \gtRolesCrashed{\gtGii{}\subst{\gtRecVar}{\gtRec{\gtRecVar}{\gtGii}}}$.
    Therefore,
    by $\gtRolesCrashed{\gtRec{\gtRecVar}{\gtGii}} =  \gtRolesCrashed{\gtGii{}\subst{\gtRecVar}{\gtRec{\gtRecVar}{\gtGii}}}$,
    we conclude with $\roleP \in \gtRolesCrashed{\gtG}$, as desired.
     \item Case \inferrule{\iruleGtMoveIn}:
  we have $\gtG =  \gtCommTransit{\rolePMaybeCrashed}{\roleQ}{i \in I}{\gtLab[i]}{\tyGround[i]}{\gtGi[i]}{j}$ and $\gtGi = \gtGi[j]$ by \inferrule{\iruleGtMoveIn}. It follows that $\gtRolesCrashed{\gtG} = \bigcup\limits_{i \in I}{\gtRolesCrashed{\gtGi[i]}}$
       and $\gtRolesCrashed{\gtGi} = \gtRolesCrashed{\gtGi[j]}$ with $j \in I$, and hence, $\gtRolesCrashed{\gtGi}  \subseteq \gtRolesCrashed{\gtG}$.
       Therefore, by $\roleP \in \gtRolesCrashed{\gtGi}$,
       we conclude with $\roleP \in \gtRolesCrashed{\gtG}$, as desired.
       
    \item Case \inferrule{\iruleGtMoveCtx}: we have $\gtG =  \gtCommSmall{\roleP}{\roleQMaybeCrashed}{i \in
      I}{\gtLab[i]}{\tyGround[i]}{\gtGi[i]}$, $\gtGi =  \gtCommSmall{\roleP}{\roleQMaybeCrashed}{i \in
      I}{\gtLab[i]}{\tyGround[i]}{\gtGii[i]}$,  $\forall i \in I :
    \gtWithCrashedRoles{\rolesC}{\gtGi[i]}
    \gtMove[\stEnvAnnotGenericSym]{\rolesR}
    \gtWithCrashedRoles{\rolesCi}{\gtGii[i]}$, and $\ltsSubject{\stEnvAnnotGenericSym} \notin \setenum{\roleP, \roleQ}$
    by \inferrule{\iruleGtMoveCtx} and its inversion. It follows that
    $\gtRolesCrashed{\gtG} = \bigcup\limits_{i \in I}{\gtRolesCrashed{\gtGi[i]}}$, $\gtRolesCrashed{\gtGi} = \bigcup\limits_{i \in I}{\gtRolesCrashed{\gtGii[i]}}$,
    and $\stEnvAnnotGenericSym \neq
      \ltsCrash{}{\roleP}$. Then by $\forall i \in I :
    \gtWithCrashedRoles{\rolesC}{\gtGi[i]}
    \gtMove[\stEnvAnnotGenericSym]{\rolesR}
    \gtWithCrashedRoles{\rolesCi}{\gtGii[i]}$, $\stEnvAnnotGenericSym \neq
      \ltsCrash{}{\roleP}$, and inductive hypothesis, we have $\forall i \in I : \text{if } \roleP \in \gtRolesCrashed{\gtGii[i]}, \text{then } \roleP \in \gtRolesCrashed{\gtGi[i]}$.
      Therefore, by $\roleP \in \bigcup\limits_{i \in I}{\gtRolesCrashed{\gtGii[i]}}$, we conclude with $\roleP \in \bigcup\limits_{i \in I}{\gtRolesCrashed{\gtGi[i]}} = \gtRolesCrashed{\gtG}$, as desired.
      \end{itemize}
       Other cases are similar.
 \item Similar to the proof of (1).
 \item The proof is trivial by \inferrule{\iruleGtMoveCrash} and its inversion.
 \qedhere
\end{enumerate}
\end{proof}

\lemWellAnnoPreserve*
\begin{proof}
  By induction on global type reductions:

  \begin{itemize}[leftmargin=*]
    \item Case \inferrule{\iruleGtMoveCrash}: we have $\rolesCi = \rolesC \cup \setenum{\roleP}$, $\gtGi = \gtCrashRole{\gtG}{\roleP}$,  $\roleP \notin \rolesR$,
    $\roleP \in \gtRoles{\gtG}$, and $\gtG \neq \gtRec{\gtRecVar}{\gtGi}$ by \inferrule{\iruleGtMoveCrash} and its inversion.

    \cref{item:wa:reliable-no-crash}:
     from the premise, we have
    $\gtRolesCrashed{\gtG} \cap \rolesR = \emptyset$. Since $\roleP \in \gtRoles{\gtG}$, by \autoref{lem:gtype:crashed-crash-remove-role},
    we have $\gtRolesCrashed{\gtCrashRole{\gtG}{\roleP}} \setminus \setenum{\roleP} \subseteq \gtRolesCrashed{\gtG}$. Then we consider two cases:
    \begin{itemize}[leftmargin=*]
    \item if $\roleP \in \gtRolesCrashed{\gtCrashRole{\gtG}{\roleP}}$, then
    $\gtRolesCrashed{\gtCrashRole{\gtG}{\roleP}} = \setenum{\roleP}
    \cup (\gtRolesCrashed{\gtCrashRole{\gtG}{\roleP}} \setminus \setenum{\roleP})$. Hence, by $\roleP \notin \rolesR$,
    $\gtRolesCrashed{\gtCrashRole{\gtG}{\roleP}} \setminus \setenum{\roleP} \subseteq \gtRolesCrashed{\gtG}$, and $\gtRolesCrashed{\gtG} \cap \rolesR = \emptyset$,
    we have $\gtRolesCrashed{\gtCrashRole{\gtG}{\roleP}} \cap \rolesR  = \emptyset$.
    \item  if $\roleP \notin \gtRolesCrashed{\gtCrashRole{\gtG}{\roleP}}$, then
    $\gtRolesCrashed{\gtCrashRole{\gtG}{\roleP}} = \gtRolesCrashed{\gtCrashRole{\gtG}{\roleP}} \setminus \setenum{\roleP}$. Hence, by
    \linebreak $\gtRolesCrashed{\gtCrashRole{\gtG}{\roleP}} \setminus \setenum{\roleP} \subseteq \gtRolesCrashed{\gtG}$ and $\gtRolesCrashed{\gtG} \cap \rolesR = \emptyset$,
    we have $\gtRolesCrashed{\gtCrashRole{\gtG}{\roleP}} \cap \rolesR  = \emptyset$.
     \end{itemize}
  Therefore, by  $\gtGi = \gtCrashRole{\gtG}{\roleP}$, we conclude with $\gtRolesCrashed{\gtGi} \cap \rolesR = \emptyset$, as desired.

    \cref{item:wa:crash-annot-crash}: from the premise, we have
    $\gtRolesCrashed{\gtG} \subseteq \rolesC$. Since $\roleP \in \gtRoles{\gtG}$, by \autoref{lem:gtype:crashed-crash-remove-role},
     we have $\gtRolesCrashed{\gtCrashRole{\gtG}{\roleP}} \setminus \setenum{\roleP} \subseteq \gtRolesCrashed{\gtG}$. Then we consider two cases:
    \begin{itemize}[leftmargin=*]
    \item if $\roleP \in \gtRolesCrashed{\gtCrashRole{\gtG}{\roleP}}$, then
    $\gtRolesCrashed{\gtCrashRole{\gtG}{\roleP}} = \setenum{\roleP}
    \cup (\gtRolesCrashed{\gtCrashRole{\gtG}{\roleP}} \setminus
    \setenum{\roleP})$. Hence, by \linebreak
    $\gtRolesCrashed{\gtCrashRole{\gtG}{\roleP}} \setminus \setenum{\roleP} \subseteq \gtRolesCrashed{\gtG}$ and $\gtRolesCrashed{\gtG} \subseteq \rolesC$,
    we have $\gtRolesCrashed{\gtCrashRole{\gtG}{\roleP}} \subseteq \rolesC \cup \setenum{\roleP}$.
    \item  if $\roleP \notin \gtRolesCrashed{\gtCrashRole{\gtG}{\roleP}}$, then
    $\gtRolesCrashed{\gtCrashRole{\gtG}{\roleP}} =
    \gtRolesCrashed{\gtCrashRole{\gtG}{\roleP}} \setminus \setenum{\roleP}$.
    Hence, by \linebreak
    $\gtRolesCrashed{\gtCrashRole{\gtG}{\roleP}} \setminus \setenum{\roleP} \subseteq \gtRolesCrashed{\gtG}$ and $\gtRolesCrashed{\gtG} \subseteq \rolesC$,
    we have $\gtRolesCrashed{\gtCrashRole{\gtG}{\roleP}} \subseteq \rolesC \cup \setenum{\roleP}$.
     \end{itemize}
   Therefore, by $\gtGi = \gtCrashRole{\gtG}{\roleP}$ and
    $\rolesCi = \rolesC \cup \{\roleP\}$, we conclude with $\gtRolesCrashed{\gtGi} \subseteq \rolesCi$, as desired.

    \cref{item:wa:live-no-crash}:
     from the premise, we have $\gtRoles{\gtG}
      \cap \gtRolesCrashed{\gtG} = \emptyset$. Since $\roleP \in \gtRoles{\gtG}$, by \autoref{lem:gtype:crash-remove-role},  \autoref{lem:gtype:crashed-crash-remove-role},
      we have $\gtRoles{\gtCrashRole{\gtG}{\roleP}} \subseteq \gtRoles{\gtG}$,
      $\roleP \notin \gtRoles{\gtCrashRole{\gtG}{\roleP}}$,
      and $\gtRolesCrashed{\gtCrashRole{\gtG}{\roleP}} \setminus \setenum{\roleP} \subseteq \gtRolesCrashed{\gtG}$. Then we consider two cases:
       \begin{itemize}[leftmargin=*]
    \item if $\roleP \in \gtRolesCrashed{\gtCrashRole{\gtG}{\roleP}}$, then
    $\gtRolesCrashed{\gtCrashRole{\gtG}{\roleP}} = \setenum{\roleP}
    \cup (\gtRolesCrashed{\gtCrashRole{\gtG}{\roleP}} \setminus
    \setenum{\roleP})$. Hence, by \linebreak $\gtRoles{\gtG}
      \cap \gtRolesCrashed{\gtG} = \emptyset$, $\roleP \notin \gtRoles{\gtCrashRole{\gtG}{\roleP}}$, $\gtRolesCrashed{\gtCrashRole{\gtG}{\roleP}} \setminus \setenum{\roleP} \subseteq \gtRolesCrashed{\gtG}$,
      and \linebreak $\gtRoles{\gtCrashRole{\gtG}{\roleP}} \subseteq \gtRoles{\gtG}$, we have $\gtRolesCrashed{\gtCrashRole{\gtG}{\roleP}}  \cap \gtRoles{\gtCrashRole{\gtG}{\roleP}} = \emptyset$.
     \item if $\roleP \notin \gtRolesCrashed{\gtCrashRole{\gtG}{\roleP}}$, then
    $\gtRolesCrashed{\gtCrashRole{\gtG}{\roleP}} =
    \gtRolesCrashed{\gtCrashRole{\gtG}{\roleP}} \setminus \setenum{\roleP}$.
    Hence, by \linebreak $\gtRoles{\gtG}
      \cap \gtRolesCrashed{\gtG} = \emptyset$,
    $\gtRolesCrashed{\gtCrashRole{\gtG}{\roleP}} \setminus \setenum{\roleP} \subseteq \gtRolesCrashed{\gtG}$, and
     $\gtRoles{\gtCrashRole{\gtG}{\roleP}} \subseteq \gtRoles{\gtG}$,  we have $\gtRolesCrashed{\gtCrashRole{\gtG}{\roleP}}  \cap \gtRoles{\gtCrashRole{\gtG}{\roleP}} = \emptyset$.
    \end{itemize}
  Therefore, by $\gtGi = \gtCrashRole{\gtG}{\roleP}$, we conclude with
      $\gtRoles{\gtGi}
      \cap \gtRolesCrashed{\gtGi} = \emptyset$, as desired.

      \item Case \inferrule{\iruleGtMoveRec}: we have
      $\gtG = \gtRec{\gtRecVar}{\gtGii}$ and
      $\gtWithCrashedRoles{\rolesC}{\gtGii{}\subst{\gtRecVar}{\gtRec{\gtRecVar}{\gtGii}}}
      \gtMove[\stEnvAnnotGenericSym]{\rolesR} \gtWithCrashedRoles{\rolesCi}{\gtGi}$
    by \inferrule{\iruleGtMoveRec} and its inversion.
     From the premise, we also have
    $\gtWithCrashedRoles{\rolesC}{\gtRec{\gtRecVar}{\gtGii}}$ is well-annotated.
    Hence, by $\gtRolesCrashed{\gtRec{\gtRecVar}{\gtGii}} =
    \gtRolesCrashed{\gtGii{}\subst{\gtRecVar}{\gtRec{\gtRecVar}{\gtGii}}}$,
    $\gtRoles{\gtRec{\gtRecVar}{\gtGii}} =
    \gtRoles{\gtGii{}\subst{\gtRecVar}{\gtRec{\gtRecVar}{\gtGii}}}$,
    and $\gtWithCrashedRoles{\rolesC}{\gtRec{\gtRecVar}{\gtGii}}$ is well-annotated,
    we have
    $\gtRolesCrashed{\gtGii{}\subst{\gtRecVar}{\gtRec{\gtRecVar}{\gtGii}}} \cap \rolesR = \emptyset$,
    $\gtRolesCrashed{\gtGii{}\subst{\gtRecVar}{\gtRec{\gtRecVar}{\gtGii}}} \subseteq \rolesC$,
    and $\gtRoles{\gtGii{}\subst{\gtRecVar}{\gtRec{\gtRecVar}{\gtGii}}} \cap
    \gtRolesCrashed{\gtGii{}\subst{\gtRecVar}{\gtRec{\gtRecVar}{\gtGii}}}= \emptyset$.
    It follows directly that
     $\gtWithCrashedRoles{\rolesC}{\gtGii{}\subst{\gtRecVar}{\gtRec{\gtRecVar}{\gtGii}}}$ is well-annotated.
     Therefore, by $\gtWithCrashedRoles{\rolesC}{\gtGii{}\subst{\gtRecVar}{\gtRec{\gtRecVar}{\gtGii}}}
    \gtMove[\stEnvAnnotGenericSym]{\rolesR} \gtWithCrashedRoles{\rolesCi}{\gtGi}$
    and inductive hypothesis, we conclude with $\gtWithCrashedRoles{\rolesCi}{\gtGi}$ is well-annotated, as desired.
      \end{itemize}
 Other cases are similar.
  \qedhere
\end{proof}

\begin{lem}
\label{lem:gt-lts-unfold}
  $\gtWithCrashedRoles{\rolesC}{\gtG}
  \gtMove[\stEnvAnnotGenericSym]{\rolesR}
  \gtWithCrashedRoles{\rolesCi}{\gtGi}$, \;iff
  \;$
  \gtWithCrashedRoles{\rolesC}{\unfoldOne{\gtG}}
  \gtMove[\stEnvAnnotGenericSym]{\rolesR}
  \gtWithCrashedRoles{\rolesCi}{\gtGi}$.
\end{lem}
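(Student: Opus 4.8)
The statement relates reductions of a recursive global type $\gtRec{\gtRecVar}{\gtGii}$ to reductions of its one-step unfolding $\unfoldOne{\gtRec{\gtRecVar}{\gtGii}} = \gtGii\subst{\gtRecVar}{\gtRec{\gtRecVar}{\gtGii}}$; for non-recursive $\gtG$ we have $\unfoldOne{\gtG} = \gtG$ and the statement is trivial. So the plan is to reduce to the case $\gtG = \gtRec{\gtRecVar}{\gtGii}$ and prove the two directions separately, both being essentially immediate consequences of rule \inferrule{\iruleGtMoveRec}.

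\textbf{Forward direction ($\Rightarrow$).} Suppose $\gtWithCrashedRoles{\rolesC}{\gtRec{\gtRecVar}{\gtGii}} \gtMove[\stEnvAnnotGenericSym]{\rolesR} \gtWithCrashedRoles{\rolesCi}{\gtGi}$. The only rule that can derive a transition from a term of shape $\gtRec{\gtRecVar}{\gtGii}$ is \inferrule{\iruleGtMoveRec} (the rules \inferrule{\iruleGtMoveOut}, \inferrule{\iruleGtMoveIn}, \inferrule{\iruleGtMoveCrDe}, \inferrule{\iruleGtMoveOrph}, \inferrule{\iruleGtMoveCtx}, \inferrule{\iruleGtMoveCtxi} all require a transmission/transmission-en-route prefix, and \inferrule{\iruleGtMoveCrash} carries the side condition $\gtG \neq \gtRec{\gtRecVar}{\gtGi}$ precisely to rule out the recursive case). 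Hence by inversion of \inferrule{\iruleGtMoveRec} we get $\gtWithCrashedRoles{\rolesC}{\gtGii\subst{\gtRecVar}{\gtRec{\gtRecVar}{\gtGii}}} \gtMove[\stEnvAnnotGenericSym]{\rolesR} \gtWithCrashedRoles{\rolesCi}{\gtGi}$, which is exactly $\gtWithCrashedRoles{\rolesC}{\unfoldOne{\gtG}} \gtMove[\stEnvAnnotGenericSym]{\rolesR} \gtWithCrashedRoles{\rolesCi}{\gtGi}$.

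\textbf{Backward direction ($\Leftarrow$).} Suppose $\gtWithCrashedRoles{\rolesC}{\gtGii\subst{\gtRecVar}{\gtRec{\gtRecVar}{\gtGii}}} \gtMove[\stEnvAnnotGenericSym]{\rolesR} \gtWithCrashedRoles{\rolesCi}{\gtGi}$. Applying \inferrule{\iruleGtMoveRec} with premise this transition yields immediately $\gtWithCrashedRoles{\rolesC}{\gtRec{\gtRecVar}{\gtGii}} \gtMove[\stEnvAnnotGenericSym]{\rolesR} \gtWithCrashedRoles{\rolesCi}{\gtGi}$, i.e.\ $\gtWithCrashedRoles{\rolesC}{\gtG} \gtMove[\stEnvAnnotGenericSym]{\rolesR} \gtWithCrashedRoles{\rolesCi}{\gtGi}$.

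\textbf{Main obstacle.} There is essentially no obstacle; the lemma is a one-line consequence of the shape of the reduction rules. The only subtlety worth flagging explicitly is the $\gtG \neq \gtRec{\gtRecVar}{\gtGi}$ side condition in \inferrule{\iruleGtMoveCrash}: without it, the forward direction's inversion argument would have a second case (a crash fired directly at the recursive form) which is not directly a transition of $\unfoldOne{\gtG}$, so the clean biconditional relies on crashes only being observed after unfolding. Contractivity of recursive types guarantees $\unfoldOne{\gtG}$ is well-defined and itself not of recursive form, so $\unfoldOne{\unfoldOne{\gtG}} = \unfoldOne{\gtG}$ and no further iteration is needed. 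I would write the proof as: ``By \autoref{def:gtype:lts-gt}; if $\gtG$ is not recursive the claim is immediate, and if $\gtG = \gtRec{\gtRecVar}{\gtGii}$ both directions follow from rule \inferrule{\iruleGtMoveRec} and its inversion (using the side condition of \inferrule{\iruleGtMoveCrash} to exclude any other derivation of a transition from $\gtRec{\gtRecVar}{\gtGii}$).''
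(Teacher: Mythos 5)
Your proof is correct and takes exactly the route the paper does: the paper's own proof is the one-liner ``by inverting or applying \inferrule{\iruleGtMoveRec} when necessary,'' and your two directions are precisely that inversion and that application, with the useful extra observation that the side condition $\gtG \neq \gtRec{\gtRecVar}{\gtGi}$ in \inferrule{\iruleGtMoveCrash} is what makes the inversion single-cased. The only nit is that $\unfoldOne{\cdot}$ as defined in the appendix iterates through nested $\mu$-binders, so strictly the argument is an induction on the number of leading binders rather than a single application of \inferrule{\iruleGtMoveRec} --- which is exactly the ``when necessary'' in the paper's phrasing.
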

\begin{proof}
  By inverting or applying $\inferrule{\iruleGtMoveRec}$ when necessary.
  \qedhere
\end{proof}

\begin{lem}[Progress of Global Types]
\label{lem:gtype:progress}
  If $\gtWithCrashedRoles{\rolesC}{\gtG}$ (where $\gtG$ is a projectable
  global type) is well-annotated, and $\gtG \neq
  \gtEnd$, then there exists $\gtGi, \rolesCi$ such that
  $\gtWithCrashedRoles{\rolesC}{\gtG}
  \gtMove{\rolesR}
  \gtWithCrashedRoles{\rolesCi}{\gtGi}$.
\end{lem}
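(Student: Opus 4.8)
The plan is to proceed by analysing the structure of the unfolding $\unfoldOne{\gtG}$. By \autoref{lem:gt-lts-unfold}, it suffices to produce a reduction of $\gtWithCrashedRoles{\rolesC}{\unfoldOne{\gtG}}$, so by contractivity we may assume $\gtG$ is not a recursion variable or a recursion; since $\gtG \neq \gtEnd$ by hypothesis, $\gtG$ must be either a transmission prefix $\gtComm{\roleP}{\roleQMaybeCrashed}{i\in I}{\gtLab[i]}{\tyGround[i]}{\gtGi[i]}$ or a transmission-en-route prefix $\gtCommTransit{\rolePMaybeCrashed}{\roleQ}{i\in I}{\gtLab[i]}{\tyGround[i]}{\gtGi[i]}{j}$. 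In each case I will exhibit an applicable reduction rule, using the well-annotatedness hypothesis (via \ref{item:wa:crash-annot-crash} and \ref{item:wa:live-no-crash}) to determine whether the relevant role carries a crash annotation, and using projectability of $\gtG$ to guarantee that crash-handling branches exist whenever they are needed.

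First, I would handle the transmission case $\gtG = \gtComm{\roleP}{\roleQMaybeCrashed}{i\in I}{\gtLab[i]}{\tyGround[i]}{\gtGi[i]}$. If $\roleQ$ is live ($\roleQMaybeCrashed \neq \roleQCrashed$), then since $I \neq \emptyset$ and $\setcomp{\gtLab[i]}{i\in I} \neq \setenum{\gtCrashLab}$, there is some $j \in I$ with $\gtLab[j] \neq \gtCrashLab$, so rule $\inferrule{\iruleGtMoveOut}$ fires. If $\roleQ$ carries a crash annotation, the same non-$\gtCrashLab$ index $j$ exists and rule $\inferrule{\iruleGtMoveOrph}$ fires (the message is orphaned). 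Second, for $\gtG = \gtCommTransit{\rolePMaybeCrashed}{\roleQ}{i\in I}{\gtLab[i]}{\tyGround[i]}{\gtGi[i]}{j}$ I split on whether $\gtLab[j] = \gtCrashLab$: if $\gtLab[j] \neq \gtCrashLab$ then $\inferrule{\iruleGtMoveIn}$ fires and $\roleQ$ receives; if $\gtLab[j] = \gtCrashLab$, then I need $\roleP$ to be crashed so that $\inferrule{\iruleGtMoveCrDe}$ can fire — here I would use well-annotatedness together with the fact that a `pseudo'-message en route from $\roleP$ can only have arisen through role removal (Definitions~\ref{def:gtype:remove-role} and the reduction rules), hence $\roleP$ must appear as $\rolePCrashed$, and $\inferrule{\iruleGtMoveCrDe}$ applies.

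Alternatively — and this is the cleaner route — I would observe that rule $\inferrule{\iruleGtMoveCrash}$ is \emph{always} available whenever $\gtRoles{\gtG} \not\subseteq \rolesR$: pick any live, unreliable role $\roleP \in \gtRoles{\gtG}\setminus\rolesR$ (note $\gtRoles{\gtG}\neq\emptyset$ because $\gtG\neq\gtEnd$ is a prefix form after unfolding) and fire the crash reduction. So the only genuinely delicate situation is when $\gtRoles{\gtG} \subseteq \rolesR$, i.e.\ all live roles are reliable; in that case no crash is possible and I must fall back on the structural analysis above. The point to be careful about is the `pseudo'-message-en-route subcase with $\gtLab[j]=\gtCrashLab$ and $\roleP$ \emph{not} crash-annotated: I need to argue this configuration is unreachable in a well-annotated, projectable global type — intuitively a $\gtCrashLab$ index only becomes the selected index $j$ en route as the result of removing the (now crashed) sender, so $\roleP$ is necessarily annotated $\roleCrashedSym$, and \ref{item:wa:crash-annot-crash} then places $\roleP \in \rolesC$. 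I expect this reachability/well-annotatedness argument for the pseudo-message case to be the main obstacle; everything else is a routine case split matching each syntactic form of $\gtG$ to its reduction rule, using $I \neq \emptyset$ and the non-singleton-$\gtCrashLab$ side conditions to supply the required index.
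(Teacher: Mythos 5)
Your proposal is correct and follows essentially the same case analysis as the paper's proof: unfold $\gtG$, then match each syntactic form (transmission with live receiver, transmission with crashed receiver, transmission en route with non-$\gtCrashLab$ vs.\ $\gtCrashLab$ selected index) to the rules $\inferrule{\iruleGtMoveOut}$, $\inferrule{\iruleGtMoveOrph}$, $\inferrule{\iruleGtMoveIn}$, and $\inferrule{\iruleGtMoveCrDe}$, using the non-singleton-$\gtCrashLab$ condition to supply a reducible index. The only difference is that you are more explicit than the paper about why the sender must carry a crash annotation in the pseudo-message-en-route case (the paper simply asserts ``$\roleP$ must have crashed''), which is a welcome elaboration rather than a divergence.
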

\begin{proof}
  By \autoref{lem:gt-lts-unfold}, we only consider unfoldings.
\begin{itemize}[leftmargin=*]
\item  Case $\unfoldOne{\gtG} = \gtEnd$: the premise does not hold.
\item Case $
    \unfoldOne{\gtG} =
    \gtComm{\roleP}{\roleQ}{i \in I}{\gtLab[i]}{\tyGround[i]}{\gtG[i]}
  $:  apply \inferrule{\iruleGtMoveOut}.
  We can
  pick any $\gtLab[i] \neq \gtCrashLab$ to reduce the global type (note that
  our syntax prohibits singleton $\gtCrashLab$ branches).
\item Case $
    \unfoldOne{\gtG} =
    \gtComm{\roleP}{\roleQCrashed}{i \in I}{\gtLab[i]}{\tyGround[i]}{\gtG[i]}
  $: apply \inferrule{\iruleGtMoveOrph}.
  We can
  pick any $\gtLab[i] \neq \gtCrashLab$ to reduce the global type (note that
  our syntax prohibits singleton $\gtCrashLab$ branches).
\item Case $
    \unfoldOne{\gtG} =
    \gtCommTransit{\rolePMaybeCrashed}{\roleQ}{i \in I}{\gtLab[i]}{\tyGround[i]}{\gtG[i]}{j}
  $:
  if $\gtLab[j] = \gtCrashLab$, then it $\roleP$ must have crashed, apply \inferrule{\iruleGtMoveCrDe}.
  Otherwise, apply \inferrule{\iruleGtMoveIn}.
  \qedhere
\end{itemize}
\end{proof}

\subsection{Semantics of Configurations}
\label{sec:proof:semantics:conf}

\begin{lem}
\label{lem:stenv-red:trivial-2}
  If \,$\stEnv; \qEnv \stEnvMoveMaybeCrash \stEnvi; \qEnvi$ \,with\,
  $\stEnv; \qEnv
  \stEnvMoveGenAnnot \stEnvi; \qEnvi$, \,then\,
  \begin{enumerate}
    \item $\dom{\stEnv} = \dom{\stEnvi}$; \,and\,
    \item for all \, $\roleP \in \dom{\stEnv}$ \,with\,
      $\roleP \neq \ltsSubject{\stEnvAnnotGenericSym}$, \,we have\,
      $\stEnvApp{\stEnv}{\roleP} =
      \stEnvApp{\stEnvi}{\roleP}$.
  \end{enumerate}
\end{lem}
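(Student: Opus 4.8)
The statement is a straightforward structural fact about the configuration LTS: a non-crash-of-a-reliable-role transition only changes the local type of the role that is the subject of the transition label, and does not change the domain of the typing context. My plan is to prove both parts simultaneously by induction on the derivation of $\stEnv; \qEnv \stEnvMoveGenAnnot \stEnvi; \qEnvi$, using the rules in \cref{fig:gtype:tc-red-rules}. Note that the side condition $\stEnv; \qEnv \stEnvMoveMaybeCrash \stEnvi; \qEnvi$ is only needed to rule out a crash transition $\ltsCrash{\mpS}{\roleR}$ with $\roleR \in \rolesR$; but actually, inspecting the rules, even \inferrule{\iruleTCtxCrash} satisfies the conclusion (it updates only $\stEnvApp{\stEnv}{\roleP}$ to $\stStop$, and $\roleP = \ltsSubject{\ltsCrash{\mpS}{\roleP}}$, and the domain is unchanged), so in fact the property holds for \emph{all} configuration transitions $\stEnvMoveGenAnnot$. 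I will therefore prove the stronger statement for an arbitrary $\stEnvMoveGenAnnot$ transition, which makes the hypothesis $\stEnvMoveMaybeCrash$ a non-issue.

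First I would do the base cases, which are the non-recursive rules. For \inferrule{\iruleTCtxOut}: the transition label is $\stEnvOutAnnotSmall{\roleP}{\roleQ}{\stChoice{\stLab[k]}{\tyGround[k]}}$ with subject $\roleP$, and $\stEnvi = \stEnvUpd{\stEnv}{\roleP}{\stT[k]}$; by the definition of typing-context update (\autoref{def:mpst-env}), $\dom{\stEnvi} = \dom{\stEnv}$ and $\stEnvApp{\stEnvi}{\roleR} = \stEnvApp{\stEnv}{\roleR}$ for all $\roleR \neq \roleP$, which is exactly the claim. The cases \inferrule{\iruleTCtxIn} (subject $\roleP$, $\stEnvi = \stEnvUpd{\stEnv}{\roleP}{\stT[k]}$), \inferrule{\iruleTCtxCrash} (subject $\roleP$, $\stEnvi = \stEnvUpd{\stEnv}{\roleP}{\stStop}$), and \inferrule{\iruleTCtxCrashDetect} (subject $\roleQ$, $\stEnvi = \stEnvUpd{\stEnv}{\roleQ}{\stT[k]}$) are all identical in shape: each performs a single update at the role that is the subject of the label, so the same two observations about $\stEnvUpd{\cdot}{\cdot}{\cdot}$ close the case.

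The only inductive case is \inferrule{\iruleTCtxRec}. Here $\stEnvApp{\stEnv}{\roleP} = \stRec{\stRecVar}{\stT}$, and the premise is $\stEnvUpd{\stEnv}{\roleP}{\stT\subst{\stRecVar}{\stRec{\stRecVar}{\stT}}}; \qEnv \stEnvMoveGenAnnot \stEnvi; \qEnvi$. Write $\stEnv'' = \stEnvUpd{\stEnv}{\roleP}{\stT\subst{\stRecVar}{\stRec{\stRecVar}{\stT}}}$; note $\dom{\stEnv''} = \dom{\stEnv}$ and $\stEnv''$ differs from $\stEnv$ only at $\roleP$. By the induction hypothesis applied to the premise transition, $\dom{\stEnvi} = \dom{\stEnv''} = \dom{\stEnv}$, giving part (1). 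For part (2), the induction hypothesis also gives $\stEnvApp{\stEnvi}{\roleR} = \stEnvApp{\stEnv''}{\roleR}$ for all $\roleR \neq \ltsSubject{\stEnvAnnotGenericSym}$. Now I need to observe that the subject of the label must be $\roleP$ itself: unfolding a recursive type at $\roleP$ cannot enable a transition whose subject is some other role, because the rules \inferrule{\iruleTCtxOut}, \inferrule{\iruleTCtxIn}, \inferrule{\iruleTCtxCrash}, \inferrule{\iruleTCtxCrashDetect} all have their subject determined by a role whose entry is inspected, and unfolding only changed $\roleP$'s entry — more precisely, a simple sub-induction shows $\ltsSubject{\stEnvAnnotGenericSym}$ of the premise transition equals $\roleP$. (Alternatively, and more cleanly: for $\roleR \neq \roleP$, $\stEnvApp{\stEnv''}{\roleR} = \stEnvApp{\stEnv}{\roleR}$, so for $\roleR \notin \{\roleP, \ltsSubject{\stEnvAnnotGenericSym}\}$ we immediately get $\stEnvApp{\stEnvi}{\roleR} = \stEnvApp{\stEnv}{\roleR}$; it only remains to handle $\roleR = \roleP$ in the case $\ltsSubject{\stEnvAnnotGenericSym} \neq \roleP$, which we rule out by the sub-induction.) The mild obstacle here, and the only place requiring care, is pinning down that $\ltsSubject{\stEnvAnnotGenericSym} = \roleP$ in the \inferrule{\iruleTCtxRec} case; everything else is bookkeeping on $\stEnvUpd{\cdot}{\cdot}{\cdot}$. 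I would state this as an auxiliary observation (possibly a one-line lemma about which role's entry a transition rule consumes) and then the whole proof is routine. See the analogous reasoning for \autoref{lem:stenv-red:trivial-2} in the companion cases.
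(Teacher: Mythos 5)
Your overall strategy matches the paper's: the paper's own proof of this lemma is literally ``Trivial by induction on reductions of configuration,'' so your case analysis over the rules of \cref{fig:gtype:tc-red-rules} is the intended argument, and your treatment of the four base rules (\inferrule{\iruleTCtxOut}, \inferrule{\iruleTCtxIn}, \inferrule{\iruleTCtxCrash}, \inferrule{\iruleTCtxCrashDetect}) is correct: each performs a single update $\stEnvUpd{\stEnv}{\roleR}{\stT}$ at precisely the role $\roleR = \ltsSubject{\stEnvAnnotGenericSym}$, so both claims follow from the definition of typing-context update. Your observation that the hypothesis $\stEnv; \qEnv \stEnvMoveMaybeCrash \stEnvi; \qEnvi$ is not actually needed for the argument is also right.

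The one step you yourself flag as delicate is, however, exactly where the argument does not go through as written. In the \inferrule{\iruleTCtxRec} case you assert that ``a simple sub-induction shows'' the subject of the premise transition equals the unfolded role $\roleP$. Nothing in the rule forces this: the premise is an arbitrary transition of $\stEnvUpd{\stEnv}{\roleP}{\stT\subst{\stRecVar}{\stRec{\stRecVar}{\stT}}}; \qEnv$, and that context may admit a transition whose subject is some other role $\roleQ$ (derived, say, by \inferrule{\iruleTCtxOut} at $\roleQ$'s entry, which the unfolding did not touch). For such a derivation the conclusion fails syntactically: $\stEnvApp{\stEnvi}{\roleP}$ is the unfolding $\stT\subst{\stRecVar}{\stRec{\stRecVar}{\stT}}$ rather than $\stRec{\stRecVar}{\stT}$, even though $\roleP \neq \ltsSubject{\stEnvAnnotGenericSym}$. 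So the sub-induction you appeal to is false, and the gap must be closed another way: either read part (2) up to unfolding of recursive types (consistent with how the paper treats recursion elsewhere, cf.\ \autoref{lem:unfold-subtyping} and \autoref{prop:lt-lts-unfold-once}), or observe that any transition derivable through such a ``spurious'' application of \inferrule{\iruleTCtxRec} is also derivable without it, and restrict attention to derivations in which \inferrule{\iruleTCtxRec} is applied only when the rule above it actually inspects $\roleP$'s entry (in which case $\ltsSubject{\stEnvAnnotGenericSym} = \roleP$ does hold). Either repair is short, but one of them has to be made explicit; as it stands your proof (and, to be fair, the lemma as literally stated over all derivations) rests on a claim the rules do not support.
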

\begin{proof}
Trivial by induction on reductions of configuration.
\qedhere
\end{proof}

\begin{lem}
\label{lem:stenv-red:trivial-3}
 If \,$\stEnv; \qEnv \stEnvMoveGenAnnot \stEnvi; \qEnvi$, \,then\,
 for any $\roleP, \roleQ \in \dom{\stEnv}$ with
 $\ltsSubject{\stEnvAnnotGenericSym} \notin \setenum{\roleP, \roleQ}$, we have
 $\stEnvApp{\qEnv}{\roleP, \roleQ} = \stEnvApp{\qEnvi}{\roleP, \roleQ}$.
\end{lem}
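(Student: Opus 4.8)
\textbf{Proof plan for \cref{lem:stenv-red:trivial-3}.}
The plan is to proceed by induction on the derivation of the configuration transition $\stEnv; \qEnv \stEnvMoveGenAnnot \stEnvi; \qEnvi$, inspecting each rule in \cref{fig:gtype:tc-red-rules} in turn and checking that the queue between any pair of roles neither of which is the subject of the label is left untouched. This is the same shape of argument as \cref{lem:stenv-red:trivial-2}, so the reasoning is entirely by case analysis with no clever idea needed.

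First I would treat the three rules whose label has the form making the subject explicit. For $\inferrule{\iruleTCtxOut}$, the label is $\stEnvOutAnnotSmall{\roleP}{\roleQ}{\stChoice{\stLab[k]}{\tyGround[k]}}$ with subject $\roleP$, and the only queue modified is $\stEnvApp{\qEnv}{\roleP,\roleQ}$; so if $\ltsSubject{\stEnvAnnotGenericSym} = \roleP \notin \setenum{\roleP',\roleQ'}$ for the pair $\roleP',\roleQ'$ under consideration, then in particular $\roleP' \neq \roleP$, hence $(\roleP',\roleQ') \neq (\roleP,\roleQ)$ and the queue is unchanged. For $\inferrule{\iruleTCtxIn}$, the label is $\stEnvInAnnotSmall{\roleP}{\roleQ}{\stChoice{\stLab[k]}{\tyGround[k]}}$ with subject $\roleP$, and the only queue modified is $\stEnvApp{\qEnv}{\roleQ,\roleP}$; again $\roleP' \neq \roleP$ forces $(\roleP',\roleQ') \neq (\roleQ,\roleP)$. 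For $\inferrule{\iruleTCtxCrash}$, the label is $\ltsCrash{\mpS}{\roleP}$ with subject $\roleP$, and the update is $\stEnvUpd{\qEnv}{\cdot,\roleP}{\stQUnavail}$, i.e.\@ it only touches queues $\stEnvApp{\qEnv}{\roleR,\roleP}$ with receiver $\roleP$; since $\roleP' \neq \roleP$ and $\roleQ' \neq \roleP$, the pair $(\roleP',\roleQ')$ is not among them. For $\inferrule{\iruleTCtxCrashDetect}$, the queue environment is not modified at all ($\stEnvi; \qEnv$), so the claim is immediate. The rule $\inferrule{\iruleTCtxRec}$ is the sole inductive case: there $\stEnv; \qEnv \stEnvMoveGenAnnot \stEnvi; \qEnvi$ is derived from $\stEnvUpd{\stEnv}{\roleP}{\stT\subst{\stRecVar}{\stRec{\stRecVar}{\stT}}}; \qEnv \stEnvMoveGenAnnot \stEnvi; \qEnvi$ with the \emph{same} label $\stEnvAnnotGenericSym$ and \emph{same} queue environments $\qEnv, \qEnvi$, so the result follows directly from the induction hypothesis applied to the premise.

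The main obstacle here is essentially bookkeeping rather than any genuine difficulty: one must be careful to read off from each rule exactly which entries of the queue environment are modified and confirm that they all have either the subject as sender or the subject as receiver (or, in the crash case, a set of entries all sharing the crashed role as receiver), and that everything else is preserved by the definition of queue-environment update $\stEnvUpd{\qEnv}{\roleP,\roleQ}{\stQ}$ and of $\stEnvUpd{\qEnv}{\cdot,\roleQ}{\stQUnavail}$ from \cref{def:mpst-env}. Once the update notation is unfolded, each case is a one-line check.
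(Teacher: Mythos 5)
Your proposal is correct and takes exactly the approach of the paper, whose proof of this lemma is precisely this routine case analysis on the transition rules (dismissed there as ``trivial by induction on reductions of configuration''). One small slip worth fixing: in the $\inferrule{\iruleTCtxIn}$ case the modified queue is $\stEnvApp{\qEnv}{\roleQ,\roleP}$ with the subject $\roleP$ occurring as the \emph{receiver} (second component), so the disequality that rules out $(\roleP',\roleQ')=(\roleQ,\roleP)$ is $\roleQ'\neq\roleP$ rather than $\roleP'\neq\roleP$ --- both hold by hypothesis, so the argument goes through unchanged.
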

\begin{proof}
Trivial by induction on reductions of configuration.
\qedhere
\end{proof}

\begin{lem}[Inversion of Typing Context Reduction]
\label{lem:stenv-red:inversion-basic}
  ~
  \begin{enumerate}
    \item
      If \,$
        \stEnv; \qEnv
        \stEnvMoveOutAnnot{\roleP}{\roleQ}{\stChoice{\stLab[k]}{\tyGround[k]}}
        \stEnvi; \qEnvi
      $\,, then \,
      $
        \unfoldOne{\stEnvApp{\stEnv}{\roleP}} =
        \stIntSum{\roleQ}{i \in I}{\stChoice{\stLab[i]}{\tyGround[i]} \stSeq \stT[i]}%
      $\,, \,$k \in I$\,, and \,
      $
        \stEnvApp{\stEnvi}{\roleP} = \nolinebreak \stT[k]
      $;
    \item
      If \,$
        \stEnv; \qEnv
        \stEnvMoveInAnnot{\roleQ}{\roleP}{\stChoice{\stLab[k]}{\tyGround[k]}}
        \stEnvi; \qEnvi
      $\,, then \,
      $
        \unfoldOne{\stEnvApp{\stEnv}{\roleQ}} =
        \stExtSum{\roleP}{i \in I}{\stChoice{\stLab[i]}{\tyGround[i]} \stSeq \stT[i]}%
      $\,, \,$k \in I$\,, and \,
      $
        \stEnvApp{\stEnvi}{\roleQ} = \nolinebreak \stT[k]
      $.
  \end{enumerate}
\end{lem}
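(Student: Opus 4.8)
\textbf{Proof plan for \Cref{lem:stenv-red:inversion-basic} (Inversion of Typing Context Reduction).}
The plan is to prove both statements by induction on the derivation of the configuration transition $\stEnv; \qEnv \stEnvMoveGenAnnot \stEnvi; \qEnvi$ as given by \Cref{def:mpst-env-reduction} and the rules in \Cref{fig:gtype:tc-red-rules}. For statement (1), the transition label is fixed to be an output $\stEnvOutAnnotSmall{\roleP}{\roleQ}{\stChoice{\stLab[k]}{\tyGround[k]}}$; the only rules in \Cref{fig:gtype:tc-red-rules} that can produce an output label are $\inferrule{\iruleTCtxOut}$ and $\inferrule{\iruleTCtxRec}$ (the rules $\inferrule{\iruleTCtxIn}$, $\inferrule{\iruleTCtxCrash}$, $\inferrule{\iruleTCtxCrashDetect}$ yield incompatible labels). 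For statement (2), symmetrically the only applicable rules are $\inferrule{\iruleTCtxIn}$ and $\inferrule{\iruleTCtxRec}$.

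First I would handle the base case $\inferrule{\iruleTCtxOut}$ for statement (1): there the premises directly give $\stEnvApp{\stEnv}{\roleP} = \stIntSum{\roleQ}{i \in I}{\stChoice{\stLab[i]}{\tyGround[i]} \stSeq \stT[i]}$ with $k \in I$, and the conclusion sets $\stEnvi = \stEnvUpd{\stEnv}{\roleP}{\stT[k]}$, so $\stEnvApp{\stEnvi}{\roleP} = \stT[k]$; since $\stEnvApp{\stEnv}{\roleP}$ is already an internal choice (not a $\stRec{}{}$), $\unfoldOne{\stEnvApp{\stEnv}{\roleP}} = \stEnvApp{\stEnv}{\roleP}$ and we are done. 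For the inductive case $\inferrule{\iruleTCtxRec}$: here $\stEnvApp{\stEnv}{\roleP} = \stRec{\stRecVar}{\stT}$ for the subject role $\roleP$ (note the subject of the label is unchanged by $\inferrule{\iruleTCtxRec}$, cf.\ \Cref{lem:stenv-red:trivial-2}) and $\stEnvUpd{\stEnv}{\roleP}{\stT\subst{\stRecVar}{\stRec{\stRecVar}{\stT}}}; \qEnv \stEnvMoveGenAnnot \stEnvi; \qEnvi$. Applying the induction hypothesis to this shorter derivation gives $\unfoldOne{\stT\subst{\stRecVar}{\stRec{\stRecVar}{\stT}}} = \stIntSum{\roleQ}{i \in I}{\stChoice{\stLab[i]}{\tyGround[i]} \stSeq \stT[i]}$, $k \in I$, and the right value for $\stEnvApp{\stEnvi}{\roleP}$. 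It then remains to observe that $\unfoldOne{\stRec{\stRecVar}{\stT}} = \unfoldOne{\stT\subst{\stRecVar}{\stRec{\stRecVar}{\stT}}}$ by definition of $\unfoldOne{\cdot}$ (using guardedness/contractiveness of recursive local types, as recalled at the start of \Cref{sec:gtype} in the appendix), which closes the case. Statement (2) is entirely dual: base case $\inferrule{\iruleTCtxIn}$ reads off the external-choice shape of $\stEnvApp{\stEnv}{\roleQ}$ and the update $\stEnvApp{\stEnvi}{\roleQ} = \stT[k]$ directly from the rule premises, and the $\inferrule{\iruleTCtxRec}$ case proceeds by the same unfolding argument.

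The only mild subtlety — and the one point I would be careful to spell out — is the interaction between $\inferrule{\iruleTCtxRec}$ and the single application of $\unfoldOne{\cdot}$ in the statement: $\inferrule{\iruleTCtxRec}$ may in principle be applied several times in a row before an $\inferrule{\iruleTCtxOut}$/$\inferrule{\iruleTCtxIn}$ rule fires, peeling off nested $\mu$-binders, whereas $\unfoldOne{\cdot}$ only unfolds the top $\mu$ once. This is reconciled by the definition $\unfoldOne{\stRec{\stRecVar}{\stT}} = \unfoldOne{\stT\subst{\stRecVar}{\stRec{\stRecVar}{\stT}}}$, which itself recurses until a non-$\mu$ prefix is exposed, so a single $\unfoldOne{\cdot}$ already collapses any finite stack of $\mu$-binders — precisely matching the repeated $\inferrule{\iruleTCtxRec}$ steps. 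Contractiveness guarantees this recursion terminates. No other rule of \Cref{fig:gtype:tc-red-rules} contributes, so the induction is complete. I expect this to be a short, routine lemma whose main (minor) obstacle is just getting the $\unfoldOne{\cdot}$ bookkeeping exactly right.
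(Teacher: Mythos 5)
Your proposal is correct and takes essentially the same route as the paper, whose proof is the one-liner ``by applying and inverting $\inferrule{\iruleTCtxOut}$ and $\inferrule{\iruleTCtxIn}$'': your induction on the derivation, with an explicit $\inferrule{\iruleTCtxRec}$ case and the observation that $\unfoldOne{\cdot}$ collapses any finite stack of $\mu$-binders, is exactly the bookkeeping that this inversion leaves implicit. One small imprecision: $\inferrule{\iruleTCtxRec}$ may unfold the entry of a role \emph{other} than $\ltsSubject{\stEnvAnnotGenericSym}$, a subcase your case analysis does not name — but it is immediate, since the subject's entry is then unchanged by the update and the induction hypothesis applies verbatim.
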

\begin{proof}
By applying and inverting \inferrule{\iruleTCtxOut} and \inferrule{\iruleTCtxIn}.
\qedhere
\end{proof}

\begin{lem}[Determinism of Configuration Reduction]
\label{lem:stenv-red:det}
  If \,$\stEnv; \qEnv \stEnvMoveGenAnnot \stEnvi; \qEnvi$ \,and\, \linebreak
  $\stEnv; \qEnv \stEnvMoveGenAnnot
  \stEnvii; \qEnvii$, then $\stEnvi = \stEnvii$ and $\qEnvi = \qEnvii$.
\end{lem}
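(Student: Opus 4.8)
The statement to prove is \autoref{lem:stenv-red:det}: configuration reduction $\stEnvMoveGenAnnot$ is deterministic once the transition label $\stEnvAnnotGenericSym$ is fixed. The plan is to proceed by induction on the derivation of the first transition $\stEnv; \qEnv \stEnvMoveGenAnnot \stEnvi; \qEnvi$, using the rules of \cref{fig:gtype:tc-red-rules}, and in each case invert the second derivation $\stEnv; \qEnv \stEnvMoveGenAnnot \stEnvii; \qEnvii$ to show the two reducts must coincide. The key observation that makes this work is that the label $\stEnvAnnotGenericSym$ already pins down almost everything: its subject $\ltsSubject{\stEnvAnnotGenericSym}$ identifies the single role whose local type changes, and the rest of the label (the partner role, the chosen branch label $\stChoice{\stLab}{\tyGround}$, or the crashed role for $\ltsCrDe{\mpS}{\roleP}{\roleQ}$ / $\ltsCrash{\mpS}{\roleP}$) determines which rule was applied and how the queue environment is updated.

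The case analysis runs as follows. First I would dispatch \inferrule{\iruleTCtxRec}: if the first derivation ends in \inferrule{\iruleTCtxRec} at role $\roleP$, then $\stEnvApp{\stEnv}{\roleP} = \stRec{\stRecVar}{\stT}$; since this is not an internal choice, external choice, $\stEnd$, or $\stStop$, the only rule that could possibly derive the second transition with the same starting context is also \inferrule{\iruleTCtxRec} at $\roleP$ (note $\ltsSubject{\stEnvAnnotGenericSym} = \roleP$ forces the same role), so both reduce via the premise on $\stEnvUpd{\stEnv}{\roleP}{\stT\subst{\stRecVar}{\stRec{\stRecVar}{\stT}}}; \qEnv$, and the inductive hypothesis applies. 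For \inferrule{\iruleTCtxOut}: here $\unfoldOne{\stEnvApp{\stEnv}{\roleP}}$ is an internal choice towards $\roleQ$ (possibly after some \inferrule{\iruleTCtxRec} unfoldings, which are themselves deterministic by the IH), the label $\stEnvOutAnnotSmall{\roleP}{\roleQ}{\stChoice{\stLab[k]}{\tyGround[k]}}$ fixes the index $k$ by distinctness of branch labels, and by \autoref{lem:stenv-red:inversion-basic}(1) we get $\stEnvApp{\stEnvi}{\roleP} = \stT[k] = \stEnvApp{\stEnvii}{\roleP}$; \autoref{lem:stenv-red:trivial-2} gives agreement on all other roles, and the queue update $\stEnvUpd{\qEnv}{\roleP,\roleQ}{\stQCons{\stEnvApp{\qEnv}{\roleP,\roleQ}}{\stQMsg{\stLab[k]}{\tyGround[k]}}}$ is functionally determined, with \autoref{lem:stenv-red:trivial-3} handling the untouched queues, so $\qEnvi = \qEnvii$. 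The case \inferrule{\iruleTCtxIn} is symmetric, using \autoref{lem:stenv-red:inversion-basic}(2); here one should note that the premise already reads off a uniquely determined head message from $\stEnvApp{\qEnv}{\roleQ,\roleP}$, so the new queue $\stQi$ is forced. The crash case \inferrule{\iruleTCtxCrash} is immediate: the label $\ltsCrash{\mpS}{\roleP}$ determines $\roleP$, the new entry is $\stStop$, and all receiving queues of $\roleP$ become $\stQUnavail$ — nothing is left to choose. Finally \inferrule{\iruleTCtxCrashDetect}: the label $\ltsCrDe{\mpS}{\roleQ}{\roleP}$ fixes both $\roleQ$ and $\roleP$; since $\stEnvApp{\stEnv}{\roleQ}$ is an external choice from $\roleP$ with a (unique, by the singleton-$\stCrashLab$ prohibition and label distinctness) branch $k$ with $\stLab[k] = \stCrashLab$, the new entry $\stT[k]$ is forced and $\qEnv$ is unchanged.

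The main obstacle, such as it is, is bookkeeping rather than conceptual: one must be careful that \inferrule{\iruleTCtxRec} can be interleaved arbitrarily with the other rules (since a local type may require several unfoldings before exposing a choice prefix), so the induction should really be set up on the height of the derivation tree, and one must check that the starting context $\stEnv$ together with the fixed label $\stEnvAnnotGenericSym$ cannot be the conclusion of two structurally different rule applications. This last point is where I would be most careful: for instance, one must rule out that the same label could arise both from \inferrule{\iruleTCtxIn} (receiving a buffered message despite sender crash) and \inferrule{\iruleTCtxCrashDetect} — but this is impossible, because \inferrule{\iruleTCtxIn} produces an input label $\stEnvInAnnotSmall{\roleQ}{\roleP}{\stChoice{\stLab}{\tyGround}}$ whereas \inferrule{\iruleTCtxCrashDetect} produces the distinct form $\ltsCrDe{\mpS}{\roleQ}{\roleP}$, so the fixed label already distinguishes the two. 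Once these routine disjointness checks are in place, the lemma follows by a straightforward induction.
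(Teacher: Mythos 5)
Your proof is correct and follows essentially the same route as the paper, which dispatches this lemma with a one-line ``trivial by induction on reductions of configuration''; your case analysis (label-driven rule disjointness, branch indices fixed by label distinctness, \autoref{lem:stenv-red:trivial-2}/\autoref{lem:stenv-red:trivial-3} for the untouched entries) is just the fleshed-out version of that induction.
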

\begin{proof}
Trivial by induction on reductions of configuration.
\qedhere
\end{proof}

\begin{lem}
\label{lem:stenv-queue-red:det}
 If \,$\stEnv[1]; \qEnv \stEnvMoveGenAnnot \stEnvi[1]; \qEnvi[1]$ \,and\,
  $\stEnv[2]; \qEnv \stEnvMoveGenAnnot
  \stEnvi[2]; \qEnvi[2]$, then $\qEnvi[1]= \qEnvi[2]$.
\end{lem}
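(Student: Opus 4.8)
\textbf{Proof plan for \Cref{lem:stenv-queue-red:det}.}
The statement asserts that the effect of a configuration transition on the queue environment depends only on the transition label $\stEnvAnnotGenericSym$ and the input queue environment $\qEnv$, not on the typing context. The plan is to proceed by a case analysis on the label $\stEnvAnnotGenericSym$, using the inversion lemmas for configuration reduction (notably \autoref{lem:stenv-red:inversion-basic}) together with a reading of the four reduction rules in \cref{fig:gtype:tc-red-rules} that isolates exactly how each rule mutates $\qEnv$. The key observation is that in each rule the update to the queue environment is written as an explicit function of $\qEnv$ and of data recoverable purely from the label: $\inferrule{\iruleTCtxOut}$ appends $\stQMsg{\stLab[k]}{\tyGround[k]}$ to $\stEnvApp{\qEnv}{\roleP,\roleQ}$, with $\roleP$, $\roleQ$, $\stLab[k]$, $\tyGround[k]$ all appearing in the label $\stEnvOutAnnotSmall{\roleP}{\roleQ}{\stChoice{\stLab[k]}{\tyGround[k]}}$; $\inferrule{\iruleTCtxIn}$ pops the head of $\stEnvApp{\qEnv}{\roleQ,\roleP}$, with the roles again read off the label; $\inferrule{\iruleTCtxCrash}$ sets $\stEnvUpd{\qEnv}{\cdot,\roleP}{\stQUnavail}$, determined by $\ltsCrash{\mpS}{\roleP}$; and $\inferrule{\iruleTCtxCrashDetect}$ leaves $\qEnv$ unchanged.

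Concretely, I would first invoke \autoref{lem:stenv-red:inversion-basic} (and an analogous trivial inversion for the crash and crash-detection rules, or just direct inversion of \cref{fig:gtype:tc-red-rules}) to conclude that both derivations $\stEnv[1];\qEnv \stEnvMoveGenAnnot \stEnvi[1];\qEnvi[1]$ and $\stEnv[2];\qEnv \stEnvMoveGenAnnot \stEnvi[2];\qEnvi[2]$ are obtained by the \emph{same} rule --- since the top-level shape of $\stEnvAnnotGenericSym$ (an output label, an input label, a crash label, or a crash-detection label) fixes which of the four rules applies. Then, for each of the four cases, I would read the queue-update clause from the rule and observe that it is a function of $\stEnvAnnotGenericSym$ and $\qEnv$ alone, so that $\qEnvi[1] = \qEnvi[2]$ follows immediately. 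The only mildly delicate point is $\inferrule{\iruleTCtxIn}$: here the rule's side condition $\stEnvApp{\qEnv}{\roleQ,\roleP} = \stQCons{\stQMsg{\stLabi}{\tyGroundi}}{\stQi} \neq \stQUnavail$ forces $\stQi$ to be the unique tail of the (available, non-empty) queue $\stEnvApp{\qEnv}{\roleQ,\roleP}$, which is shared between the two derivations; since a sequence has a unique head and tail, $\stQi$ is the same in both, and the resulting $\qEnvi[1] = \stEnvUpd{\qEnv}{\roleQ,\roleP}{\stQi} = \qEnvi[2]$.

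I do not expect any substantive obstacle; the lemma is essentially a bookkeeping refinement of \autoref{lem:stenv-red:det} (determinism of full configuration reduction) restricted to the queue component, and the proof is ``trivial by induction on reductions of configuration'' as in the neighbouring lemmas. If one wanted to be maximally careful, the induction is over the derivation of the transition rather than a flat case analysis, to absorb the congruence-like rule $\inferrule{\iruleTCtxRec}$ --- there, unfolding $\stEnvApp{\stEnv[j]}{\roleP} = \stRec{\stRecVar}{\stT}$ reduces to the inductive hypothesis applied to $\stEnvUpd{\stEnv[j]}{\roleP}{\stT\subst{\stRecVar}{\stRec{\stRecVar}{\stT}}}; \qEnv$, whose queue environment is still $\qEnv$, so the hypothesis yields $\qEnvi[1] = \qEnvi[2]$ directly. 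Closing each of the five rule cases in this way completes the argument.
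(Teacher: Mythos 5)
Your proposal is correct and matches the paper's (unstated) argument: the paper dismisses this as ``trivial by induction on reductions of configuration,'' and your case analysis on the rule applied --- observing that each rule's queue update is a function of the label and the shared $\qEnv$ alone, with \inferrule{\iruleTCtxRec} absorbed by the inductive hypothesis --- is exactly the detail being elided. No gaps.
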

\begin{proof}
Trivial by induction on reductions of configuration.
\qedhere
\end{proof}

\subsection{Relating Semantics}%
\label{sec:proof:relating}
\begin{prop}
\label{prop:gt-lts-unfold-once}
  $\stEnvAssoc{
    \gtWithCrashedRoles{\rolesC}{\gtRec{\gtRecVar}{\gtG}}
  }{\stEnv; \qEnv}{\rolesR}$
  \,if and only if\,
 $\stEnvAssoc{
    \gtWithCrashedRoles{\rolesC}{\gtG{}\subst{\gtRecVar}{\gtRec{\gtRecVar}{\gtG}}}
    }{\stEnv; \qEnv}{\rolesR}$.
\end{prop}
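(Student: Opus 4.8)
The statement is an ``unfold-once'' invariance for the association relation $\stEnvAssoc{\cdot}{\cdot}{\rolesR}$ of~\autoref{def:assoc}. The plan is to unpack the definition of association on both sides and check, clause by clause, that every requirement in~\autoref{def:assoc} is stable under replacing $\gtRec{\gtRecVar}{\gtG}$ by $\gtG\subst{\gtRecVar}{\gtRec{\gtRecVar}{\gtG}}$. The key observation driving everything is that the two global types have the same \emph{active} and \emph{crashed} roles --- this is exactly~\autoref{def:active_crashed_roles}, which defines $\gtRoles{\gtRec{\gtRecVar}{\gtG}} = \gtRoles{\gtG\subst{\gtRecVar}{\gtRec{\gtRecVar}{\gtG}}}$ and likewise for $\gtRolesCrashed{\cdot}$. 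So the role-domain conditions in~\cref{item:assoc:alive-sub} and~\cref{item:assoc:crash-stop} are literally identical on both sides, and~\cref{item:assoc:end} does not mention the global type at all.

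First I would handle~\cref{item:assoc:stenv}. The split $\stEnv = \stEnv[\gtG] \stEnvComp \stEnv[\stStopSym] \stEnvComp \stEnv[\stEnd]$ is determined by the role sets and the local-type values, not by the syntactic shape of the global type, so the same split works for both. For~\cref{item:assoc:alive-sub} the only nontrivial point is the subtyping condition $\stEnvApp{\stEnv}{\roleP} \stSub \gtProj[\rolesR]{\gtG'}{\roleP}$: here I would invoke the fact that projection commutes with unfolding, i.e.\ $\gtProj[\rolesR]{(\gtRec{\gtRecVar}{\gtG})}{\roleP}$ and $\gtProj[\rolesR]{(\gtG\subst{\gtRecVar}{\gtRec{\gtRecVar}{\gtG}})}{\roleP}$ are related by subtyping in both directions --- by~\autoref{def:global-proj} the former is $\stRec{\stRecVar}{(\gtProj[\rolesR]{\gtG}{\roleP})}$ (when $\roleP$ occurs, else $\stEnd$ and then~\autoref{lem:not-in-roles-end} applies on the unfolded side too), and by~\autoref{lem:unfold-subtyping} any local type is subtype-equivalent to its unfolding, so~\autoref{transitive-subtyping} closes the gap. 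For~\cref{item:assoc:crash-stop} nothing changes since it only constrains $\stEnv$ on $\dom{\stEnv[\stStopSym]} = \rolesC$.

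Next I would handle~\cref{item:assoc:qenv}, the queue-association conditions. This is where the case analysis lives: the clauses in~\cref{item:assoc:queue} branch on whether $\gtG'$ is $\gtEnd$, a recursion $\gtRec{\gtRecVari}{\gtGii}$, a transmission, or a transmission en route. The point is that $\gtRec{\gtRecVar}{\gtG}$ falls under the ``recursion'' clause (1b), which requires all queues between non-crashed roles to be empty and the receiving queues of crashed roles to be unavailable --- and this is a condition that depends only on $\rolesC$ and $\qEnv$, not on the body of the recursion. On the other side, $\gtG\subst{\gtRecVar}{\gtRec{\gtRecVar}{\gtG}}$ has some top-level shape after substitution; since the original recursion was guarded/contractive, its unfolding begins with a genuine prefix (a transmission, possibly further nested recursions resolved) or is $\gtEnd$. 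In each case I would verify the relevant clause reduces back to ``all queues empty except unavailable ones for crashed roles'', using induction on the number of leading recursion binders peeled off, together with the fact that clause (1c)'s ``$\roleQ$ live $\implies$ queue empty'' and the recursive appeal ``$\forall i: \qEnv$ associated with $\gtWithCrashedRoles{\rolesC}{\gtGi[i]}$'' bottom out at the same empty-queue requirement. For $\gtG' = \gtEnd$ (which can only happen via~\autoref{lem:not-in-roles-end}, i.e.\ no roles at all) both (1b) clauses coincide trivially.

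\textbf{Main obstacle.} The delicate part is~\cref{item:assoc:alive-sub}: I need the projection-commutes-with-unfolding fact in the precise form that yields subtyping both ways. In the ``otherwise'' branch of projection ($\roleP \notin \gtG$ and no free variables), $\gtProj[\rolesR]{(\gtRec{\gtRecVar}{\gtG})}{\roleP} = \stEnd$ whereas $\gtProj[\rolesR]{(\gtG\subst{\gtRecVar}{\gtRec{\gtRecVar}{\gtG}})}{\roleP}$ need not be syntactically $\stEnd$ --- but it must still be $\stEnd$ because $\roleP \notin \gtRoles{\gtG\subst{\gtRecVar}{\gtRec{\gtRecVar}{\gtG}}} = \gtRoles{\gtRec{\gtRecVar}{\gtG}}$ and $\roleP$ is not crashed, so~\autoref{lem:not-in-roles-end} gives $\stEnd$ on both sides. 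The bookkeeping to make sure the ``$\roleP \in \gtG$ or $\fv{\ldots} \neq \emptyset$'' side-condition in~\autoref{def:global-proj} is threaded correctly through the substitution is the one place where I expect to need care; everything else is routine unfolding of~\autoref{def:assoc}.
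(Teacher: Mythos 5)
Your proposal is correct and takes essentially the same route as the paper, whose entire proof is a one-line appeal to \autoref{lem:unfold-subtyping}: the only substantive clause is \cref{item:assoc:alive-sub}, which you discharge exactly as intended via subtype-equivalence of a recursive local type and its unfolding (plus the implicit fact that projection commutes with substitution of the recursion variable), while the remaining clauses are invariant because $\gtRoles{\cdot}$ and $\gtRolesCrashed{\cdot}$ are defined through the unfolding and the queue clauses bottom out at the same all-queues-empty condition. Your clause-by-clause verification, including the queue bookkeeping, is simply a more explicit rendering of what the paper leaves implicit.
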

\begin{proof}
  By \autoref{lem:unfold-subtyping}.
  \qedhere
\end{proof}

\begin{prop}\label{prop:lt-lts-unfold-once}
  If\,
  $\stEnvAssoc{
    \gtWithCrashedRoles{\rolesC}{\gtG}
  }{\stEnv; \qEnv}{\rolesR}$\,
  and
  \,$\stEnvApp{\stEnv}{\roleP} = \stRec{\stRecVar}{\stT}$,
  \,then we have\, \linebreak
  $\stEnvAssoc{
    \gtWithCrashedRoles{\rolesC}{\gtG}
  }{
      \stEnvUpd{\stEnv}{\roleP}{
        \stT\subst{\stRecVar}{\stRec{\stRecVar}{\stT}}%
      }
  ; \qEnv}{\rolesR}$.
\end{prop}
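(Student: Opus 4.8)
The plan is to prove this by unfolding the definition of association (\autoref{def:assoc}) and observing that the only component touched by replacing $\stEnvApp{\stEnv}{\roleP} = \stRec{\stRecVar}{\stT}$ with $\stEnvApp{\stEnv}{\roleP} = \stT\subst{\stRecVar}{\stRec{\stRecVar}{\stT}}$ is the condition in clause~\cref{item:assoc:stenv}, and within it only clause~\ref{item:assoc:alive-sub}. The queue conditions (clause~\cref{item:assoc:qenv}) do not mention $\stEnv$ at all, so they are unaffected. The domain-related conditions ($\dom{\stEnv[\gtG]} = \gtRoles{\gtG}$, $\dom{\stEnv[\stStopSym]} = \rolesC$) are also unaffected since unfolding does not change the domain. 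So the real content is: unfolding the local type assigned to one role preserves the subtyping relation against that role's projection.

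First I would case-split on which sub-context $\roleP$ belongs to. If $\roleP \in \dom{\stEnv[\stStopSym]}$, then $\stEnvApp{\stEnv}{\roleP} = \stStop$, which is not of the form $\stRec{\stRecVar}{\stT}$, so this case is vacuous; likewise if $\roleP \in \dom{\stEnv[\stEnd]}$, then $\stEnvApp{\stEnv}{\roleP} = \stEnd \neq \stRec{\stRecVar}{\stT}$, also vacuous. Hence $\roleP \in \dom{\stEnv[\gtG]} = \gtRoles{\gtG}$, and from \ref{item:assoc:alive-sub} we have $\stEnvApp{\stEnv}{\roleP} = \stRec{\stRecVar}{\stT} \stSub \gtProj[\rolesR]{\gtG}{\roleP}$. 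The goal reduces to showing $\stT\subst{\stRecVar}{\stRec{\stRecVar}{\stT}} \stSub \gtProj[\rolesR]{\gtG}{\roleP}$, i.e.\ $\unfoldOne{\stEnvApp{\stEnv}{\roleP}} \stSub \gtProj[\rolesR]{\gtG}{\roleP}$.

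The key step is then an appeal to \autoref{lem:unfold-subtyping}, which gives $\unfoldOne{\stT} \stSub \stT$ for any closed well-guarded $\stT$ — here applied to $\stEnvApp{\stEnv}{\roleP}$, yielding $\unfoldOne{\stEnvApp{\stEnv}{\roleP}} = \stT\subst{\stRecVar}{\stRec{\stRecVar}{\stT}} \stSub \stRec{\stRecVar}{\stT} = \stEnvApp{\stEnv}{\roleP}$ — followed by transitivity of subtyping (\autoref{transitive-subtyping}) to compose with $\stEnvApp{\stEnv}{\roleP} \stSub \gtProj[\rolesR]{\gtG}{\roleP}$. Since every other role $\roleQ \neq \roleP$ retains $\stEnvApp{\stEnvUpd{\stEnv}{\roleP}{\stT\subst{\stRecVar}{\stRec{\stRecVar}{\stT}}}}{\roleQ} = \stEnvApp{\stEnv}{\roleQ}$, their clauses carry over verbatim, and the same splitting $\stEnv = \stEnv[\gtG] \stEnvComp \stEnv[\stStopSym] \stEnvComp \stEnv[\stEnd]$ (with $\stEnv[\gtG]$ updated at $\roleP$) witnesses the association. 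The reverse direction is symmetric, using the other half of \autoref{lem:unfold-subtyping} ($\stT \stSub \unfoldOne{\stT}$) and transitivity again.

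I do not anticipate a substantial obstacle here — the statement is essentially bookkeeping on top of \autoref{lem:unfold-subtyping}. The only mild subtlety is making sure the subtyping rules allow this: we are working with $\stEnvApp{\stEnv}{\roleP} \stSub \gtProj[\rolesR]{\gtG}{\roleP}$ where the \emph{projection} may itself be a recursive type or an unfolding thereof, but this is irrelevant because we only manipulate the left-hand side via \inferrule{\iruleStSubRecL}/\inferrule{\iruleStSubRecR} as packaged in \autoref{lem:unfold-subtyping}, and those lemmas make no demands on the shape of the right-hand side. Care is also needed that the well-guardedness hypothesis required by \autoref{lem:unfold-subtyping} is met, which holds since all local types in the theory are assumed contractive.
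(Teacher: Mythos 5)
Your proof is correct and follows the same route as the paper: the paper's entire proof of this proposition is an appeal to \autoref{lem:unfold-subtyping}, and your write-up simply spells out the bookkeeping (only clause \ref{item:assoc:alive-sub} for $\roleP$ is affected, combine the unfolding-subtyping lemma with transitivity). The only nitpick is that $\unfoldOne{\cdot}$ denotes the \emph{full} unfolding, so your identification of $\unfoldOne{\stEnvApp{\stEnv}{\roleP}}$ with the one-step unfolding is not literally exact for nested $\mu$'s, but the required subtyping $\stT\subst{\stRecVar}{\stRec{\stRecVar}{\stT}} \stSub \stRec{\stRecVar}{\stT}$ still follows directly from \inferrule{\iruleStSubRecR} and reflexivity.
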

\begin{proof}
  By \autoref{lem:unfold-subtyping}.
  \qedhere
\end{proof}

\begin{prop}
\label{prop:gt-lts-unfold}
  $\stEnvAssoc{\gtWithCrashedRoles{\rolesC}{\gtG}}{\stEnv; \qEnv}{\rolesR}$
  \,if and only if\,
  $\stEnvAssoc{\gtWithCrashedRoles{\rolesC}{\unfoldOne{\gtG}}}{\stEnv; \qEnv}{\rolesR}$.
\end{prop}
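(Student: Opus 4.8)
The statement asserts that the association relation $\stEnvAssoc{\gtWithCrashedRoles{\rolesC}{\gtG}}{\stEnv; \qEnv}{\rolesR}$ is invariant under (one step of) unfolding of the top-level recursion in $\gtG$. Observe that $\unfoldOne{\gtG}$ differs from $\gtG$ only when $\gtG = \gtRec{\gtRecVar}{\gtGi}$, in which case $\unfoldOne{\gtG} = \unfoldOne{\gtGi\subst{\gtRecVar}{\gtRec{\gtRecVar}{\gtGi}}}$; in all other cases $\unfoldOne{\gtG} = \gtG$ and the statement is trivial by reflexivity of the biconditional. So the plan is to reduce to the recursive case and then iterate the single-step version.

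\textbf{Key steps.} First I would dispatch the non-recursive case, where $\unfoldOne{\gtG} = \gtG$ syntactically and there is nothing to prove. Second, for $\gtG = \gtRec{\gtRecVar}{\gtGi}$, I would invoke \autoref{prop:gt-lts-unfold-once} to obtain
\[
  \stEnvAssoc{\gtWithCrashedRoles{\rolesC}{\gtRec{\gtRecVar}{\gtGi}}}{\stEnv; \qEnv}{\rolesR}
  \iff
  \stEnvAssoc{\gtWithCrashedRoles{\rolesC}{\gtGi\subst{\gtRecVar}{\gtRec{\gtRecVar}{\gtGi}}}}{\stEnv; \qEnv}{\rolesR}.
\]
Third, since $\unfoldOne{\cdot}$ applied to a $\mu$-type repeatedly substitutes until the outermost construct is no longer a recursion, and since contractiveness (guardedness) of $\gtG$ guarantees that this process terminates after finitely many substitutions, I would conclude by a finite induction on the number of nested top-level $\mu$-binders in $\gtG$: each substitution step preserves the association by \autoref{prop:gt-lts-unfold-once}, and after finitely many steps we reach exactly $\unfoldOne{\gtG}$. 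One small point to handle cleanly is that $\unfoldOne{\gtGi\subst{\gtRecVar}{\gtRec{\gtRecVar}{\gtGi}}}$ may itself begin with another $\mu$ (as in $\gtRec{\gtRecVar}{\gtRec{\gtRecVari}{\gtGii}}$), so the induction is genuinely on the depth of the leading chain of recursion binders rather than a single application of \autoref{prop:gt-lts-unfold-once}; guardedness rules out infinite chains such as $\gtRec{\gtRecVar}{\gtRecVar}$, ensuring the chain is finite.

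\textbf{Main obstacle.} The only subtlety — and it is minor — is making the inductive measure precise: I would define the measure as the length of the maximal prefix of nested $\mu$-abstractions at the head of $\gtG$ (well-defined and finite by contractiveness, \cf the unfolding discussion at the start of Appendix~\S\ref{sec:proof:semantics:gty}), and note that each application of \autoref{prop:gt-lts-unfold-once} strictly decreases this measure while the substitution it performs is exactly the one performed by $\unfoldOne{\cdot}$. Everything else is bookkeeping: the two directions of the biconditional follow symmetrically since \autoref{prop:gt-lts-unfold-once} is itself stated as an "if and only if".
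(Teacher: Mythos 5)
Your proposal is correct and matches the paper's own argument, which is simply ``by applying \autoref{prop:gt-lts-unfold-once} as many times as necessary''; your case split on the non-recursive case and the induction on the length of the leading chain of $\mu$-binders (finite by contractiveness) just makes that iteration precise. No issues.
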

\begin{proof}
  By applying \autoref{prop:gt-lts-unfold-once} as many times as necessary.
\end{proof}

\begin{lem}[Inversion of Projection]
\label{lem:inv-proj}
  Given a local type $\stS$, which is a subtype of projection from a global type
  $\gtG$ on a role $\roleP$ with respect to a set of reliable roles $\rolesR$,
  \ie $\stS \stSub (\gtProj[\rolesR]{\gtG}{\roleP})$,
  then:
  \begin{enumerate}
    \item
      \label{item:proj-inv:send}
      If\;
      $\unfoldOne{\stS}
      =
      \stIntSum{\roleQ}{i \in I}{\stChoice{\stLab[i]}{\tyGround[i]} \stSeq
        \stSi[i]}$, then either
        \begin{enumerate}
        \item
          $\unfoldOne{\gtG} =
            \gtComm{\roleP}{\roleQMaybeCrashed}{i \in I'}{\gtLab[i]}{\tyGroundi[i]}{\gtG[i]}$,
          where
          $I \subseteq I'$, and
          for all $i \in I$:
          $\stLab[i] = \gtLab[i]$, \linebreak
          $\stSi[i] \stSub (\gtProj[\rolesR]{\gtG[i]}{\roleP})$,
          and
          $\tyGround[i] = \tyGroundi[i]$;
          or,
        \item
          $\unfoldOne{\gtG} =
            \gtComm{\roleS}{\roleTMaybeCrashed}{j \in J}{\gtLab[j]}{\tyGroundi[j]}{\gtG[j]}$,
          or
          $\unfoldOne{\gtG} =
            \gtCommTransit{\roleSMaybeCrashed}{\roleT}{j \in
            J}{\gtLab[j]}{\tyGroundi[j]}{\gtG[j]}{k}$,
          where for all $j \in J$:
          $\stS \stSub (\gtProj[\rolesR]{\gtG[j]}{\roleP})$,
          with $\roleP \neq \roleS$ and $\roleP \neq \roleT$.
        \end{enumerate}
    \item
      \label{item:proj-inv:recv}
      If\;
      $\unfoldOne{\stS}
      =
      \stExtSum{\roleQ}{i \in I}{\stChoice{\stLab[i]}{\tyGround[i]} \stSeq
        \stSi[i]}$, then either
      \begin{enumerate}
        \item\label{item:proj-inv:recv-equal}
          $\unfoldOne{\gtG} =
            \gtComm{\roleQ}{\rolePMaybeCrashed}{i \in I'}{\gtLab[i]}{\tyGroundi[i]}{\gtG[i]}$,
          or
          $\unfoldOne{\gtG} =
            \gtCommTransit{\roleQMaybeCrashed}{\roleP}{i \in
            I'}{\gtLab[i]}{\tyGroundi[i]}{\gtG[i]}{j}$,
          where \linebreak
          $I' \subseteq I$, and
          for all $i \in I'$:
          $\stLab[i] = \gtLab[i]$,
          $\stSi[i] \stSub (\gtProj[\rolesR]{\gtG[i]}{\roleP})$,
          $\tyGroundi[i] = \tyGround[i]$,
          and $\roleQ \notin \rolesR$ implies $\exists k \in I': \gtLab[k] =
          \gtCrashLab$;
          or,
        \item
          $\unfoldOne{\gtG} =
            \gtComm{\roleS}{\roleTMaybeCrashed}{j \in J}{\gtLab[j]}{\tyGroundi[j]}{\gtG[j]}$,
          or
          $\unfoldOne{\gtG} =
            \gtCommTransit{\roleSMaybeCrashed}{\roleT}{j \in
            J}{\gtLab[j]}{\tyGroundi[j]}{\gtG[j]}{k}$,
          where for all $j \in J$:
          $\stS \stSub (\gtProj[\rolesR]{\gtG[j]}{\roleP})$,
          with $\roleP \neq \roleS$ and $\roleP \neq \roleT$.
      \end{enumerate}
    \item
      \label{item:proj-inv:end}
      If\;
      $\stS = \stEnd$,
      then $\roleP \notin \gtRoles{\gtG}$.
  \end{enumerate}
\end{lem}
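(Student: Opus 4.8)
## Proof Plan for Lemma~\ref{lem:inv-proj} (Inversion of Projection)

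The plan is to prove all three parts by induction on the structure of the global type $\gtG$, making essential use of the fact that $\stS$ is a \emph{subtype} of $\gtProj[\rolesR]{\gtG}{\roleP}$ rather than equal to it. The key preliminary move is \autoref{lem:gt-lts-unfold}-style reasoning on unfoldings: since projection commutes with unfolding (via \autoref{prop:gt-lts-unfold-once} and the associated subtyping facts in \autoref{lem:unfold-subtyping}), it suffices to treat the case where $\gtG$ is not itself of the form $\gtRec{\gtRecVar}{\gtGi}$, i.e.\ $\gtG = \unfoldOne{\gtG}$; the recursive case is then discharged by the induction hypothesis after one unfolding step. So the real work is the case analysis on whether $\gtG$ is a transmission $\gtComm{\roleQ}{\roleRMaybeCrashed}{i\in I}{\gtLab[i]}{\tyGround[i]}{\gtG[i]}$, a transmission en route $\gtCommTransit{\roleQMaybeCrashed}{\roleR}{i\in I}{\gtLab[i]}{\tyGround[i]}{\gtG[i]}{j}$, $\gtEnd$, or a type variable.

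First I would handle \autoref{item:proj-inv:end}. Suppose $\stS = \stEnd$ and $\stEnd \stSub \gtProj[\rolesR]{\gtG}{\roleP}$. By \autoref{lem:subtyping-invert} (inversion of subtyping for $\stEnd$), $\gtProj[\rolesR]{\gtG}{\roleP}$ must unfold to $\stEnd$ as well. Now I argue the contrapositive of \autoref{lem:in-roles-not-end}: if $\roleP \in \gtRoles{\gtG}$ (and $\gtG$ is not a $\mu$-prefix, which we may assume by the unfolding reduction), then $\gtProj[\rolesR]{\gtG}{\roleP} \neq \stEnd$ --- contradiction. Hence $\roleP \notin \gtRoles{\gtG}$, which is exactly what is required. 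This part is essentially a direct appeal to \autoref{lem:in-roles-not-end}.

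Next, for \autoref{item:proj-inv:send} and \autoref{item:proj-inv:recv}, I would proceed by the case split on $\gtG$ and within each case by the three clauses of the definition of projection (\autoref{def:global-proj}), according to how $\roleP$ relates to the sender $\roleQ$ and receiver $\roleR$. When $\roleP$ is the sender ($\roleP = \roleQ$ in a transmission), the projection is an internal choice $\stIntSum{\roleR}{i\in\{j\in I \mid \stLab[j]\neq\stCrashLab\}}{\ldots}$; inverting $\stS \stSub$ this via \autoref{lem:subtyping-invert}(1) gives $\unfoldOne{\stS}$ as an internal choice with an index subset, matching subcase (a) of \autoref{item:proj-inv:send} (noting that $\stCrashLab$-labelled branches were already excluded from the projection, consistent with the side conditions). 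Dually, when $\roleP$ is the receiver, the projection is an external choice and we invoke \autoref{lem:subtyping-invert}(2); here the crucial point is that the definition of projection inserts the requirement ``$\roleQ \notin \rolesR$ implies $\exists k: \gtLab[k] = \gtCrashLab$'' into the external choice, so this condition is inherited by $\stS$ since the $\stCrashLab$ branch cannot be dropped by subtyping (\inferrule{\iruleStSubIn} forbids removing a crash-handling branch when the subtype has one --- and the projected type always has one in this situation). This yields subcase (a) of \autoref{item:proj-inv:recv}. When $\roleP$ is neither sender nor receiver, the projection is a \emph{merge} $\stMerge{i\in I}{\gtProj[\rolesR]{\gtG[i]}{\roleP}}$; here I use \autoref{lem:merge-subtyping} (each $\stMerge{i\in I}{\cdot} \stSub \gtProj[\rolesR]{\gtG[i]}{\roleP}$ for every $i$) together with transitivity of subtyping (\autoref{transitive-subtyping}) to conclude $\stS \stSub \gtProj[\rolesR]{\gtG[i]}{\roleP}$ for all $i$, landing in subcase (b). The transmission-en-route cases are handled analogously, with the extra wrinkle that when $\roleP = \roleQ$ (the sender, who has already sent) the projection is just $\gtProj[\rolesR]{\gtG[j]}{\roleP}$ for the chosen index $j$, which feeds into the induction hypothesis.

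The main obstacle I anticipate is \emph{bookkeeping around the crash-handling branch and the reliability set $\rolesR$} in the external-choice case: one must carefully verify that when the projected external choice carries a mandatory $\gtCrashLab$ branch (because $\roleQ \notin \rolesR$), subtyping cannot silently discard it, so that the ``$\roleQ \notin \rolesR$ implies $\exists k \in I': \gtLab[k] = \gtCrashLab$'' clause genuinely survives to $\stS$. This hinges precisely on the two highlighted extra premises of \inferrule{\iruleStSubIn} in \autoref{def:subtyping} (the supertype is not a pure crash branch, and a crash branch in the subtype forces one in the supertype --- here applied in the form that the \emph{projection} acts as a ``lower-fidelity'' type whose crash branch must be matched). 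A secondary nuisance is ensuring the index-set containments ($I \subseteq I'$ for sends, $I' \subseteq I$ for receives) come out in the right direction; these follow directly from the directions in \autoref{lem:subtyping-invert}, but it is easy to transpose them. Everything else is routine structural induction using the already-established merge and subtyping lemmas.
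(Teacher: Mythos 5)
Your plan is correct and takes essentially the same route as the paper, whose entire proof of this lemma is the one-liner ``by the definition of global type projection (\autoref{def:global-proj})'': you reduce to the unfolded case, split on the clauses of the projection definition according to whether $\roleP$ is sender, receiver, or a third party, and discharge each case via subtyping inversion (\autoref{lem:subtyping-invert}), the merge lemma (\autoref{lem:merge-subtyping}), and transitivity. The only cosmetic imprecision is in the receive case: the clause ``$\roleQ \notin \rolesR$ implies $\exists k \in I': \gtLab[k] = \gtCrashLab$'' quantifies over the \emph{global type's} index set $I'$ and is already guaranteed by the side condition under which the projection is defined, so the argument that subtyping cannot discard the crash branch of $\stS$, while true, is not actually needed to obtain that conclusion.
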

\begin{proof}
By the definition of global type projection~(\autoref{def:global-proj}).
\qedhere
\end{proof}

\begin{lem}[Existence of Crash Handling Branch Under Projection]
\label{lem:crash-lab-exists}
  If role $\roleQ$ is unreliable, $\roleQ \notin \rolesR$, and a global type
  projects onto $\roleP$ with an external choice from $\roleQ$,\linebreak
  $ \unfoldOne{\gtProj[\rolesR]{\gtG}{\roleP}} =
    \stExtSum{\roleQ}{i \in I}{\stChoice{\stLab[i]}{\stS[i]} \stSeq \stSi[i]}
  $, then there must be a crash handling branch,
  $\exists j \in I : \stLab[j] = \stCrashLab$.
\end{lem}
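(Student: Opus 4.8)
The statement to prove is \autoref{lem:crash-lab-exists}: if $\roleQ \notin \rolesR$ and $\unfoldOne{\gtProj[\rolesR]{\gtG}{\roleP}} = \stExtSum{\roleQ}{i \in I}{\stChoice{\stLab[i]}{\stS[i]} \stSeq \stSi[i]}$, then $\exists j \in I : \stLab[j] = \stCrashLab$. The plan is to proceed by induction on the structure of $\gtG$ (equivalently, using \autoref{lem:gt-lts-unfold}-style reasoning on unfoldings, so that we only need to analyse $\unfoldOne{\gtG}$). First I would dispatch the base cases: $\gtG = \gtEnd$ and $\gtG = \gtRecVar$ both project to $\stEnd$ (by the projection equations in \autoref{def:global-proj}), which is not an external choice, so the hypothesis is vacuously false; and $\gtG = \gtRec{\gtRecVar}{\gtGi}$ reduces to the continuation case via unfolding (using a substitution lemma analogous to \autoref{prop:lt-lts-unfold-once} for projection, or just noting $\gtProj[\rolesR]{\gtRec{\gtRecVar}{\gtGi}}{\roleP}$ unfolds to $\gtProj[\rolesR]{\gtGi\subst{\gtRecVar}{\gtRec{\gtRecVar}{\gtGi}}}{\roleP}$).

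The main work is in the transmission cases. Suppose $\unfoldOne{\gtG} = \gtComm{\roleR}{\roleSMaybeCrashed}{i \in I'}{\gtLab[i]}{\tyGround[i]}{\gtGi[i]}$ (and symmetrically the en-route case $\gtCommTransit{\roleRMaybeCrashed}{\roleS}{i \in I'}{\gtLab[i]}{\tyGround[i]}{\gtGi[i]}{j}$). I would split on which of the three branches of the projection definition applies to $\roleP$:
\begin{itemize}
\item If $\roleP = \roleR$ (the sender), the projection is an internal choice $\stIntSum{\roleSMaybeCrashed}{\dots}{\dots}$, contradicting the hypothesis that it is an external choice — this case is impossible.
\item If $\roleP = \roleS$ (the receiver), the projection is $\stExtSum{\roleR}{i \in I'}{\stChoice{\gtLab[i]}{\tyGround[i]} \stSeq \gtProj[\rolesR]{\gtGi[i]}{\roleP}}$, so $\roleR = \roleQ$. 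Since this projection is defined, the projection side condition forces: $\roleR \notin \rolesR$ implies $\exists k \in I' : \gtLab[k] = \gtCrashLab$. As $\roleR = \roleQ \notin \rolesR$, we get the required $\gtCrashLab$ branch. (After unfolding the external choice is already in head-normal form, so $I = I'$ and the label in position $k$ is exactly $\stCrashLab$.)
\item If $\roleP \notin \setenum{\roleR, \roleS}$, the projection is $\stMerge{i \in I'}{\gtProj[\rolesR]{\gtGi[i]}{\roleP}}$. Here I would argue that since the merge of all continuations yields, after unfolding, an external choice receiving from $\roleQ$, each continuation $\gtProj[\rolesR]{\gtGi[i]}{\roleP}$ must itself unfold to an external choice receiving from $\roleQ$ (by the definition of the merge operator, which is only defined on external choices with the same partner, or on matching internal/recursive/end — and the result is an external choice from $\roleQ$ only if all arguments are). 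Then the inductive hypothesis applies to each $\gtGi[i]$, giving a $\stCrashLab$ branch in each $\unfoldOne{\gtProj[\rolesR]{\gtGi[i]}{\roleP}}$, and since full merging takes the union of labels (and preserves the $\stCrashLab$ branch — it cannot drop it), the merged type has a $\stCrashLab$ branch too.
\end{itemize}

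The remaining structural case, $\gtG$ being a transmission whose subject roles are such that $\roleP$ is neither sender nor receiver but $\roleP$ occurs only in some continuations, is already covered by the merge subcase above. I would also need the small auxiliary fact that the merge operator, when defined and when all arguments unfold to external choices from the same role $\roleQ$, preserves membership of the $\stCrashLab$ label — this follows by inspection of the first merge equation, where the result label set is $(I \cap J) \cup (I \setminus J) \cup (J \setminus I) = I \cup J$, so any $\stCrashLab$ present in one argument survives. The main obstacle I anticipate is the merge case: carefully justifying that the merge being an external choice from $\roleQ$ forces \emph{every} argument to be (so that the inductive hypothesis is applicable to each), and handling the interaction with unfolding/recursion cleanly — in particular making sure the induction is well-founded (it is, since projection commutes with guarded unfolding and continuations are structurally smaller, or one uses the standard trick of inducting on the global type up to unfolding as the paper does elsewhere, e.g.\ in \autoref{lem:in-roles-not-end}).
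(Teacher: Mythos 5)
Your proof is correct and follows essentially the same route as the paper: the paper's proof is a one-liner, "by induction on \cref{item:proj-inv:recv} of \autoref{lem:inv-proj}", whose case (a) records exactly the side-condition you identify ($\roleQ \notin \rolesR$ implies a $\gtCrashLab$ branch in the projected external choice) and whose case (b) corresponds to your merge/continuation case. You have simply inlined that inversion argument as a direct induction on $\gtG$, with the same key observation that definedness of the projection of the receiving branch forces the crash-handling label, and the same propagation through full merging (which unions label sets and so cannot drop $\stCrashLab$).
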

\begin{proof}
  By induction on \cref{item:proj-inv:recv} of \autoref{lem:inv-proj}.
  \qedhere
\end{proof}

\begin{lem}
\label{lem:eventual-global-form}
If a local type $\stT$ is a subtype of an external choice with a matching role, obtained
via projection from a global type $\gtG$, i.e.,
$\stT = \stExtSum{\roleP}{i \in I}{\stChoice{\stLab[i]}{\tyGround[i]} \stSeq \stT[i]}
\stSub
\gtProj[\rolesR]{\gtG}{\roleQ}$, $\unfoldOne{\gtG}$ is of the form
$ \gtComm{\roleS}{\roleTMaybeCrashed}{j \in J}{\gtLab[j]}{\tyGroundi[j]}{\gtG[j]}$ or
$  \gtCommTransit{\roleSMaybeCrashed}{\roleT}{j \in
            J}{\gtLab[j]}{\tyGroundi[j]}{\gtG[j]}{k}$, and a queue environment $\qEnv$ is associated
 with $\gtWithCrashedRoles{\rolesC}{\gtG}$, then there exists a global type $\gtWithCrashedRoles{\rolesCi}{\gtGi}$ and a queue environment $\qEnvi$ such that
 $\gtProj[\rolesR]{\gtG}{\roleQ} \stSub
 \gtProj[\rolesR]{\gtGi}{\roleQ}$, $\unfoldOne{\gtGi}$ is of the form
 $ \gtCommTransit{\rolePMaybeCrashed}{\roleQ}{i \in
            I'}{\gtLab[i]}{\tyGroundi[i]}{\gtG[i]}{j}$ or
 $ \gtComm{\roleP}{\roleQMaybeCrashed}{i \in I'}{\gtLab[i]}{\tyGroundi[i]}{\gtG[i]}$, and $\qEnvi$ is
 associated with $\gtWithCrashedRoles{\rolesCi}{\gtGi}$ with $\stEnvApp{\qEnvi}{\roleP, \roleQ} = \stEnvApp{\qEnv}{\roleP, \roleQ}$.
\end{lem}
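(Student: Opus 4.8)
The statement says: given that a receiving local type $\stT = \stExtSum{\roleP}{i \in I}{\stChoice{\stLab[i]}{\tyGround[i]} \stSeq \stT[i]}$ is a subtype of $\gtProj[\rolesR]{\gtG}{\roleQ}$ where the partner role $\roleP$ is \emph{not} the active sender/receiver of $\unfoldOne{\gtG}$ (case (b) of \cref{item:proj-inv:recv} in \autoref{lem:inv-proj}), then by repeatedly reducing $\gtG$ one can reach a global type $\gtGi$ that actually \emph{does} begin (after unfolding) with a transmission or transmission-en-route from $\roleP$ to $\roleQ$, while keeping the $\roleP$-to-$\roleQ$ queue unchanged and preserving the association. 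The natural strategy is induction on the structure of $\unfoldOne{\gtG}$ — or, more precisely, on some well-founded measure of $\gtG$ that decreases as we peel off non-$\roleP$-to-$\roleQ$ prefixes. First I would invoke \autoref{prop:gt-lts-unfold} to replace $\gtG$ by $\unfoldOne{\gtG}$, so that the head construct is a transmission or transmission-en-route; the hypothesis tells us it is headed by $\roleS \to \roleTMaybeCrashed$ (or $\roleSMaybeCrashed \rightsquigarrow \roleT$) with $\roleP \notin \setenum{\roleS, \roleT}$, and with $\stT \stSub \gtProj[\rolesR]{\gtG[j]}{\roleQ}$ for every continuation branch $j$.

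\textbf{Key steps.} The plan is: (1) Peel off one prefix. Since $\roleP$ is not $\roleS$ or $\roleT$, the head transmission is ``non-interfering'' with the pending $\roleP\!\to\!\roleQ$ communication, so by \autoref{lem:gtype:progress}/\autoref{lem:gtype:progress} the head can fire — use \inferrule{\iruleGtMoveOut} or \inferrule{\iruleGtMoveIn} (or \inferrule{\iruleGtMoveOrph}/\inferrule{\iruleGtMoveCrDe} in the crashed-receiver subcases) to take $\gtWithCrashedRoles{\rolesC}{\gtG} \gtMove{\rolesR} \gtWithCrashedRoles{\rolesCi}{\gtG[k]}$ for some continuation index $k$ (or, in the en-route case, reduce to the selected $\gtG[k]$). (2) Check the inductive hypothesis applies to the reduct: by \cref{item:proj-inv:recv}(b), $\stT \stSub \gtProj[\rolesR]{\gtG[k]}{\roleQ}$, so the subtyping hypothesis transfers; and because the head subject was not $\roleP$ or $\roleQ$, the queue $\stEnvApp{\qEnv}{\roleP,\roleQ}$ is untouched — appeal to \autoref{lem:stenv-red:trivial-3} and the queue-association clause \ref{item:assoc:queue} of \autoref{def:assoc} to get a $\qEnvi$ still associated with $\gtWithCrashedRoles{\rolesCi}{\gtG[k]}$ and agreeing on the $\roleP\!\to\!\roleQ$ queue. (3) Apply the induction hypothesis to $\gtG[k]$ to obtain $\gtWithCrashedRoles{\rolesCii}{\gtGii}$ of the desired head form with an associated $\qEnvii$ agreeing with $\qEnvi$ (hence with $\qEnv$) on the $\roleP\!\to\!\roleQ$ queue, and with $\gtProj[\rolesR]{\gtG[k]}{\roleQ} \stSub \gtProj[\rolesR]{\gtGii}{\roleQ}$. (4) Chain the subtyping: from $\gtProj[\rolesR]{\gtG}{\roleQ} \stSub \gtProj[\rolesR]{\gtG[k]}{\roleQ}$ (which holds because $\roleP \neq \roleS, \roleP \neq \roleT$ means $\gtProj[\rolesR]{\gtG}{\roleQ}$ is a merge $\stMerge{j\in J}{\gtProj[\rolesR]{\gtG[j]}{\roleQ}}$, so use \autoref{lem:merge-subtyping}) and transitivity (\autoref{transitive-subtyping}) conclude $\gtProj[\rolesR]{\gtG}{\roleQ} \stSub \gtProj[\rolesR]{\gtGii}{\roleQ}$. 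The base case of the induction is when the first prefix fired already yields (or the original $\unfoldOne{\gtG}$, up to merge, equals) a $\roleP\!\to\!\roleQ$ head — i.e. \cref{item:proj-inv:recv}(a) applies — and we are done directly.

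\textbf{Main obstacle.} The well-foundedness of the recursion is the delicate point. Because $\gtG$ may contain recursion, naively ``peeling prefixes'' need not terminate — one has to argue that the pending $\roleP\!\to\!\roleQ$ interaction demanded by $\stT$ must actually occur after finitely many non-interfering steps. This is exactly where projectability (well-formedness) of $\gtG$ is essential: a projectable global type for which $\stT \stSub \gtProj[\rolesR]{\gtG}{\roleQ}$ with $\roleP$ the partner cannot delay the $\roleP\!\to\!\roleQ$ communication indefinitely past a recursion point that $\roleQ$ participates in differently. The cleanest way to handle this is to do the induction not on $\gtG$'s syntax but on the (finite) number of transmission prefixes in $\unfoldOne{\gtG}$ before the first one involving $\roleP$, using contractivity of recursive types and the fact that $\gtProj[\rolesR]{\gtG}{\roleQ}$ being a well-defined merge forces each recursion unfolding to eventually expose $\roleP$; I would formalize this measure carefully and note that \autoref{lem:gt-lts-unfold} lets us freely unfold without affecting reductions. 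The remaining subcases — where $\roleTMaybeCrashed = \roleTCrashed$ so the head uses \inferrule{\iruleGtMoveOrph}, or the en-route head with a $\gtCrashLab$ pseudo-message uses \inferrule{\iruleGtMoveCrDe} — are routine once the measure is in place, since role removal and crash detection also leave the $\roleP\!\to\!\roleQ$ queue and the relevant projections intact (via \autoref{lem:proj-non-crashing-role-preserve} and the association clauses).
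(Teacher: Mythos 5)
Your proposal is essentially correct and follows the same skeleton as the paper's proof: apply \cref{item:proj-inv:recv} of \autoref{lem:inv-proj}, use \autoref{lem:merge-subtyping} plus transitivity to push the subtyping $\gtProj[\rolesR]{\gtG}{\roleQ} \stSub \gtProj[\rolesR]{\gtG[j]}{\roleQ}$ into the continuations, use the recursive queue clauses of \autoref{def:assoc-queue} to obtain an associated $\qEnvi$ agreeing on the $\roleP$-to-$\roleQ$ queue, and recurse until case (a) of the inversion lemma applies. The one substantive difference is that you descend into a continuation by \emph{firing LTS reductions} (\inferrule{\iruleGtMoveOut} then \inferrule{\iruleGtMoveIn}, etc.), whereas the paper descends purely syntactically: it just picks an arbitrary $j \in J$ and reads the associated queue for $\gtG[j]$ off \autoref{def:assoc-queue} directly. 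Since the lemma only asserts the \emph{existence} of $\gtGi$ and $\qEnvi$, not reachability, the reduction detour buys nothing and costs you extra obligations (e.g.\ a branch selected by \inferrule{\iruleGtMoveOut} must not be the $\gtCrashLab$ branch, and a single output step lands on an en-route type, not on $\gtG[k]$); the syntactic descent sidesteps all of that. Two smaller remarks. First, you misattribute the exclusion in case (b) of \cref{item:proj-inv:recv}: it is the \emph{projected} role ($\roleQ$ in the lemma's notation) that is guaranteed distinct from $\roleS$ and $\roleT$, not the partner $\roleP$; this is exactly what you need both for the merge form of $\gtProj[\rolesR]{\gtG}{\roleQ}$ and for the invariance of $\stEnvApp{\qEnv}{\roleP,\roleQ}$ (since $\roleT \neq \roleQ$), so the argument survives, but the measure you propose should count prefixes before the first interaction \emph{received by $\roleQ$ from $\roleP$}, not ``involving $\roleP$''. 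Second, your concern about well-foundedness under recursion is legitimate and is in fact glossed over by the paper, which simply ``applies the inductive hypothesis'' to $\gtG[j]$; making the measure explicit, as you suggest, via contractivity and the fact that a well-defined merged projection of $\roleQ$ must expose the $\roleP$-headed external choice after finitely many unfoldings, is a genuine improvement in rigour over the published argument.
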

\begin{proof}
Apply \cref{item:proj-inv:recv} of \autoref{lem:inv-proj} on the premise, we have $\forall j \in J: \stT =
 \stExtSum{\roleP}{i \in I}{\stChoice{\stLab[i]}{\tyGround[i]} \stSeq \stT[i]}
\stSub
\gtProj[\rolesR]{\gtG[j]}{\roleQ}$ with $\roleQ \neq \roleS$ and $\roleQ \neq \roleT$, which follows that
$\gtProj[\rolesR]{\gtG}{\roleQ} = \stMerge{j \in J}{\gtProj[\rolesR]{\gtG[j]}{\roleQ}}$.
Then, by \autoref{lem:merge-subtyping}, we get $\forall j \in J: \gtProj[\rolesR]{\gtG}{\roleQ} \stSub
\gtProj[\rolesR]{\gtG[j]}{\roleQ}$.
 We take an arbitrary $\gtG[j]$ with $j \in J$. By applying \cref{item:proj-inv:recv} of \autoref{lem:inv-proj} again,
we have two cases.
\begin{itemize}[leftmargin=*]
\item Case (1):
$\unfoldOne{\gtG[j]} = \gtCommTransit{\rolePMaybeCrashed}{\roleQ}{i \in
            I'}{\gtLab[i]}{\tyGroundi[i]}{\gtG[i]}{l}$ or
$\unfoldOne{\gtG[j]} =  \gtComm{\roleP}{\roleQMaybeCrashed}{i \in I'}{\gtLab[i]}{\tyGroundi[i]}{\gtG[i]}$.

Since $\gtProj[\rolesR]{\gtG}{\roleQ} \stSub \gtProj[\rolesR]{\gtG[j]}{\roleQ} $,
we only need to show that there exists $\qEnvi$ such that
$\qEnvi$ is associated with $\gtWithCrashedRoles{\rolesC}{\gtG[j]}$ and
$\stEnvApp{\qEnvi}{%
      \roleP,  \roleQ%
      }  = \stEnvApp{\qEnv}{%
     \roleP,   \roleQ%
      }$. We consider two subcases:
      \begin{itemize}[leftmargin=*]
      \item $\unfoldOne{\gtG} =
      \gtComm{\roleS}{\roleTMaybeCrashed}{j \in J}{\gtLab[j]}{\tyGroundi[j]}{\gtG[j]}$: %
      by~\autoref{def:assoc-queue}, we have that  $\qEnv$ is associated with $\gtG[j]$. We take $\qEnvi$ as $\qEnv$.
      \item $\unfoldOne{\gtG} =  \gtCommTransit{\roleSMaybeCrashed}{\roleT}{j \in
            J}{\gtLab[j]}{\tyGroundi[j]}{\gtG[j]}{k}$, we perform case analysis on $\gtLab[k] = \gtCrashLab$ and
            \linebreak $\gtLab[k] \neq \gtCrashLab$:
            \begin{itemize}[leftmargin=*]
            \item $\gtLab[k] = \gtCrashLab$: by~\autoref{def:assoc-queue}, we have that  $\qEnv$ is associated with $\gtWithCrashedRoles{\rolesC}{\gtG[j]}$.
            We take $\qEnvi$ as $\qEnv$.
            \item $\gtLab[k] \neq \gtCrashLab$: by~\autoref{def:assoc-queue}, we have that  $\stEnvApp{\qEnv}{\roleS, \roleT} =
        \stQCons{\stQMsg{\gtLab[k]}{\tyGroundi[k]}}{\stQ}$ and
        $\stEnvUpd{\qEnv}{\roleS, \roleT}{\stQ}$ is associated with
        $\gtWithCrashedRoles{\rolesC}{\gtG[j]}$. We take $\qEnvi = \stEnvUpd{\qEnv}{\roleS, \roleT}{\stQ}$. Since $\roleQ \neq \roleS$
        and $\roleQ \neq \roleT$, it is straightforward that $\stEnvApp{\qEnvi}{\roleP, \roleQ} = \stEnvApp{\qEnv}{\roleP, \roleQ}$, as required.
            \end{itemize}
       \end{itemize}
\item Case (2): $\unfoldOne{\gtG[j]} =  \gtComm{\roleR}{\roleUMaybeCrashed}{l \in L}{\gtLab[l]}{\tyGroundi[l]}{\gtG[l]}$ or
$  \gtCommTransit{\roleRMaybeCrashed}{\roleU}{l \in
            L}{\gtLab[l]}{\tyGroundi[l]}{\gtG[l]}{k}$.

            We can construct a queue environment $\qEnvi$ as in case (1), which is associated with $ \gtWithCrashedRoles{\rolesC}{\gtG[j]}$.
            The thesis is then proved by applying inductive hypothesis on $\gtWithCrashedRoles{\rolesC}{\gtG[j]}$ and $\qEnvi$.
\qedhere
\end{itemize}
\end{proof}

\begin{lem}
\label{lem:queue-assoc-crash}
  If $\qEnv$ is associated with $\gtWithCrashedRoles{\rolesC}{\gtG}$ and
  $\roleP \in \gtRoles{\gtG}$, then
  $\stEnvUpd{\qEnv}{\cdot, \roleP}{\stQUnavail}$ is associated with
  $\gtWithCrashedRoles{\rolesC \cup
  \setenum{\roleP}}{\gtCrashRole{\gtG}{\roleP}}$.
\end{lem}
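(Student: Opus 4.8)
The plan is to proceed by induction on the structure of $\gtG$, following the shape of the role-removal operator $\gtCrashRole{\gtG}{\roleP}$ (\autoref{def:gtype:remove-role}) and the definition of queue-to-global-type association (clause~\ref{item:assoc:queue} of \autoref{def:assoc-queue}). The statement is purely about the queue component, so I would unfold what $\qEnv$ associated with $\gtWithCrashedRoles{\rolesC}{\gtG}$ means and check that $\stEnvUpd{\qEnv}{\cdot, \roleP}{\stQUnavail}$ satisfies the corresponding clauses for $\gtWithCrashedRoles{\rolesC \cup \setenum{\roleP}}{\gtCrashRole{\gtG}{\roleP}}$. Since $\gtCrashRole{\gtG}{\roleP}$ is only defined when $\roleP \in \gtRoles{\gtG}$, the hypothesis $\roleP \in \gtRoles{\gtG}$ is exactly what makes the recursive calls in \autoref{def:gtype:remove-role} and the induction well-founded; by \autoref{lem:gtype:crash-remove-role} the set of active roles only shrinks, so the recursive hypotheses stay applicable.

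First I would handle the "unavailability" sub-clause: $\stEnvApp{(\stEnvUpd{\qEnv}{\cdot,\roleP}{\stQUnavail})}{\cdot,\roleQ} = \stQUnavail$ iff $\roleQ \in \rolesC \cup \setenum{\roleP}$. For $\roleQ \neq \roleP$ this follows from the corresponding property of $\qEnv$ (the update leaves queues to $\roleQ$ untouched); for $\roleQ = \roleP$ it holds by construction of the update. Next, the case analysis on $\unfoldOne{\gtG}$ (I would first pass to the unfolding, justified as in \autoref{prop:gt-lts-unfold}, so the $\gtRec{}{}$ and $\gtRecVar$ cases reduce to the prefix cases and to $\gtEnd$). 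The $\gtEnd$ and $\gtRecVar$ cases are immediate since $\gtCrashRole{\gtEnd}{\roleP}$-style reducts are still terminal. The interesting cases are the transmission prefix $\gtComm{\roleS}{\roleTMaybeCrashed}{i\in I}{\gtLab[i]}{\tyGround[i]}{\gtG[i]}$ and the en-route prefix $\gtCommTransit{\roleSMaybeCrashed}{\roleT}{i\in I}{\gtLab[i]}{\tyGround[i]}{\gtG[i]}{j}$, each split according to whether $\roleP = \roleS$, $\roleP = \roleT$ (the sender/receiver being removed), or $\roleP \notin \setenum{\roleS,\roleT}$.

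The last (non-interfering) case is routine: $\gtCrashRole{\gtG}{\roleP}$ keeps the same prefix with recursively-modified continuations, and the association clauses for the queue between $\roleS$ and $\roleT$ are preserved because $\roleP \notin \setenum{\roleS,\roleT}$ means that queue is untouched by the update; I apply the IH to each $\gtG[i]$. The case $\roleP = \roleT$ (receiver removed) turns the prefix into one with $\roleTCrashed$; here I must check that the clause "$\roleQMaybeCrashed \neq \roleQCrashed \implies \stEnvApp{\qEnv}{\roleS,\roleT}=\stQEmpty$" is now vacuous (the receiver is crashed), and that the queue $\stEnvApp{\qEnv}{\roleS,\roleT}$, which after the update becomes $\stQUnavail$, is consistent with the new global type — and this is precisely why the role-removal operator, when $\roleP = \roleT$, \emph{discards} any en-route message to $\roleT$ (the $\gtCrashRole{\gtG[j]}{\roleR}$ branch). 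The case $\roleP = \roleS$ (sender removed) is the one I expect to be the main obstacle: for a plain transmission it becomes $\gtCommTransit{\rolePCrashed}{\roleT}{i\in I}{\gtLab[i]}{\tyGround[i]}{\gtCrashRole{\gtG[i]}{\roleP}}{j}$ with $\gtLab[j]=\gtCrashLab$ (a pseudo-message en route), so I must verify clause~(c) of \autoref{def:assoc-queue}: since the en-route label is $\gtCrashLab$ and $\roleT$ may be live, I need $\stEnvApp{\qEnv}{\roleP,\roleT} = \stQEmpty$ before the update — which holds because the original $\qEnv$ was associated with $\gtG$ where $\roleP \to \roleT$ was still a (not-yet-fired) transmission prefix, hence clause~(b)/(c) forced that queue empty — and then that the recursive association $\qEnv$ (unchanged on $\roleP,\roleT$) with each $\gtCrashRole{\gtG[i]}{\roleP}$ follows from the IH applied to $\gtG[i]$. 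For a transmission already en route with a genuine ($\neq\gtCrashLab$) label, $\gtCrashRole{}{}$ retains index $j$, so the head message $\stQMsg{\gtLab[j]}{\tyGround[j]}$ in $\stEnvApp{\qEnv}{\roleS,\roleT}$ remains correctly described, and I again push the IH through the continuations after removing that head. Throughout, I would silently use \autoref{lem:gtype:crash-remove-role} and \autoref{lem:gtype:crashed-crash-remove-role} to keep the crashed-role bookkeeping ($\rolesC \cup \setenum{\roleP}$) consistent with $\gtRolesCrashed{\gtCrashRole{\gtG}{\roleP}}$.
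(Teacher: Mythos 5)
Your proposal is correct and follows essentially the same route as the paper's proof: a shape-independent check that unavailability of receiving queues matches $\rolesC \cup \setenum{\roleP}$, followed by induction on the cases of the removal operator, with the same key observations (the queue for a not-yet-fired $\roleP \to \roleT$ prefix is empty, so the resulting $\gtCrashLab$ pseudo-message en route satisfies the association clause; en-route messages to a removed receiver are discarded consistently with the queue becoming $\stQUnavail$; a retained index $j$ keeps the head message correctly described). No substantive differences to report.
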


\begin{proof}
  We denote $\qEnvi = \stEnvUpd{\qEnv}{\cdot, \roleP}{\stQUnavail}$ in the
  subsequent proof.
  To show association, there are two parts: namely a shape-dependent
  part, and a shape-independent part.

  We first show shape-independent part, which shared for all cases:
  that a crashed role $\roleR$ is in $\rolesC \cup
  \setenum{\roleP}$ iff $\stEnvApp{\qEnvi}{\cdot, \roleR} = \stQUnavail$.
  It follows that the roles $\roleR \in \rolesC$ have the requirements
  satisfied from the premise, and we set $\qEnvi = \stEnvUpd{\qEnv}{\cdot,
  \roleP}{\stQUnavail}$, and that $\roleP$ is in the new set of crashed
  roles.

  Shape-dependent part are by induction on the definition of
  $\gtCrashRole{\gtG}{\roleP}$:
  \begin{itemize}[leftmargin=*]
    \item Case
    \(
    \gtCrashRole{
      (\gtCommSmall{\roleP}{\roleQ}{i \in I}{\gtLab[i]}{\tyGround[i]}{\gtG[i]})
    }{
      \roleP
    }
    =
    \gtCommTransit{\rolePCrashed}{\roleQ}{i \in I}{\gtLab[i]}{\tyGround[i]}{
      (\gtCrashRole{\gtG[i]}{\roleP})}{j}
    \) where $j \in I$ and $\gtLab[j] = \gtCrashLab$.

    Since $\qEnv$ is associated with $\gtWithCrashedRoles{\rolesC}{\gtG}$, we that
    $\forall i \in I:$ $\qEnv$ is associated with
    $\gtWithCrashedRoles{\rolesC}{\gtGi[i]}$, and $\stEnvApp{\qEnv}{\roleP, \roleQ} =
    \stQEmpty$.

    By inductive hypothesis, we have $\qEnvi$ is associated with
    $\gtWithCrashedRoles{\rolesC}{(\gtCrashRole{\gtG[i]}{\roleP})}$.
    Moreover, since $\gtLab[j] = \gtCrashLab$, we can show that
    $\stEnvApp{\qEnvi}{\roleP, \roleQ} = \stEnvApp{\qEnv}{\roleP, \roleQ} =
    \stQEmpty$.
       \item Case
    \(
    \gtCrashRole{
      (\gtCommTransit{\roleP}{\roleQ}{i \in
      I}{\gtLab[i]}{\tyGround[i]}{\gtG[i]}{j})
    }{
      \roleQ
    }
    =
    \gtCrashRole{\gtG[j]}{\roleQ}. 
    \)

    Subcase $\gtLab[j] = \gtCrashLab$:

    By inductive hypothesis, we know $\qEnvi$ is associated with
    $\gtWithCrashedRoles{\rolesC \cup \setenum{\roleQ}}{\gtGi[j]}$, as
    required.

    Subcase $\gtLab[j] \neq \gtCrashLab$:

    From association, we have
    $\stEnvApp{\qEnv}{\roleP, \roleQ} =
    \stQCons{\stQMsg{\gtLab[j]}{\tyGround[j]}}{\stQ}$
    and
    $\stEnvUpd{\qEnv}{\roleP, \roleQ}{\stQ}$
    is associated with
    $\gtWithCrashedRoles{\rolesC}{\gtGi[j]}$.

    By inductive hypothesis, we know $
    \stEnvUpd{
      \stEnvUpd{\qEnv}{\roleP, \roleQ}{\stQ}
    }{\cdot, \roleQ}{\stQUnavail}$
    is associated with
    $\gtWithCrashedRoles{\rolesC \cup
    \setenum{\roleQ}}{\gtCrashRole{\gtGi[j]}{\roleQ}}$.

    Since
    $ \stEnvUpd{
      \stEnvUpd{\qEnv}{\roleP, \roleQ}{\stQ}
    }{\cdot, \roleQ}{\stQUnavail} =
    \stEnvUpd{\qEnv}{\cdot, \roleQ}{\stQUnavail} = \qEnvi$,
    so $\qEnvi$ is associated with $\gtWithCrashedRoles{\rolesC \cup
    \setenum{\roleQ}}{\gtCrashRole{\gtGi[j]}{\roleQ}}$.
    \item Case
    \(
    \gtCrashRole{
      (\gtCommSmall{\roleP}{\roleQCrashed}{i \in I}{\gtLab[i]}{\tyGround[i]}{\gtG[i]})
    }{
      \roleP
    }
    =
    \gtCrashRole{\gtG[j]}{\roleP}
    \), where $j \in I$ and $\gtLab[j] = \gtCrashLab$. 

    Since $\qEnv$ is associated with $\gtWithCrashedRoles{\rolesC}{\gtG}$, we
    have that $\qEnv$ is associated with
    $\gtWithCrashedRoles{\rolesC}{\gtG[j]}$.

    By inductive hypothesis, we have $\qEnvi$ is associated with
    $\gtWithCrashedRoles{\rolesC \cup
    \setenum{\roleP}}{\gtCrashRole{\gtG[j]}{\roleP}}$, as required.
    
    Other cases follows directly by inductive hypothesis or trivially.
    \qedhere
  \end{itemize}
\end{proof}

\begin{lem}
\label{lem:stenv-red:non-trivial}
  If\,
   $\stEnv; \qEnv
  \stEnvMoveGenAnnot \stEnvi; \qEnvi$,\,
  $\stEnvAssoc{\gtWithCrashedRoles{\rolesC}{\gtG}}{\stEnv; \qEnv}{\rolesR}$,\,
  $\stEnvAssoc{\gtWithCrashedRoles{\rolesC}{\gtG[1]}}{\stEnv[1]; \qEnv}{\rolesR}$,\,
  $\stEnvApp{\stEnv}{\ltsSubject{\stEnvAnnotGenericSym}} = \stEnvApp{\stEnv[1]}{\ltsSubject{\stEnvAnnotGenericSym}}$,\,
  and\,
  $\stEnvi[1] =  \stEnvUpd{\stEnv[1]}{\ltsSubject{\stEnvAnnotGenericSym}}{\stEnvApp{\stEnvi}{\ltsSubject{\stEnvAnnotGenericSym}}}$,\,
   then\,
$\stEnv[1]; \qEnv
  \stEnvMoveGenAnnot \stEnvi[1]; \qEnvi$.
 \end{lem}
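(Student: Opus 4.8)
The plan is to prove \autoref{lem:stenv-red:non-trivial} by case analysis on the configuration transition rule (\autoref{fig:gtype:tc-red-rules}) used to derive $\stEnv; \qEnv \stEnvMoveGenAnnot \stEnvi; \qEnvi$. In every case the transition fires because of conditions that depend only on (a) the local type $\stEnvApp{\stEnv}{\roleP}$ of the subject role $\roleP = \ltsSubject{\stEnvAnnotGenericSym}$, and (b) some portion of the queue environment $\qEnv$. Since we are given $\stEnvApp{\stEnv}{\roleP} = \stEnvApp{\stEnv[1]}{\roleP}$ and both configurations share the \emph{same} queue environment $\qEnv$, the side conditions that enabled the original transition still hold for $\stEnv[1]; \qEnv$, so the same rule fires on $\stEnv[1]; \qEnv$. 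The resulting configuration updates only the subject role's entry (and possibly a queue), and we must check this matches $\stEnvi[1]; \qEnvi$ as defined in the statement.

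Concretely, I would proceed rule by rule. For \inferrule{\iruleTCtxOut}: from \autoref{lem:stenv-red:inversion-basic}(1) we have $\unfoldOne{\stEnvApp{\stEnv}{\roleP}} = \stIntSum{\roleQ}{i\in I}{\stChoice{\stLab[i]}{\tyGround[i]}\stSeq\stT[i]}$ and $\stEnvApp{\stEnvi}{\roleP} = \stT[k]$; since $\stEnvApp{\stEnv[1]}{\roleP} = \stEnvApp{\stEnv}{\roleP}$, the same premises of \inferrule{\iruleTCtxOut} apply to $\stEnv[1]$, giving $\stEnv[1]; \qEnv \stEnvMoveOutAnnot{\roleP}{\roleQ}{\stChoice{\stLab[k]}{\tyGround[k]}} \stEnvUpd{\stEnv[1]}{\roleP}{\stT[k]}; \stEnvUpd{\qEnv}{\roleP,\roleQ}{\stQCons{\stEnvApp{\qEnv}{\roleP,\roleQ}}{\stQMsg{\stLab[k]}{\tyGround[k]}}}$. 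The queue update is determined entirely by $\qEnv$, $\roleP$, $\roleQ$, and the label/payload, so it equals $\qEnvi$; and $\stEnvUpd{\stEnv[1]}{\roleP}{\stT[k]} = \stEnvi[1]$ by the hypothesis on $\stEnvi[1]$ and the fact that $\stEnvApp{\stEnvi}{\roleP} = \stT[k]$. The case \inferrule{\iruleTCtxIn} is analogous, using \autoref{lem:stenv-red:inversion-basic}(2) and noting the premise $\stEnvApp{\qEnv}{\roleQ,\roleP} = \stQCons{\stQMsg{\stLabi}{\tyGroundi}}{\stQi} \neq \stQUnavail$ depends only on $\qEnv$. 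For \inferrule{\iruleTCtxCrash}: the premises $\stEnvApp{\stEnv}{\roleP} \neq \stEnd$ and $\stEnvApp{\stEnv}{\roleP} \neq \stStop$ transfer to $\stEnv[1]$; the result makes all queues to $\roleP$ unavailable (again a function of $\qEnv$ and $\roleP$ only) and sets $\roleP$'s entry to $\stStop$, which matches $\stEnvi[1]$. For \inferrule{\iruleTCtxCrashDetect}: the premises involve $\stEnvApp{\stEnv}{\roleQ}$ (the subject $\roleQ = \ltsSubject{\stEnvAnnotGenericSym}$), $\stEnvApp{\stEnv}{\roleP} = \stStop$, and $\stEnvApp{\qEnv}{\roleP,\roleQ} = \stQEmpty$. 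Here I use the association hypotheses: $\stEnvAssoc{\gtWithCrashedRoles{\rolesC}{\gtG}}{\stEnv;\qEnv}{\rolesR}$ and $\stEnvAssoc{\gtWithCrashedRoles{\rolesC}{\gtG[1]}}{\stEnv[1];\qEnv}{\rolesR}$ share the same $\rolesC$, so by clause~\cref{item:assoc:crash-stop} a role has type $\stStop$ in $\stEnv$ iff it is in $\rolesC$ iff it has type $\stStop$ in $\stEnv[1]$; hence $\stEnvApp{\stEnv[1]}{\roleP} = \stStop$, and the rule fires on $\stEnv[1]; \qEnv$ with the same continuation. The rule \inferrule{\iruleTCtxRec} is handled by first unfolding: if $\stEnvApp{\stEnv}{\roleP} = \stRec{\stRecVar}{\stT}$, then also $\stEnvApp{\stEnv[1]}{\roleP} = \stRec{\stRecVar}{\stT}$, so we can appeal to the inductive hypothesis on the unfolded configurations (using that unfolding preserves association, \autoref{prop:lt-lts-unfold-once}, and preserves the hypothesis $\stEnvi[1] = \stEnvUpd{\stEnv[1]}{\roleP}{\stEnvApp{\stEnvi}{\roleP}}$).

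The main obstacle is the \inferrule{\iruleTCtxCrashDetect} case, because the premise $\stEnvApp{\stEnv}{\roleP} = \stStop$ refers to a role $\roleP$ that is \emph{not} the subject of the transition, and the hypothesis $\stEnvApp{\stEnv}{\ltsSubject{\stEnvAnnotGenericSym}} = \stEnvApp{\stEnv[1]}{\ltsSubject{\stEnvAnnotGenericSym}}$ only constrains the subject. The fix is precisely to lean on the shared crashed-role set $\rolesC$ in the two association hypotheses: both $\stEnv$ and $\stEnv[1]$ must assign $\stStop$ to exactly the roles in $\rolesC$ (clause~\cref{item:assoc:crash-stop}), which forces $\stEnvApp{\stEnv[1]}{\roleP} = \stStop$ as well. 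I would state this observation as a preliminary remark before the case analysis. The remaining routine work is just matching the queue-environment updates: each rule's queue update is a deterministic function of $\qEnv$ and the transition label (cf.\ \autoref{lem:stenv-queue-red:det}), so $\qEnvi$ is the same regardless of which typing context we start from, and the local-type update lands exactly on the subject role, matching the definition of $\stEnvi[1]$.
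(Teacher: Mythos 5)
Your proposal is correct and follows essentially the same route as the paper's proof: a case analysis on the transition rule, transferring the rule premises via the shared subject entry and shared queue environment, and, for \inferrule{\iruleTCtxCrashDetect}, using the two association hypotheses with the common crashed set $\rolesC$ to conclude $\stEnvApp{\stEnv[1]}{\roleP} = \stStop$ for the non-subject role — which is exactly the paper's argument for that case. The recursion case via unfolding and the inductive hypothesis also matches.
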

  \begin{proof}
 By induction on reductions of configuration.
\begin{itemize}[leftmargin=*]
\item Case \inferrule{\iruleTCtxOut}:   %
$\stEnvAnnotGenericSym = \stEnvOutAnnotSmall{\roleP}{\roleQ}{\stChoice{\stLab[k]}{\tyGround[k]}}$.
We apply and invert \inferrule{\iruleTCtxOut} on 
 $\stEnv; \qEnv
  \stEnvMoveGenAnnot \stEnvi; \qEnvi$ to get 
 $\stEnvApp{\stEnv}{\roleP}  = 
   \stIntSum{\roleQ}{i \in I}{\stChoice{\stLab[i]}{\tyGround[i]} \stSeq \stT[i]}$,  
  $k \in I$, 
   $ \stEnvi = \stEnvUpd{\stEnv}{\roleP}{\stT[k]}$, and 
  $\qEnvi = \stEnvUpd{\qEnv}{\roleP, \roleQ}{
        \stQCons{
          \stEnvApp{\qEnv}{\roleP, \roleQ}
        }{
          \stQMsg{\stLab[k]}{\tyGround[k]}
        }}$.
 Then by applying \inferrule{\iruleTCtxOut} on 
 $\stEnvApp{\stEnv[1]}{\roleP} = 
 \stEnvApp{\stEnv}{\roleP} = \stIntSum{\roleQ}{i \in I}{\stChoice{\stLab[i]}{\tyGround[i]} \stSeq \stT[i]}$ and 
 $k \in I$, we obtain that  $\stEnv[1]; \qEnv
  \stEnvMoveGenAnnot \stEnvUpd{\stEnv[1]}{\roleP}{\stT[k]}; \qEnvi$, which follows that 
  $\stEnv[1]; \qEnv
  \stEnvMoveGenAnnot \stEnvi[1]; \qEnvi$, as desired. 

\item Case \inferrule{\iruleTCtxIn}:  %
$\stEnvAnnotGenericSym = \stEnvInAnnotSmall{\roleP}{\roleQ}{\stChoice{\stLab[k]}{\tyGround[k]}}$.
We apply and invert \inferrule{\iruleTCtxIn} on 
 $\stEnv; \qEnv
  \stEnvMoveGenAnnot \stEnvi; \qEnvi$ to get 
 $\stEnvApp{\stEnv}{\roleP}  = 
   \stExtSum{\roleQ}{i \in I}{\stChoice{\stLab[i]}{\tyGround[i]} \stSeq \stT[i]}$,  
  $k \in I$, 
  $\stEnvApp{\qEnv}{\roleQ, \roleP}
      =
      \stQCons{\stQMsg{\stLab[k]}{\tyGround[k]}}{\stQ}$, 
   $\stEnvi = \stEnvUpd{\stEnv}{\roleP}{\stT[k]}$, and 
  $\qEnvi = \stEnvUpd{\qEnv}{\roleQ, \roleP}{
      \stQ}$.
 Then by applying \inferrule{\iruleTCtxIn} on 
 $\stEnvApp{\stEnv[1]}{\roleP} = 
 \stEnvApp{\stEnv}{\roleP} = \stExtSum{\roleQ}{i \in I}{\stChoice{\stLab[i]}{\tyGround[i]} \stSeq \stT[i]}$, 
 $k \in I$, and  $\stEnvApp{\qEnv}{\roleQ, \roleP}
      =
      \stQCons{\stQMsg{\stLab[k]}{\tyGround[k]}}{\stQ}$, 
 we obtain that  $\stEnv[1]; \qEnv
  \stEnvMoveGenAnnot \stEnvUpd{\stEnv[1]}{\roleP}{\stT[k]}; \qEnvi$, which follows that 
  $\stEnv[1]; \qEnv
  \stEnvMoveGenAnnot \stEnvi[1]; \qEnvi$, as desired.

  \item Case \inferrule{\iruleTCtxCrash}: %
  $\stEnvAnnotGenericSym = \ltsCrashSmall{\mpS}{\roleP}$.
We apply and invert \inferrule{\iruleTCtxCrash} on 
 $\stEnv; \qEnv
  \stEnvMoveGenAnnot \stEnvi; \qEnvi$ to get 
 $\stEnvApp{\stEnv}{\roleP}  \neq \stStop$, 
 $\stEnvApp{\stEnv}{\roleP}  \neq \stEnd$, 
  $\stEnvi = \stEnvUpd{\stEnv}{\roleP}{\stStop}$, and 
  $\qEnvi = \stEnvUpd{\qEnv}{\cdot, \roleP}{
      \stQUnavail}$.
 Then by applying \inferrule{\iruleTCtxCrash} on 
 $\stEnvApp{\stEnv[1]}{\roleP} = 
 \stEnvApp{\stEnv}{\roleP} \neq \stStop$ and 
  $\stEnvApp{\stEnv[1]}{\roleP} = 
 \stEnvApp{\stEnv}{\roleP} \neq \stEnd$, 
  we obtain that  $\stEnv[1]; \qEnv
  \stEnvMoveGenAnnot \stEnvUpd{\stEnv[1]}{\roleP}{\stStop}; \qEnvi$, which follows that 
  $\stEnv[1]; \qEnv
  \stEnvMoveGenAnnot \stEnvi[1]; \qEnvi$, as desired. 
  
 \item Case \inferrule{\iruleTCtxCrashDetect}:  %
 $\stEnvAnnotGenericSym = 
\ltsCrDe{\mpS}{\roleQ}{\roleP}$. 
We apply and invert \inferrule{\iruleTCtxCrashDetect} on 
 $\stEnv; \qEnv
  \stEnvMoveGenAnnot \stEnvi; \qEnvi$ to get
   $\stEnvApp{\stEnv}{\roleQ}  = 
   \stExtSum{\roleP}{i \in I}{\stChoice{\stLab[i]}{\tyGround[i]} \stSeq \stT[i]}$,  
   $\stEnvApp{\stEnv}{\roleP} = \stStop$, 
  $k \in I$, $\stLab[k] = \stCrashLab$, 
  $\stEnvApp{\qEnv}{\roleP, \roleQ}
      =
     \stQEmpty$, and 
   $\stEnvi = \stEnvUpd{\stEnv}{\roleQ}{\stT[k]}$.   
   Since $\stEnvAssoc{\gtWithCrashedRoles{\rolesC}{\gtG}}{\stEnv; \qEnv}{\rolesR}$ and 
   $\stEnvApp{\stEnv}{\roleP} = \stStop$, it holds that $\roleP \in \rolesC$. Then by 
   $\stEnvAssoc{\gtWithCrashedRoles{\rolesC}{\gtG[1]}}{\stEnv[1]; \qEnv}{\rolesR}$,
   $\stEnvApp{\stEnv[1]}{\roleP} = \stStop$. 
We apply \inferrule{\iruleTCtxCrashDetect} on 
  $\stEnvApp{\stEnv[1]}{\roleQ} = \stEnvApp{\stEnv}{\roleQ}  = 
   \stExtSum{\roleP}{i \in I}{\stChoice{\stLab[i]}{\tyGround[i]} \stSeq \stT[i]}$,  
   $\stEnvApp{\stEnv[1]}{\roleP} = \stStop$, 
  $k \in I$, $\stLab[k] = \stCrashLab$, and 
  $\stEnvApp{\qEnv}{\roleP, \roleQ}
      =
     \stQEmpty$ to obtain 
 $\stEnv[1]; \qEnv
  \stEnvMoveGenAnnot \stEnvUpd{\stEnv[1]}{\roleQ}{\stT[k]}; \qEnvi$, which follows that 
  $\stEnv[1]; \qEnv
  \stEnvMoveGenAnnot \stEnvi[1]; \qEnvi$, as desired. 

\item Case \inferrule{\iruleTCtxRec}: by inductive hypothesis. 
\qedhere
\end{itemize}
\end{proof}

We define
$\stIdxRemoveCrash{I}{\stLab[i]}$ to be an index set with the special label
$\stCrashLab$ removed, \ie
$\stIdxRemoveCrash{I}{\stLab[i]} = \setcomp{i \in I}{\stLab[i] \neq \stCrashLab}$.

\thmProjSoundness*
\begin{proof}
  By induction on reductions of global type
  \(
    \gtWithCrashedRoles{\rolesC}{\gtG}
    \gtMove[\stEnvAnnotGenericSym]{\rolesR}
    \gtWithCrashedRoles{\rolesCi}{\gtGi}
  \).
  \begin{itemize}[leftmargin=*]
    \item Case \inferrule{\iruleGtMoveCrash}:

      From the premise we have
      \begin{gather}
        \stEnvAssoc{\gtWithCrashedRoles{\rolesC}{\gtG}}{\stEnv; \qEnv}{\rolesR}
        \label{eq:soundness:movecrash_assoc_pre}
        \\
        \gtWithCrashedRoles{\rolesC}{\gtG}
        \gtMove{\rolesR}
        \\
        \roleP \notin \rolesR
        \\
        \roleP \in \gtRoles{\gtG}
        \\
        \gtG \neq \gtRec{\gtRecVar}{\gtGi}
        \\
        \stEnvAnnotGenericSym = \ltsCrashSmall{\mpS}{\roleP}
        \\
        \rolesCi = \rolesC \cup \setenum{\roleP}
        \label{eq:soundness:movecrash_rolesCi}
        \\
        \gtGi = \gtCrashRole{\gtG}{\roleP}
      \end{gather}

      Let $\stEnvi; \qEnvi = \stEnvUpd{\stEnv}{\roleP}{\stStop};
      \stEnvUpd{\qEnv}{\cdot, \roleP}{\stQUnavail}$.
      We show
        $\stEnvAssoc{\gtWithCrashedRoles{\rolesCii}{\gtGii}}{\stEnvi;
        \qEnvi}{\rolesR}$\;
      and
      \;$\stEnv; \qEnv \stEnvMoveGenAnnot \stEnvi; \qEnvi$.

      For the first part:
      By association~\eqref{eq:soundness:movecrash_assoc_pre}, we know that
      $\stEnvApp{\stEnv}{\roleP} \stSub \gtProj[\rolesR]{\gtG}{\roleP}$.
      By   $\gtG \neq \gtRec{\gtRecVar}{\gtGi}$, $\roleP \in \gtRoles{\gtG}$,
      and~\autoref{lem:in-roles-not-end}, we know that
      $\gtProj[\rolesR]{\gtG}{\roleP} \neq \stEnd$, which gives $\stEnvApp{\stEnv}{\roleP} \neq \stEnd$.
      By $\roleP \in \gtRoles{\gtG}$, we also know that $\roleP \notin \rolesC$, which gives
      $\stEnvApp{\stEnv}{\roleP} \neq \stStop$.
      Thus we apply \inferrule{\iruleTCtxCrash} to obtain the thesis.

      For the second part:
        \begin{enumerate}[label=(A\arabic*)]
          \item
            From association~\eqref{eq:soundness:movecrash_assoc_pre}, we know
            that
            $\forall \roleQ \in \gtRoles{\gtG} : \stEnvApp{\stEnv}{\roleQ}
            \stSub \gtProj[\rolesR]{\gtG}{\roleQ}$.

            By~\autoref{lem:gtype:crash-remove-role}, we know
            $\gtRoles{\gtCrashRole{\gtG}{\roleP}} \subseteq \gtRoles{\gtG}$.

            For any $\roleQ \in \gtRoles{\gtCrashRole{\gtG}{\roleP}}$,
            we apply~\autoref{lem:proj-non-crashing-role-preserve}, to obtain
            \(
            \gtProj[\rolesR]{\gtG}{\roleQ}
            \stSub
            \gtProj[\rolesR]{(\gtCrashRole{\gtG}{\roleP})}{\roleQ}
            \)

            Thus we have, by transitivity of subtyping,
            $\forall \roleQ \in \gtRoles{\gtCrashRole{\gtG}{\roleP}} :
            \stEnvApp{\stEnv}{\roleQ}
            \stSub \gtProj[\rolesR]{\gtG}{\roleQ}
            \stSub \gtProj[\rolesR]{(\gtCrashRole{\gtG}{\roleP})}{\roleQ}
            $, as required.

          \item
            From association~\eqref{eq:soundness:movecrash_assoc_pre}, we know
            that
            $\forall \roleQ \in \rolesC : \stEnvApp{\stEnv}{\roleQ} = \stStop$,
            and they are unchanged in $\stEnvi$.

            Moreover, we have updated $\stEnvApp{\stEnvi}{\roleP} = \stStop$.

            This completes the consideration of the set $\rolesCi$ \eqref{eq:soundness:movecrash_rolesCi}.
          \item
            No change here.
          \item
            By~\autoref{lem:queue-assoc-crash}.
        \end{enumerate}
    \item Case \inferrule{\iruleGtMoveRec}:

      From the premise we have
      \begin{gather}
        \stEnvAssoc{\gtWithCrashedRoles{\rolesC}{\gtG}}{\stEnv; \qEnv}{\rolesR}
        \\
        \gtG = \gtRec{\gtRecVar}{\gtG[0]}
        \\
        \gtWithCrashedRoles{\rolesC}{\gtG}
        \gtMove{\rolesR}
        \\
        \gtWithCrashedRoles{\rolesC}{\gtG[0]{}\subst{\gtRecVar}{\gtRec{\gtRecVar}{\gtG[0]}}}
        \gtMove[\stEnvAnnotGenericSym]{\rolesR}
        \gtWithCrashedRoles{\rolesCi}{\gtGi}
      \end{gather}

      By~\autoref{prop:gt-lts-unfold-once}, we have
      $\stEnvAssoc{
        \gtWithCrashedRoles{\rolesC}{\gtG[0]{}\subst{\gtRecVar}{\gtRec{\gtRecVar}{\gtG[0]}}}
      }{\stEnv; \qEnv}{\rolesR}
      $, and we can apply inductive hypothesis to obtain the desired result.

    \item Case \inferrule{\iruleGtMoveOut}:

      From the premise we have
      \begin{gather}
        \stEnvAssoc{\gtWithCrashedRoles{\rolesC}{\gtG}}{\stEnv; \qEnv}{\rolesR}
        \label{eq:soundness:moveout_assoc_pre}
        \\
        \gtG =
          \gtCommSmall{\roleP}{\roleQ}{i \in I}{\gtLab[i]}{\tyGround[i]}{\gtG[i]}
        \\
        \gtWithCrashedRoles{\rolesC}{\gtG}
        \gtMove{\rolesR}
        \\
        j \in I
        \\
        \gtLab[j] \neq \gtCrashLab
        \\
        \stEnvAnnotGenericSym =
          \stEnvOutAnnotSmall{\roleP}{\roleQ}{\stChoice{\gtLab[j]}{\tyGround[j]}}
        \\
        \rolesCi = \rolesC
        \\
        \gtGi =
          \gtCommTransit{\roleP}{\roleQ}{i \in I}{\gtLab[i]}{\tyGround[i]}{\gtG[i]}{j}
      \end{gather}
      By association \eqref{eq:soundness:moveout_assoc_pre} and $\roleP \in \gtRoles{\gtG}$,
      we know that $\stEnvApp{\stEnv}{\roleP} \stSub
      \gtProj[\rolesR]{\gtG}{\roleP} =$ \linebreak
      $
        \stIntSum{\roleQ}{i \in \stIdxRemoveCrash{I}{\stLab[i]}}{
        \stChoice{\stLab[i]}{\tyGround[i]} \stSeq (\gtProj[\rolesR]{\gtG[i]}{\roleP})%
      }$. Then by~\autoref{lem:subtyping-invert}, we obtain that
      $\stEnvApp{\stEnv}{\roleP} = $ \linebreak
      $\stIntSum{\roleQ}{i \in I'}{
        \stChoice{\stLab[i]}{\tyGround[i]} \stSeq \stT[i]%
      }$, where $I' \subseteq \stIdxRemoveCrash{I}{\stLab[i]}$ and $\forall i \in I': \stT[i] \stSub \gtProj[\rolesR]{\gtG[i]}{\roleP}$.
      Note that here for any $i \in I': \stLab[i] = \gtLab[i]$.

      Since the $\stCrashLab$ label cannot appear in the internal choices, it holds that for any $i \in I'$, $\stLab[i] \neq \stCrashLab$.
      Therefore, with $\forall i \in I': \stLab[i] = \gtLab[i]$, we can set $j \in I'$ with $\stLab[j] = \gtLab[j] \neq \stCrashLab$ and
      $  \stEnvAnnotGenericSym =
          \stEnvOutAnnotSmall{\roleP}{\roleQ}{\stChoice{\gtLab[j]}{\tyGround[j]}}
          =
          \stEnvOutAnnotSmall{\roleP}{\roleQ}{\stChoice{\stLab[j]}{\tyGround[j]}}$.

        Let $\stEnvi; \qEnvi = \stEnvUpd{\stEnv}{\roleP}{\stT[j]};  \stEnvUpd{\qEnv}{\roleP, \roleQ}{ \stQCons{
          \stEnvApp{\qEnv}{\roleP, \roleQ}
        }{
          \stQMsg{\stLab[j]}{\tyGround[j]}
        }}$.
      We show that
      $\stEnv; \qEnv \stEnvMoveGenAnnot \stEnvi; \qEnvi$\;
      and
        \;$\stEnvAssoc{\gtWithCrashedRoles{\rolesCi}{\gtGi}}{\stEnvi;
        \qEnvi}{\rolesR}$.

    For the first part: we apply \inferrule{\iruleTCtxOut} on those we have:
    $\stEnvApp{\stEnv}{\roleP} =
      \stIntSum{\roleQ}{i \in I'}{
        \stChoice{\stLab[i]}{\tyGround[i]} \stSeq \stT[i]%
      }$ and $j \in I'$, to obtain the thesis.

   For the second part:
     \begin{enumerate}[label=(A\arabic*)]
     \item   We want to show $\forall \roleR \in \gtRoles{\gtGi}: \stEnvApp{\stEnvi}{\roleR}
           \stSub \gtProj[\rolesR]{\gtGi}{\roleR}$. We consider three subcases:
           \begin{itemize}[leftmargin=*]
           \item $\roleR = \roleP$: since
           $\stEnvApp{\stEnvi}{\roleP} = \stT[j]$, $j \in I'$, and $\forall i \in I': \stT[i] \stSub \gtProj[\rolesR]{\gtG[i]}{\roleP}$,
           we have $\stEnvApp{\stEnvi}{\roleP} \stSub
           \gtProj[\rolesR]{\gtG[j]}{\roleP} =  \gtProj[\rolesR]{\gtGi}{\roleP}$, as desired.

           \item $\roleR = \roleQ$:
                     by association~\eqref{eq:soundness:moveout_assoc_pre},
                     we have that
                     $\stEnvApp{\stEnvi}{\roleQ} = \stEnvApp{\stEnv}{\roleQ}
                     \stSub  \gtProj[\rolesR]{\gtG}{\roleQ}=$ \linebreak
                     $ \stExtSum{\roleP}{i \in I}{
        \stChoice{\stLab[i]}{\tyGround[i]} \stSeq (\gtProj[\rolesR]{\gtG[i]}{\roleQ})%
      } =  \gtProj[\rolesR]{\gtGi}{\roleQ}
      $, as desired.
           \item $\roleR \neq \roleQ$ and $\roleR \neq \roleP$:
           by association~\eqref{eq:soundness:moveout_assoc_pre},
            it holds that $\stEnvApp{\stEnvi}{\roleR} =  \stEnvApp{\stEnv}{\roleR} \stSub
           \gtProj[\rolesR]{\gtG}{\roleR} =  \stMerge{i \in I}{\gtProj[\rolesR]{\gtG[i]}{\roleR}}
           =
            \gtProj[\rolesR]{\gtGi}{\roleR}$, as desired.
           \end{itemize}

     \item No change here.
     \item No change here.
     \item  We are left to show that
      $\qEnvi =  \stEnvUpd{\qEnv}{\roleP, \roleQ}{ \stQCons{
          \stEnvApp{\qEnv}{\roleP, \roleQ}
        }{
          \stQMsg{\stLab[j]}{\tyGround[j]}
        }}$ is associated with \linebreak $\gtWithCrashedRoles{\rolesC}
     {\gtCommTransit{\roleP}{\roleQ}{i \in I}{\gtLab[i]}{\tyGround[i]}{\gtG[i]}{j}}$.

     Since $\qEnv$ is associated with $\gtWithCrashedRoles{\rolesC}
     {\gtCommSmall{\roleP}{\roleQ}{i \in I}{\gtLab[i]}{\tyGround[i]}{\gtG[i]}}$,
     by~\autoref{def:assoc-queue}, we have $\stEnvApp{\qEnv}{\roleP, \roleQ} =
     \stQEmpty$ and\linebreak
     $\forall i \in I: \qEnv \text{~is associated with~}
      \gtWithCrashedRoles{\rolesC}{\gtG[i]}$.

     Since $\gtLab[j] \neq \gtCrashLab$, we just need to show that
     $\stEnvApp{\qEnvi}{\roleP, \roleQ} = \stQMsg{\gtLab[j]}{\tyGround[j]}$,
     which follows directly from
     $\qEnvi =  \stEnvUpd{\qEnv}{\roleP, \roleQ}{ \stQCons{
          \stEnvApp{\qEnv}{\roleP, \roleQ}
        }{
          \stQMsg{\stLab[j]}{\tyGround[j]}
        }}$, $\gtLab[j] = \stLab[j]$, and $\stEnvApp{\qEnv}{\roleP, \roleQ} = \stQEmpty$,
        and
      $\forall i \in I: \stEnvUpd{\qEnvi}{\roleP, \roleQ}{\stQEmpty}$ is associated with
      $\gtWithCrashedRoles{\rolesC}{\gtG[i]}$, which follows from
      $\stEnvUpd{\qEnvi}{\roleP, \roleQ}{\stQEmpty}  = \qEnv$ and
       $\forall i \in I: \qEnv \text{~is associated with~}
      \gtWithCrashedRoles{\rolesC}{\gtG[i]}$.
     \end{enumerate}

 \item Case \inferrule{\iruleGtMoveIn}:

      From the premise we have
      \begin{gather}
        \stEnvAssoc{\gtWithCrashedRoles{\rolesC}{\gtG}}{\stEnv; \qEnv}{\rolesR}
         \label{eq:soundness:movein_assoc_pre}
        \\
        \gtG =
          \gtCommTransit{\rolePMaybeCrashed}{\roleQ}{i \in I}{\gtLab[i]}{\tyGround[i]}{\gtG[i]}{j}
        \\
        \gtWithCrashedRoles{\rolesC}{\gtG}
        \gtMove{\rolesR}
        \\
        j \in I
        \\
        \gtLab[j] \neq \gtCrashLab
        \\
        \stEnvAnnotGenericSym =
          \stEnvInAnnotSmall{\roleQ}{\roleP}{\stChoice{\gtLab[j]}{\tyGround[j]}}
        \\
        \rolesCi = \rolesC
        \\
        \gtGi = \gtG[j]
      \end{gather}
     By association \eqref{eq:soundness:movein_assoc_pre} and $\roleQ \in \gtRoles{\gtG}$,
      we know that $\stEnvApp{\stEnv}{\roleQ} \stSub \gtProj[\rolesR]{\gtG}{\roleQ} = $ \linebreak
        $\stExtSum{\roleP}{i \in I}{
        \stChoice{\stLab[i]}{\tyGround[i]} \stSeq (\gtProj[\rolesR]{\gtG[i]}{\roleQ})%
      }$. Note that here for any $i \in I: \stLab[i] = \gtLab[i]$. Furthermore, by~\autoref{lem:subtyping-invert},
      we obtain that $\stEnvApp{\stEnv}{\roleQ} =
      \stExtSum{\roleP}{i \in I'}{
        \stChoice{\stLab[i]}{\tyGround[i]} \stSeq \stT[i]%
      }$, where $I \subseteq I'$ and $\forall i \in I: \stT[i] \stSub \gtProj[\rolesR]{\gtG[i]}{\roleQ}$.
      From association \eqref{eq:soundness:movein_assoc_pre}, we also get that
      $\stEnvApp{\qEnv}{\roleP, \roleQ} = \stQCons{\stQMsg{\gtLab[j]}{\tyGround[j]}}{\stQ}
      = \stQCons{\stQMsg{\stLab[j]}{\tyGround[j]}}{\stQ}$.

        Let $\stEnvi; \qEnvi = \stEnvUpd{\stEnv}{\roleQ}{\stT[j]};  \stEnvUpd{\qEnv}{\roleP, \roleQ}{\stQ}$.
      We show
      $\stEnv; \qEnv \stEnvMoveGenAnnot \stEnvi; \qEnvi$\;
      and
        \;$\stEnvAssoc{\gtWithCrashedRoles{\rolesCi}{\gtGi}}{\stEnvi;
        \qEnvi}{\rolesR}$.

    For the first part: we apply \inferrule{\iruleTCtxIn} on those we have:
    $\stEnvApp{\stEnv}{\roleQ} =
      \stExtSum{\roleP}{i \in I'}{
        \stChoice{\stLab[i]}{\tyGroundi[i]} \stSeq \stT[i]
      }$, $j \in I \subseteq I'$, $\stLab[j] = \gtLab[j]$, and $\stEnvApp{\qEnv}{\roleP, \roleQ} =
      \stQCons{\stQMsg{\stLab[j]}{\tyGround[j]}}{\stQ}$,
    to obtain the thesis.

    For the second part:
     \begin{enumerate}[label=(A\arabic*)]
     \item
       We want to show $\forall \roleR \in \gtRoles{\gtG[j]}: \stEnvApp{\stEnvi}{\roleR}
           \stSub \gtProj[\rolesR]{\gtG[j]}{\roleR}$. We consider three subcases:
           \begin{itemize}[leftmargin=*]
           \item $\roleR = \roleQ$ (meaning that $\roleQ \in \gtRoles{\gtG[j]}$): since
           $\stEnvApp{\stEnvi}{\roleQ} = \stT[j]$ and $\forall i \in I: \stT[i] \stSub \gtProj[\rolesR]{\gtG[i]}{\roleQ}$,
           we have $\stEnvApp{\stEnvi}{\roleQ} \stSub
           \gtProj[\rolesR]{\gtG[j]}{\roleQ}$, as desired.
           \item $\roleR = \roleP$ (meaning that $\rolePMaybeCrashed = \roleP$ and $\roleP \in \gtRoles{\gtG[j]}$):
                     by association~\eqref{eq:soundness:movein_assoc_pre},
                     we have
                     $\stEnvApp{\stEnvi}{\roleP} = \stEnvApp{\stEnv}{\roleP}
                     \stSub  \gtProj[\rolesR]{\gtG}{\roleP} =  \gtProj[\rolesR]{\gtG[j]}{\roleP}$, as desired.
           \item $\roleR \neq \roleQ$ and $\roleR \neq \roleP$:
           by association~\eqref{eq:soundness:movein_assoc_pre},
            it holds that $\stEnvApp{\stEnvi}{\roleR} =  \stEnvApp{\stEnv}{\roleR} \stSub
           \gtProj[\rolesR]{\gtG}{\roleR} =  \stMerge{i \in I}{\gtProj[\rolesR]{\gtG[i]}{\roleR}}$.
           Then, by applying~\autoref{lem:merge-subtyping} and transitivity of subtyping,  we conclude with $\stEnvApp{\stEnvi}{\roleR}
        \stSub \gtProj[\rolesR]{\gtG[j]}{\roleR}$, as desired.
           \end{itemize}
     \item No change here.
     \item No change here if $\roleQ \in \gtRoles{\gtG[j]}$ and $\roleP \in \gtRoles{\gtG[j]}$. Otherwise, if $\roleQ \notin \gtRoles{\gtG[j]}$: with
            $\roleQ \notin \gtRolesCrashed{\gtG[j]}$, by~\autoref{lem:not-in-roles-end}, we have $\gtProj[\rolesR]{\gtG[j]}{\roleQ} = \stEnd$.
            Furthermore, by the fact that $\forall i \in I: \stT[i] \stSub \gtProj[\rolesR]{\gtG[i]}{\roleQ} $,
            it holds that $\stT[j] = \stEnd$, and thus,
          $\stEnvApp{\stEnvi}{\roleQ} = \stEnd$, as desired. The argument for $\roleP \notin  \gtRoles{\gtG[j]}$ and
          $\rolePMaybeCrashed = \roleP$ follows similarly.
     \item  Since $\qEnv$ is associated with $\gtWithCrashedRoles{\rolesC}
     {\gtCommTransit{\rolePMaybeCrashed}{\roleQ}{i \in I}{\gtLab[i]}{\tyGround[i]}{\gtG[i]}{j}}$
     and $\gtLab[j] \neq \gtCrashLab$, by~\autoref{def:assoc-queue},  we have that
     $\stEnvApp{\qEnv}{\roleP, \roleQ} = \stQCons{\stQMsg{\gtLab[j]}{\tyGround[j]}}{\stQ}$ and
            $\forall i \in I: \stEnvUpd{\qEnv}{\roleP, \roleQ}{\stQ} \text{~is
            associated with~}\linebreak
      \gtWithCrashedRoles{\rolesC}{\gtG[i]}$, which follows that $\qEnvi = \stEnvUpd{\qEnv}{\roleP, \roleQ}{\stQ}$
      is associated with
       $\gtWithCrashedRoles{\rolesC}{\gtG[j]}$, as desired.
     \end{enumerate}

    \item Case \inferrule{\iruleGtMoveCrDe}:

      From the premise we have
      \begin{gather}
        \stEnvAssoc{\gtWithCrashedRoles{\rolesC}{\gtG}}{\stEnv; \qEnv}{\rolesR}
        \label{eq:soundness:detectcrash_assoc_pre}
        \\
        \gtG =
          \gtCommTransit{\rolePCrashed}{\roleQ}{i \in I}{\gtLab[i]}{\tyGround[i]}{\gtG[i]}{j}
        \\
        \gtWithCrashedRoles{\rolesC}{\gtG}
        \gtMove{\rolesR}
        \\
        j \in I
        \label{eq:soundness:detectcrash_index}
        \\
        \gtLab[j] = \gtCrashLab
        \label{eq:soundness:detectcrash_crash_branch}
        \\
        \stEnvAnnotGenericSym =
          \ltsCrDe{\mpS}{\roleQ}{\roleP}
        \\
        \rolesCi = \rolesC
        \\
        \gtGi = \gtG[j]
      \end{gather}
      By association \eqref{eq:soundness:detectcrash_assoc_pre} and $\roleQ \in \gtRoles{\gtG}$,
      we know that $\stEnvApp{\stEnv}{\roleQ} \stSub \gtProj[\rolesR]{\gtG}{\roleQ} = $ \linebreak 
        $\stExtSum{\roleP}{i \in I}{
        \stChoice{\stLab[i]}{\tyGround[i]} \stSeq (\gtProj[\rolesR]{\gtG[i]}{\roleQ})%
      }$. Note that here for any $i \in I: \stLab[i] = \gtLab[i]$. Furthermore, by~\autoref{lem:subtyping-invert},
      we obtain that $\stEnvApp{\stEnv}{\roleQ} =
      \stExtSum{\roleP}{i \in I'}{
        \stChoice{\stLab[i]}{\tyGroundi[i]} \stSeq \stT[i]%
      }$, where $I \subseteq I'$ and $\forall i \in I: \tyGround[i] =
      \tyGroundi[i]$ and $\stT[i] \stSub \gtProj[\rolesR]{\gtG[i]}{\roleQ}$.

       Let $\stEnvi; \qEnvi = \stEnvUpd{\stEnv}{\roleQ}{\stT[j]}; \qEnv$.
      We show
      $\stEnv; \qEnv \stEnvMoveGenAnnot \stEnvi; \qEnvi$\;
      and
        \;$\stEnvAssoc{\gtWithCrashedRoles{\rolesCi}{\gtGi}}{\stEnvi;
        \qEnvi}{\rolesR}$.

        For the first part: By association~\eqref{eq:soundness:detectcrash_assoc_pre},
        $\roleP \in \rolesC$, and
        $\gtLab[j] = \gtCrashLab$~\eqref{eq:soundness:detectcrash_crash_branch},
        we know that $\stEnvApp{\stEnv}{\roleP} = \stStop$ and
        $\stEnvApp{\qEnv}{\roleP, \roleQ} = \stQEmpty$.
        Since $j \in I$~\eqref{eq:soundness:detectcrash_index},
         $\gtLab[j] = \gtCrashLab$~\eqref{eq:soundness:detectcrash_crash_branch}, and
         $\forall i \in I: \gtLab[i] = \stLab[i]$, we have $\stLab[j] = \stCrashLab$.  We also get
         $j \in I'$ from $j \in I$ and $I \subseteq I'$. Thus, together with $\stEnvApp{\stEnv}{\roleQ} =
      \stExtSum{\roleP}{i \in I'}{
        \stChoice{\stLab[i]}{\tyGroundi[i]} \stSeq \stT[i]%
      }$, we apply \inferrule{\iruleTCtxCrashDetect} to obtain the thesis.

     For the second part:
        \begin{enumerate}[label=(A\arabic*)]
          \item
          We want to show $\forall \roleR \in \gtRoles{\gtG[j]}: \stEnvApp{\stEnvi}{\roleR}
           \stSub \gtProj[\rolesR]{\gtG[j]}{\roleR}$. We consider two subcases:
           \begin{itemize}[leftmargin=*]
           \item $\roleR = \roleQ$ (meaning that $\roleQ \in \gtRoles{\gtG[j]}$): since
           $\stEnvApp{\stEnvi}{\roleQ} = \stT[j]$, $\forall i \in I: \stT[i] \stSub \gtProj[\rolesR]{\gtG[i]}{\roleQ}$, and
           $j \in I$~\eqref{eq:soundness:detectcrash_index}, we have $\stEnvApp{\stEnvi}{\roleQ} \stSub
           \gtProj[\rolesR]{\gtG[j]}{\roleQ}$, as desired.
           \item $\roleR \neq \roleQ$: since $\roleP \in \rolesC$, we know $\roleP \notin \gtRoles{\gtG[j]}$, and thus, $\roleR \neq \roleP$.
           Furthermore, by association~\eqref{eq:soundness:detectcrash_assoc_pre}, it holds that
           $\stEnvApp{\stEnvi}{\roleR} = \stEnvApp{\stEnv}{\roleR} \stSub
           \gtProj[\rolesR]{\gtG}{\roleR} =  \stMerge{i \in I}{\gtProj[\rolesR]{\gtG[i]}{\roleR}}$.
           Then, by applying~\autoref{lem:merge-subtyping} and transitivity of subtyping,  we can conclude with $\stEnvApp{\stEnvi}{\roleR}
        \stSub \gtProj[\rolesR]{\gtG[j]}{\roleR}$, as desired.
           \end{itemize}

          \item
            No change here.
          \item
            No change here if $\roleQ \in \gtRoles{\gtG[j]}$. Otherwise, if $\roleQ \notin \gtRoles{\gtG[j]}$: with
            $\roleQ \notin \gtRolesCrashed{\gtG[j]}$, by~\autoref{lem:not-in-roles-end}, we have $\gtProj[\rolesR]{\gtG[j]}{\roleQ} = \stEnd$.
            Furthermore, by the fact that $\forall i \in I: \stT[i] \stSub \gtProj[\rolesR]{\gtG[i]}{\roleQ} $,
            it holds that $\stT[j] = \stEnd$, and thus,
          $\stEnvApp{\stEnvi}{\roleQ} = \stEnd$, as desired.

          \item
             Since $\qEnv$ is associated with $\gtWithCrashedRoles{\rolesC}{\gtCommTransit{\rolePCrashed}{\roleQ}{i \in
            I}{\gtLab[i]}{\tyGround[i]}{\gtG[i]}{j} }$ and $\gtLab[j] = \gtCrashLab$, by~\autoref{def:assoc-queue},  we have that
            $\forall i \in I: \qEnv \text{~is associated with~}
      \gtWithCrashedRoles{\rolesC}{\gtG[i]}$, which follows that $\qEnvi = \qEnv$ is associated with
       $\gtWithCrashedRoles{\rolesC}{\gtG[j]}$, as desired.
        \end{enumerate}

       \item Case \inferrule{\iruleGtMoveOrph}:

      From the premise we have
      \begin{gather}
        \stEnvAssoc{\gtWithCrashedRoles{\rolesC}{\gtG}}{\stEnv; \qEnv}{\rolesR}
        \label{eq:soundness:moveorph_assoc_pre}
        \\
        \gtG =
          \gtCommSmall{\roleP}{\roleQCrashed}{i \in I}{\gtLab[i]}{\tyGround[i]}{\gtG[i]}
        \\
        \gtWithCrashedRoles{\rolesC}{\gtG}
        \gtMove{\rolesR}
        \\
        j \in I
        \\
        \gtLab[j] \neq \gtCrashLab
        \\
        \stEnvAnnotGenericSym =
          \stEnvOutAnnotSmall{\roleP}{\roleQ}{\stChoice{\gtLab[j]}{\tyGround[j]}}
        \\
        \rolesCi = \rolesC
        \\
        \gtGi = \gtG[j]
      \end{gather}
        By association \eqref{eq:soundness:moveorph_assoc_pre} and $\roleP \in \gtRoles{\gtG}$,
      we know that $\stEnvApp{\stEnv}{\roleP} \stSub
      \gtProj[\rolesR]{\gtG}{\roleP} = \linebreak
        \stIntSum{\roleQ}{i \in \stIdxRemoveCrash{I}{\stLab[i]}}{
        \stChoice{\stLab[i]}{\tyGround[i]} \stSeq (\gtProj[\rolesR]{\gtG[i]}{\roleP})%
      }$. Then by~\autoref{lem:subtyping-invert}, we obtain that
      $\stEnvApp{\stEnv}{\roleP} = $ \linebreak 
      $\stIntSum{\roleQ}{i \in I'}{
        \stChoice{\stLab[i]}{\tyGround[i]} \stSeq \stT[i]%
      }$, where $I' \subseteq \stIdxRemoveCrash{I}{\stLab[i]}$ and $\forall i \in I': \stT[i] \stSub \gtProj[\rolesR]{\gtG[i]}{\roleP}$.
      Note that here for any $i \in I': \stLab[i] = \gtLab[i]$.

      Since the $\stCrashLab$ label cannot appear in the internal choices, it holds that for any $i \in I'$, $\stLab[i] \neq \stCrashLab$.
      Therefore, with $\forall i \in I': \stLab[i] = \gtLab[i]$, we can set $j \in I'$ with $\stLab[j] = \gtLab[j] \neq \stCrashLab$ and
      $  \stEnvAnnotGenericSym =
          \stEnvOutAnnotSmall{\roleP}{\roleQ}{\stChoice{\gtLab[j]}{\tyGround[j]}}
          =
          \stEnvOutAnnotSmall{\roleP}{\roleQ}{\stChoice{\stLab[j]}{\tyGround[j]}}$.

        Let $\stEnvi; \qEnvi = \stEnvUpd{\stEnv}{\roleP}{\stT[j]};  \stEnvUpd{\qEnv}{\roleP, \roleQ}{ \stQCons{
          \stEnvApp{\qEnv}{\roleP, \roleQ}
        }{
          \stQMsg{\stLab[j]}{\tyGround[j]}
        }}$.
      We show that
      $\stEnv; \qEnv \stEnvMoveGenAnnot \stEnvi; \qEnvi$\;
      and\linebreak
        \;$\stEnvAssoc{\gtWithCrashedRoles{\rolesCi}{\gtGi}}{\stEnvi;
        \qEnvi}{\rolesR}$.

    For the first part: we apply \inferrule{\iruleTCtxOut} on those we have:
    $\stEnvApp{\stEnv}{\roleP} =
      \stIntSum{\roleQ}{i \in I'}{
        \stChoice{\stLab[i]}{\tyGround[i]} \stSeq \stT[i]%
      }$ and $j \in I'$, to obtain the thesis.

   For the second part:
     \begin{enumerate}[label=(A\arabic*)]
     \item   We want to show $\forall \roleR \in \gtRoles{\gtGi}: \stEnvApp{\stEnvi}{\roleR}
           \stSub \gtProj[\rolesR]{\gtGi}{\roleR}$. We consider two subcases:
           \begin{itemize}[leftmargin=*]
           \item $\roleR = \roleP$ (meaning that $\roleP \in \gtRoles{\gtG[j]}$): since
           $\stEnvApp{\stEnvi}{\roleP} = \stT[j]$, $\forall i \in I': \stT[i] \stSub \gtProj[\rolesR]{\gtG[i]}{\roleP}$, and
           $j \in I'$, we have $\stEnvApp{\stEnvi}{\roleP} \stSub
           \gtProj[\rolesR]{\gtG[j]}{\roleP}$, as desired.

           \item $\roleR \neq \roleP$: since $\roleQ \in \rolesC$, we know $\roleQ \notin \gtRoles{\gtG[j]}$, and thus, $\roleR \neq \roleQ$.
           Furthermore, by association~\eqref{eq:soundness:moveorph_assoc_pre}, it holds that
           $\stEnvApp{\stEnvi}{\roleR} = \stEnvApp{\stEnv}{\roleR} \stSub
           \gtProj[\rolesR]{\gtG}{\roleR} =  \stMerge{i \in I}{\gtProj[\rolesR]{\gtG[i]}{\roleR}}$.
           Then, by applying~\autoref{lem:merge-subtyping} and transitivity of subtyping,  we can conclude with $\stEnvApp{\stEnvi}{\roleR}
        \stSub \gtProj[\rolesR]{\gtG[j]}{\roleR}$, as desired.
           \end{itemize}

     \item No change here.
     \item
     No change here if $\roleP \in \gtRoles{\gtG[j]}$. Otherwise, if $\roleP \notin \gtRoles{\gtG[j]}$: with
            $\roleP \notin \gtRolesCrashed{\gtG[j]}$, by~\autoref{lem:not-in-roles-end}, we have $\gtProj[\rolesR]{\gtG[j]}{\roleP} = \stEnd$.
            Furthermore, by the fact that $\forall i \in I': \stT[i] \stSub \gtProj[\rolesR]{\gtG[i]}{\roleP} $,
            it holds that $\stT[j] = \stEnd$, and thus,
          $\stEnvApp{\stEnvi}{\roleP} = \stEnd$, as desired.
     \item  We are left to show
      $\qEnvi =  \stEnvUpd{\qEnv}{\roleP, \roleQ}{ \stQCons{
          \stEnvApp{\qEnv}{\roleP, \roleQ}
        }{
          \stQMsg{\stLab[j]}{\tyGround[j]}
        }}$ is associated with $\gtWithCrashedRoles{\rolesC}
     {\gtG[j]}$.

     Since $\qEnv$ is associated with $\gtWithCrashedRoles{\rolesC}
     {\gtCommSmall{\roleP}{\roleQCrashed}{i \in I}{\gtLab[i]}{\tyGround[i]}{\gtG[i]}}$,
     by~\autoref{def:assoc-queue}, we have $\stEnvApp{\qEnv}{\roleP, \roleQ} = \stQUnavail$ and
     $\forall i \in I: \qEnv \text{~is associated with~}
      \gtWithCrashedRoles{\rolesC}{\gtG[i]}$. It follows from $\stEnvApp{\qEnv}{\roleP, \roleQ} = \stQUnavail$
       that
      $\qEnvi = \stEnvUpd{\qEnv}{\roleP, \roleQ}{ \stQCons{
          \stEnvApp{\qEnv}{\roleP, \roleQ}
        }{
          \stQMsg{\stLab[j]}{\tyGround[j]}
        }}
        =
        \stEnvUpd{\qEnv}{\roleP, \roleQ}{ \stQCons{
        \stQUnavail
        }{\stQMsg{\stLab[j]}{\tyGround[j]}
        } }
        =
         \stEnvUpd{\qEnv}{\roleP, \roleQ}{\stQUnavail}
         =
        \qEnv$. Hence, with $\forall i \in I: \qEnv \text{~is associated with~}
      \gtWithCrashedRoles{\rolesC}{\gtG[i]}$, we conclude that $\qEnvi = \qEnv$ is associated with
      $\gtWithCrashedRoles{\rolesC}
     {\gtG[j]}$, as desired.
  \end{enumerate}

 \item Case \inferrule{\iruleGtMoveCtx}:

      From the premise we have
      \begin{gather}
        \stEnvAssoc{\gtWithCrashedRoles{\rolesC}{\gtG}}{\stEnv; \qEnv}{\rolesR}
        \label{eq:soundness:context_1_assoc_pre}
        \\
        \gtG =
          \gtCommSmall{\roleP}{\roleQMaybeCrashed}{i \in I}{\gtLab[i]}{\tyGround[i]}{\gtG[i]}
        \\
        \gtWithCrashedRoles{\rolesC}{\gtG}
        \gtMove{\rolesR}
        \\
        \forall i \in I :
        \gtWithCrashedRoles{\rolesC}{\gtG[i]}
        \gtMove[\stEnvAnnotGenericSym]{\rolesR}
        \gtWithCrashedRoles{\rolesCi}{\gtGi[i]}
        \\
        \ltsSubject{\stEnvAnnotGenericSym} \notin \setenum{\roleP, \roleQ}
        \\
        \gtGi =
          \gtCommSmall{\roleP}{\roleQMaybeCrashed}{i \in I}{\gtLab[i]}{\tyGround[i]}{\gtGi[i]}
      \end{gather}
      We consider two subcases depending on whether $\roleQ$ has crashed.
      \begin{itemize}[leftmargin=*]
      \item $\roleQMaybeCrashed = \roleQ$:

      For $\roleP$, we have that
     $\gtProj[\rolesR]{\gtG}{\roleP} =
       \stIntSum{\roleQ}{i \in \stIdxRemoveCrash{I}{\stLab[i]}}{ %
        \stChoice{\stLab[i]}{\tyGround[i]} \stSeq (\gtProj[\rolesR]{\gtG[i]}{\roleP})%
      }$ and that $ \gtProj[\rolesR]{\gtGi}{\roleP} = \linebreak
       \stIntSum{\roleQ}{i \in \stIdxRemoveCrash{I}{\stLab[i]}}{ %
        \stChoice{\stLab[i]}{\tyGround[i]} \stSeq (\gtProj[\rolesR]{\gtGi[i]}{\roleP})%
      }$. For $\roleQ$, we have $\gtProj[\rolesR]{\gtG}{\roleQ} =
       \stExtSum{\roleP}{i \in I}{%
        \stChoice{\stLab[i]}{\tyGround[i]} \stSeq (\gtProj[\rolesR]{\gtG[i]}{\roleQ})%
      }$ and $\gtProj[\rolesR]{\gtGi}{\roleQ} =
       \stExtSum{\roleP}{i \in I}{%
        \stChoice{\stLab[i]}{\tyGround[i]} \stSeq (\gtProj[\rolesR]{\gtGi[i]}{\roleQ})%
      }$. Take an arbitrary $j \in I$. %
      Let $\stEnvApp{\stEnv[j]}{\roleP} =
      \gtProj[\rolesR]{\gtG[j]}{\roleP}$, $\stEnvApp{\stEnv[j]}{\roleQ} =
      \gtProj[\rolesR]{\gtG[j]}{\roleQ}$, $\stEnvApp{\stEnv[j]}{\roleR} =
     \stEnvApp{\stEnv}{\roleR}$ for $\roleR \in \dom{\stEnv} \setminus \setenum{\roleP, \roleQ}$, $\qEnv[j] = \qEnv$.
     We show $\stEnvAssoc{\gtWithCrashedRoles{\rolesC}{\gtG[j]}}{\stEnv[j]; \qEnv[j]}{\rolesR}$.
      \begin{enumerate}[label=(A\arabic*)]
      \item We want to show $\forall \roleS \in \gtRoles{\gtG[j]}: \stEnvApp{\stEnv[j]}{\roleS} \stSub \gtProj[\rolesR]{\gtG[j]}{\roleS}$. We consider
      three subcases:
      \begin{itemize}[leftmargin=*]
      \item $\roleS = \roleP$ (meaning that $\roleP \in \gtRoles{\gtG[j]}$): trivial by $\stEnvApp{\stEnv[j]}{\roleP} =
      \gtProj[\rolesR]{\gtG[j]}{\roleP}$ and the reflexivity of subtyping.
      \item $\roleS = \roleQ$ (meaning that $\roleQ \in \gtRoles{\gtG[j]}$): trivial by $\stEnvApp{\stEnv[j]}{\roleQ} =
      \gtProj[\rolesR]{\gtG[j]}{\roleQ}$ and the reflexivity of subtyping.
      \item $\roleS \neq \roleP$ and $\roleS \neq \roleQ$: by association \eqref{eq:soundness:context_1_assoc_pre} and
      $\stEnvApp{\stEnv[j]}{\roleS} =
     \stEnvApp{\stEnv}{\roleS}$, we have $\stEnvApp{\stEnv[j]}{\roleS} \stSub
       \stMerge{i \in I}{\gtProj[\rolesR]{\gtG[i]}{\roleS}}$.
        Then, by~\autoref{lem:merge-subtyping} and transitivity of subtyping,  we conclude  $\stEnvApp{\stEnv[j]}{\roleS}
        \stSub \gtProj[\rolesR]{\gtG[j]}{\roleS}$, as desired.
      \end{itemize}
       \item No change here.
      \item No change here if $\roleP \in \gtRoles{\gtG[j]}$ and $\roleQ \in \gtRoles{\gtG[j]}$. Otherwise, consider the case that $\roleP \notin \gtRoles{\gtG[j]}$:
      with $\roleP \notin \gtRolesCrashed{\gtG[j]}$, by~\autoref{lem:not-in-roles-end}, we have $\gtProj[\rolesR]{\gtG[j]}{\roleP} = \stEnd$.
      Therefore, we know from $\stEnvApp{\stEnv[j]}{\roleP} =
      \gtProj[\rolesR]{\gtG[j]}{\roleP}$ that $ \stEnvApp{\stEnv[j]}{\roleP} = \stEnd$, as required. The argument for
      $\roleQ \notin  \gtRoles{\gtG[j]}$ follows similarly.
      \item Trivial by association~\eqref{eq:soundness:context_1_assoc_pre}, \autoref{def:assoc-queue}, and $\qEnv[j] = \qEnv$.
      \end{enumerate}
      By inductive hypothesis, there exists $\stEnvi[j]; \qEnvi[j]$ such that
      $\stEnv[j]; \qEnv[j]  \stEnvMoveGenAnnot
      \stEnvi[j]; \qEnvi[j]$ and \linebreak
      $\stEnvAssoc{\gtWithCrashedRoles{\rolesCi}{\gtGi[j]}}{\stEnvi[j]; \qEnvi[j]}{\rolesR}$. Since
      $\ltsSubject{\stEnvAnnotGenericSym} \notin \setenum{\roleP, \roleQ}$, we apply~\autoref{lem:stenv-red:trivial-2}, which gives
      $\stEnvApp{\stEnv[j]}{\roleP} = \stEnvApp{\stEnvi[j]}{\roleP}$ and
      $\stEnvApp{\stEnv[j]}{\roleQ} = \stEnvApp{\stEnvi[j]}{\roleQ}$.

      We now construct a configuration $\stEnvi; \qEnvi$ and show
      $\stEnv; \qEnv \stEnvMoveGenAnnot \stEnvi; \qEnvi$ and
      $\stEnvAssoc{\gtWithCrashedRoles{\rolesCi}{\gtGi}}{\stEnvi; \qEnvi}{\rolesR}$.

      Let  $\stEnvApp{\stEnvi}{\roleP}
      = \stEnvApp{\stEnv}{\roleP}$,
       $\stEnvApp{\stEnvi}{\roleQ}
      =
      \stEnvApp{\stEnv}{\roleQ}$,
       $\stEnvApp{\stEnvi}{\roleR} =
      \stMerge{i \in I}{\stEnvApp{\stEnvi[i]}{\roleR}}$ for
    $\roleR \in \dom{\stEnv} \setminus
      \setenum{\roleP, \roleQ}$, $\qEnvi = \qEnvi[j]$ with an arbitrary $j \in I$.

      For the first part that $\stEnv; \qEnv \stEnvMoveGenAnnot \stEnvi; \qEnvi$:

      We know that for any $i \in I$, $\stEnvi[i]$ is obtained from $\stEnv[i]$
      by updating $\ltsSubject{\stEnvAnnotGenericSym}$ to a fixed type $\stT$. It follows that
      for any $i, k \in I$ and for any $\roleR \notin \setenum{\roleP, \roleQ}$, $\stEnvApp{\stEnvi[i]}{\roleR}
      =
      \stEnvApp{\stEnvi[k]}{\roleR}$, and hence, $\stMerge{i \in I}{\stEnvApp{\stEnvi[i]}{\roleR}} =
      \stEnvApp{\stEnvi[j]}{\roleR}$ with an arbitrary $j \in J$. Therefore, we have that $\stEnvi$ is obtained from
      $\stEnv$ by updating $\ltsSubject{\stEnvAnnotGenericSym}$ to $\stEnvApp{\stEnvi[j]}{\ltsSubject{\stEnvAnnotGenericSym}}$,
      i.e., $\stEnvi = \stEnvUpd{\stEnv}{\ltsSubject{\stEnvAnnotGenericSym}}{\stEnvApp{\stEnvi[j]}{\ltsSubject{\stEnvAnnotGenericSym}}}$.

      We apply~\autoref{lem:stenv-red:non-trivial} on $\stEnv[j]; \qEnv[j]
  \stEnvMoveGenAnnot \stEnvi[j]; \qEnvi[j]$,
  $\stEnvAssoc{\gtWithCrashedRoles{\rolesC}{\gtG}}{\stEnv; \qEnv}{\rolesR}$,
  $\stEnvAssoc{\gtWithCrashedRoles{\rolesC}{\gtG[j]}}{\stEnv[j];
  \qEnv[j]}{\rolesR}$, \linebreak
  $\stEnvApp{\stEnv}{\ltsSubject{\stEnvAnnotGenericSym}} = \stEnvApp{\stEnv[j]}{\ltsSubject{\stEnvAnnotGenericSym}}$,
  and
  $\stEnvi =  \stEnvUpd{\stEnv}{\ltsSubject{\stEnvAnnotGenericSym}}{\stEnvApp{\stEnvi[j]}{\ltsSubject{\stEnvAnnotGenericSym}}}$ to get the thesis.

      For the second part that  $\stEnvAssoc{\gtWithCrashedRoles{\rolesCi}{\gtGi}}{\stEnvi; \qEnvi}{\rolesR}$:
      \begin{enumerate}[label=(A\arabic*)]
      \item We want to show $\forall \roleR \in \gtRoles{\gtGi}: \stEnvApp{\stEnvi}{\roleR}
           \stSub \gtProj[\rolesR]{\gtGi}{\roleR}$. We consider three subcases:
           \begin{itemize}[leftmargin=*]
           \item $\roleR = \roleP$: by association \eqref{eq:soundness:context_1_assoc_pre} and
           \autoref{lem:subtyping-invert}, we obtain $\stEnvApp{\stEnv}{\roleP} = 
           \stIntSum{\roleQ}{i \in I'}{ %
        \stChoice{\stLab[i]}{\tyGround[i]} \stSeq \stT[i]%
      }$ where $I' \subseteq \stIdxRemoveCrash{I}{\stLab[i]}$ and $\forall i \in I': \stT[i] \stSub
      \gtProj[\rolesR]{\gtG[i]}{\roleP}$. Since $\stEnvApp{\stEnv[i]}{\roleP} =
       \gtProj[\rolesR]{\gtG[i]}{\roleP} =
      \stEnvApp{\stEnvi[i]}{\roleP}$ for each $i \in I$, with association
      $\stEnvAssoc{\gtWithCrashedRoles{\rolesCi}{\gtGi[i]}}{\stEnvi[i]; \qEnvi[i]}{\rolesR}$, we get
      $\gtProj[\rolesR]{\gtG[i]}{\roleP} = \stEnvApp{\stEnvi[i]}{\roleP} \stSub
      \gtProj[\rolesR]{\gtGi[i]}{\roleP}$. Then by transitivity of subtyping,
      it holds that $\forall i \in I': \stT[i] \stSub \gtProj[\rolesR]{\gtGi[i]}{\roleP}$, which follows that
      $ \stIntSum{\roleQ}{i \in I'}{ %
        \stChoice{\stLab[i]}{\tyGround[i]} \stSeq \stT[i]%
      } \stSub
      \stIntSum{\roleQ}{i \in \stIdxRemoveCrash{I}{\stLab[i]}}{ %
        \stChoice{\stLab[i]}{\tyGround[i]} \stSeq (\gtProj[\rolesR]{\gtGi[i]}{\roleP})%
      }$, as desired.

      \item $\roleR = \roleQ$: by association \eqref{eq:soundness:context_1_assoc_pre} and
           \autoref{lem:subtyping-invert}, we obtain $\stEnvApp{\stEnv}{\roleQ} = 
          \stExtSum{\roleP}{i \in I'}{%
        \stChoice{\stLab[i]}{\tyGround[i]} \stSeq \stT[i]%
      }$ where $I \subseteq I'$ and $\forall i \in I: \stT[i] \stSub
      \gtProj[\rolesR]{\gtG[i]}{\roleQ}$. Since $\stEnvApp{\stEnv[i]}{\roleQ} =
       \gtProj[\rolesR]{\gtG[i]}{\roleQ} =
      \stEnvApp{\stEnvi[i]}{\roleQ}$ for each $i \in I$, with association
      $\stEnvAssoc{\gtWithCrashedRoles{\rolesCi}{\gtGi[i]}}{\stEnvi[i]; \qEnvi[i]}{\rolesR}$, we get
      $\gtProj[\rolesR]{\gtG[i]}{\roleQ} = \stEnvApp{\stEnvi[i]}{\roleQ} \stSub
      \gtProj[\rolesR]{\gtGi[i]}{\roleQ}$. Then by transitivity of subtyping,
      it holds that $\forall i \in I: \stT[i] \stSub \gtProj[\rolesR]{\gtGi[i]}{\roleQ}$, which follows that
      $ \stExtSum{\roleP}{i \in I'}{ %
        \stChoice{\stLab[i]}{\tyGround[i]} \stSeq \stT[i]%
      } \stSub
      \stExtSum{\roleP}{i \in I}{ %
        \stChoice{\stLab[i]}{\tyGround[i]} \stSeq (\gtProj[\rolesR]{\gtGi[i]}{\roleQ})%
      }$, as desired.

      \item $\roleR \neq \roleP$ and $\roleR \neq \roleQ$: by association
      $\stEnvAssoc{\gtWithCrashedRoles{\rolesCi}{\gtGi[i]}}{\stEnvi[i]; \qEnvi[i]}{\rolesR}$ for each
      $i \in I$, it holds that $\stEnvApp{\stEnvi[i]}{\roleR} \stSub \gtProj[\rolesR]{\gtGi[i]}{\roleR}$ for each
      $i \in I$. Then by applying~\autoref{lem:subtype:merge-subty}, we conclude with $\stEnvApp{\stEnvi}{\roleR} = \stMerge{i \in I}{\stEnvApp{\stEnvi[i]}{\roleR}}
      \stSub  \stMerge{i \in I}{\gtProj[\rolesR]{\gtGi[i]}{\roleR}} = \gtProj[\rolesR]{\gtGi}{\roleR}$, as desired.
     \end{itemize}

       \item By association $\stEnvAssoc{\gtWithCrashedRoles{\rolesCi}{\gtGi[i]}}{\stEnvi[i]; \qEnvi[i]}{\rolesR}$ for each
      $i \in I$, it holds that $\forall \roleR \in \rolesCi: \stEnvApp{\stEnvi[i]}{\roleR}  = \stStop$, which follows that for any
      $\roleR \in \rolesCi$, $\stEnvApp{\stEnvi}{\roleR} =  \stMerge{i \in I}{\stEnvApp{\stEnvi[i]}{\roleR}} = \stStop$, as desired.
      \item For any endpoint $\roleR$ in $\gtGi$, $\roleR$ is an endpoint in each $\gtGi[i]$ with $i \in I$. Then by association
      $\stEnvAssoc{\gtWithCrashedRoles{\rolesCi}{\gtGi[i]}}{\stEnvi[i]; \qEnvi[i]}{\rolesR}$, we have $\stEnvApp{\stEnvi[i]}{\roleR} =
      \stEnd$, which follows $\stEnvApp{\stEnvi}{\roleR} =  \stMerge{i \in I}{\stEnvApp{\stEnvi[i]}{\roleR}} = \stEnd$, as desired.
      \item By~\autoref{lem:stenv-queue-red:det}, it holds that for any $i, j \in I$, $\qEnvi[i] = \qEnvi[j]$. Take an arbitrary
      $\qEnvi[j]$ with $j \in I$. Note here $\qEnvi = \qEnvi[j]$. With the fact
      that $\qEnvi[i]$ is associated with $\gtGi[i]$ for any $i \in I$, we have that $\forall i \in I: \qEnvi[j] \text{~is associated with~} \gtGi[i]$, and hence,
       $\forall i \in I: \qEnvi \text{~is associated with~} \gtGi[i]$.
       We are left to show that $\stEnvApp{\qEnvi}{\roleP, \roleQ} = \stQEmpty$, which is obtained by applying \autoref{lem:stenv-red:trivial-3}
      on $\stEnvApp{\qEnv}{\roleP, \roleQ} = \stQEmpty$ and  $\stEnv[j]; \qEnv
  \stEnvMoveGenAnnot \stEnvi[j]; \qEnvi[j]$\,
  with\,
  $\ltsSubject{\stEnvAnnotGenericSym} \notin \setenum{\roleP, \roleQ}$.
      \end{enumerate}

 \item $\roleQMaybeCrashed = \roleQCrashed$:

      Note that $\roleQ \in \rolesC$.
      For $\roleP$, we have
     $\gtProj[\rolesR]{\gtG}{\roleP} =
       \stIntSum{\roleQ}{i \in \stIdxRemoveCrash{I}{\stLab[i]}}{ %
        \stChoice{\stLab[i]}{\tyGround[i]} \stSeq (\gtProj[\rolesR]{\gtG[i]}{\roleP})%
      }$ and \linebreak $ \gtProj[\rolesR]{\gtGi}{\roleP} =
       \stIntSum{\roleQ}{i \in \stIdxRemoveCrash{I}{\stLab[i]}}{ %
        \stChoice{\stLab[i]}{\tyGround[i]} \stSeq (\gtProj[\rolesR]{\gtGi[i]}{\roleP})%
      }$.
      Take an arbitrary $j \in I$. %
      Let $\stEnvApp{\stEnv[j]}{\roleP} =
      \gtProj[\rolesR]{\gtG[j]}{\roleP}$, $\stEnvApp{\stEnv[j]}{\roleQ} =
      \stStop$, $\stEnvApp{\stEnv[j]}{\roleR} =
     \stEnvApp{\stEnv}{\roleR}$ for $\roleR \in \dom{\stEnv} \setminus \setenum{\roleP, \roleQ}$, $\qEnv[j] = \qEnv$.
     We show $\stEnvAssoc{\gtWithCrashedRoles{\rolesC}{\gtG[j]}}{\stEnv[j]; \qEnv[j]}{\rolesR}$.
      \begin{enumerate}[label=(A\arabic*)]
      \item We want to show $\forall \roleS \in \gtRoles{\gtG[j]}: \stEnvApp{\stEnv[j]}{\roleS} \stSub \gtProj[\rolesR]{\gtG[j]}{\roleS}$. We consider
      two subcases:
      \begin{itemize}[leftmargin=*]
      \item $\roleS = \roleP$ (meaning that $\roleP \in \gtRoles{\gtG[j]}$): trivial by $\stEnvApp{\stEnv[j]}{\roleP} =
      \gtProj[\rolesR]{\gtG[j]}{\roleP}$ and the reflexivity of subtyping.
      \item $\roleS \neq \roleP$: since $\roleQ \in \rolesC$, we have $\roleS \neq \roleQ$. By association \eqref{eq:soundness:context_1_assoc_pre} and
      $\stEnvApp{\stEnv[j]}{\roleS} =
     \stEnvApp{\stEnv}{\roleS}$, we have $\stEnvApp{\stEnv[j]}{\roleS} \stSub
       \stMerge{i \in I}{\gtProj[\rolesR]{\gtG[i]}{\roleS}}$.
        Then, by~\autoref{lem:merge-subtyping} and transitivity of subtyping,  we conclude  $\stEnvApp{\stEnv[j]}{\roleS}
        \stSub \gtProj[\rolesR]{\gtG[j]}{\roleS}$, as desired.
      \end{itemize}
       \item No change here.
      \item No change here if $\roleP \in \gtRoles{\gtG[j]}$. Otherwise, consider the case that $\roleP \notin \gtRoles{\gtG[j]}$:
      with $\roleP \notin \gtRolesCrashed{\gtG[j]}$, by~\autoref{lem:not-in-roles-end}, we have $\gtProj[\rolesR]{\gtG[j]}{\roleP} = \stEnd$.
      Therefore, we know from $\stEnvApp{\stEnv[j]}{\roleP} =
      \gtProj[\rolesR]{\gtG[j]}{\roleP}$ that $ \stEnvApp{\stEnv[j]}{\roleP} = \stEnd$, as required.
     \item Trivial by association~\eqref{eq:soundness:context_1_assoc_pre}, \autoref{def:assoc-queue}, and $\qEnv[j] = \qEnv$.
      \end{enumerate}
      By inductive hypothesis, there exists $\stEnvi[j]; \qEnvi[j]$ such that
      $\stEnv[j]; \qEnv[j]  \stEnvMoveGenAnnot
      \stEnvi[j]; \qEnvi[j]$ and \linebreak
      $\stEnvAssoc{\gtWithCrashedRoles{\rolesCi}{\gtGi[j]}}{\stEnvi[j]; \qEnvi[j]}{\rolesR}$. Since
      $\ltsSubject{\stEnvAnnotGenericSym} \notin \setenum{\roleP, \roleQ}$, we apply~\autoref{lem:stenv-red:trivial-2}, which gives
      $\stEnvApp{\stEnv[j]}{\roleP} = \stEnvApp{\stEnvi[j]}{\roleP}$ and
      $\stEnvApp{\stEnv[j]}{\roleQ} = \stEnvApp{\stEnvi[j]}{\roleQ} = \stStop$.
      We know from $\stEnvApp{\stEnvi[j]}{\roleQ} = \stStop$ and
      $\stEnvAssoc{\gtWithCrashedRoles{\rolesCi}{\gtGi[j]}}{\stEnvi[j]; \qEnvi[j]}{\rolesR}$ that $\roleQ \in \rolesCi$.

      We now construct a configuration $\stEnvi; \qEnvi$ and show
      $\stEnv; \qEnv \stEnvMoveGenAnnot \stEnvi; \qEnvi$ and
      $\stEnvAssoc{\gtWithCrashedRoles{\rolesCi}{\gtGi}}{\stEnvi; \qEnvi}{\rolesR}$.

      Let  $\stEnvApp{\stEnvi}{\roleP}
      = \stEnvApp{\stEnv}{\roleP}$,
       $\stEnvApp{\stEnvi}{\roleQ}
      =
      \stEnvApp{\stEnv}{\roleQ}
      = \stStop$,
       $\stEnvApp{\stEnvi}{\roleR} =
      \stMerge{i \in I}{\stEnvApp{\stEnvi[i]}{\roleR}}$ for
    $\roleR \in \dom{\stEnv} \setminus
      \setenum{\roleP, \roleQ}$, $\qEnvi = \qEnvi[j]$ with an arbitrary $j \in I$.

      For the first part that $\stEnv; \qEnv \stEnvMoveGenAnnot \stEnvi; \qEnvi$:

        We know that for any $i \in I$, $\stEnvi[i]$ is obtained from $\stEnv[i]$
      by updating $\ltsSubject{\stEnvAnnotGenericSym}$ to a fixed type $\stT$. It follows that
      for any $i, k \in I$ and for any $\roleR \notin \setenum{\roleP, \roleQ}$, $\stEnvApp{\stEnvi[i]}{\roleR}
      =
      \stEnvApp{\stEnvi[k]}{\roleR}$, and hence, $\stMerge{i \in I}{\stEnvApp{\stEnvi[i]}{\roleR}} =
      \stEnvApp{\stEnvi[j]}{\roleR}$ with an arbitrary $j \in J$. Therefore, we have that $\stEnvi$ is obtained from
      $\stEnv$ by updating $\ltsSubject{\stEnvAnnotGenericSym}$ to $\stEnvApp{\stEnvi[j]}{\ltsSubject{\stEnvAnnotGenericSym}}$,
      i.e., $\stEnvi = \stEnvUpd{\stEnv}{\ltsSubject{\stEnvAnnotGenericSym}}{\stEnvApp{\stEnvi[j]}{\ltsSubject{\stEnvAnnotGenericSym}}}$.

      We apply~\autoref{lem:stenv-red:non-trivial} on $\stEnv[j]; \qEnv[j]
  \stEnvMoveGenAnnot \stEnvi[j]; \qEnvi[j]$,
  $\stEnvAssoc{\gtWithCrashedRoles{\rolesC}{\gtG}}{\stEnv; \qEnv}{\rolesR}$,
  \linebreak
  $\stEnvAssoc{\gtWithCrashedRoles{\rolesC}{\gtG[j]}}{\stEnv[j]; \qEnv[j]}{\rolesR}$,
  $\stEnvApp{\stEnv}{\ltsSubject{\stEnvAnnotGenericSym}} = \stEnvApp{\stEnv[j]}{\ltsSubject{\stEnvAnnotGenericSym}}$,
  and
  $\stEnvi =  \stEnvUpd{\stEnv}{\ltsSubject{\stEnvAnnotGenericSym}}{\stEnvApp{\stEnvi[j]}{\ltsSubject{\stEnvAnnotGenericSym}}}$ to get the thesis.

      For the second part that  $\stEnvAssoc{\gtWithCrashedRoles{\rolesCi}{\gtGi}}{\stEnvi; \qEnvi}{\rolesR}$:
      \begin{enumerate}[label=(A\arabic*)]
      \item We want to show $\forall \roleR \in \gtRoles{\gtGi}: \stEnvApp{\stEnvi}{\roleR}
           \stSub \gtProj[\rolesR]{\gtGi}{\roleR}$. We consider two subcases:
           \begin{itemize}[leftmargin=*]
           \item $\roleR = \roleP$: by association \eqref{eq:soundness:context_1_assoc_pre} and
           \autoref{lem:subtyping-invert}, we obtain $\stEnvApp{\stEnv}{\roleP} =
           \stIntSum{\roleQ}{i \in I'}{ %
        \stChoice{\stLab[i]}{\tyGround[i]} \stSeq \stT[i]%
      }$ where $I' \subseteq \stIdxRemoveCrash{I}{\stLab[i]}$ and $\forall i \in I': \stT[i] \stSub
      \gtProj[\rolesR]{\gtG[i]}{\roleP}$. Since $\stEnvApp{\stEnv[i]}{\roleP} =
       \gtProj[\rolesR]{\gtG[i]}{\roleP} =
      \stEnvApp{\stEnvi[i]}{\roleP}$ for each $i \in I$, with association
      $\stEnvAssoc{\gtWithCrashedRoles{\rolesCi}{\gtGi[i]}}{\stEnvi[i]; \qEnvi[i]}{\rolesR}$, we get
      $\gtProj[\rolesR]{\gtG[i]}{\roleP} = \stEnvApp{\stEnvi[i]}{\roleP} \stSub
      \gtProj[\rolesR]{\gtGi[i]}{\roleP}$. Then by transitivity of subtyping,
      it holds that $\forall i \in I': \stT[i] \stSub \gtProj[\rolesR]{\gtGi[i]}{\roleP}$, which follows that
      $ \stIntSum{\roleQ}{i \in I'}{ %
        \stChoice{\stLab[i]}{\tyGround[i]} \stSeq \stT[i]%
      } \stSub
      \stIntSum{\roleQ}{i \in \stIdxRemoveCrash{I}{\stLab[i]}}{ %
        \stChoice{\stLab[i]}{\tyGround[i]} \stSeq (\gtProj[\rolesR]{\gtGi[i]}{\roleP})%
      }$, as desired.

      \item $\roleR \neq \roleP$: since $\roleQ \in \rolesCi$, we have $\roleR \neq \roleQ$. By association
      $\stEnvAssoc{\gtWithCrashedRoles{\rolesCi}{\gtGi[i]}}{\stEnvi[i]; \qEnvi[i]}{\rolesR}$ for each
      $i \in I$, it holds that $\stEnvApp{\stEnvi[i]}{\roleR} \stSub \gtProj[\rolesR]{\gtGi[i]}{\roleR}$ for each
      $i \in I$. Then by applying~\autoref{lem:subtype:merge-subty}, we conclude with $\stEnvApp{\stEnvi}{\roleR} = \stMerge{i \in I}{\stEnvApp{\stEnvi[i]}{\roleR}}
      \stSub  \stMerge{i \in I}{\gtProj[\rolesR]{\gtGi[i]}{\roleR}} = \gtProj[\rolesR]{\gtGi}{\roleR}$, as desired.
     \end{itemize}

       \item We want to show $\forall \roleR \in \rolesCi: \stEnvApp{\stEnvi}{\roleR} = \stStop$.
          We consider two subcases:
           \begin{itemize}[leftmargin=*]
       \item $\roleR = \roleQ$: trivial by $\stEnvApp{\stEnvi}{\roleQ} = \stStop$.
       \item $\roleR \neq \roleQ$:
       by association $\stEnvAssoc{\gtWithCrashedRoles{\rolesCi}{\gtGi[i]}}{\stEnvi[i]; \qEnvi[i]}{\rolesR}$ for each
      $i \in I$, it holds that  $\forall \roleR \in \rolesCi: \stEnvApp{\stEnvi[i]}{\roleR}  = \stStop$, which follows that for any
      $\roleR \in \rolesCi$ with $\roleR \neq \roleQ$, $\stEnvApp{\stEnvi}{\roleR} =  \stMerge{i \in I}{\stEnvApp{\stEnvi[i]}{\roleR}} = \stStop$, as desired.
      \end{itemize}

      \item For any endpoint $\roleR$ in $\gtGi$, $\roleR$ is an endpoint in each $\gtGi[i]$ with $i \in I$. Then by association
      $\stEnvAssoc{\gtWithCrashedRoles{\rolesCi}{\gtGi[i]}}{\stEnvi[i]; \qEnvi[i]}{\rolesR}$, we have $\stEnvApp{\stEnvi[i]}{\roleR} =
      \stEnd$, which follows $\stEnvApp{\stEnvi}{\roleR} =  \stMerge{i \in I}{\stEnvApp{\stEnvi[i]}{\roleR}} = \stEnd$, as desired.

      \item By~\autoref{lem:stenv-queue-red:det}, it holds that for any $i, j \in I$, $\qEnvi[i] = \qEnvi[j]$. Take an arbitrary
      $\qEnvi[j]$ with $j \in I$. Note here $\qEnvi = \qEnvi[j]$. With the fact
      that $\qEnvi[i]$ is associated with $\gtGi[i]$ for any $i \in I$, we have that $\forall i \in I: \qEnvi[j] \text{~is associated with~} \gtGi[i]$, and hence,
       $\forall i \in I: \qEnvi \text{~is associated with~} \gtGi[i]$.
       We are left to show that $\stEnvApp{\qEnvi}{\roleP, \roleQ} = \stQUnavail$, which is obtained by applying \autoref{lem:stenv-red:trivial-3}
      on $\stEnvApp{\qEnv}{\roleP, \roleQ} = \stQUnavail$ and  $\stEnv[j]; \qEnv
  \stEnvMoveGenAnnot \stEnvi[j]; \qEnvi[j]$\,
  with\,
  $\ltsSubject{\stEnvAnnotGenericSym} \notin \setenum{\roleP, \roleQ}$.
      \end{enumerate}

      \end{itemize}

    \item Case \inferrule{\iruleGtMoveCtxi}:

    Similar to the case  \inferrule{\iruleGtMoveCtx}.
     \qedhere

       \end{itemize}
\end{proof}

\thmProjCompleteness*
\begin{proof}
  By induction on reductions of configuration $\stEnv; \qEnv \stEnvMoveGenAnnot
  \stEnvi; \qEnvi$.
  \begin{itemize}[leftmargin=*]
    \item Case $\inferrule{\iruleTCtxOut}$:

    From the premise, we have:
    \begin{gather}
      \stEnvAssoc{\gtWithCrashedRoles{\rolesC}{\gtG}}{\stEnv; \qEnv}{\rolesR}
      \\
      \stEnvApp{\stEnv}{%
        \roleP%
      } =
      \stIntSum{\roleQ}{i \in I}{\stChoice{\stLab[i]}{\tyGround[i]} \stSeq \stT[i]}%
      \label{eq:stenvp-send}
      \\
      \stEnvAnnotGenericSym = \stEnvOutAnnotSmall{\roleP}{\roleQ}{\stChoice{\stLab[k]}{\tyGround[k]}}
      \\
      k \in I
      \\
      \stEnvi =
      \stEnvUpd{\stEnv}{\roleP}{\stT[k]}
      \label{eq:stenvp-send-type-cxt-cons}
      \\
      \qEnvi =
      \stEnvUpd{\qEnv}{\roleP, \roleQ}{%
        \stQCons{%
          \stEnvApp{\qEnv}{\roleP, \roleQ}
        }{%
          \stQMsg{\stLab[k]}{\tyGround[k]}
        }
      }
      \label{eq:stenvp-send-queue-cons}%
    \end{gather}

    Apply~\autoref{lem:inv-proj}~\cref{item:proj-inv:send} on
    \eqref{eq:stenvp-send}, we have two cases.
    \begin{itemize}[leftmargin=*]
      \item Case (1):
    \begin{gather}
      \unfoldOne{\gtG} =
        \gtComm{\roleP}{\roleQMaybeCrashed}{i \in I'}{\gtLab[i]}{\tyGroundi[i]}{\gtG[i]}
      \\
      I \subseteq I'
      \\
      \forall i \in I:
      \stLab[i] = \gtLab[i],
      \stT[i] \stSub (\gtProj[\rolesR]{\gtG[i]}{\roleP}),
      \tyGround[i] = \tyGroundi[i]
      \label{eq:stenvp-send-sub-cons}
    \end{gather}

    We have two further subcases here: namely $\roleQMaybeCrashed =
    \roleQ$ and $\roleQMaybeCrashed = \roleQCrashed$.

      \begin{itemize}[leftmargin=*]
        \item
        In the case of $\roleQMaybeCrashed = \roleQ$,
        we have $k \in I \subseteq I'$ and $\stCrashLab$ does not appear in
        internal choices, we apply $\inferrule{\iruleGtMoveOut}$ (via~\autoref{lem:gt-lts-unfold}) to get:
        \begin{gather}
          \gtWithCrashedRoles{\rolesC}{\gtG}
          \gtMove[\stEnvAnnotGenericSym]{\rolesR}
          \gtWithCrashedRoles{\rolesC}{
            \gtCommTransit{\roleP}{\roleQ}{i \in I'}{\gtLab[i]}{\tyGroundi[i]}{\gtG[i]}{k}
          }
        \end{gather}

        We are now left to show $\stEnvAssoc{
          \gtWithCrashedRoles{\rolesC}{
            \gtCommTransit{\roleP}{\roleQ}{i \in I'}{\gtLab[i]}{\tyGroundi[i]}{\gtG[i]}{k}
          }
        }{\stEnvi; \qEnvi}{\rolesR}$.

        Note that $\gtGi = \gtCommTransit{\roleP}{\roleQ}{i \in I'}{\gtLab[i]}{\tyGroundi[i]}{\gtG[i]}{k}$ here.
        By~\autoref{def:active_crashed_roles}, we have  $\gtRoles{\gtG} = \gtRoles{\gtGi}$ and
        $\gtRolesCrashed{\gtG} =  \gtRolesCrashed{\gtGi}$. Furthermore, with $\dom{\stEnv} = \dom{\stEnvi}$ and
        $\roleP \in \gtRoles{\gtG}$, we can set $\stEnvi =  \stEnvUpd{\stEnv[\gtG]}{\roleP}{\stT[k]}\stEnvComp \stEnv[\stStopSym] \stEnvComp
  \stEnv[\stEnd]$.

        \begin{enumerate}[label=(A\arabic*)]
          \item First, we want to show that $\dom{\stEnvUpd{\stEnv[\gtG]}{\roleP}{\stT[k]}} =  \setcomp{{\roleP}}{\roleP \in \gtRoles{\gtGi})}$, which follows directly
          from the fact that $\dom{\stEnv[\gtG]} = \setcomp{{\roleP}}{\roleP \in \gtRoles{\gtG})}$, $\gtRoles{\gtG} = \gtRoles{\gtGi}$, and
          $\dom{\stEnvUpd{\stEnv[\gtG]}{\roleP}{\stT[k]}} = \dom{\stEnv[\gtG]}$.

          Then, we are left to show $ \forall \roleR \in \gtRoles{\gtGi}:
        \stEnvApp{\stEnvUpd{\stEnv[\gtG]}{\roleP}{\stT[k]}}{{\roleR}}
        \stSub
        \gtProj[\rolesR]{\gtGi}{\roleR}
      $. We consider two cases:
      \begin{itemize}[leftmargin=*]
      \item $\roleR = \roleP$: we have $\gtProj[\rolesR]{\gtGi}{\roleP} = \gtProj[\rolesR]{\gtG[k]}{\roleP}$. By  \eqref{eq:stenvp-send-sub-cons},
      we obtain that $ \stEnvApp{\stEnvUpd{\stEnv[\gtG]}{\roleP}{\stT[k]}}{{\roleP}} = \stT[k] \stSub \gtProj[\rolesR]{\gtG[k]}{\roleP} = \gtProj[\rolesR]{\gtGi}{\roleP}$, as desired.
      \item $\roleR \neq \roleP$: since $\stEnvAssoc{\gtWithCrashedRoles{\rolesC}{\gtG}}{\stEnv; \qEnv}{\rolesR}$,
      we have that $ \stEnvApp{\stEnv[\gtG]}{{\roleR}} \stSub \gtProj[\rolesR]{\gtG}{\roleR} =   \stMerge{i \in I'}{\gtProj[\rolesR]{\gtG[i]}{\roleR}}$.
      Furthermore, by $\stEnvApp{\stEnvUpd{\stEnv[\gtG]}{\roleP}{\stT[k]}}{{\roleR}} = \stEnvApp{\stEnv[\gtG]}{{\roleR}}$ and
      $\gtProj[\rolesR]{\gtGi}{\roleR} =   \stMerge{i \in I'}{\gtProj[\rolesR]{\gtG[i]}{\roleR}}$, we have that
      $\stEnvApp{\stEnvUpd{\stEnv[\gtG]}{\roleP}{\stT[k]}}{{\roleR}} \stSub \gtProj[\rolesR]{\gtGi}{\roleR}$, as desired.
      \end{itemize}

           \item Trivial by $\stEnvAssoc{\gtWithCrashedRoles{\rolesC}{\gtG}}{\stEnv; \qEnv}{\rolesR}$.
          \item Trivial by $\stEnvAssoc{\gtWithCrashedRoles{\rolesC}{\gtG}}{\stEnv; \qEnv}{\rolesR}$.
          \item By \eqref{eq:stenvp-send-queue-cons}, we have  $ \stEnvApp{\qEnvi}{\roleP, \roleQ} =
        \stQCons{%
          \stEnvApp{\qEnv}{\roleP, \roleQ}
        }{%
          \stQMsg{\stLab[k]}{\tyGround[k]}
        }
      $. Then we only need to show that $\forall i \in I':
      \stEnvUpd{\qEnvi}{\roleP, \roleQ}{ \stEnvApp{\qEnv}{\roleP, \roleQ}}$ 
      (note that $\stEnvUpd{\qEnvi}{\roleP, \roleQ}{
      \stEnvApp{\qEnv}{\roleP, \roleQ}} = \qEnv$) is associated with 
     $\gtWithCrashedRoles{\rolesC}{\gtG[i]}$, which follows directly from
       the fact that $\qEnv$ is associated with $\gtWithCrashedRoles{\rolesC}{\gtG}$ and~\autoref{def:assoc-queue}.

        \end{enumerate}

        \item
        In the case of $\roleQMaybeCrashed = \roleQCrashed$,
        we have $k \in I \subseteq I'$, $\roleQ \in \rolesC$, and
        $\stCrashLab$ does not appear in
        internal choices.
        We apply $\inferrule{\iruleGtMoveOrph}$ (via
        \autoref{lem:gt-lts-unfold}) to get:
        \begin{gather}
          \gtWithCrashedRoles{\rolesC}{\gtG}
          \gtMove[\stEnvAnnotGenericSym]{\rolesR}
          \gtWithCrashedRoles{\rolesC}{
            \gtG[k]
          }
        \end{gather}
        We are now left to show $\stEnvAssoc{
          \gtWithCrashedRoles{\rolesC}{
            \gtG[k]
          }
        }{\stEnvi; \qEnvi}{\rolesR}$.

        Note that $\gtGi = \gtG[k]$ here.
 \begin{enumerate}[label=(A\arabic*)]
          \item We need to show that $\forall \roleR \in \gtRoles{\gtG[k]}:
          \stEnvApp{\stEnvi}{\roleR} \stSub
        \gtProj[\rolesR]{\gtG[k]}{\roleR}$.
        We consider two subcases:
        \begin{itemize}[leftmargin=*]
        \item $\roleR = \roleP$, which means that $\roleP \in \gtRoles{\gtG[k]}$: by \eqref{eq:stenvp-send-type-cxt-cons} and \eqref{eq:stenvp-send-sub-cons}, we have
          $\stEnvApp{\stEnvi}{\roleP} =  \stT[k] \stSub \gtProj[\rolesR]{\gtG[k]}{\roleP}$, as desired.
        \item $\roleR \neq \roleP$: with $\roleQ \notin \gtRoles{\gtG[k]}$, we have $\roleR \neq \roleQ$, and hence,
        $\gtProj[\rolesR]{\gtG}{\roleR} = \stMerge{i \in I'}{\gtProj[\rolesR]{\gtG[i]}{\roleR}}$.
        Furthermore, by \eqref{eq:stenvp-send-type-cxt-cons} and $\stEnvAssoc{\gtWithCrashedRoles{\rolesC}{\gtG}}{\stEnv; \qEnv}{\rolesR}$,
        it holds that $\stEnvApp{\stEnvi}{\roleR} = \stEnvApp{\stEnv}{\roleR} \stSub \gtProj[\rolesR]{\gtG}{\roleR} = \stMerge{i \in I'}{\gtProj[\rolesR]{\gtG[i]}{\roleR}}$.
        Then, by \autoref{lem:merge-subtyping} and transitivity of subtyping,  we can conclude that $\stEnvApp{\stEnvi}{\roleR}
        \stSub \gtProj[\rolesR]{\gtG[k]}{\roleR}$, as desired.
        \end{itemize}
         \item Trivial by $\stEnvAssoc{\gtWithCrashedRoles{\rolesC}{\gtG}}{\stEnv; \qEnv}{\rolesR}$.
          \item Trivial by $\stEnvAssoc{\gtWithCrashedRoles{\rolesC}{\gtG}}{\stEnv; \qEnv}{\rolesR}$ and the
                   fact that if $\roleP \notin \gtRoles{\gtG[k]}$, then $\stEnvApp{\stEnvi}{\roleP} = \stEnd$:
                   with $\roleP \notin \gtRolesCrashed{\gtG[k]}$, by~\autoref{lem:not-in-roles-end}, we have $\gtProj[\rolesR]{\gtG[k]}{\roleP} = \stEnd$.  Furthermore, by \eqref{eq:stenvp-send-sub-cons}, it holds that $\stT[k] = \stEnd$, and thus,
          $\stEnvApp{\stEnvi}{\roleP} = \stEnd$, as desired.

          \item Since $\roleQ \in \rolesC$, by~\autoref{def:assoc-queue}, we have
          $\stEnvApp{\qEnv}{\cdot, \roleQ} = \stQUnavail$.
          Hence,  $\qEnvi =\linebreak
      \stEnvUpd{\qEnv}{\roleP, \roleQ}{%
        \stQCons{%
          \stEnvApp{\qEnv}{\roleP, \roleQ}
        }{%
          \stQMsg{\stLab[k]}{\tyGround[k]}
        }
      } =
      \stEnvUpd{\qEnv}{\roleP, \roleQ}{%
        \stQCons{%
         \stQUnavail
        }{%
          \stQMsg{\stLab[k]}{\tyGround[k]}
        }
      } =  \stEnvUpd{\qEnv}{\roleP, \roleQ}{\stQUnavail} = \qEnv$.
      Then we only need to show that $\qEnv$ is associated with
       $\gtWithCrashedRoles{\rolesC}{
            \gtG[k]}$, which follows directly from the fact that
            $\qEnv$ is associated with $\gtWithCrashedRoles{\rolesC}{\gtG}$ and~\autoref{def:assoc-queue}.
        \end{enumerate}
      \end{itemize}
      \item Case (2):
      \begin{gather}
       \unfoldOne{\gtG} =
            \gtComm{\roleS}{\roleTMaybeCrashed}{j \in J}{\gtLab[j]}{\tyGroundi[j]}{\gtG[j]}
          \text{ \;or\; }
          \unfoldOne{\gtG} =
            \gtCommTransit{\roleSMaybeCrashed}{\roleT}{j \in
            J}{\gtLab[j]}{\tyGroundi[j]}{\gtG[j]}{k}
            \\
       \forall j \in J:
        \stEnvApp{\stEnv}{%
        \roleP%
        } \stSub \gtProj[\rolesR]{\gtG[j]}{\roleP}
        \\
        \roleP \neq \roleS \text{ \;and\; } \roleP \neq \roleT
      \end{gather}

  We consider two subcases:  $\unfoldOne{\gtG} =
            \gtComm{\roleS}{\roleTMaybeCrashed}{j \in J}{\gtLab[j]}{\tyGroundi[j]}{\gtG[j]}$ and
          $\unfoldOne{\gtG} = \linebreak
            \gtCommTransit{\roleSMaybeCrashed}{\roleT}{j \in
            J}{\gtLab[j]}{\tyGroundi[j]}{\gtG[j]}{k}$.

  \begin{itemize}[leftmargin=*]
  \item In the case of $\unfoldOne{\gtG} =
            \gtComm{\roleS}{\roleTMaybeCrashed}{j \in J}{\gtLab[j]}{\tyGroundi[j]}{\gtG[j]}$: %

            First, we take an arbitrary index $j \in J$ and construct a configuration $\stEnv[j]; \qEnv[j]$ such that
  $\stEnv[j]; \qEnv[j]
      \,\stEnvMoveOutAnnot{\roleP}{\roleQ}{\stChoice{\stLab[k]}{\tyGround[k]}}\,
      \stEnvi[j]; \qEnvi[j]$ and
       $\stEnvAssoc{\gtWithCrashedRoles{\rolesC}{\gtG[j]}}{\stEnv[j]; \qEnv[j]}{\rolesR}$.

      We know from $\roleS \in \gtRoles{\gtG}$ and
           $\stEnvAssoc{\gtWithCrashedRoles{\rolesC}{\gtG}}{\stEnv; \qEnv}{\rolesR}$, that
           $
        \stEnvApp{\stEnv}{\roleS}
        \stSub
         \gtProj[\rolesR]{\unfoldOne{\gtG}}{\roleS}$ and
         $  \gtProj[\rolesR]{\unfoldOne{\gtG}}{\roleS}
         =
        \stIntSum{\roleT}{j \in \stIdxRemoveCrash{J}{\stLab[j]}}{%
          \stChoice{\stLab[j]}{\tyGroundi[j]}
                 \stSeq (\gtProj[\rolesR]{\gtG[j]}{\roleS})}$.
      By inverting $\inferrule{\iruleStSubOut}$ (applying \linebreak 
      \autoref{lem:unfold-subtyping} where necessary), we have
      $ \unfoldOne{\stEnvApp{\stEnv}{\roleS}}
        =
        \stIntSum{\roleT}{j \in J_{\roleS}}{%
          \stChoice{\stLab[j]}{\tyGroundii[j]} \stSeq \stTii[j]
        }
      $, where $J_{\roleS} \subseteq \stIdxRemoveCrash{J}{\stLab[j]}$,
      and $\forall j \in J_{\roleS}:
       \stTii[j] \stSub (\gtProj[\rolesR]{\gtG[j]}{\roleS})$.

      To construct $\stEnv[j]$, let
      $ \stEnvApp{\stEnv[j]}{\roleS}
        =
        \stTii[j]
      $ if $j \in J_{\roleS}$ and
      $ \stEnvApp{\stEnv[j]}{\roleS}
        =
        \gtProj[\rolesR]{\gtG[j]}{\roleS}
      $ otherwise.
      In either case, we have
      $ \stEnvApp{\stEnv[j]}{\roleS}
      \stSub
      \gtProj[\rolesR]{\gtG[j]}{\roleS}$, as required.

      We have two further subcases here: namely $\roleTMaybeCrashed = \roleT$ and $\roleTMaybeCrashed = \roleTCrashed$.
      \begin{itemize}[leftmargin=*]
       \item If $\roleTMaybeCrashed = \roleT$, we know
      from $\roleT \in \gtRoles{\gtG}$ and
      $\stEnvAssoc{\gtWithCrashedRoles{\rolesC}{\gtG}}{\stEnv; \qEnv}{\rolesR}$, that
      $ \stEnvApp{\stEnv}{\roleT} \stSub
      \gtProj[\rolesR]{\unfoldOne{\gtG}}{\roleT}$ \linebreak 
        $=
        \stExtSum{\roleS}{j \in J}{\stChoice{\stLab[j]}{\tyGroundi[j]}
                 \stSeq (\gtProj[\rolesR]{\gtG[j]}{\roleT})}
      $.
      By inverting $\inferrule{\iruleStSubIn}$ (applying
      \autoref{lem:unfold-subtyping} where necessary), we have
      $ \unfoldOne{\stEnvApp{\stEnv}{\roleT}}
        =
        \stExtSum{\roleS}{j \in J_{\roleT}}{%
          \stChoice{\stLab[j]}{\tyGroundii[j]} \stSeq \stUii[j]}$,
      where $J \subseteq J_{\roleT}$,
      and $\forall j \in J:\linebreak
      \stUii[j] \stSub (\gtProj[\rolesR]{\gtG[j]}{\roleT})$.

      To construct $\stEnv[j]$, let
      $ \stEnvApp{\stEnv[j]}{\roleT}
        =
        \stUii[j]
      $, and we have
      $ \stEnvApp{\stEnv[j]}{\roleT}
        \stSub
        \gtProj[\rolesR]{\gtG[j]}{\roleT}
        $, as required.
        \item If $\roleTMaybeCrashed = \roleTCrashed$, let
        $ \stEnvApp{\stEnv[j]}{\roleT}
        =
        \stEnvApp{\stEnv}{\roleT} = \stStop$,
      as required.
       \end{itemize}
       For roles $\roleR \in (\gtRoles{\gtG[j]} \cup \rolesC)$,
       where $\roleR \notin
  \setenum{\roleS, \roleT}$, their typing context
  entry do not change, \ie $\stEnvApp{\stEnv[j]}{\roleR}
  = \stEnvApp{\stEnv}{\roleR}$.
  For crashed roles $\roleR \in \rolesC$, we have
  $ \stStop =
    \stEnvApp{\stEnv}{\roleR} =
    \stEnvApp{\stEnv[j]}{\roleR} =
    \stStop
  $, as required.
  For non-crashed roles $\roleR \in \gtRoles{\gtG[j]}$, we have
  $\stEnvApp{\stEnv[j]}{\roleR}
    =
     \stEnvApp{\stEnv}{\roleR}
     \stSub
     \gtProj[\rolesR]{\gtG}{\roleR}
     =
     \stMerge{j \in J}(\gtProj[\rolesR]{\gtG[j]}{\roleR})
     \stSub
     \gtProj[\rolesR]{\gtG[j]}{\roleR}
  $ (applying~\autoref{lem:merge-subtyping}).

We know from $\stEnvAssoc{\gtWithCrashedRoles{\rolesC}{\gtG}}{\stEnv; \qEnv}{\rolesR}$ and
$\unfoldOne{\gtG} =
            \gtComm{\roleS}{\roleTMaybeCrashed}{j \in J}{\gtLab[j]}{\tyGroundi[j]}{\gtG[j]}$,  that
$\qEnv$ is associated with $\gtG[j]$. Hence, to construct $\qEnv[j]$, just let $\qEnv[j] = \qEnv$.
Notice that \linebreak $\setenum{\roleP, \roleQ} \in \gtRoles{\gtG[j]}$, so they are still able to
 perform the communication action
$\stEnv[j]; \qEnv[j]
      \,\stEnvMoveOutAnnot{\roleP}{\roleQ}{\stChoice{\stLab[k]}{\tyGround[k]}}\,
      \stEnvi[j]; \qEnvi[j]$.

  We apply inductive hypothesis on $\stEnv[j]; \qEnv[j]$, and obtain
  $\gtWithCrashedRoles{\rolesC}{\gtG[j]}
          \gtMove[\stEnvOutAnnotSmall{\roleP}{\roleQ}{\stChoice{\stLab[k]}{\tyGround[k]}}]{\rolesR}
         \gtWithCrashedRoles{\rolesC}{\gtGi[j]} $    and
    $\stEnvAssoc{\gtWithCrashedRoles{\rolesC}{\gtGi[j]}}{\stEnvi[j]; \qEnvi[j]}{\rolesR}$.

 We apply $\inferrule{\iruleGtMoveCtx}$ (via
        \autoref{lem:gt-lts-unfold}) to get:
        \begin{gather}
          \gtWithCrashedRoles{\rolesC}{\gtG}
          \gtMove[\stEnvOutAnnotSmall{\roleP}{\roleQ}{\stChoice{\stLab[k]}{\tyGround[k]}}]{\rolesR}
          \gtWithCrashedRoles{\rolesC}{\gtComm{\roleS}{\roleTMaybeCrashed}{j \in J}{\gtLab[j]}{\tyGroundi[j]}{\gtGi[j]}}
        \end{gather}
We are now left to show $\stEnvAssoc{\gtWithCrashedRoles{\rolesC}{\gtComm{\roleS}{\roleTMaybeCrashed}{j \in J}{\gtLab[j]}{\tyGroundi[j]}{\gtGi[j]}}}{\stEnvi; \qEnvi}{\rolesR}$.

Note that $\gtGi = \gtComm{\roleS}{\roleTMaybeCrashed}{j \in J}{\gtLab[j]}{\tyGroundi[j]}{\gtGi[j]}$ here. %

        \begin{enumerate}[label=(A\arabic*)]
        \item  For role $\roleS$, %
        we know that
        $ \unfoldOne{\stEnvApp{\stEnv}{\roleS}}
        =
        \stIntSum{\roleT}{j \in J_{\roleS}}{%
          \stChoice{\stLab[j]}{\tyGroundii[j]} \stSeq \stTii[j]
        }
      $, where $J_{\roleS} \subseteq \stIdxRemoveCrash{J}{\stLab[j]}$,
       $\forall j \in J_{\roleS}:
       \stTii[j] \stSub (\gtProj[\rolesR]{\gtG[j]}{\roleS})$, and
       $\tyGroundii[j] = \tyGroundi[j]$.

  Since $\roleS \notin
  \ltsSubject{\stEnvOutAnnotSmall{\roleP}{\roleQ}{\stChoice{\stLab[k]}{\tyGround[k]}}}$,
  we apply
  \autoref{lem:stenv-red:trivial-2} on $\stEnv$ and $\stEnv[j]$ for all
  $j \in J_{\roleS}$.
  For all $j \in J_{\roleS}$,
  we have $
    \stTii[j]
    =
    \stEnvApp{\stEnv[j]}{\roleS}
    =
    \stEnvApp{\stEnvi[j]}{\roleS}
  $ (from~\autoref{lem:stenv-red:trivial-2}) and $
    \stEnvApp{\stEnvi[j]}{\roleS}
    \stSub
    \gtProj[\rolesR]{\gtGi[j]}{\roleS}
    $ (from inductive hypothesis).
  Therefore, we have $\stTii[j] \stSub \gtProj[\rolesR]{\gtGi[j]}{\roleS}$.
  We now apply~\autoref{lem:stenv-red:trivial-2} on $\stEnv$,
  which gives
  $ \unfoldOne{\stEnvApp{\stEnv}{\mpChanRole{\mpS}{\roleS}}}
    =
    \unfoldOne{\stEnvApp{\stEnvi}{\mpChanRole{\mpS}{\roleS}}}
    =
    \stIntSum{\roleT}{j \in J_{\roleS}}{%
      \stChoice{\stLab[j]}{\tyGroundii[j]} \stSeq \stTii[j]
    }
  $.
  We can now apply $\inferrule{\iruleStSubOut}$ to conclude $
 \stEnvApp{\stEnvi}{\roleS}
 \stSub
    \gtProj[\rolesR]{\gtGi}{\roleS}
  $, as required.

   For role $\roleT$ (where $\roleTMaybeCrashed = \roleT$), we know that
  $ \unfoldOne{\stEnvApp{\stEnv}{\roleT}}
    =
    \stExtSum{\roleS}{j \in J_{\roleT}}{%
      \stChoice{\stLab[j]}{\tyGroundii[j]} \stSeq \stUii[j]
    }
  $, where $J \subseteq J_{\roleT}$,
   $\forall j \in J:
     \stUii[j] \stSub (\gtProj[\rolesR]{\gtG[j]}{\roleT})$, and $
    \tyGroundi[j] = \tyGroundii[j]
  $.
  Since $\roleT \notin$ \linebreak 
  $\ltsSubject{\stEnvOutAnnotSmall{\roleP}{\roleQ}{\stChoice{\stLab[k]}{\tyGround[k]}}}$,
  we apply
  \autoref{lem:stenv-red:trivial-2} on $\stEnv$ and $\stEnv[j]$ for all
  $j \in J$.
  For all $j \in J$,
  we have $
    \stUii[j]
    =
    \stEnvApp{\stEnv[j]}{\roleT}
    =
    \stEnvApp{\stEnvi[j]}{\roleT}
  $ (from~\autoref{lem:stenv-red:trivial-2}) and $
  \stEnvApp{\stEnvi[j]}{\roleT}
    \stSub
    \gtProj[\rolesR]{\gtGi[j]}{\roleT}
  $ (from inductive hypothesis).
  Therefore, we have $\stUii[j] \stSub \gtProj[\rolesR]{\gtGi[j]}{\roleT}$.
  We now apply~\autoref{lem:stenv-red:trivial-2} on $\stEnv$,
  which gives
  $ \unfoldOne{\stEnvApp{\stEnv}{\roleT}}
    =
    \unfoldOne{\stEnvApp{\stEnvi}{\roleT}}
    =
    \stExtSum{\roleS}{j \in J_{\roleT}}{%
      \stChoice{\stLab[j]}{\tyGroundii[j]} \stSeq \stUii[j]
    }
  $.
  We can now apply $\inferrule{\iruleStSubIn}$ to conclude $
    \stEnvApp{\stEnvi}{\roleT}
    \stSub
    \gtProj[\rolesR]{\gtGi}{\roleT}
$, as required.

 For other role $\roleR \in \gtRoles{\gtGi}$ (where
  $\roleR \notin \setenum{\roleS, \roleT}$),
  we need to show
  $\stEnvApp{\stEnvi}{\roleR}
  \stSub
   \gtProj[\rolesR]{\gtGi}{\roleR}
  $.
  We know that $
    \gtProj[\rolesR]{\gtGi}{\roleR} =
    \stMerge{j \in J}{\gtProj[\rolesR]{\gtGi[j]}{\roleR}}
  $.
  If $\roleR \notin \setenum{\roleP, \roleQ}$, we apply
  \autoref{lem:stenv-red:trivial-2} on $\stEnv[j]$, obtaining
  $
    \stEnvApp{\stEnv[j]}{\roleR}
    =
    \stEnvApp{\stEnvi[j]}{\roleR}
  $. The inductive hypothesis gives
  $  \stEnvApp{\stEnvi[j]}{\roleR}
  \stSub
  \gtProj[\rolesR]{\gtGi[j]}{\roleR}$,
  we apply~\autoref{lem:merge-upper-bound} to obtain $
    \stEnvApp{\stEnvi[j]}{\roleR}
    \stSub
    \gtProj[\rolesR]{\gtGi}{\roleR}
    =
    \stMerge{j \in J}{\gtProj[\rolesR]{\gtGi[j]}{\roleR}}
  $.
  Note that $
    \stEnvApp{\stEnvi[j]}{\roleR} =
    \stEnvApp{\stEnv[j]}{\roleR}
  $ by~\autoref{lem:stenv-red:trivial-2},
  and $
    \stEnvApp{\stEnv[j]}{\roleR} =
    \stEnvApp{\stEnv}{\roleR}
  $ by construction.
  Therefore, we have $
  \stEnvApp{\stEnvi}{\roleR}
=
    \stEnvApp{\stEnv}{\roleR}
    \stSub
    \gtProj[\rolesR]{\gtGi}{\roleR}
      $, as required.

 We are left to consider the cases of $\roleP$ and $\roleQ$.
  We know what $
    \gtProj[\rolesR]{\gtGi}{\roleP} =
    \stMerge{j \in J}{\gtProj[\rolesR]{\gtGi[j]}{\roleP}}
  $ and $\setenum{\roleP, \roleQ} \subseteq \gtRoles{\gtGi}$.
  The inductive hypothesis gives
  $ \stEnvApp{\stEnvi[j]}{\roleP}
  \stSub
  \gtProj[\rolesR]{\gtGi[j]}{\roleP}
      $,
  we apply~\autoref{lem:merge-upper-bound} to obtain $
    \stEnvApp{\stEnvi[j]}{\roleP}
    \stSub
    \gtProj[\rolesR]{\gtGi}{\roleP}
    =
    \stMerge{j \in J}{\gtProj[\rolesR]{\gtGi[j]}{\roleP}}
  $.
  We now apply~\autoref{lem:stenv-red:det} on $\stEnv$ and all $\stEnv[j]$, which
  gives $\stEnvi[j] = \stEnvi$ for all $j$.
  Therefore, we have $
  \stEnvApp{\stEnvi}{\roleP}
  \stSub
    \gtProj[\rolesR]{\gtGi}{\roleP}
  $.
  Note that  $\stEnvApp{\stEnvi}{\roleP} =  \stEnvApp{\stEnvi[j]}{\roleP} = \stT[k] \;(\text{as in \eqref{eq:stenvp-send-type-cxt-cons}})$.
  The argument $\roleQ$ follows similarly.
\item For crashed roles $\roleR \in \rolesC$, we have
  $ \stEnvApp{\stEnvi}{\roleR}
    =
    \stEnvApp{\stEnv}{\roleR}
    =
    \stStop
  $ (applying~\autoref{lem:stenv-red:trivial-2}). Note that if $\roleTMaybeCrashed = \roleTCrashed$, $\roleT \in \rolesC$.
        \item Trivial by $\stEnvAssoc{\gtWithCrashedRoles{\rolesC}{\gtG}}{\stEnv; \qEnv}{\rolesR}$. %
        \item We know from $\qEnv[j] = \qEnv$ and $\stEnv[j]; \qEnv[j]
      \,\stEnvMoveOutAnnot{\roleP}{\roleQ}{\stChoice{\stLab[k]}{\tyGround[k]}}\,
      \stEnvi[j]; \qEnvi[j]$, that $\qEnvi[j] = \linebreak
      \stEnvUpd{\qEnv[j]}{\roleP, \roleQ}{%
        \stQCons{%
          \stEnvApp{\qEnv[j]}{\roleP, \roleQ}
        }{%
          \stQMsg{\stLab[k]}{\tyGround[k]}
        }
      } =
      \stEnvUpd{\qEnv}{\roleP, \roleQ}{%
        \stQCons{%
          \stEnvApp{\qEnv}{\roleP, \roleQ}
        }{%
          \stQMsg{\stLab[k]}{\tyGround[k]}
        }
      } =
      \qEnvi$. Furthermore, by inductive hypothesis,
      $\qEnvi[j]$ is associated with $\gtWithCrashedRoles{\rolesC}{\gtGi[j]}$, which follows
      that $\forall j \in J: \qEnvi \text{ is associated with } \gtWithCrashedRoles{\rolesC}{\gtGi[j]}$. We are left to show that
      if $\roleTMaybeCrashed \neq \roleTCrashed$, then $\stEnvApp{\qEnvi}{\roleS, \roleT} = \stQEmpty$, which follows
      directly from $\stEnvAssoc{\gtWithCrashedRoles{\rolesC}{\gtG}}{\stEnv; \qEnv}{\rolesR}$ and
      $\stEnvApp{\qEnvi}{\roleS, \roleT} = \stEnvApp{\qEnv}{\roleS, \roleT}$.
        \end{enumerate}
\end{itemize}
\item In the case of   $\unfoldOne{\gtG} =
            \gtCommTransit{\roleSMaybeCrashed}{\roleT}{j \in
            J}{\gtLab[j]}{\tyGroundi[j]}{\gtG[j]}{k}$:

            Similar to that of the previous subcase that $\unfoldOne{\gtG} =
            \gtComm{\roleS}{\roleTMaybeCrashed}{j \in J}{\gtLab[j]}{\tyGroundi[j]}{\gtG[j]}$, applying \inferrule{\iruleGtMoveCtxi} instead.
\end{itemize}
    \item Case $\inferrule{\iruleTCtxIn}$:

      From the premise, we have:
    \begin{gather}
      \stEnvAssoc{\gtWithCrashedRoles{\rolesC}{\gtG}}{\stEnv; \qEnv}{\rolesR}
      \label{eq:stenvp-receive-assoc}
      \\
       \stEnvApp{\stEnv}{%
        \roleP%
      } =
      \stExtSum{\roleQ}{i \in I}{\stChoice{\stLab[i]}{\tyGround[i]} \stSeq \stT[i]}
      \label{eq:stenvp-receive-comp}
      \\
      \stEnvAnnotGenericSym = \stEnvInAnnotSmall{\roleP}{\roleQ}{\stChoice{\stLab[k]}{\tyGround[k]}}
      \\
      k \in I
      \\
        \stEnvApp{\qEnv}{\roleQ, \roleP}
      =
      \stQCons{\stQMsg{\stLab[k]}{\tyGround[k]}}{\stQi}
      \neq \stQUnavail
      \label{eq:stenvp-receive-queue-con}
      \\
      \stEnvi =
      \stEnvUpd{\stEnv}{\roleP}{\stT[k]}
      \label{eq:stenvp-receive-type-cxt-cons}
      \\
      \qEnvi =
     \stEnvUpd{\qEnv}{\roleQ, \roleP}{\stQi}
    \end{gather}

     Apply~\autoref{lem:inv-proj}~\cref{item:proj-inv:recv} on
    \eqref{eq:stenvp-receive-comp}, we have two cases.
    \begin{itemize}[leftmargin=*]
      \item Case (1):
        \begin{gather}
          \unfoldOne{\gtG} =
            \gtCommTransit{\roleQMaybeCrashed}{\roleP}{i \in
            I'}{\gtLab[i]}{\tyGroundi[i]}{\gtG[i]}{j} \text{ or }
            \unfoldOne{\gtG} =
            \gtComm{\roleQ}{\rolePMaybeCrashed}{i \in I'}{\gtLab[i]}{\tyGroundi[i]}{\gtG[i]}
      \\
      I' \subseteq I
      \\
      \forall i \in I':
      \stLab[i] = \gtLab[i],
      \stT[i] \stSub (\gtProj[\rolesR]{\gtG[i]}{\roleP}),
      \tyGroundi[i] = \tyGround[i]
      \label{eq:stenvp-receive-global-label-equvi}
      \\
      \roleQ \notin \rolesR \text{ implies } \exists l \in I': \gtLab[l] =
          \gtCrashLab
    \end{gather}

First, we show that in this case, $\unfoldOne{\gtG}$ cannot be of the form $\gtComm{\roleQ}{\rolePMaybeCrashed}{i \in I'}{\gtLab[i]}{\tyGroundi[i]}{\gtG[i]}$.
There are two subcases to be considered: $\rolePMaybeCrashed =
    \roleP$ and $\rolePMaybeCrashed = \rolePCrashed$.
     \begin{itemize}[leftmargin=*]
\item In the case of $\rolePMaybeCrashed =
    \rolePCrashed$, we have $\roleP \in \rolesC$. Hence,
    by applying~\autoref{def:assoc-queue} on \eqref{eq:stenvp-receive-assoc}, we have that
     $\stEnvApp{\qEnv}{\cdot, \roleP} = \stQUnavail$, a desired contradiction to  \eqref{eq:stenvp-receive-queue-con}.

\item In the case of $\rolePMaybeCrashed =
    \roleP$, by association, it holds that
     $\stEnvApp{\qEnv}{\roleQ, \roleP} = \stQEmpty$, a desired contradiction to \eqref{eq:stenvp-receive-queue-con}.
  \end{itemize}
  Therefore, we only need to consider the case that $\unfoldOne{\gtG} =
            \gtCommTransit{\roleQMaybeCrashed}{\roleP}{i \in
            I'}{\gtLab[i]}{\tyGroundi[i]}{\gtG[i]}{j}$.

 Then we want to show that $\gtLab[j] \neq \gtCrashLab$, which is proved by contradiction.
  Assume that $\gtLab[j] = \gtCrashLab$, by association, we have that
  $\stEnvApp{\qEnv}{\roleQ, \roleP} =  \stQEmpty
        $, a desired contradiction to
        \eqref{eq:stenvp-receive-queue-con}.
 Moreover, we want to show that $j = k$.
 By association and $\gtLab[j] \neq \gtCrashLab$, we
 have   $\stEnvApp{\qEnv}{\roleQ, \roleP}
      =
      \stQCons{\stQMsg{\gtLab[j]}{\tyGround[j]}}{\stQ}
      =
      \stQCons{\stQMsg{\stLab[k]}{\tyGround[k]}}{\stQi}
      $.
 Then by \eqref{eq:stenvp-receive-global-label-equvi}, it holds that $\gtLab[j]
 = \stLab[j] = \stLab[k]$.
  Furthermore, by  $j, k \in I$ and  the requirement that
   labels in local types must be pair-wise distinct, we have $j = k$, as required.
   Note that $k \in I'$ here.

  We can now apply \inferrule{\iruleGtMoveIn} (via~\autoref{lem:gt-lts-unfold}) to get:
 \begin{gather}
          \gtWithCrashedRoles{\rolesC}{\gtG}
          \gtMove[\stEnvAnnotGenericSym]{\rolesR}
          \gtWithCrashedRoles{\rolesC}{
            \gtG[k]
          }
        \end{gather}
        We are left to show $\stEnvAssoc{
          \gtWithCrashedRoles{\rolesC}{
            \gtG[k]
          }
        }{\stEnvi; \qEnvi}{\rolesR}$.

    \begin{enumerate}[label=(A\arabic*)]
            \item We need to show that $\forall \roleR \in \gtRoles{\gtG[k]}:
          \stEnvApp{\stEnvi}{\roleR} \stSub
        \gtProj[\rolesR]{\gtG[k]}{\roleR}$.
        We consider three subcases:
        \begin{itemize}[leftmargin=*]
       \item $\roleR = \roleQ$, which means that $\roleQMaybeCrashed = \roleQ$ and
               $\roleQ \in \gtRoles{\gtG[k]}$: by \eqref{eq:stenvp-receive-type-cxt-cons},
               $\stEnvApp{\stEnvi}{\roleQ} = \stEnvApp{\stEnv}{\roleQ}$.
               Then by association, we have $\stEnvApp{\stEnv}{\roleQ} \stSub
               \gtProj[\rolesR]{\gtG}{\roleQ} = \gtProj[\rolesR]{\gtG[k]}{\roleQ}$, as desired.

       \item $\roleR = \roleP$, which means that $\roleP \in \gtRoles{\gtG[k]}$: by \eqref{eq:stenvp-receive-type-cxt-cons}
        and \eqref{eq:stenvp-receive-global-label-equvi}, we have
          $\stEnvApp{\stEnvi}{\roleP} =  \stT[k] \stSub \gtProj[\rolesR]{\gtG[k]}{\roleP}$, as desired.

        \item $\roleR \neq \roleQ$ and $\roleR \neq \roleP$:
        $\gtProj[\rolesR]{\gtG}{\roleR} = \stMerge{i \in I'}{\gtProj[\rolesR]{\gtG[i]}{\roleR}}$.
        Furthermore, by \eqref{eq:stenvp-receive-type-cxt-cons} and $\stEnvAssoc{\gtWithCrashedRoles{\rolesC}{\gtG}}{\stEnv; \qEnv}{\rolesR}$,
        it holds that $\stEnvApp{\stEnvi}{\roleR} = \stEnvApp{\stEnv}{\roleR} \stSub \gtProj[\rolesR]{\gtG}{\roleR} = \stMerge{i \in I'}{\gtProj[\rolesR]{\gtG[i]}{\roleR}}$.
        Then, by~\autoref{lem:merge-subtyping} and transitivity of subtyping,  we can conclude that $\stEnvApp{\stEnvi}{\roleR}
        \stSub \gtProj[\rolesR]{\gtG[k]}{\roleR}$, as desired.
        \end{itemize}
         \item Trivial by $\stEnvAssoc{\gtWithCrashedRoles{\rolesC}{\gtG}}{\stEnv; \qEnv}{\rolesR}$. Note that in the case of
         $\roleQMaybeCrashed = \roleQCrashed$, we have $\roleQ \in \rolesC$, and hence, $\stEnvApp{\stEnvi}{\roleQ}
         =
         \stEnvApp{\stEnv}{\roleQ} = \stStop$.
          \item Trivial by $\stEnvAssoc{\gtWithCrashedRoles{\rolesC}{\gtG}}{\stEnv; \qEnv}{\rolesR}$ and the
                   fact that if $\roleP \notin \gtRoles{\gtG[k]}$, then $\stEnvApp{\stEnvi}{\roleP} = \stEnd$:
                   with $\roleP \notin \gtRolesCrashed{\gtG[k]}$, by~\autoref{lem:not-in-roles-end}, we have $\gtProj[\rolesR]{\gtG[k]}{\roleP} = \stEnd$.
                   Furthermore, by \eqref{eq:stenvp-receive-global-label-equvi}, it holds that $\stT[k] = \stEnd$, and thus,
          $\stEnvApp{\stEnvi}{\roleP} = \stEnd$, as desired.  The argument for the case that $\roleQMaybeCrashed
          = \roleQ$ and $\roleQ \notin \gtRoles{\gtG[k]}$ follows similarly.

          \item Since $\qEnv$ is associated with $\gtWithCrashedRoles{\rolesC}{\gtCommTransit{\roleQMaybeCrashed}{\roleP}{i \in
            I'}{\gtLab[i]}{\tyGroundi[i]}{\gtG[i]}{j} }$ and $\gtLab[j] \neq \gtCrashLab$, by~\autoref{def:assoc-queue},  we have that
             $\stEnvApp{\qEnv}{\roleQ, \roleP} =
        \stQCons{\stQMsg{\gtLab[j]}{\tyGround[j]}}{\stQ}$ and
            $\forall i \in I':  \stEnvUpd{\qEnv}{\roleQ, \roleP}{\stQ}$  
            is associated with 
      $\gtWithCrashedRoles{\rolesC}{\gtG[i]}$, which follows that $\stQ = \stQi$ (in \eqref{eq:stenvp-receive-queue-con}) and
      $\qEnvi = \stEnvUpd{\qEnv}{\roleQ, \roleP}{\stQi}
      = \stEnvUpd{\qEnv}{\roleQ, \roleP}{\stQ}$ is associated with
       $\gtWithCrashedRoles{\rolesC}{\gtG[k]}$, as required.
        \end{enumerate}

    \item Case (2): similar to the case (2) in Case $\inferrule{\iruleTCtxOut}$.
    \end{itemize}

    \item Case $\inferrule{\iruleTCtxCrash}$:

    From the premise, we have:
     \begin{gather}
      \stEnvAssoc{\gtWithCrashedRoles{\rolesC}{\gtG}}{\stEnv; \qEnv}{\rolesR}
      \\
       \stEnvApp{\stEnv}{\roleP} \neq \stEnd
       \label{eq:stenvp-crash-not-end}
       \\
        \stEnvApp{\stEnv}{\roleP} \neq \stStop
         \label{eq:stenvp-crash-not-stop}
        \\
        \stEnvAnnotGenericSym = \ltsCrash{\mpS}{\roleP}
      \\
      \stEnvi =
      \stEnvUpd{\stEnv}{\roleP}{\stStop}
      \label{eq:stenvp-crash-type-cons}
      \\
      \qEnvi =
     \stEnvUpd{\qEnv}{\cdot, \roleP}{\stQUnavail}
      \label{eq:stenvp-crash-queue-cons}%
    \end{gather}

By \eqref{eq:stenvp-crash-not-stop}, we know that
$\roleP \notin \rolesC$ and  $\roleP \in \gtRoles{\gtG}$.
The premise requires that $ \stEnvAnnotGenericSym \neq \ltsCrash{\mpS}{\roleP}$ for all
$\roleP \in \rolesR$, therefore, $\roleP \notin \rolesR$.
We apply \inferrule{\iruleGtMoveCrash} (via~\autoref{lem:gt-lts-unfold}) to get:
   \begin{gather}
          \gtWithCrashedRoles{\rolesC}{\gtG}
          \gtMove[\stEnvAnnotGenericSym]{\rolesR}
         \gtWithCrashedRoles{\rolesC \cup \setenum{\roleP}}{\gtCrashRole{\gtG}{\roleP}}
        \end{gather}

We are now left to show $\stEnvAssoc{
          \gtWithCrashedRoles{\rolesC \cup \setenum{\roleP}}{
            \gtCrashRole{\gtG}{\roleP}
          }
        }{\stEnvi; \qEnvi}{\rolesR}$.

Note that $\gtGi = \gtCrashRole{\gtG}{\roleP}$ and $\rolesCi = \rolesC \cup \setenum{\roleP}$ here.
By \eqref{eq:stenvp-crash-type-cons}, we can set  $\stEnvi =  \stEnvi[\gtGi] \stEnvComp \stEnvi[\stStopSym] \stEnvComp
  \stEnvi[\stEnd]$, where $\dom{\stEnvi[\gtGi]} = \setcomp{\roleQ}{\roleQ \in \gtRoles{\gtCrashRole{\gtG}{\roleP}}} =
  \setcomp{\roleQ}{\roleQ \in \gtRoles{\gtG}} \setminus \setenum{\roleP} = \dom{\stEnv[\gtG]} \setminus \setenum{\roleP}$,
  $\dom{\stEnvi[\stStopSym]} = \rolesC \cup \setenum{\roleP}$ =
  $\dom{\stEnv[\stStopSym]} \cup \setenum{\roleP}$,
  and $\dom{\stEnvi[\stEnd]} = \dom{\stEnv[\stEnd]}$. Meanwhile, we have $\stEnvApp{\stEnvi}{\roleQ} = \stEnvApp{\stEnv}{\roleQ}$ if
  $\roleQ \neq \roleP$.

        \begin{enumerate}[label=(A\arabic*)]
          \item We want to show that for any $\roleQ \in \gtRoles{\gtCrashRole{\gtG}{\roleP}}$, we have $\stEnvApp{\stEnvi}{\roleQ} =
          \stEnvApp{\stEnv}{\roleQ} \stSub \gtProj[\rolesR]{\gtG}{\roleQ} \stSub  \gtProj[\rolesR]{(\gtCrashRole{\gtG}{\roleP})}{\roleQ}$ by
          \autoref{lem:proj-non-crashing-role-preserve} and $\stEnvAssoc{\gtWithCrashedRoles{\rolesC}{\gtG}}{\stEnv; \qEnv}{\rolesR}$.
          \item Trivial by $\stEnvAssoc{\gtWithCrashedRoles{\rolesC}{\gtG}}{\stEnv; \qEnv}{\rolesR}$ and \eqref{eq:stenvp-crash-type-cons}, i.e.,
          $\stEnvApp{\stEnvi}{\roleP} = \stStop$.
          \item Trivial by $\stEnvAssoc{\gtWithCrashedRoles{\rolesC}{\gtG}}{\stEnv; \qEnv}{\rolesR}$.
          \item Since $\qEnv$ is associated with $\gtWithCrashedRoles{\rolesC}{\gtG}$,
          by~\autoref{def:assoc-queue}, we have that for any $\roleQ \in \rolesC$,
        $\stEnvApp{\qEnv}{\cdot, \roleQ} = \stQUnavail$. Then, with \eqref{eq:stenvp-crash-queue-cons},
        we have that for any $\roleQ \in \rolesC \cup \setenum{\roleP}$,
        $\stEnvApp{\qEnvi}{\cdot, \roleQ} =
           \stEnvApp{\stEnvUpd{\qEnv}{\cdot, \roleP}{\stQUnavail}}{\cdot, \roleQ} =
           \stQUnavail$, which follows directly that $\qEnvi$ is associated with
           $\gtWithCrashedRoles{\rolesC \cup \setenum{\roleP}}{\gtCrashRole{\gtG}{\roleP}}$.
         \end{enumerate}

    \item Case $\inferrule{\iruleTCtxCrashDetect}$:

    From the premise, we have:
    \begin{gather}
     \stEnvAssoc{\gtWithCrashedRoles{\rolesC}{\gtG}}{\stEnv; \qEnv}{\rolesR}
     \\
     \stEnvApp{\stEnv}{\roleQ} =
      \stExtSum{\roleP}{i \in I}{\stChoice{\stLab[i]}{\tyGround[i]} \stSeq \stT[i]}
      \label{eq:stenvp-crash-detect-comp}
      \\
       \stEnvApp{\stEnv}{\roleP} = \stStop
       \\
        \stEnvAnnotGenericSym = \ltsCrDe{\mpS}{\roleQ}{\roleP}
        \\
       k \in I
       \\
        \stLab[k] = \stCrashLab
      \\
      \stEnvApp{\qEnv}{\roleP, \roleQ} = \stQEmpty
      \label{eq:stenvp-crash-detect-queue-empty}
      \\
      \stEnvi = \stEnvUpd{\stEnv}{\roleQ}{\stT[k]}
      \label{eq:stenvp-crash-detect-type-cxt-cons}
      \\
      \qEnvi = \qEnv
    \end{gather}
  Since $\stEnvApp{\stEnv}{\roleP} = \stStop$, we have $\roleP \in \rolesC$ and
  $\roleP \notin \rolesR$.
  Apply~\autoref{lem:inv-proj}~\cref{item:proj-inv:recv} on
    \eqref{eq:stenvp-crash-detect-comp}, we have two cases.
    \begin{itemize}[leftmargin=*]
      \item Case (1):
        \begin{gather}
          \unfoldOne{\gtG} =
            \gtCommTransit{\rolePCrashed}{\roleQ}{i \in
            I'}{\gtLab[i]}{\tyGroundi[i]}{\gtG[i]}{j} %
      \\
      I' \subseteq I
      \\
      \forall i \in I':
      \stLab[i] = \gtLab[i],
      \stT[i] \stSub (\gtProj[\rolesR]{\gtG[i]}{\roleQ}),
      \tyGroundi[i] = \tyGround[i]
      \label{eq:stenvp-crash-detect-global-label-equvi}
      \\
      \roleP \notin \rolesR \text{ implies } \exists l \in I': \gtLab[l] =
          \gtCrashLab
    \end{gather}
  Then we want to show that $\gtLab[j] = \gtCrashLab$, which is proved by contradiction.
  Assume that $\gtLab[j] \neq \gtCrashLab$, by association, we have that
  $\stEnvApp{\qEnv}{\roleP, \roleQ} =
        \stQCons{\stQMsg{\gtLab[j]}{\tyGround[j]}}{\stQ}$, a desired contradiction to
        \eqref{eq:stenvp-crash-detect-queue-empty}.
   Moreover, we want to show that $j = k$. By \eqref{eq:stenvp-crash-detect-global-label-equvi},
   we have $\gtLab[j] = \stLab[j]$, which means that $\stLab[j] = \stCrashLab$.
   Since $\stLab[j] = \stLab[k] = \stCrashLab$ and $j, k \in I$,  by the requirement that
   labels in local types must be pair-wise distinct, we have $j = k$, as required.
   Note that $k \in I'$ here.

  We can now apply \inferrule{\iruleGtMoveCrDe} (via~\autoref{lem:gt-lts-unfold}) to get:
 \begin{gather}
          \gtWithCrashedRoles{\rolesC}{\gtG}
          \gtMove[\stEnvAnnotGenericSym]{\rolesR}
          \gtWithCrashedRoles{\rolesC}{
            \gtG[k]
          }
        \end{gather}
        We are left to show $\stEnvAssoc{
          \gtWithCrashedRoles{\rolesC}{
            \gtG[k]
          }
        }{\stEnvi; \qEnvi}{\rolesR}$.
 \begin{enumerate}[label=(A\arabic*)]
            \item We need to show that $\forall \roleR \in \gtRoles{\gtG[k]}:
          \stEnvApp{\stEnvi}{\roleR} \stSub
        \gtProj[\rolesR]{\gtG[k]}{\roleR}$.
        We consider two subcases:
        \begin{itemize}[leftmargin=*]
        \item $\roleR = \roleQ$, which means that $\roleQ \in \gtRoles{\gtG[k]}$: by \eqref{eq:stenvp-crash-detect-type-cxt-cons}
        and \eqref{eq:stenvp-crash-detect-global-label-equvi}, we have
          $\stEnvApp{\stEnvi}{\roleQ} =  \stT[k] \stSub \gtProj[\rolesR]{\gtG[k]}{\roleQ}$, as desired.
        \item $\roleR \neq \roleQ$: with $\roleP \notin \gtRoles{\gtG[k]}$, we have $\roleR \neq \roleP$, and hence,
        $\gtProj[\rolesR]{\gtG}{\roleR} = \stMerge{i \in I'}{\gtProj[\rolesR]{\gtG[i]}{\roleR}}$.
        Furthermore, by \eqref{eq:stenvp-crash-detect-type-cxt-cons} and $\stEnvAssoc{\gtWithCrashedRoles{\rolesC}{\gtG}}{\stEnv; \qEnv}{\rolesR}$,
        it holds that $\stEnvApp{\stEnvi}{\roleR} = \stEnvApp{\stEnv}{\roleR} \stSub \gtProj[\rolesR]{\gtG}{\roleR} = \stMerge{i \in I'}{\gtProj[\rolesR]{\gtG[i]}{\roleR}}$.
        Then, by~\autoref{lem:merge-subtyping} and transitivity of subtyping,  we can conclude that $\stEnvApp{\stEnvi}{\roleR}
        \stSub \gtProj[\rolesR]{\gtG[k]}{\roleR}$, as desired.
        \end{itemize}
         \item Trivial by $\stEnvAssoc{\gtWithCrashedRoles{\rolesC}{\gtG}}{\stEnv; \qEnv}{\rolesR}$.
          \item Trivial by $\stEnvAssoc{\gtWithCrashedRoles{\rolesC}{\gtG}}{\stEnv; \qEnv}{\rolesR}$ and the
                   fact that if $\roleQ \notin \gtRoles{\gtG[k]}$, then $\stEnvApp{\stEnvi}{\roleQ} = \stEnd$:
                   with $\roleQ \notin \gtRolesCrashed{\gtG[k]}$, by~\autoref{lem:not-in-roles-end}, we have $\gtProj[\rolesR]{\gtG[k]}{\roleQ} = \stEnd$.
                   Furthermore, by \eqref{eq:stenvp-crash-detect-global-label-equvi}, it holds that $\stT[k] = \stEnd$, and thus,
          $\stEnvApp{\stEnvi}{\roleQ} = \stEnd$, as desired.
          \item Since $\qEnv$ is associated with $\gtWithCrashedRoles{\rolesC}{\gtCommTransit{\rolePCrashed}{\roleQ}{i \in
            I'}{\gtLab[i]}{\tyGroundi[i]}{\gtG[i]}{j} }$ and $\gtLab[j] = \gtCrashLab$, by~\autoref{def:assoc-queue},  we have that
            $\forall i \in I': \qEnv \text{~is associated with~}
      \gtWithCrashedRoles{\rolesC}{\gtG[i]}$, which follows that $\qEnvi = \qEnv$ is associated with
       $\gtWithCrashedRoles{\rolesC}{\gtG[k]}$, as required.
        \end{enumerate}

       \item Case (2): similar to the case (2) in Case $\inferrule{\iruleTCtxOut}$.
    \end{itemize}
     \item Case $\inferrule{\iruleTCtxRec}$:

      By induction hypothesis and~\autoref{prop:gt-lts-unfold}.
    \qedhere
  \end{itemize}
\end{proof}

\subsection{Safety by Projection}
\label{sec:proof:safety}

\begin{lem}
\label{lem:safety-by-proj}
\label{lem:ext-proj-safe}
  If \;$\stEnvAssoc{\gtWithCrashedRoles{\rolesC}{\gtG}}{\stEnv; \qEnv}{\rolesR}$,\;
  then\; $\stEnv; \qEnv$ \;is\; $\rolesR$-safe.%
\end{lem}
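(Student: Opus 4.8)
The plan is to show that the set of all configurations associated (\wrt $\rolesR$) with some well-annotated global type is itself an $\rolesR$-safety property, and then conclude by the coinductive characterisation of $\rolesR$-safety (\autoref{def:mpst-env-safe}). Concretely, I would define
\[
  \predP \;=\; \setcomp{\stEnv; \qEnv}{\exists \gtWithCrashedRoles{\rolesC}{\gtG} \text{ well-annotated \wrt } \rolesR \text{ with } \stEnvAssoc{\gtWithCrashedRoles{\rolesC}{\gtG}}{\stEnv; \qEnv}{\rolesR}}
\]
and verify that $\predP$ satisfies clauses \inferrule{\iruleSafeComm}, \inferrule{\iruleSafeCrash}, \inferrule{\iruleSafeRec}, and \inferrule{\iruleSafeMove}. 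The result then follows since $\stEnv;\qEnv \relR[] \predP$ for the given associated configuration.

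For \inferrule{\iruleSafeMove}, I would use \autoref{thm:gtype:proj-comp} (completeness of association): if $\stEnv; \qEnv \stEnvMoveMaybeCrash[\rolesR] \stEnvi; \qEnvi$, then the transition label is not $\ltsCrash{\mpS}{\roleP}$ for any reliable $\roleP$, so \autoref{thm:gtype:proj-comp} yields a well-annotated (by \autoref{lem:well-annotated-preserve}) reduct $\gtWithCrashedRoles{\rolesCi}{\gtGi}$ with $\stEnvAssoc{\gtWithCrashedRoles{\rolesCi}{\gtGi}}{\stEnvi; \qEnvi}{\rolesR}$, hence $\predPApp{\stEnvi; \qEnvi}$. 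For \inferrule{\iruleSafeRec}, unfolding a recursive entry $\stEnvApp{\stEnv}{\roleP} = \stRec{\stRecVar}{\stS}$ preserves association by \autoref{prop:lt-lts-unfold-once}, so the unfolded configuration is still in $\predP$. For \inferrule{\iruleSafeComm}: given $\stEnvApp{\stEnv}{\roleQ} = \stExtSum{\roleP}{i \in I}{\stChoice{\stLab[i]}{\tyGround[i]} \stSeq \stSi[i]}$ with a non-empty, available queue $\stEnvApp{\qEnv}{\roleP, \roleQ}$, I would inspect the association clauses in \autoref{def:assoc-queue}: a non-empty available queue from $\roleP$ to $\roleQ$ forces $\gtG$ (after unfolding) to be a transmission en route $\gtCommTransit{\rolePMaybeCrashed}{\roleQ}{\dots}{j}$ with $\gtLab[j] \neq \gtCrashLab$ and $\stEnvApp{\qEnv}{\roleP, \roleQ} = \stQCons{\stQMsg{\gtLab[j]}{\tyGround[j]}}{\stQ}$; then by \autoref{lem:inv-proj}\cref{item:proj-inv:recv} the label $\gtLab[j]$ must appear among the $\stLab[i]$, so \inferrule{\iruleTCtxIn} fires. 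For \inferrule{\iruleSafeCrash}: given $\stEnvApp{\stEnv}{\roleP} = \stStop$, $\stEnvApp{\stEnv}{\roleQ} = \stExtSum{\roleP}{i \in I}{\stChoice{\stLab[i]}{\stS[i]} \stSeq \stSi[i]}$, and empty queue, association forces $\roleP \in \rolesC$ (hence $\roleP \notin \rolesR$ by \ref{item:wa:reliable-no-crash}); unfolding $\gtG$ and using \autoref{lem:inv-proj}\cref{item:proj-inv:recv}, since $\roleP$ is unreliable the projection onto $\roleQ$ must contain a $\stCrashLab$ branch (this is exactly the content of \autoref{lem:crash-lab-exists}), and the empty queue plus $\stEnvApp{\stEnv}{\roleP} = \stStop$ lets \inferrule{\iruleTCtxCrashDetect} fire.

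The main obstacle I anticipate is \inferrule{\iruleSafeCrash}, specifically threading together the bookkeeping: from $\stEnvApp{\stEnv}{\roleP} = \stStop$ I need $\roleP \in \rolesC$, which uses clause~\ref{item:assoc:crash-stop} of \autoref{def:assoc}, but then I must locate the crash-handling branch in $\stEnvApp{\stEnv}{\roleQ}$ even though the association only gives $\stEnvApp{\stEnv}{\roleQ} \stSub \gtProj[\rolesR]{\gtG}{\roleQ}$ — so I need the extra requirements in \inferrule{\iruleStSubIn} (that a subtype retains a crash-handling branch if the supertype has one, and cannot introduce a pure crash branch) to transport the $\stCrashLab$ label down to $\stEnvApp{\stEnv}{\roleQ}$, together with \autoref{lem:subtyping-invert}. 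A secondary subtlety is that when a queue is non-empty the global type may be in transmission-en-route form even when the receiver $\roleP$ has crashed ($\rolePMaybeCrashed = \rolePCrashed$), so the case analysis on \autoref{def:assoc-queue} clauses must be done carefully to always land in \inferrule{\iruleSafeComm} rather than \inferrule{\iruleSafeCrash}, matching the parenthetical remark after the definition.
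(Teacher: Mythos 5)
Your proposal is correct and follows essentially the same route as the paper: the paper also exhibits an $\rolesR$-safety property $\predP$ consisting of configurations associated with (reducts of) the global type, discharges \inferrule{\iruleSafeComm} and \inferrule{\iruleSafeCrash} via inversion of projection (\autoref{lem:inv-proj}, \autoref{lem:eventual-global-form}, \autoref{lem:crash-lab-exists}) together with \inferrule{\iruleTCtxIn}/\inferrule{\iruleTCtxCrashDetect}, and uses \autoref{thm:gtype:proj-comp} for \inferrule{\iruleSafeMove}. Your only deviation is cosmetic: you let $\predP$ range over all well-annotated associated configurations rather than reducts of the given global type, which if anything streamlines the closure argument.
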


\begin{proof}
Let $\predP = \setcomp{\stEnvi; \qEnvi}{\exists \rolesCi, \gtGi: \gtWithCrashedRoles{\rolesC}{\gtG}
\gtMoveStar[\rolesR]{}
\gtWithCrashedRoles{\rolesCi}{\gtGi}
\text{ and }
\stEnvAssoc{\gtWithCrashedRoles{\rolesCi}{\gtGi}}{\stEnvi; \qEnvi}{\rolesR}
}$.
Take any $\stEnvi; \qEnvi \in \predP$, we show that $\stEnvi; \qEnvi$ satisfies all
clauses in~\autoref{def:mpst-env-safe}, which means that $\predP$ is an $\rolesR$-safety 
property. Then we can conclude that, since $\predPApp{\stEnv; \qEnv}$ holds, 
$\stEnv; \qEnv$ is $\rolesR$-safe. 

By definition of $\predP$, there exists $\gtWithCrashedRoles{\rolesCi}{\gtGi}$ with
$\gtWithCrashedRoles{\rolesC}{\gtG}
\gtMoveStar[\rolesR]{}
\gtWithCrashedRoles{\rolesCi}{\gtGi}$ and 
$\stEnvAssoc{\gtWithCrashedRoles{\rolesCi}{\gtGi}}{\stEnvi; \qEnvi}{\rolesR}$. 
\begin{itemize}%
\item \inferrule{\iruleSafeComm}: 
from the premise, we have 
\begin{gather}
    \stEnvAssoc{\gtWithCrashedRoles{\rolesCi}{\gtGi}}{\stEnvi; \qEnvi}{\rolesR}
    \label{eq:safety-assoc-1}
    \\
    \stEnvApp{\stEnvi}{%
        \roleQ%
      } =
      \stExtSum{\roleP}{i \in I}{\stChoice{\stLab[i]}{\tyGround[i]} \stSeq \stT[i]}
      \label{eq:stenvp-safety-rec-comp}
    \\
    \stEnvApp{\qEnvi}{\roleP, \roleQ} \neq \stQUnavail
    \label{eq:safety-queue-available}
    \\
    \stEnvApp{\qEnvi}{\roleP, \roleQ} \neq \stQEmpty
    \label{eq:safety-queue-not-empty}
\end{gather}
Apply~\cref{item:proj-inv:recv} of~\autoref{lem:inv-proj} on \eqref{eq:stenvp-safety-rec-comp}, 
we have two cases.
\begin{itemize}[leftmargin=*]
\item Case (1):
\begin{gather}
\unfoldOne{\gtGi} = \gtCommTransit{\rolePMaybeCrashed}{\roleQ}{i \in
            I'}{\gtLab[i]}{\tyGroundi[i]}{\gtG[i]}{j} \text{ or } 
           \unfoldOne{\gtGi} =  \gtComm{\roleP}{\roleQMaybeCrashed}{i \in I'}{\gtLab[i]}{\tyGroundi[i]}{\gtG[i]}
           \\
            I' \subseteq I
 \label{eq:safety-subset}
      \\
      \forall i \in I':
      \stLab[i] = \gtLab[i],
      \stT[i] \stSub (\gtProj[\rolesR]{\gtG[i]}{\roleQ}),
      \tyGroundi[i] = \tyGround[i]
      \label{eq:stenvp-safety-receive-label-equvi}
\end{gather}
First, we show that $\unfoldOne{\gtGi}$ cannot be the form of 
$\gtComm{\roleP}{\roleQMaybeCrashed}{i \in I'}{\gtLab[i]}{\tyGroundi[i]}{\gtG[i]}$. We prove this by contradiction on two subcases:
$\roleQMaybeCrashed = \roleQ$ and $\roleQMaybeCrashed = \roleQCrashed$:
\begin{itemize}[leftmargin=*]
\item $\roleQMaybeCrashed = \roleQ$: by association \eqref{eq:safety-assoc-1}, we have 
$\stEnvApp{\qEnvi}{\roleP, \roleQ} = \stQEmpty$, a desired contradiction to \eqref{eq:safety-queue-not-empty}. 
\item $\roleQMaybeCrashed = \roleQCrashed$: we have $\roleQ \in \rolesC$. Hence, by association \eqref{eq:safety-assoc-1}, we have 
$\stEnvApp{\qEnvi}{\cdot, \roleQ} = \stQUnavail$, a desired contradiction to \eqref{eq:safety-queue-available}. 
\end{itemize}

It follows that $\unfoldOne{\gtGi} =  \gtCommTransit{\rolePMaybeCrashed}{\roleQ}{i \in
            I'}{\gtLab[i]}{\tyGroundi[i]}{\gtG[i]}{j}$. We are left to show that $j \in I$, $\tyGroundi[j] = \tyGround[j]$, and 
$ \stEnvApp{\qEnvi}{\roleP, \roleQ}  =  \stQCons{\stQMsg{\stLab[j]}{\tyGroundi[j]}}{\stQ}$,  and then by applying 
\inferrule{\iruleTCtxIn}, we can conclude with $\stEnvMoveAnnotP{\stEnvi; \qEnvi}{\stEnvInAnnot{\roleQ}{\roleP}{\stChoice{\stLab[j]}{\tyGround[j]}}}$. 
We consider two subcases: $\gtLab[j] = \gtCrashLab$ and $\gtLab[j] \neq \gtCrashLab$:
\begin{itemize}[leftmargin=*]
\item $\gtLab[j] = \gtCrashLab$: by association, 
$\stEnvApp{\qEnvi}{\roleP, \roleQ} = \stQEmpty$, a desired contradiction to \eqref{eq:safety-queue-not-empty}. 
\item $\gtLab[j] \neq \gtCrashLab$: by association, $\stEnvApp{\qEnvi}{\roleP, \roleQ} =
        \stQCons{\stQMsg{\gtLab[j]}{\tyGroundi[j]}}{\stQ}$. Moreover, with \eqref{eq:safety-subset} and \eqref{eq:stenvp-safety-receive-label-equvi}, 
        we obtain that $j \in I$, $\tyGroundi[j] = \tyGround[j]$, and $\stLab[j] = \gtLab[j]$, which follows that 
$ \stEnvApp{\qEnvi}{\roleP, \roleQ}  =  \stQCons{\stQMsg{\stLab[j]}{\tyGroundi[j]}}{\stQ}$, as desired. 
\end{itemize}

\item Case (2):  $\unfoldOne{\gtGi} =
            \gtComm{\roleS}{\roleTMaybeCrashed}{j \in J}{\gtLab[j]}{\tyGroundi[j]}{\gtG[j]}$,
          or
          $\unfoldOne{\gtGi} =
            \gtCommTransit{\roleSMaybeCrashed}{\roleT}{j \in
            J}{\gtLab[j]}{\tyGroundi[j]}{\gtG[j]}{k}$,
          where for all $j \in J$:
          $  \stEnvApp{\stEnvi}{%
        \roleQ%
      }  \stSub (\gtProj[\rolesR]{\gtG[j]}{\roleQ})$,
          with $\roleQ \neq \roleS$ and $\roleQ \neq \roleT$.
          
          We apply~\autoref{lem:eventual-global-form} to get that there exists a global type 
          $\gtWithCrashedRoles{\rolesCii}{\gtGii}$ and a queue environment $\qEnvii$ such that 
 $  \stEnvApp{\stEnvi}{%
        \roleQ%
      } \stSub 
 \gtProj[\rolesR]{\gtGii}{\roleQ}$, $\unfoldOne{\gtGii}$ is of the form 
 $ \gtCommTransit{\rolePMaybeCrashed}{\roleQ}{i \in
            I'}{\gtLab[i]}{\tyGroundi[i]}{\gtG[i]}{j}$ or
 $ \gtComm{\roleP}{\roleQMaybeCrashed}{i \in I'}{\gtLab[i]}{\tyGroundi[i]}{\gtG[i]}$, and $\qEnvii$ is 
 associated with $\gtWithCrashedRoles{\rolesCii}{\gtGii}$ with $\stEnvApp{\qEnvii}{\roleP, \roleQ} = \stEnvApp{\qEnvi}{\roleP, \roleQ}$.  
 The following proof argument is similar to that for Case (1). 
\end{itemize}

\item \inferrule{\iruleSafeCrash}:  from the premise, we have 
$ \stEnvApp{\stEnvi}{\roleQ} =
      \stExtSum{\roleP}{i \in I}{\stChoice{\stLab[i]}{\tyGround[i]} \stSeq \stT[i]}, 
       \stEnvApp{\stEnvi}{\roleP} = \stStop$, and $\stEnvApp{\qEnvi}{\roleP, \roleQ} = \stQEmpty$. 
       We just need to show that there exists $k \in I$ with $\stLab[k] = \stCrashLab$, and then by applying 
       \inferrule{\iruleTCtxCrashDetect}, we can conclude with 
       $\stEnvMoveAnnotP{\stEnvi; \qEnvi}{\ltsCrDe{\mpS}{\roleQ}{\roleP}}$. 
       By the association that $\stEnvAssoc{\gtWithCrashedRoles{\rolesCi}{\gtGi}}{\stEnvi; \qEnvi}{\rolesR}$, 
       we have $ \stEnvApp{\stEnvi}{\roleQ} =
      \stExtSum{\roleP}{i \in I}{\stChoice{\stLab[i]}{\tyGround[i]} \stSeq \stT[i]} 
      \stSub 
      \gtProj[\rolesR]{\gtGi}{\roleQ}$. Moreover, with $\roleP \notin \rolesR$, which follows directly from 
      $\stEnvApp{\stEnvi}{\roleP} = \stStop$, we can apply~\autoref{lem:crash-lab-exists} to 
      get $\exists k \in I : \stLab[k] = \stCrashLab$, as desired. 

\item \inferrule{\iruleSafeRec}: let $\stEnvii; \qEnvii$ be constructed from $\stEnvi; \qEnvi$ with
$\stEnvii = \stEnvUpd{\stEnvi}{%
        \roleP%
      }{%
        \stS\subst{\stRecVar}{\stRec{\stRecVar}{\stS}}%
      }$ and
$\qEnvii = \qEnvi$. 
We want to show that $\stEnvAssoc{\gtWithCrashedRoles{\rolesCi}{\gtGi}}{\stEnvii; \qEnvii}{\rolesR}$. 
From the premise, we have $\stEnvApp{\stEnvi}{{\roleP}}  =  \stRec{\stRecVar}{\stS}$.
Then by association that $\stEnvAssoc{\gtWithCrashedRoles{\rolesCi}{\gtGi}}{\stEnvi; \qEnvi}{\rolesR}$, 
it holds that $\stEnvApp{\stEnvi}{{\roleP}}   = \stRec{\stRecVar}{\stS} 
        \stSub
        \gtProj[\rolesR]{\gtGi}{\roleP}$. Hence, by inverting \inferrule{\iruleStSubRecL}, we have
        $\stEnvApp{\stEnvii}{{\roleP}} = \stS{}\subst{\stRecVar}{\stRec{\stRecVar}{\stS}}    
         \stSub
        \gtProj[\rolesR]{\gtGi}{\roleP}$. Therefore, by~\autoref{def:assoc}, we conclude with 
        $\stEnvAssoc{\gtWithCrashedRoles{\rolesCi}{\gtGi}}{\stEnvii; \qEnvii }{\rolesR}
    $, as desired.
\item \inferrule{\iruleSafeMove}:  from the premise, we have $\stEnvi; \qEnvi \stEnvMoveMaybeCrash[\rolesR] \stEnvii; \qEnvii$. 
Hence, by~\autoref{thm:gtype:proj-comp}, there exists $\gtWithCrashedRoles{\rolesCii}{\gtGii}$ with
 $\gtWithCrashedRoles{\rolesCi}{\gtGi} \gtMove[\stEnvAnnotGenericSym]{\rolesR}
    \gtWithCrashedRoles{\rolesCii}{\gtGii}$, where $\stEnvAnnotGenericSym \neq \ltsCrash{\mpS}{\roleP}$ for all $\roleP
  \in \rolesR$, and  $\stEnvAssoc{\gtWithCrashedRoles{\rolesCii}{\gtGii}}{\stEnvii; \qEnvii}{\rolesR}$.
By definition of $\predP$, the configuration after transition $\stEnvii; \qEnvii$ is in $\predP$, as desired. 
\qedhere
\end{itemize}
 \end{proof}

\subsection{Deadlock Freedom by Projection}
\label{sec:proof:deadlockfree}
\begin{lem}
\label{lem:proj:df}
\label{lem:ext-proj-deadlock-free}
  If \;$\stEnvAssoc{\gtWithCrashedRoles{\rolesC}{\gtG}}{\stEnv; \qEnv}{\rolesR}$,\;
  then\; $\stEnv; \qEnv$ \;is\; $\rolesR$-deadlock-free.%
\end{lem}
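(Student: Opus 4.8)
The plan is to check the two conditions of \autoref{def:mpst-env-deadlock-free} in turn. The first, that $\stEnv; \qEnv$ is $\rolesR$-safe, is exactly \autoref{lem:ext-proj-safe}, so all the work concerns the terminal-configuration condition~\ref{item:df:reduces}. I would begin by taking an arbitrary \emph{maximal} reduction $\stEnv; \qEnv \stEnvMoveMaybeCrashStar[\rolesR] \stEnvi; \qEnvi \not\stEnvMoveMaybeCrash[\rolesR]$ and tracking it on the global side. Every individual step $\stEnvMoveMaybeCrash[\rolesR]$ carries, by definition, a label $\stEnvAnnotGenericSym \notin \setcomp{\ltsCrash{\mpS}{\roleR}}{\roleR \in \rolesR}$, which is precisely the side condition required to invoke completeness of association (\autoref{thm:gtype:proj-comp}); threading this through, together with preservation of well-annotatedness (\autoref{lem:well-annotated-preserve}), an induction on the length of the sequence yields a global type $\gtWithCrashedRoles{\rolesCi}{\gtGi}$, well-annotated w.r.t.\ $\rolesR$, with $\gtWithCrashedRoles{\rolesC}{\gtG} \gtMoveStar[\rolesR]{} \gtWithCrashedRoles{\rolesCi}{\gtGi}$ and $\stEnvAssoc{\gtWithCrashedRoles{\rolesCi}{\gtGi}}{\stEnvi; \qEnvi}{\rolesR}$ (the base case being the empty reduction, where $\gtWithCrashedRoles{\rolesCi}{\gtGi} = \gtWithCrashedRoles{\rolesC}{\gtG}$, which is well-annotated since association already presupposes it).

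Next I would show $\gtGi = \gtEnd$. Assuming otherwise, progress of global types (\autoref{lem:gtype:progress}) gives $\gtWithCrashedRoles{\rolesCi}{\gtGi} \gtMove{\rolesR}$, and soundness of association (\autoref{thm:gtype:proj-sound}) then produces some configuration step $\stEnvi; \qEnvi \stEnvMoveGenAnnot \stEnvii; \qEnvii$ together with a matching global move of label $\stEnvAnnotGenericSym$. The point is that this step is always a $\stEnvMoveMaybeCrash[\rolesR]$ step: rule~\inferrule{\iruleGtMoveCrash} carries the side condition $\roleP \notin \rolesR$, and no other rule introduces a crash label, so the global LTS never exhibits a crash of a reliable role; hence $\stEnvAnnotGenericSym$ is either a send/receive/crash-detection label or a crash of an unreliable role, and in every case $\stEnvAnnotGenericSym \notin \setcomp{\ltsCrash{\mpS}{\roleR}}{\roleR \in \rolesR}$. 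This contradicts $\stEnvi; \qEnvi \not\stEnvMoveMaybeCrash[\rolesR]$, so $\gtGi = \gtEnd$ after all.

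Finally, with $\gtGi = \gtEnd$ and $\stEnvAssoc{\gtWithCrashedRoles{\rolesCi}{\gtEnd}}{\stEnvi; \qEnvi}{\rolesR}$, the required shape can be read straight off \autoref{def:assoc}. Since $\gtRoles{\gtEnd} = \emptyset$, clause~\ref{item:assoc:alive-sub} forces the $\gtGi$-component of $\stEnvi$ to be empty, so $\stEnvi$ splits as $\stEnvi[\stEnd] \stEnvComp \stEnvi[\stStopSym]$ where (by~\ref{item:assoc:end}) $\stEnvi[\stEnd]$ maps every role to $\stEnd$ and (by~\ref{item:assoc:crash-stop}) $\stEnvi[\stStopSym]$ maps $\dom{\stEnvi[\stStopSym]} = \rolesCi$ to $\stStop$; this is condition~\ref{item:df:context}. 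Clause~\ref{item:assoc:queue} specialised to $\gtG = \gtEnd$ says that $\stEnvApp{\qEnvi}{\cdot,\roleQ} = \stQUnavail$ exactly for $\roleQ \in \rolesCi$ (equivalently $\stEnvApp{\stEnvi}{\roleQ} = \stStop$) and $\stEnvApp{\qEnvi}{\roleP,\roleQ} = \stQEmpty$ for every other pair, which is condition~\ref{item:df:queues}. The main obstacle I expect is the first step: being careful that each transition along $\stEnvMoveMaybeCrashStar[\rolesR]$ genuinely meets the hypotheses of completeness and that well-annotatedness is maintained throughout the induction, and then the small but essential observation---used to close the second step---that the global reduction relation can never crash a reliable role, which is exactly what makes the move supplied by soundness count as a $\stEnvMoveMaybeCrash[\rolesR]$ move.
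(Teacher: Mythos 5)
Your proposal is correct and follows essentially the same route as the paper: safety via \autoref{lem:ext-proj-safe}, then the operational correspondence (\autoref{thm:gtype:proj-comp} and \autoref{thm:gtype:proj-sound}) to match the terminal configuration with a terminal global type, which must be $\gtEnd$, and finally reading off conditions~\ref{item:df:context} and~\ref{item:df:queues} from \autoref{def:assoc}. You merely make explicit two steps the paper leaves implicit --- invoking \autoref{lem:gtype:progress} to force $\gtGi = \gtEnd$ and observing that the global LTS never crashes a reliable role, so the move supplied by soundness is a genuine $\stEnvMoveMaybeCrash[\rolesR]$ move --- which is a faithful elaboration rather than a different argument.
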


\begin{proof}
Since $\stEnvAssoc{\gtWithCrashedRoles{\rolesC}{\gtG}}{\stEnv; \qEnv}{\rolesR}$,
by \autoref{lem:ext-proj-safe}, we have that $\stEnv; \qEnv$ is $\rolesR$-safe.
By operational correspondence of global type $\gtWithCrashedRoles{\rolesC}{\gtG}$
and configuration $\stEnv; \qEnv$~(\autoref{thm:gtype:proj-comp} and~\autoref{thm:gtype:proj-sound}),
there exists
a global type $\gtWithCrashedRoles{\rolesCi}{\gtGi}$ such that
$\gtWithCrashedRoles{\rolesC}{\gtG} \!\gtMoveStar[\rolesR]\!
\gtWithCrashedRoles{\rolesCi}{\gtGi} \!\not\gtMove[]{\rolesR}$,
with associated configurations
 $\stEnv; \qEnv \!\stEnvMoveMaybeCrashStar[\rolesR]\! \stEnvi; \qEnvi
\!\not\stEnvMoveMaybeCrash[\rolesR]$. Since no further reductions are possible for the global type,
the global type $\gtWithCrashedRoles{\rolesCi}{\gtGi}$ must be of the form $\gtWithCrashedRoles{\rolesCi}{\stEnd}$.
By applying~\autoref{def:assoc} on $\stEnvAssoc{\gtWithCrashedRoles{\rolesCi}{\stEnd}}{\stEnvi; \qEnvi}{\rolesR}$,
we have
$\stEnvi =  \stEnvi[\stEnd] \stEnvComp \stEnvi[\stStopSym]$, where
$\dom{\stEnvi[\stEnd]} = \setcomp{\roleP}{\stEnvApp{\stEnvi}{\roleP} = \stEnd}$ and
$\dom{\stEnvi[\stStopSym]} = \rolesCi = \setcomp{\roleP}{\stEnvApp{\stEnvi}{\roleP} = \stStop}$, as required.
By association again, we have that,
for any $\roleP, \roleQ$, if $\roleQ \in \rolesCi (= \dom{\stEnvi[\stStopSym]}),
 \stEnvApp{\qEnvi}{\cdot, \roleQ} = \stQUnavail$, and otherwise, $\stEnvApp{\qEnvi}{\roleP, \roleQ} = \stQEmpty$, as required.
\end{proof}

\subsection{Liveness by Projection}
\label{sec:proof:liveness}
\begin{lem}
 \label{lem:ext-proj-live}
  If \;$\stEnvAssoc{\gtWithCrashedRoles{\rolesC}{\gtG}}{\stEnv; \qEnv}{\rolesR}$,\;
  then\; $\stEnv; \qEnv$ \;is\; $\rolesR$-live.%
\end{lem}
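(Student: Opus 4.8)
The first clause of $\rolesR$-liveness (\autoref{def:mpst-env-live}) is that $\stEnv; \qEnv$ is $\rolesR$-safe, which is exactly \autoref{lem:ext-proj-safe}; so the plan is to establish the second clause, namely that for every reductum $\stEnv; \qEnv \stEnvMoveMaybeCrashStar[\rolesR] \stEnvi; \qEnvi$, every fair non-crashing path starting at $\stEnvi; \qEnvi$ is live. The idea is to transport the shape of the associated global type along such a path, using safety to guarantee that the pending communications are enabled and the fairness clauses to guarantee that they fire.

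First I would note that, since every step of $\stEnvMoveMaybeCrash[\rolesR]$ carries a label distinct from $\ltsCrash{\mpS}{\roleR}$ for all $\roleR \in \rolesR$, iterating \autoref{thm:gtype:proj-comp} together with \autoref{lem:well-annotated-preserve} yields a well-annotated $\gtWithCrashedRoles{\rolesCi}{\gtGi}$ with $\stEnvAssoc{\gtWithCrashedRoles{\rolesCi}{\gtGi}}{\stEnvi; \qEnvi}{\rolesR}$. Then, given a fair non-crashing path $(\stEnv[n]; \qEnv[n])_{n \in N}$ with $\stEnv[0]; \qEnv[0] = \stEnvi; \qEnvi$ — in which every step $\stEnv[n]; \qEnv[n] \stEnvMove \stEnv[n+1]; \qEnv[n+1]$ carries an input, output or crash-detection label, never a crash label — I would again apply \autoref{thm:gtype:proj-comp} step by step to obtain a matching chain of global-type reductions $\gtWithCrashedRoles{\rolesC[n]}{\gtG[n]} \gtMove[\stEnvAnnotGenericSym[n]]{\rolesR} \gtWithCrashedRoles{\rolesC[n+1]}{\gtG[n+1]}$, with each $\gtG[n]$ well-annotated and $\stEnvAssoc{\gtWithCrashedRoles{\rolesC[n]}{\gtG[n]}}{\stEnv[n]; \qEnv[n]}{\rolesR}$. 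Since no role crashes along a non-crashing path, no queue ever becomes unavailable (only \inferrule{\iruleTCtxCrash} does that) and the per-origin ordering of queued messages is preserved.

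Next I would discharge the two liveness clauses of \autoref{def:non-crash-fair-live-path}. For \ref{item:liveness_consume}: suppose $\stEnvApp{\qEnv[n]}{\roleP, \roleQ} = \stQCons{\stQMsg{\stLab}{\tyGround}}{\stQ} \neq \stQUnavail$ with $\stLab \neq \stCrashLab$. By the queue part of the association this head message corresponds to a transmission-en-route towards $\roleQ$ carrying $\stLab$ inside $\gtG[n]$, so — by \autoref{lem:inv-proj} and \autoref{lem:eventual-global-form}, pushed through the subtyping of \autoref{lem:subtyping-invert} — $\stEnvApp{\stEnv[n]}{\roleQ}$ is, after a \emph{finite} prefix of $\roleQ$'s own communications, an external choice from $\roleP$ with a branch labelled $\stLab$. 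The fairness clauses \ref{item:fairness_send}, \ref{item:fairness_receive} and \ref{item:fairness_crash_detection} force every one of those enabled intermediate moves of $\roleQ$ to fire, so the path reaches a state in which $\roleQ$ is at that external choice; the message $\stQMsg{\stLab}{\tyGround}$ is still at the head of $\stEnvApp{\qEnv}{\roleP, \roleQ}$, so safety clause \inferrule{\iruleSafeComm} makes the input $\stEnvMoveInAnnot{\roleQ}{\roleP}{\stChoice{\stLab}{\tyGround}}$ enabled, and — since crash detection is barred on a non-empty queue — \ref{item:fairness_receive} makes exactly that input fire, consuming the message. For \ref{item:liveness_receiving_cd}: when $\stEnvApp{\stEnv[n]}{\roleP}$ is an external choice from $\roleQ$, the association forces $\gtG[n]$ to contain a transmission (en route or not) $\roleQ \to \roleP$; if it is en route and not a $\gtCrashLab$ pseudo-message, the queue $\stEnvApp{\qEnv}{\roleQ, \roleP}$ is non-empty and \inferrule{\iruleSafeComm} applies; if it is a pseudo-message (so $\roleQ$ has crashed), the queue is empty and \inferrule{\iruleSafeCrash} together with \autoref{lem:crash-lab-exists} give an enabled crash detection $\stEnvMoveAnnot{\ltsCrDe{\mpS}{\roleP}{\roleQ}}$; otherwise $\roleQ$ is live and still has to send to $\roleP$, so by the same finite-prefix argument $\roleQ$ eventually reaches an internal choice to $\roleP$, the send fires by \ref{item:fairness_send}, and we are back in the first case. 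In every case a fair non-crashing path therefore performs the required input or crash detection at $\roleP$, so the path is live.

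The step I expect to be the main obstacle is the \emph{finite-prefix} argument used twice above: making precise that a role blocked at, or destined for, a communication with a fixed partner reaches the matching prefix after finitely many steps of the path. This requires an auxiliary structural result on well-annotated associated global types — essentially a measure on $\gtG[n]$ together with a role's position (the ``distance'' to the relevant transmission) that is well-defined despite recursion, strictly decreases with each of that role's own moves, and is unaffected by the non-interfering moves of other roles that the path may interleave — so that fairness cannot be defeated by a bad interleaving or by a cyclic wait (the latter already excluded by the deadlock-freedom of reducta, \autoref{lem:proj:df}). It also needs care with the subtyping gap: $\stEnvApp{\stEnv[n]}{\roleP}$ may offer strictly fewer send branches than $\gtProj[\rolesR]{\gtG[n]}{\roleP}$, but by \autoref{lem:subtyping-invert} it still offers at least one send to the same partner and retains all receive branches — in particular the $\stCrashLab$ branch whenever \autoref{lem:crash-lab-exists} demands it. The remaining ingredients (safety of reducta, preservation of association, and the queue bookkeeping along non-crashing paths) are routine.
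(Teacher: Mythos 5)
Your proposal follows essentially the same route as the paper's proof: safety is imported from \autoref{lem:ext-proj-safe}, the association is transported along the reduction sequence and then along each fair non-crashing path via \autoref{thm:gtype:proj-comp}, and the two liveness clauses are discharged by a case analysis on the shape of the associated (runtime) global type — pending transmission, genuine message en route, or $\gtCrashLab$ pseudo-message — combined with the corresponding fairness clauses, exactly as in the paper's appendix argument built on \autoref{lem:inv-proj} and \autoref{lem:eventual-global-form}. The ``finite-prefix'' step you single out as the main obstacle is precisely what the paper delegates to \autoref{lem:eventual-global-form} and otherwise treats tersely, so making it explicit via a measure that decreases with the blocked role's own moves is a reasonable tightening rather than a different approach.
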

\begin{proof}
Since $\stEnvAssoc{\gtWithCrashedRoles{\rolesC}{\gtG}}{\stEnv; \qEnv}{\rolesR}$,
by~\autoref{lem:ext-proj-safe}, we have that $\stEnv; \qEnv$ is $\rolesR$-safe.

We are left to
show that if  $\stEnv; \qEnv \stEnvMoveMaybeCrashStar[\rolesR]
  \stEnvi; \qEnvi$, then any non-crashing path starting with $\stEnvi; \qEnvi$ which is fair is also live.
  By operational correspondence of global type $\gtWithCrashedRoles{\rolesC}{\gtG}$
and configuration $\stEnv; \qEnv$ (\autoref{thm:gtype:proj-comp} and~\autoref{thm:gtype:proj-sound}),
there exists
a global type $\gtWithCrashedRoles{\rolesCi}{\gtGi}$ such that
$\gtWithCrashedRoles{\rolesC}{\gtG} \!\gtMoveStar[\rolesR]\!
\gtWithCrashedRoles{\rolesCi}{\gtGi}$ and
$\stEnvAssoc{\gtWithCrashedRoles{\rolesCi}{\gtGi}}{\stEnvi; \qEnvi}{\rolesR}$.
We construct a non-crashing fair path $(\stEnvi[n]; \qEnvi[n])_{n \in N}$,
where $N = \setenum{0, 1, 2, \ldots}$, $\stEnvi[0] = \stEnvi$,
$\qEnvi[0] = \qEnvi$, and $\forall n \in N$, $\stEnvi[n]; \qEnvi[n] \!\stEnvMove\!
\stEnvi[n+1]; \qEnvi[n+1]$. Then we just need to show that $(\stEnvi[n]; \qEnvi[n])_{n \in N}$
is live.
  \begin{enumerate}[label=(L\arabic*)]
  \item Suppose that $\stEnvApp{\qEnvi[n]}{\roleP, \roleQ}
      =
      \stQCons{\stQMsg{\stLab}{\tyGround}}{\stQ}
      \neq
      \stQUnavail$ and $\stLab \neq \stCrashLab$. By operational correspondence
      of $\gtWithCrashedRoles{\rolesCi}{\gtGi}$ and configuration $\stEnvi[0]; \qEnvi[0]$,  and
      $\forall n \in N$,  $\stEnvi[n]; \qEnvi[n] \!\stEnvMove\!
\stEnvi[n+1]; \qEnvi[n+1]$, there
      exists $\gtWithCrashedRoles{\rolesCi[n]}{\gtGi[n]}$ such that
      $\stEnvAssoc{\gtWithCrashedRoles{\rolesCi[n]}{\gtGi[n]}}{\stEnvi[n]; \qEnvi[n]}{\rolesR}$.
      By~\autoref{lem:eventual-global-form} and~\autoref{def:assoc}, we only need to consider
      the case that
       $\gtGi[n] =
      \gtCommTransit{\rolePMaybeCrashed}{\roleQ}{i \in I}{\gtLab[i]}{\tyGround[i]}{\gtGii[i]}{j}$ with
      $\stLab = \gtLab[j] \neq \gtCrashLab$ and $\tyGround[j] = \tyGround$.
      By applying \inferrule{\iruleGtMoveIn},
      $\gtWithCrashedRoles{\rolesCi[n]}{\gtGi[n]}
      \gtMove[
      \stEnvInAnnotSmall{\roleQ}{\roleP}{\stChoice{\gtLab[j]}{\tyGround[j]}}
    ]{
      \rolesR
    }
     \gtWithCrashedRoles{\rolesCi[n]}{\gtGii[j]}$. Hence, by the soundness of association, it holds
     that $\stEnvi[n]; \qEnvi[n] \stEnvMoveInAnnot{\roleQ}{\roleP}{\stChoice{\stLab}{\tyGround}}$.
     Finally, combining the fact that $(\stEnvi[n]; \qEnvi[n])_{n \in N}$  is fair with $\stEnvi[n]; \qEnvi[n] \stEnvMoveInAnnot{\roleQ}{\roleP}{\stChoice{\stLab}{\tyGround}}$, we can conclude that there exists $k$ such that $n \leq k \in N$ and
     $\stEnvi[k]; \qEnvi[k] \stEnvMoveInAnnot{\roleQ}{\roleP}{\stChoice{\stLab}{\tyGround}} \stEnvi[k+1]; \qEnvi[k+1]$, as desired.

 \item Suppose that  $\stEnvApp{\stEnvi[n]}{%
        \roleP%
      } =
      \stExtSum{\roleQ}{i \in I}{\stChoice{\stLab[i]}{\tyGround[i]} \stSeq \stT[i]}$. By operational correspondence
      of $\gtWithCrashedRoles{\rolesCi}{\gtGi}$ and configuration $\stEnvi[0]; \qEnvi[0]$,  and
      $\forall n \in N$,  $\stEnvi[n]; \qEnvi[n] \!\stEnvMove\!
\stEnvi[n+1]; \qEnvi[n+1]$, there
      exists $\gtWithCrashedRoles{\rolesCi[n]}{\gtGi[n]}$ such that
      $\stEnvAssoc{\gtWithCrashedRoles{\rolesCi[n]}{\gtGi[n]}}{\stEnvi[n]; \qEnvi[n]}{\rolesR}$.
      By~\autoref{def:assoc}, \autoref{lem:inv-proj}, and~\autoref{lem:eventual-global-form},
      we only have to consider that  $\gtGi[n]$ is of the form
     $ \gtComm{\roleQ}{\rolePMaybeCrashed}{i \in I'}{\gtLab[i]}{\tyGroundi[i]}{\gtGii[i]}$
          or
          $
            \gtCommTransit{\roleQMaybeCrashed}{\roleP}{i \in
            I'}{\gtLab[i]}{\tyGroundi[i]}{\gtGii[i]}{j}$,
          where
          $I' \subseteq I$, and
          for all $i \in I'$:
          $\stLab[i] = \gtLab[i]$,
          $\tyGroundi[i] = \tyGround[i]$,
          and $\roleQ \notin \rolesR$ implies $\exists k \in I': \gtLab[k] =
          \gtCrashLab$.
          \begin{itemize}[leftmargin=*]
                \item %
      $\gtGi[n] =
      \gtComm{\roleQ}{\rolePMaybeCrashed}{i \in I'}{\gtLab[i]}{\tyGroundi[i]}{\gtGii[i]}$: we first show that
      $\rolePMaybeCrashed = \roleP$. Since $\stEnvApp{\stEnvi[n]}{%
        \roleP%
      } \neq \stStop$, by~\autoref{def:assoc}, we have that $\roleP \notin \rolesCi[n]$, and hence, $\rolePMaybeCrashed
      \neq \rolePCrashed$. Given that $\gtGi[n] =
      \gtComm{\roleQ}{\roleP}{i \in I'}{\gtLab[i]}{\tyGroundi[i]}{\gtGii[i]}$, by association, \autoref{def:assoc}, and inversion of subtyping,
       $\stEnvApp{\stEnvi[n]}{%
        \roleQ%
      }$ is of the form
     $\stIntSum{\roleP}{i \in I''}{\stChoice{\stLab[i]}{\tyGroundi[i]} \stSeq \stTi[i]}$ where
      $I'' \subseteq I'$. Then applying \inferrule{\iruleTCtxOut},
      $\stEnvi[n]; \qEnvi[n]
      \,\stEnvMoveOutAnnot{\roleQ}{\roleP}{\stChoice{\stLab[j]}{\tyGroundi[j]}}$ for some $j \in I''$.
      Then together with the fairness of $(\stEnvi[n]; \qEnvi[n])_{n \in N}$,
      we have that there exists $k, \stLabi, \tyGroundii$ such that $n \leq k \in N$ and
     $\stEnvi[k]; \qEnvi[k] \stEnvMoveOutAnnot{\roleQ}{\roleP}{\stChoice{\stLabi}{\tyGroundii}} \stEnvi[k+1]; \qEnvi[k+1]$,
     which follows that  $\stEnvApp{\qEnvi[k+1]}{\roleQ, \roleP} =
     \stEnvUpd{\qEnvi[k]}{\roleQ, \roleP}{
        \stQCons{
          \stEnvApp{\qEnvi[k]}{\roleQ, \roleP}
        }{
          \stQMsg{\stLabi}{\tyGroundii}
        }}$. Finally, by the previous case (L1), we can conclude that there exists $k', \stLabii, \tyGroundiii$ such that $k + 1 \leq k' \in N$ and
         $\stEnvi[k']; \qEnvi[k'] \stEnvMoveInAnnot{\roleP}{\roleQ}{\stChoice{\stLabii}{\tyGroundiii}} \stEnvi[k'+1]; \qEnvi[k'+1]$, as desired.

    \item $\gtGi[n] =
            \gtCommTransit{\roleQMaybeCrashed}{\roleP}{i \in
            I'}{\gtLab[i]}{\tyGroundi[i]}{\gtGii[i]}{j}$: we consider two subcases:
            \begin{itemize}[leftmargin=*]
            \item $\gtLab[j] \neq \gtCrashLab$: by applying \inferrule{\iruleGtMoveIn},
            $\gtWithCrashedRoles{\rolesCi[n]}{
     \gtGi[n]
    }
    \gtMove[
      \stEnvInAnnotSmall{\roleP}{\roleQ}{\stChoice{\gtLab[j]}{\tyGroundi[j]}}
    ]{
      \rolesR
    }
    \gtWithCrashedRoles{\rolesCi[n]}{\gtGii[j]}$.
    Hence, by the soundness association, we have that  $\stEnvi[n]; \qEnvi[n]
      \,\stEnvMoveInAnnot{\roleP}{\roleQ}{\stChoice{\stLab[j]}{\tyGroundi[j]}}$.
       Finally, combining the fact that $(\stEnvi[n]; \qEnvi[n])_{n \in N}$  is fair with
      $\stEnvi[n]; \qEnvi[n] \stEnvMoveInAnnot{\roleP}{\roleQ}{\stChoice{\stLab[j]}{\tyGroundi[j]}}$,
      we can conclude that there exists $k$ such that $n \leq k \in N$ and
     $\stEnvi[k]; \qEnvi[k] \stEnvMoveInAnnot{\roleP}{\roleQ}{\stChoice{\stLab[j]}{\tyGroundi[j]}} \stEnvi[k+1]; \qEnvi[k+1]$, as desired.
   \item $\gtLab[j] = \gtCrashLab$: consider the following two subcases:
   \begin{itemize}[leftmargin=*]
   \item $\roleQMaybeCrashed = \roleQCrashed$:
    $\stEnvApp{\stEnvi[n]}{%
        \roleQ%
      } = \stStop$ by $\roleQ \in \rolesCi[n]$. By~\autoref{def:assoc}, $\stEnvApp{\qEnvi[n]}{\roleQ, \roleP} =
      \stQEmpty$.  Now applying \inferrule{\iruleTCtxCrashDetect} on $\stEnvApp{\stEnvi[n]}{\roleP} =
      \stExtSum{\roleQ}{i \in I}{\stChoice{\stLab[i]}{\tyGround[i]} \stSeq \stT[i]},
          \stEnvApp{\stEnvi[n]}{\roleQ} = \stStop,
           \stEnvApp{\qEnvi[n]}{\roleQ, \roleP} = \stQEmpty$ and
       $\gtLab[j] = \stLab[j] = \stCrashLab$ with $j \in I' \subseteq I$, we have that
       $\stEnvi[n]; \qEnvi[n] \stEnvMoveAnnot{\ltsCrDe{\mpS}{\roleP}{\roleQ}}$.  Then together with the fairness of
       $(\stEnvi[n]; \qEnvi[n])_{n \in N}$ , we can conclude that there exists $k$ such that $n \leq k \in N$ and
       $\stEnvi[k]; \qEnvi[k] \stEnvMoveAnnot{\ltsCrDe{\mpS}{\roleP}{\roleQ}} \stEnvi[k+1]; \qEnvi[k+1]$.

   \item $\roleQMaybeCrashed \neq \roleQCrashed$: since $\gtLab[j] = \gtCrashLab$,  %
  together with $\gtWithCrashedRoles{\rolesC}{\gtG} \!\gtMoveStar[\rolesR]\!
\gtWithCrashedRoles{\rolesCi}{\gtGi}$,  we know that $\roleQMaybeCrashed \neq \roleQ$ holds, a desired contradiction.
\qedhere
   \end{itemize}
            \end{itemize}
          \end{itemize}
        \end{enumerate}
\end{proof}

\section{Proofs for Section \ref{sec:typing_system}}
\label{sec:proof:typesystem}

\begin{lem}[Typing Inversion]
\label{lem:typeinversion}
 Let $\Theta \vdash \mpP : \stTi$. Then, there exists $\stT \stSub \stTi$ such that:
 \begin{enumerate}[leftmargin=*]
\item
\label{proc_end}
$\mpP =  \mpNil$ implies $\stT = \stEnd$;
\item
\label{proc_stop}
$\mpP = \mpCrash$ implies $\stT = \stStop$;
\item
\label{proc_out}
$\mpP = \procout\roleQ{\mpLab}{\mpE}{\mpP[1]}$ implies
\begin{enumerate}[label={(a\arabic*)}, leftmargin=*, ref={(a\arabic*)}]
\item $\stT = \stOut{\roleQ}{\stLab}{\tyGround}\stSeq{\stT[1]}$, and
\item $\Theta \vdash \mpP[1] : \stT[1]$, and
\item   $\Theta \vdash \mpE:\tyGround$;
\end{enumerate}
\item
\label{proc_ext}
$\mpP = \sum_{i\in I}\procin{\roleQ}{\mpLab_i(\mpx_i)}{\mpP_i}$ implies
\begin{enumerate}[label={(a\arabic*)}, leftmargin=*, ref={(a\arabic*)}]
\item $\stT = \stExtSum{\roleQ}{i\in I}{\stChoice{\stLab[i]}{\tyGround[i]}\stSeq{\stT_i}}$, and
\item $\forall i\in I\;\;\; \Theta, x_i:\tyGround_i \vdash \mpP_i:\stT_i$;
\end{enumerate}
\item
\label{proc_cond}
$\mpP = \mpIf\mpE{\mpP_1}{\mpP_2}$ implies
\begin{enumerate}[label={(a\arabic*)}, leftmargin=*, ref={(a\arabic*)}]
\item $\Theta \vdash \mpE:\tyBool$, and
\item $\Theta  \vdash \mpP_1:\stT$, and
\item $\Theta  \vdash \mpP_2:\stT$;
\end{enumerate}
\item
\label{proc_rec}
$\mpP =  \mu X.\mpP[1]$ implies
$\Theta, X:\stT  \vdash \mpP[1]:\stT$;
\end{enumerate}
\noindent
Let $ \vdash \mpH :\qEnvPartial$. Then:
\begin{enumerate}[resume, align=left]
\item
\label{proc_empty_Q}
$\mpH = \mpQEmpty$ implies $\qEnvPartial =  \stQEmpty$;
\item
\label{proc_unavail_Q}
$\mpH = \mpQUnavail$ implies $\qEnvPartial = \stQUnavail$;
\item
\label{proc_message_Q}
$\mpH = (\roleQ , \mpLab(\mpV))$ implies  $\vdash \mpV:\tyGround$ and  %
 $\stEnvApp{\qEnvPartial}{\roleQ} = \stQMsg\mpLab\tyGround$;
 \item
 \label{proc_con_Q}
 $\mpH =  \mpH_1 \cdot \mpH_2$ implies $\vdash \mpH_1:\qEnvPartial[1]$, and
 $\vdash \mpH_2:\qEnvPartial[2]$, and $\qEnvPartial = \qEnvPartial[1] \cdot \qEnvPartial[2]$;
\end{enumerate}
\noindent
Let $ \gtWithCrashedRoles{\rolesC}{\gtG}
  \vdash \prod_{i\in I} (\mpPart{\roleP[i]}{\mpP[i]} \mpPar
  \mpPart{\roleP[i]}{\mpH[i]})$. Then:
  \begin{enumerate}[label={(a\arabic*)}, leftmargin=*, ref={(a\arabic*)}]
  \item $\exists \,\stEnv; \qEnv$ such that  $\stEnvAssoc{\gtWithCrashedRoles{\rolesC}{\gtG}}{\stEnv; \qEnv}{\rolesR}$, and
  \item $\dom{\stEnv} \subseteq
  \setcomp{\roleP[i]}{i \in I}$, and
  \item $\forall i \in I: \,\,\vdash \mpP_i:\stEnvApp{\stEnv}{\roleP[i]}$, and
  \item $\forall i \in I:   \,\,\vdash \mpH[i]: \stEnvApp{\qEnv}{-, \roleP[i]}$.
  \end{enumerate}
\end{lem}
\begin{proof}
By inverting the rules in~\Cref{fig:processes:typesystem}.
\qedhere
\end{proof}

\begin{lem}[Typing Precongruence]
\label{lem:typingcongruence}
\begin{enumerate}[leftmargin=*]
\item
\label{lem:proc_cong}
If $\Theta \vdash \mpP:\stT$ and $\mpP \Rrightarrow \mpQ$, then $\Theta \vdash \mpQ:\stT$.
\item
\label{lem:Q_cong}
If $\vdash \mpH[1]: \qEnvPartial[1]$ and $\mpH[1] \Rrightarrow \mpH[2]$, then there exists $\qEnvPartial[2]$ such that 
$\qEnvPartial[1] \,\stFmt{\Rrightarrow}\, \qEnvPartial[2]$ and $\vdash \mpH[2]:\qEnvPartial[2]$.
 \item
 \label{lem:cong}
 If $ \gtWithCrashedRoles{\rolesC}{\gtG}
  \vdash \mpM$ and $\mpM \Rrightarrow \mpMi$, then
 $ \gtWithCrashedRoles{\rolesC}{\gtG}
  \vdash \mpMi$.
\end{enumerate}
\end{lem}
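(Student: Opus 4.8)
The proof proceeds by simultaneous induction on the derivations of structural congruence for the three syntactic categories (queues, processes, sessions), following the grammar of $\equiv$ given in \cref{tab:congruence}. Since $\equiv$ is defined as the least congruence closed under the listed axioms, it suffices to show each part is preserved by each axiom (and closed under reflexivity, symmetry, transitivity, and congruence rules), where symmetry forces us to prove each axiom in \emph{both} directions. For \cref{lem:proc_cong}, the only interesting axiom is the recursion unfolding $\mu X.\mpP \equiv \mpP\subst{X}{\mu X.\mpP}$: here I would use \cref{lem:typeinversion}\cref{proc_rec} together with a standard substitution lemma for the variable typing context $\Theta$ (if $\Theta, X{:}\stT \vdash \mpP : \stT$ then $\Theta \vdash \mpP\subst{X}{\mu X.\mpP} : \stT$), combined with \inferrule{t-sub} and \cref{lem:reflexive-subtyping}/\cref{transitive-subtyping} to reconcile the two directions of the subtyping introduced by inversion. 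The congruence closure cases are routine: they follow by applying the inductive hypothesis to each subcomponent and re-applying the corresponding typing rule.

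For \cref{lem:Q_cong}, I would case-split on the queue axioms. The commutativity axiom $\mpH_1 \cdot \msg{\roleQ_1}{\mpLab_1}{\val_1} \cdot \msg{\roleQ_2}{\mpLab_2}{\val_2} \cdot \mpH_2 \equiv \mpH_1 \cdot \msg{\roleQ_2}{\mpLab_2}{\val_2} \cdot \msg{\roleQ_1}{\mpLab_1}{\val_1} \cdot \mpH_2$ (when $\roleQ_1 \neq \roleQ_2$) is the key one: inverting the typing via \cref{lem:typeinversion}\cref{proc_con_Q,proc_message_Q} shows the two singleton messages are typed by partial queues supported only at $\roleQ_1$ and $\roleQ_2$ respectively, and since these roles are distinct, the point-wise concatenation $\cdot$ on partial queue functions is unaffected by swapping — so we take $\qEnvPartial[2]$ literally equal to $\qEnvPartial[1]$ (so even $\equiv$ on partial queues is just equality here). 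The monoid axioms for $\mpQEmpty$ and associativity of $\cdot$ are handled similarly using \cref{lem:typeinversion}\cref{proc_empty_Q,proc_con_Q} and the fact that $\stQEmpty$ and $\cdot$ satisfy the same laws at the type level.

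For \cref{lem:cong}, the session axioms are: the neutral element $\mpPart\roleP\mpNil \mpPar \mpPart\roleP\mpQEmpty \mpPar \mpM \equiv \mpM$, commutativity and associativity of $\mpPar$, and the congruence rule lifting $\mpP \equiv \mpQ$ and $\mpH_1 \equiv \mpH_2$ to sessions. I would invert \inferrule{t-sess} via \cref{lem:typeinversion} to obtain an associated configuration $\stEnv; \qEnv$; then for the neutral-element case I would either drop or add the $\roleP$-entry from $\stEnv$ and $\qEnv$ (noting $\mpNil : \stEnd$ and $\mpQEmpty : \stQEmpty$, and that \cref{def:assoc} permits an $\stEnv[\stEnd]$ component of arbitrary $\stEnd$-typed roles whose queues are empty, so association is preserved), and for commutativity/associativity the configuration is literally unchanged since $\mpEnvComp$ and the session product $\prod$ are themselves commutative and associative. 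The congruence-rule case invokes \cref{lem:proc_cong} and \cref{lem:Q_cong} to retype the modified component against the same $\stEnvApp{\stEnv}{\roleP}$ and $\stEnvApp{\qEnv}{-,\roleP}$ (using that $\equiv$ on partial queues is equality, as established above), after which \inferrule{t-sess} reapplies verbatim.

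The main obstacle I anticipate is bookkeeping around the subtyping slack in \cref{lem:proc_cong}: inversion lemmas in a system with \inferrule{t-sub} return a type that is merely a subtype of the original, so handling the symmetric direction of the recursion-unfolding axiom requires carefully threading \cref{lem:unfold-subtyping} (which already states $\unfoldOne{\stT} \stSub \stT$ and $\stT \stSub \unfoldOne{\stT}$) through the argument rather than expecting syntactic equality of types. Everything else is genuinely routine structural induction, and I would state the substitution lemma for $\Theta$ explicitly as a preliminary if it has not already been established.
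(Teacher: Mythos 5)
Your proposal is correct and follows essentially the same route as the paper, whose entire proof is the one-line sketch ``by analysing the cases where $\mpP \equiv \mpQ$, \dots, and by typing inversion (\cref{lem:typeinversion})''; your case analysis over the congruence axioms, the substitution lemma for recursion variables, and the observation that swapping messages from distinct origins leaves the pointwise queue type unchanged are exactly the intended elaboration of that sketch. The one step deserving explicit care in a full write-up is the reverse direction of the recursion-unfolding axiom (retyping $\mu X.\mpP$ from a typing of its unfolding), which you already flag under your remark on subtyping slack.
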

\begin{proof}
\begin{enumerate}
\item By induction on the precongruence rules for $\mpP \Rrightarrow \mpQ$.
\item By induction on the precongruence rules for $\mpH[1] \Rrightarrow \mpH[2]$.
\item By induction on the precongruence rules for $\mpM \Rrightarrow \mpMi$.
\qedhere
\end{enumerate}
\end{proof}

\begin{lem}[Substitution]
\label{lem:substitution}
If $\Theta, x:\tyGround   \vdash \mpP :\stT$ and $\Theta \vdash \val:\tyGround$,
then
$\Theta \vdash \mpP\{\val/x\}:\stT$.
\end{lem}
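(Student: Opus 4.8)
The plan is to prove the Substitution Lemma by induction on the typing derivation of $\Theta, x:\tyGround \vdash \mpP : \stT$, following the structure of the typing rules in \cref{fig:processes:typesystem} for processes. Before starting the induction, I would note that $x$ is a value variable (not a recursion variable), so the only rule that directly touches $x$ in the context $\Theta, x:\tyGround$ is \inferrule{t-var} (for $X$, which is unaffected since $X \neq x$) and, crucially, the typing of expressions $\Theta, x:\tyGround \vdash \mpE : \tyGroundi$ appearing inside \inferrule{t-out} and \inferrule{t-cond}. I would therefore also state (or assume, as it is entirely standard) the analogous substitution property for expressions: if $\Theta, x:\tyGround \vdash \mpE : \tyGroundi$ and $\Theta \vdash \mpV : \tyGround$, then $\Theta \vdash \mpE\{\mpV/x\} : \tyGroundi$.

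The main case analysis proceeds rule by rule on the last rule used. For \inferrule{t-$\mpNil$} and \inferrule{t-$\mpCrash$}: $\mpP\{\mpV/x\} = \mpP$ and the type is $\stEnd$ or $\stStop$ regardless of the context, so the result is immediate. For \inferrule{t-out} with $\mpP = \procout\roleQ{\mpLab}{\mpE}{\mpP[1]}$: by inversion we have $\Theta, x:\tyGround \vdash \mpE : \tyGroundi$ and $\Theta, x:\tyGround \vdash \mpP[1] : \stT[1]$ with $\stT = \stOut{\roleQ}{\stLab}{\tyGroundi}\stSeq\stT[1]$; apply the expression substitution property to the former and the induction hypothesis to the latter, then reassemble with \inferrule{t-out}, using $\procout\roleQ{\mpLab}{\mpE}{\mpP[1]}\{\mpV/x\} = \procout\roleQ{\mpLab}{\mpE\{\mpV/x\}}{\mpP[1]\{\mpV/x\}}$. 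For \inferrule{t-ext} with $\mpP = \sum_{i \in I}\procin\roleQ{\mpLab[i](\mpx[i])}{\mpP[i]}$: by inversion $\Theta, x:\tyGround, \mpx[i]:\tyGround[i] \vdash \mpP[i] : \stT[i]$ for each $i$; here one must use $\alpha$-renaming to ensure $\mpx[i] \neq x$ and $\mpx[i] \notin \fn{\mpV}$ so that substitution commutes with the binder, then apply the induction hypothesis (reordering the context, which is admissible since contexts are treated up to permutation) and reassemble with \inferrule{t-ext}. The cases \inferrule{t-cond}, \inferrule{t-rec}, \inferrule{t-var}, and \inferrule{t-sub} are analogous: \inferrule{t-cond} combines expression substitution with two applications of the induction hypothesis; \inferrule{t-rec} requires $X \neq x$ (up to renaming) and uses the induction hypothesis on the body; \inferrule{t-var} gives $\mpP = X$ with $X \neq x$, so $X\{\mpV/x\} = X$ and the derivation is unchanged; \inferrule{t-sub} is immediate by the induction hypothesis followed by re-applying \inferrule{t-sub} with the same subtyping $\stT[1] \stSub \stT'$.

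I expect no single case to be a genuine obstacle; the only points requiring care are the bookkeeping around binders in \inferrule{t-ext} and \inferrule{t-rec} (ensuring, via $\alpha$-conversion, that the bound variable differs from $x$ and does not capture free names of $\mpV$, so that $\mpP\{\mpV/x\}$ is well-defined and substitution distributes over the subterms) and the treatment of the variable context as unordered so that inserting or permuting $x:\tyGround$ is harmless. These are entirely routine, so the proof is a straightforward structural induction; I would simply write ``By induction on the derivation of $\Theta, x:\tyGround \vdash \mpP : \stT$, using the corresponding substitution property for expressions in the cases \inferrule{t-out} and \inferrule{t-cond}, and $\alpha$-renaming bound variables away from $x$ and $\fn{\mpV}$ in the cases \inferrule{t-ext} and \inferrule{t-rec}.'' and then present the representative cases above.
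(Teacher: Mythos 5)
Your proposal is correct and takes essentially the same approach as the paper, whose entire proof is ``By induction on the structure of $\mpP$''; your induction on the typing derivation (rather than on $\mpP$ directly) is the standard refinement needed to handle the non-syntax-directed rule \inferrule{t-sub}, and the case analysis, the auxiliary substitution property for expressions, and the $\alpha$-renaming bookkeeping are all exactly the routine details the paper elides.
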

\begin{proof}
By induction on the structure of $\mpP$.
\qedhere
\end{proof}

\begin{lem}%
\label{lem:val-eval}
If $\emptyset \vdash \mpE:\tyGround$ and $\eval{\mpE}\val$, then $\emptyset \vdash \mpV : \tyGround$.
\end{lem}
\begin{proof}
By induction on the derivation of $\emptyset \vdash \mpE : \tyGround$.
\qedhere
\end{proof}

\begin{lem}
\label{lem:recursion-unfolding-typing-session}
$\gtWithCrashedRoles{\rolesC}{\gtRec{\gtRecVar}{\gtG}} \vdash \mpM$ 
if and only if $\gtWithCrashedRoles{\rolesC}{\gtG\subst{\gtRecVar}{\gtRec{\gtRecVar}{\gtG}}} \vdash \mpM$. 
\end{lem}
\begin{proof}
The thesis follows directly by applying~\inferrule{t-sess} and~\autoref{prop:gt-lts-unfold-once}. 
\end{proof}

\lemSubjectReduction*
\begin{proof}
Let us recap the assumptions:
 \begin{align}
\gtWithCrashedRoles{\rolesC}{\gtG} \vdash \mpM
\label{eq:subred_ass_global_typing}
\\
\mpM \mpMove[\rolesR] \mpMi
\label{eq:subred_ass_session_red}
\end{align}
The proof proceeds by induction on the derivation of $\mpM \mpMove[\rolesR] \mpMi$.
Most cases hold by typing inversion~(\autoref{lem:typeinversion}), and by applying the induction
hypothesis.
\begin{itemize}[leftmargin=*]
\item Case \inferrule{r-send}: we have
\begin{gather}
\mpM = \mpPart\roleP{\procout{\roleQ}{\mpLab}{\mpE}{\mpP}}
\mpPar
\mpPart\roleP{\mpH[\roleP]}
\mpPar
\mpPart\roleQ{\mpQ}
\mpPar
\mpPart\roleQ{\mpH[\roleQ]}
\mpPar
\mpM[1]
\label{eq:r_send_M}
\\
\mpMi = \mpPart\roleP{\mpP}
\mpPar
\mpPart\roleP{\mpH[\roleP]}
\mpPar
\mpPart\roleQ{\mpQ}
\mpPar
\mpPart{\roleQ}{\mpH[\roleQ]}\cdot(\roleP,\mpLab(\val))
\mpPar
\mpM[1]
\label{eq:r_send_Mi}
\\
\eval{\mpE} \val
\label{eq:r_send_expression}
\\
\mpH[\roleQ] \neq \mpQUnavail
\label{eq:r_send_Q}
\\
\mpM[1] =  \prod_{i\in I} (\mpPart{\roleP[i]}{\mpP[i]} \mpPar
  \mpPart{\roleP[i]}{\mpH[i]})
 \label{eq:r_send_M1}
\end{gather}
By~\eqref{eq:subred_ass_global_typing} and~\autoref{lem:typeinversion}, we have that there
exists $\stEnv; \qEnv$ such that
\begin{gather}
\stEnvAssoc{\gtWithCrashedRoles{\rolesC}{\gtG}}{\stEnv; \qEnv}{\rolesR}
\label{eq:r_send_type_association}
\\
\vdash \procout\roleQ{\mpLab}{\mpE}{\mpP}:\stEnvApp{\stEnv}{\roleP}
\label{eq:r_send_type_out}
\\
 \vdash \mpH[\roleP]:  \stEnvApp{\qEnv}{-, \roleP}%
 \label{eq:r_send_type_out_Q}
 \\
 \vdash \mpQ: \stEnvApp{\stEnv}{\roleQ}
 \label{eq:r_send_type_out_another}
 \\
  \vdash \mpH[\roleQ]: \stEnvApp{\qEnv}{-, \roleQ}%
 \label{eq:r_send_type_out_Q_another}
 \\
 \forall i \in I: \,\,\vdash \mpP_i:\stEnvApp{\stEnv}{\roleP[i]}
 \label{eq:r_send_M1_type}
 \\
 \forall i \in I:  \,\,\vdash \mpH[i]: \stEnvApp{\qEnv}{-, \roleP[i]}%
  \label{eq:r_send_M1_type_Q}
\end{gather}
By~\eqref{eq:r_send_type_out} and 3 of~\autoref{lem:typeinversion}, we have that
\begin{gather}
\stEnvApp{\stEnv}{\roleP} = \stOut{\roleQ}{\mpLab}{\tyGround}\stSeq{\stT}
\label{eq:r_send_type_configuration}
\\
\stT[1] \stSub \stT
\label{eq:r_send_subtyping}
\\
 \vdash \mpP : \stT[1]
 \label{eq:r_send_type_configuration_2}
 \\
 \vdash \mpE:\tyGround
  \label{eq:r_send_type_configuration_3}
\end{gather}
We now let
\begin{gather}
\stEnvi = \stEnvUpd{\stEnv}{\roleP}{\stT}
\label{eq:r_send_new_context}
\\
\qEnvi = \stEnvUpd{\qEnv}{\roleP, \roleQ}{
        \stQCons{
          \stEnvApp{\qEnv}{\roleP, \roleQ}
        }{
          \stQMsg{\stLab}{\tyGround}
        }
        }
\label{eq:r_send_new_Q}
\end{gather}
Then by \inferrule{\iruleTCtxOut} in~\Cref{fig:gtype:tc-red-rules}, we have
\begin{align}
\stEnv; \qEnv \!\stEnvMoveMaybeCrash[\rolesR]\! \stEnvi; \qEnvi
\label{eq:r_send_reduction}
\end{align}
Hence, using~\autoref{thm:gtype:proj-comp}, we have that there exists
$\gtWithCrashedRoles{\rolesCi}{\gtGi}$ such that
\begin{gather}
\stEnvAssoc{\gtWithCrashedRoles{\rolesCi}{\gtGi}}{\stEnvi; \qEnvi}{\rolesR}
\label{eq:r_send_type_association_new}
\\
\gtWithCrashedRoles{\rolesC}{\gtG} \gtMove{\rolesR}
    \gtWithCrashedRoles{\rolesCi}{\gtGi}
\label{eq:r_send_type_global_reduction_new}
\end{gather}
Combine \eqref{eq:r_send_type_association_new}, \eqref{eq:r_send_type_global_reduction_new} with
 \begin{gather}
 \vdash \mpP : \stEnvApp{\stEnvi}{\roleP} \quad \quad (\text{by}~\eqref{eq:r_send_new_context}, \eqref{eq:r_send_subtyping}, \eqref{eq:r_send_type_configuration_2}\text{ and }\inferrule{{t-sub}})
 \\
  \vdash \mpH[\roleP] : \stEnvApp{\qEnvi}{-, \roleP} \quad  (\text{by}~\eqref{eq:r_send_new_Q}\text{ and }\eqref{eq:r_send_type_out_Q})
  \\
  \vdash \mpQ : \stEnvApp{\stEnvi}{\roleQ} \quad \quad (\text{by}~\eqref{eq:r_send_type_out_another}\text{ and }\eqref{eq:r_send_new_context})
  \\
 \vdash \mpH[\roleQ] \cdot(\roleP,\mpLab(\val)): \stEnvApp{\qEnvi}{-, \roleQ} \quad (\text{by}~\eqref{eq:r_send_new_Q}, \eqref{eq:r_send_type_out_Q_another}, \eqref{eq:r_send_expression}, \eqref{eq:r_send_type_configuration_3}, \text{\autoref{lem:val-eval}}, \inferrule{t-msg},~\text{and}~\inferrule{t-$\cdot$})
  \\
   \forall i \in I: \,\,\vdash \mpP_i:\stEnvApp{\stEnvi}{\roleP[i]} \quad (\text{by}~\eqref{eq:r_send_M1_type}\text{ and }\eqref{eq:r_send_new_context})
 \\
 \forall i \in I:  \,\,\vdash \mpH[i]: \stEnvApp{\qEnvi}{-, \roleP[i]} \quad (\text{by}~\eqref{eq:r_send_M1_type_Q}\text{ and }\eqref{eq:r_send_new_Q})
 \end{gather}
 We conclude that $\gtWithCrashedRoles{\rolesCi}{\gtGi} \vdash \mpMi$.

\item Case \inferrule{r-rcv}: similar to case \inferrule{r-send} above, except that we proceed by \inferrule{r-rcv}, inversion of \inferrule{{t-ext}} (4 of~\autoref{lem:typeinversion}), and~\autoref{lem:substitution}.

\item Case \inferrule{r-send-\mpCrash}: we have
\begin{gather}
\mpM = \mpPart\roleP{\procout{\roleQ}{\mpLab}{\mpE}{\mpP}}
\mpPar
\mpPart\roleP{\mpH[\roleP]}
\mpPar
\mpPart\roleQ{\mpCrash}
\mpPar
\mpPart\roleQ{\mpQUnavail}
\mpPar
\mpM[1]
\label{eq:r_send_crash_M}
\\
\mpMi = \mpPart\roleP{\mpP}
\mpPar
\mpPart\roleP{\mpH[\roleP]}
\mpPar
\mpPart\roleQ{\mpCrash}
\mpPar
\mpPart{\roleQ}{\mpQUnavail}
\mpPar
\mpM[1]
\label{eq:r_send_crash_Mi}
\\
\\
\mpM[1] =  \prod_{i\in I} (\mpPart{\roleP[i]}{\mpP[i]} \mpPar
  \mpPart{\roleP[i]}{\mpH[i]})
 \label{eq:r_send_crash_M1}
\end{gather}
By~\eqref{eq:subred_ass_global_typing} and~\autoref{lem:typeinversion}, we have that there
exists $\stEnv; \qEnv$ such that
\begin{gather}
\stEnvAssoc{\gtWithCrashedRoles{\rolesC}{\gtG}}{\stEnv; \qEnv}{\rolesR}
\label{eq:r_send_crash_type_association}
\\
\vdash \procout\roleQ{\mpLab}{\mpE}{\mpP}:\stEnvApp{\stEnv}{\roleP}
\label{eq:r_send_crash_type_out}
\\
 \vdash \mpH[\roleP]: \stEnvApp{\qEnv}{-, \roleP}
 \label{eq:r_send_crash_type_out_Q}
 \\
 \vdash \mpCrash: \stEnvApp{\stEnv}{\roleQ}
 \label{eq:r_send_crash_type_out_another}
 \\
  \vdash \mpQUnavail: \stEnvApp{\qEnv}{-, \roleQ}
 \label{eq:r_send_crash_type_out_Q_another}
 \\
 \forall i \in I: \,\,\vdash \mpP_i:\stEnvApp{\stEnv}{\roleP[i]}
 \label{eq:r_send_crash_M1_type}
 \\
 \forall i \in I:  \,\,\vdash \mpH[i]: \stEnvApp{\qEnv}{-, \roleP[i]}
  \label{eq:r_send_crash_M1_type_Q}
\end{gather}
It follows directly that
\begin{gather}
\stEnvApp{\stEnv}{\roleQ} = \stStop
\label{eq:r_send_crash_stop}
\\
 \stEnvApp{\qEnv}{-, \roleQ} = \stQUnavail
 \label{eq:r_send_crash_unavail}
\end{gather}
By~\eqref{eq:r_send_crash_type_out} and 3 of~\autoref{lem:typeinversion}, we have that
\begin{gather}
\stEnvApp{\stEnv}{\roleP} = \stOut{\roleQ}{\mpLab}{\tyGround}\stSeq{\stT}
\label{eq:r_send_crash_type_configuration}
\\
\stT[1] \stSub \stT
\label{eq:r_send_crash_subtyping}
\\
 \vdash \mpP : \stT[1]
 \label{eq:r_send_crash_type_configuration_2}
 \\
 \vdash \mpE:\tyGround
  \label{eq:r_send_crash_type_configuration_3}
\end{gather}
We now let
\begin{gather}
\stEnvi = \stEnvUpd{\stEnv}{\roleP}{\stT}
\label{eq:r_send_crash_new_context}
\\
\qEnvi = \stEnvUpd{\qEnv}{\roleP, \roleQ}{
        \stQCons{
          \stEnvApp{\qEnv}{\roleP, \roleQ}
        }{
          \stQMsg{\stLab}{\tyGround}
        }
        }
\label{eq:r_send_crash_new_Q}
\end{gather}
Then by \inferrule{\iruleTCtxOut} in~\Cref{fig:gtype:tc-red-rules}, we have
\begin{align}
\stEnv; \qEnv \!\stEnvMoveMaybeCrash[\rolesR]\! \stEnvi; \qEnvi
\label{eq:r_send_crash_reduction}
\end{align}
Hence, using~\autoref{thm:gtype:proj-comp}, we have that there exists
$\gtWithCrashedRoles{\rolesCi}{\gtGi}$ such that
\begin{gather}
\stEnvAssoc{\gtWithCrashedRoles{\rolesCi}{\gtGi}}{\stEnvi; \qEnvi}{\rolesR}
\label{eq:r_send_crash_type_association_new}
\\
\gtWithCrashedRoles{\rolesC}{\gtG} \gtMove{\rolesR}
    \gtWithCrashedRoles{\rolesCi}{\gtGi}
\label{eq:r_send_crash_type_global_reduction_new}
\end{gather}
 Combine \eqref{eq:r_send_crash_type_association_new}, \eqref{eq:r_send_crash_type_global_reduction_new} with
 \begin{gather}
 \vdash \mpP : \stEnvApp{\stEnvi}{\roleP} \quad \quad (\text{by}~\eqref{eq:r_send_crash_new_context}, \eqref{eq:r_send_crash_subtyping}, \eqref{eq:r_send_crash_type_configuration_2}\text{ and }\inferrule{{t-sub}})
 \\
  \vdash \mpH[\roleP] : \stEnvApp{\qEnvi}{-, \roleP} \quad  (\text{by}~\eqref{eq:r_send_crash_new_Q}\text{ and }\eqref{eq:r_send_crash_type_out_Q})
  \\
  \vdash \mpCrash : \stEnvApp{\stEnvi}{\roleQ} \quad \quad (\text{by}~\eqref{eq:r_send_crash_type_out_another},\eqref{eq:r_send_crash_stop}\text{ and }\eqref{eq:r_send_crash_new_context})
  \\
  \vdash \mpQUnavail: \stEnvApp{\qEnvi}{-, \roleQ} \quad  (\text{by}~\eqref{eq:r_send_crash_new_Q}, \eqref{eq:r_send_crash_type_out_Q_another}\text{ and }\eqref{eq:r_send_crash_unavail})
  \\
   \forall i \in I: \,\,\vdash \mpP_i:\stEnvApp{\stEnvi}{\roleP[i]} \quad (\text{by}~\eqref{eq:r_send_crash_M1_type}\text{ and }\eqref{eq:r_send_crash_new_context})
 \\
 \forall i \in I:  \,\,\vdash \mpH[i]: \stEnvApp{\qEnvi}{-, \roleP[i]} \quad (\text{by}~\eqref{eq:r_send_crash_M1_type_Q}\text{ and }\eqref{eq:r_send_crash_new_Q})
 \end{gather}
 We conclude that $\gtWithCrashedRoles{\rolesCi}{\gtGi} \vdash \mpMi$.

\item Case \inferrule{r-rcv-$\odot$}:  we have
\begin{gather}
\mpM = \mpPart\roleP{\sum_{i\in I} \procin\roleQ{\mpLab_i(\mpx_i)}\mpP_i}
\mpPar
\mpPart\roleP{\mpH[\roleP]}
\mpPar
\mpPart\roleQ\mpCrash
\mpPar
\mpPart\roleQ\mpQUnavail
\mpPar
\mpM[1]
\label{eq:r_rcv_det_M}
\\
\mpMi = \mpPart\roleP \mpP_k
\mpPar
\mpPart\roleP{\mpH_{\roleP}}
\mpPar
\mpPart\roleQ\mpCrash
\mpPar
\mpPart\roleQ\mpQUnavail
\mpPar
\mpM[1]
\label{eq:r_rcv_det_Mi}
\\
\mpM[1] =  \prod_{i\in I} (\mpPart{\roleP[i]}{\mpP[i]} \mpPar
  \mpPart{\roleP[i]}{\mpH[i]})
 \label{eq:r_rcv_det_M1}
 \\
 k \in I
 \label{eq:r_rcv_det_plus_con_1}
 \\
 \mpLab[k] = \mpCrashLab
 \label{eq:r_rcv_det_plus_con_2}
 \\
 \nexists \mpLab, \mpV: (\roleQ,
\mpLab(\mpV)) \in \mpH[\roleP]
 \label{eq:r_rcv_det_plus_con_3}
\end{gather}
By~\eqref{eq:subred_ass_global_typing} and~\autoref{lem:typeinversion}, we have that there
exists $\stEnv; \qEnv$ such that
\begin{gather}
\stEnvAssoc{\gtWithCrashedRoles{\rolesC}{\gtG}}{\stEnv; \qEnv}{\rolesR}
\label{eq:r_rcv_det_type_association}
\\
\vdash \sum_{i\in I}\procin{\roleQ}{\mpLab_i(\mpx_i)}{\mpP_i}:\stEnvApp{\stEnv}{\roleP}
\label{eq:r_rcv_det_type_out}
\\
 \vdash \mpH[\roleP]: \stEnvApp{\qEnv}{-, \roleP}
 \label{eq:r_rcv_det_type_out_Q}
 \\
 \vdash \mpCrash: \stEnvApp{\stEnv}{\roleQ}
 \label{eq:r_rcv_det_type_out_another}
 \\
  \vdash \mpQUnavail: \stEnvApp{\qEnv}{-, \roleQ}
 \label{eq:r_rcv_det_type_out_Q_another}
 \\
 \forall i \in I: \,\,\vdash \mpP_i:\stEnvApp{\stEnv}{\roleP[i]}
 \label{eq:r_rcv_det_M1_type}
 \\
 \forall i \in I:  \,\,\vdash \mpH[i]: \stEnvApp{\qEnv}{-, \roleP[i]}
  \label{eq:r_rcv_det_M1_type_Q}
\end{gather}
It follows directly that
\begin{gather}
\stEnvApp{\stEnv}{\roleQ} = \stStop
\label{eq:r_rcv_det_stop}
\\
 \stEnvApp{\qEnv}{-, \roleQ} = \stQUnavail
 \label{eq:r_rcv_det_unavail}
\end{gather}
By~\eqref{eq:r_rcv_det_type_out}, 4 of~\autoref{lem:typeinversion}, and \inferrule{\iruleStSubIn}, we have that
\begin{gather}
\stEnvApp{\stEnv}{\roleP} = \stExtSum{\roleQ}{i \in J}{\stChoice{\stLab[i]}{\tyGround[i]} \stSeq \stTi[i]}
\label{eq:r_rcv_det_type_configuration}
\\
J \subseteq I
\label{eq:r_rcv_det_subset}
\\
\forall i \in J:  \stT[i] \stSub \stTi[i]
\label{eq:r_rcv_det_subtyping}
\\
 \setcomp{\stLab[l]}{l \in I} \neq \setenum{\stCrashLab}
 \label{eq:r_rcv_det_not_crash}
\\
 \nexists j \in I \setminus J: \stLab[j] = \stCrashLab
 \label{eq:r_rcv_det_not_crash_2}
 \\
 \forall i \in I: x_i : \tyGround[i] \vdash \mpP[i] : \stT[i]
 \label{eq:r_rcv_det_type_configuration_2}
\end{gather}
From \eqref{eq:r_rcv_det_not_crash_2} and \eqref{eq:r_rcv_det_plus_con_2}, we get
\begin{align}
   k \in J
   \label{eq:r_rcv_det_k_in_J_1}
\end{align}
By \eqref{eq:r_rcv_det_type_out_Q} and \eqref{eq:r_rcv_det_plus_con_3}, we also know
\begin{align}
 \stEnvApp{\qEnv}{\roleQ, \roleP} = \stQEmpty
 \label{eq:r_rcv_det_empty_Q}
\end{align}
We now let
\begin{gather}
\stEnvi = \stEnvUpd{\stEnv}{\roleP}{\stTi[k]}
\label{eq:r_rcv_det_new_context}
\\
\qEnvi = \qEnv
\label{eq:r_rcv_det_new_Q}
\end{gather}
Then by \inferrule{\iruleTCtxCrashDetect} in~\Cref{fig:gtype:tc-red-rules}, we have
\begin{align}
\stEnv; \qEnv \!\stEnvMoveMaybeCrash[\rolesR]\! \stEnvi; \qEnvi
\label{eq:r_rcv_det_reduction}
\end{align}
Hence, using~\autoref{thm:gtype:proj-comp}, we have that there exists
$\gtWithCrashedRoles{\rolesCi}{\gtGi}$ such that
\begin{gather}
\stEnvAssoc{\gtWithCrashedRoles{\rolesCi}{\gtGi}}{\stEnvi; \qEnvi}{\rolesR}
\label{eq:r_rcv_det_type_association_new}
\\
\gtWithCrashedRoles{\rolesC}{\gtG} \gtMove{\rolesR}
    \gtWithCrashedRoles{\rolesCi}{\gtGi}
\label{eq:r_rcv_det_type_global_reduction_new}
\end{gather}
 Combine \eqref{eq:r_rcv_det_type_association_new}, \eqref{eq:r_rcv_det_type_global_reduction_new} with
 \begin{gather}
 \vdash \mpP[k] : \stEnvApp{\stEnvi}{\roleP} \quad \quad (\text{by}~\eqref{eq:r_rcv_det_new_context}, \eqref{eq:r_rcv_det_subtyping}, \eqref{eq:r_rcv_det_type_configuration_2}\text{ and }\inferrule{{t-sub}})
 \\
  \vdash \mpH[\roleP] : \stEnvApp{\qEnvi}{-, \roleP} \quad  (\text{by}~\eqref{eq:r_rcv_det_new_Q}\text{ and }\eqref{eq:r_rcv_det_type_out_Q})
  \\
  \vdash \mpCrash : \stEnvApp{\stEnvi}{\roleQ} \quad \quad (\text{by}~\eqref{eq:r_rcv_det_type_out_another},\eqref{eq:r_rcv_det_stop}\text{ and }\eqref{eq:r_rcv_det_new_context})
  \\
  \vdash \mpQUnavail: \stEnvApp{\qEnvi}{-, \roleQ} \quad  (\text{by}~\eqref{eq:r_rcv_det_new_Q}, \eqref{eq:r_rcv_det_type_out_Q_another}\text{ and }\eqref{eq:r_rcv_det_unavail})
  \\
   \forall i \in I: \,\,\vdash \mpP_i:\stEnvApp{\stEnvi}{\roleP[i]} \quad (\text{by}~\eqref{eq:r_rcv_det_M1_type}\text{ and }\eqref{eq:r_rcv_det_new_context})
 \\
 \forall i \in I:  \,\,\vdash \mpH[i]: \stEnvApp{\qEnvi}{-, \roleP[i]} \quad (\text{by}~\eqref{eq:r_rcv_det_M1_type_Q}\text{ and }\eqref{eq:r_rcv_det_new_Q})
 \end{gather}
 We conclude that $\gtWithCrashedRoles{\rolesCi}{\gtGi} \vdash \mpMi$.

\item Case \inferrule{r-$\lightning$}: we have
\begin{gather}
\mpM = \mpPart\roleP{\mpP}
\mpPar
\mpPart\roleP{\mpH[\roleP]}
\mpPar
\mpM[1]
\label{eq:r_lighting_M}
\\
\mpMi = \mpPart\roleP{\mpCrash}
\mpPar
\mpPart\roleP{\mpQUnavail}
\mpPar
\mpM
\label{eq:r_lighting_Mi}
\\
\mpP \neq \mpNil
\label{eq:r_lighting_M_Nil}
\\
\roleP \notin \rolesR
\label{eq:r_lighting_M_reliable}
\\
\mpM[1] =  \prod_{i\in I} (\mpPart{\roleP[i]}{\mpP[i]} \mpPar
  \mpPart{\roleP[i]}{\mpH[i]})
 \label{eq:r_lighting_M1}
\end{gather}
By~\eqref{eq:subred_ass_global_typing} and~\autoref{lem:typeinversion}, we have that there
exists $\stEnv; \qEnv$ such that
\begin{gather}
\stEnvAssoc{\gtWithCrashedRoles{\rolesC}{\gtG}}{\stEnv; \qEnv}{\rolesR}
\label{eq:r_lighting_type_association}
\\
\vdash \mpP:\stEnvApp{\stEnv}{\roleP}
\label{eq:r_lighting_type_out}
\\
 \vdash \mpH[\roleP]: \stEnvApp{\qEnv}{-, \roleP}
 \label{eq:r_lighting_type_out_Q}
 \\
 \forall i \in I: \,\,\vdash \mpP_i:\stEnvApp{\stEnv}{\roleP[i]}
 \label{eq:r_lighting_M1_type}
 \\
 \forall i \in I:  \,\,\vdash \mpH[i]: \stEnvApp{\qEnv}{-, \roleP[i]}
  \label{eq:r_lighting_M1_type_Q}
\end{gather}
By \eqref{eq:r_lighting_M_Nil}, \eqref{eq:r_lighting_M_reliable}, and \eqref{eq:r_lighting_type_out},
we have that
\begin{gather}
\stEnvApp{\stEnv}{\roleP} \neq \stStop
\label{eq:r_lighting_not_stop}
\\
 \stEnvApp{\stEnv}{\roleP} \neq \stEnd
 \label{eq:r_lighting_not_end}
\end{gather}
We now let
\begin{gather}
\stEnvi = \stEnvUpd{\stEnv}{\roleP}{\stStop}
\label{eq:r_lighting_new_context}
\\
\qEnvi = \stEnvUpd{\qEnv}{\cdot, \roleP}{
       \stQUnavail
        }
\label{eq:r_lighting_new_Q}
\end{gather}
Then by \inferrule{\iruleTCtxCrash} in~\Cref{fig:gtype:tc-red-rules}, we have
\begin{align}
\stEnv; \qEnv \!\stEnvMoveMaybeCrash[\rolesR]\! \stEnvi; \qEnvi
\label{eq:r_lighting_reduction}
\end{align}
Hence, using~\autoref{thm:gtype:proj-comp}, we have that there exists
$\gtWithCrashedRoles{\rolesCi}{\gtGi}$ such that
\begin{gather}
\stEnvAssoc{\gtWithCrashedRoles{\rolesCi}{\gtGi}}{\stEnvi; \qEnvi}{\rolesR}
\label{eq:r_lighting_type_association_new}
\\
\gtWithCrashedRoles{\rolesC}{\gtG} \gtMove{\rolesR}
    \gtWithCrashedRoles{\rolesCi}{\gtGi}
\label{eq:r_lighting_type_global_reduction_new}
\end{gather}
 Combine \eqref{eq:r_lighting_type_association_new}, \eqref{eq:r_lighting_type_global_reduction_new} with
 \begin{gather}
 \vdash \mpCrash : \stEnvApp{\stEnvi}{\roleP} \quad \quad (\text{by}~\eqref{eq:r_lighting_new_context})
  \\
  \vdash \mpQUnavail : \stEnvApp{\qEnvi}{-, \roleP} \quad  (\text{by}~\eqref{eq:r_lighting_new_Q})
    \\
    \forall i \in I: \,\,\vdash \mpP_i:\stEnvApp{\stEnvi}{\roleP[i]} \quad (\text{by}~\eqref{eq:r_lighting_M1_type}\text{ and }\eqref{eq:r_lighting_new_context})
 \\
 \forall i \in I:  \,\,\vdash \mpH[i]: \stEnvApp{\qEnvi}{-, \roleP[i]} \quad (\text{by}~\eqref{eq:r_lighting_M1_type_Q}\text{ and }\eqref{eq:r_lighting_new_Q})
 \end{gather}
  We conclude that $\gtWithCrashedRoles{\rolesCi}{\gtGi} \vdash \mpMi$.

\item Case \inferrule{r-cond-T}: we have
\begin{gather}
\mpM = \mpPart\roleP{\mpIf{\mpE}{\mpP}{\mpQ}}
\mpPar
\mpPart\roleP\mpH
\mpPar
\mpM[1]
\label{eq:r_cond_T_M}
\\
\mpMi = \mpPart\roleP\mpP
\mpPar
\mpPart\roleP\mpH
\mpPar
\mpM[1]
\label{eq:r_cond_T_Mi}
\\
\eval{\mpE}{\mpTrue}
\label{eq:r_cond_T_expression}
\\
\mpM[1] =  \prod_{i\in I} (\mpPart{\roleP[i]}{\mpP[i]} \mpPar
  \mpPart{\roleP[i]}{\mpH[i]})
 \label{eq:r_cond_T_M1}
\end{gather}
By~\eqref{eq:subred_ass_global_typing} and~\autoref{lem:typeinversion}, we have that there
exists $\stEnv; \qEnv$ such that
\begin{gather}
\stEnvAssoc{\gtWithCrashedRoles{\rolesC}{\gtG}}{\stEnv; \qEnv}{\rolesR}
\label{eq:r_cond_T_type_association}
\\
\vdash \mpIf{\mpE}{\mpP}{\mpQ}:\stEnvApp{\stEnv}{\roleP}
\label{eq:r_cond_T_type_out}
\\
 \vdash \mpH: \stEnvApp{\qEnv}{-, \roleP}
 \label{eq:r_cond_T_type_out_Q}
 \\
  \forall i \in I: \,\,\vdash \mpP_i:\stEnvApp{\stEnv}{\roleP[i]}
 \label{eq:r_cond_T_M1_type}
 \\
 \forall i \in I:  \,\,\vdash \mpH[i]: \stEnvApp{\qEnv}{-, \roleP[i]}
  \label{eq:r_cond_T_M1_type_Q}
\end{gather}
By~\eqref{eq:r_cond_T_expression},~\eqref{eq:r_cond_T_type_out}, 5 of~\autoref{lem:typeinversion} 
and \inferrule{t-sub}, we have that
\begin{gather}
\vdash \mpE : \tyBool
 \label{eq:r_cond_T_type_configuration_1}
 \\
 \vdash \mpP : \stEnvApp{\stEnv}{\roleP}
 \label{eq:r_cond_T_type_configuration_2}
 \\
 \vdash \mpQ: \stEnvApp{\stEnv}{\roleP}
  \label{eq:r_cond_T_type_configuration_3}
\end{gather}
Combine~\eqref{eq:r_cond_T_type_configuration_2} with \eqref{eq:r_cond_T_type_out_Q},
\eqref{eq:r_cond_T_M1_type},  and \eqref{eq:r_cond_T_M1_type_Q}, we can conclude
that $\gtWithCrashedRoles{\rolesC}{\gtG} \vdash \mpMi$, as desired.

\item Case \inferrule{r-cond-F}: similar to the case \inferrule{r-cond-T}.

\item Case \inferrule{r-struct}: assume that $\mpM \mpMove\mpMi$ is derived from
\begin{gather}
\mpM \Rrightarrow \mpM[1]
\label{eq:r_struct_eq_1}
\\
\mpM[1] \mpMove \mpMi[1]
\label{eq:r_struct_eq_2}
\\
\mpMi[1] \Rrightarrow \mpMi
\label{eq:r_struct_eq_3}
\end{gather}
From \eqref{eq:r_struct_eq_1}, \eqref{eq:subred_ass_global_typing}, by~\eqref{lem:cong} of~\autoref{lem:typingcongruence},
we have that $\gtWithCrashedRoles{\rolesC}{\gtG} \vdash \mpM[1]$. By induction hypothesis, either
$\gtWithCrashedRoles{\rolesC}{\gtG} \vdash \mpMi[1]$ or  there exists $\gtWithCrashedRoles{\rolesCi}{\gtGi}$ such that
  $\gtWithCrashedRoles{\rolesC}{\gtG} \gtMove{\rolesR}
  \gtWithCrashedRoles{\rolesCi}{\gtGi}$ and
  $\gtWithCrashedRoles{\rolesCi}{\gtGi} \vdash \mpMi[1]$. Now by \eqref{eq:r_struct_eq_3} and~\eqref{lem:cong} of \autoref{lem:typingcongruence}, we have that either $\gtWithCrashedRoles{\rolesC}{\gtG} \vdash \mpMi$ or
  $\gtWithCrashedRoles{\rolesCi}{\gtGi} \vdash \mpMi$, as desired.
  \qedhere
\end{itemize}
\end{proof}

\lemSessionFidelity*
\begin{proof}
Let us recap the assumptions:
 \begin{align}
\gtWithCrashedRoles{\rolesC}{\gtG} \vdash \mpM
\label{eq:sf_ass_global_typing}
\\
\gtWithCrashedRoles{\rolesC}{\gtG} \gtMove{\rolesR}
\label{eq:sf_ass_global_red}
\end{align}
The proof proceeds by induction on the derivation of $\gtWithCrashedRoles{\rolesC}{\gtG} \gtMove{\rolesR}$. 
\begin{itemize}[leftmargin=*]
\item Case \inferrule{\iruleGtMoveCrash}: by inversion of \inferrule{\iruleGtMoveCrash}, we have
\begin{gather}
 \roleP \notin \rolesR
 \label{eq:sf_global_crash_cond_1}
 \\
 \roleP \in \gtRoles{\gtG}
  \label{eq:sf_global_crash_cond_2}
  \\
   \gtWithCrashedRoles{\rolesC}{\gtG}
    \gtMove[\ltsCrashSmall{\mpS}{\roleP}]{\rolesR}
    \gtWithCrashedRoles{\rolesC \cup \setenum{\roleP}}{\gtCrashRole{\gtG}{\roleP}}
     \label{eq:sf_global_crash_cond_3}
  \end{gather}
 We can assume that $\mpM$ is of the form
 \begin{gather}
 \mpPart\roleP{\mpP}
\mpPar
\mpPart\roleP{\mpH[\roleP]}
\mpPar
\mpM[1]
\label{eq:sf_global_crash_M_form}
\\
\mpM[1] =  \prod_{i\in I} (\mpPart{\roleP[i]}{\mpP[i]} \mpPar
  \mpPart{\roleP[i]}{\mpH[i]})
\label{eq:sf_global_crash_M1_form}
\\
\mpP \neq \mpNil
\label{eq:sf_global_crash_M_form_not_nil}
\\
\roleP \notin \rolesR
\label{eq:sf_global_crash_M_form_reliable}
\end{gather}
Then by~\eqref{eq:sf_ass_global_typing} and~\autoref{lem:typeinversion}, there
exists $\stEnv; \qEnv$ such that
\begin{gather}
\stEnvAssoc{\gtWithCrashedRoles{\rolesC}{\gtG}}{\stEnv; \qEnv}{\rolesR}
\label{eq:sf_global_crash_type_association}
\\
\vdash \mpP:\stEnvApp{\stEnv}{\roleP}
\label{eq:sf_global__crash_type_out}
\\
 \vdash \mpH[\roleP]:  \stEnvApp{\qEnv}{-, \roleP}%
 \label{eq:sf_global_crash_type_out_Q}
 \\
 \forall i \in I: \,\,\vdash \mpP_i:\stEnvApp{\stEnv}{\roleP[i]}
 \label{eq:sf_global_crash_M1_type}
 \\
 \forall i \in I:  \,\,\vdash \mpH[i]: \stEnvApp{\qEnv}{-, \roleP[i]}%
  \label{eq:sf_global_crash_M1_type_Q}
\end{gather}
From~\eqref{eq:sf_global_crash_cond_3}, by~\autoref{thm:gtype:proj-sound}, there is $\stEnvi; \qEnvi$ such that
\begin{gather}
\stEnvAssoc{\gtWithCrashedRoles{\rolesC \cup \setenum{\roleP}}{\gtCrashRole{\gtG}{\roleP}}}{\stEnvi; \qEnvi}{\rolesR}
\label{eq:sf_global_crash_type_association_new}
\\
\stEnv; \qEnv
      \stEnvMoveAnnot{\ltsCrash{\mpS}{\roleP}}
      \stEnvi; \qEnvi
\label{eq:sf_global_crash_ctx_red}
\end{gather}
Using \eqref{eq:sf_global_crash_M_form_not_nil}, \eqref{eq:sf_global_crash_M_form_reliable}, \eqref{eq:sf_global__crash_type_out} and \inferrule{\iruleTCtxCrash} in~\Cref{fig:gtype:tc-red-rules}, we get
\begin{gather}
\stEnvi = \stEnvUpd{\stEnv}{\roleP}{\stStop}
\label{eq:sf_global_crash_ctx_new}
\\
\qEnvi = \stEnvUpd{\qEnv}{\cdot, \roleP}{\stQUnavail}
\label{eq:sf_global_crash_ctx_Q_new}
\end{gather}
It follows that
\begin{gather}
 \vdash \mpCrash : \stEnvApp{\stEnvi}{\roleP} \quad \quad (\text{by}~\eqref{eq:sf_global_crash_ctx_new})
  \\
  \vdash \mpQUnavail : \stEnvApp{\qEnvi}{-, \roleP} \quad  (\text{by}~\eqref{eq:sf_global_crash_ctx_Q_new})
    \\
    \forall i \in I: \,\,\vdash \mpP_i:\stEnvApp{\stEnvi}{\roleP[i]} \quad (\text{by}~\eqref{eq:sf_global_crash_M1_type}\text{ and }\eqref{eq:sf_global_crash_ctx_new})
 \\
 \forall i \in I:  \,\,\vdash \mpH[i]: \stEnvApp{\qEnvi}{-, \roleP[i]} \quad (\text{by}~\eqref{eq:sf_global_crash_M1_type_Q}\text{ and }\eqref{eq:sf_global_crash_ctx_Q_new})
\end{gather}
Therefore, by \inferrule{r-$\lightning$} and \inferrule{{t-sess}}, we can conclude that there exists $\mpMi = \mpPart\roleP{\mpCrash}
\mpPar
\mpPart\roleP{\mpQUnavail}
\mpPar
\mpM[1]$ and $\gtWithCrashedRoles{\rolesCi}{\gtGi}
   =
    \gtWithCrashedRoles{\rolesC \cup \setenum{\roleP}}{\gtCrashRole{\gtG}{\roleP}}$ such that
   $ \gtWithCrashedRoles{\rolesC}{\gtG}
    \gtMove[\ltsCrashSmall{\mpS}{\roleP}]{\rolesR}
   \gtWithCrashedRoles{\rolesCi}{\gtGi}$,
    $\mpM \;\redCrash{\roleP}{\rolesR}\;\mpMi$, and
   $ \gtWithCrashedRoles{\rolesCi}{\gtGi} \vdash \mpMi$.

\item Case \inferrule{\iruleGtMoveOut}: by inversion of \inferrule{\iruleGtMoveOut}, we have
\begin{gather}
j \in I
 \label{eq:sf_global_send_cond_1}
 \\
 \gtLab[j] \neq \gtCrashLab
  \label{eq:sf_global_send_cond_2}
  \\
  \gtWithCrashedRoles{\rolesC}{
      \gtCommSmall{\roleP}{\roleQ}{i \in I}{\gtLab[i]}{\tyGround[i]}{\gtGi[i]}
    }
    \gtMove[
      \stEnvOutAnnotSmall{\roleP}{\roleQ}{\stChoice{\gtLab[j]}{\tyGround[j]}}
    ]{
      \rolesR
    }
    \gtWithCrashedRoles{\rolesC}{
      \gtCommTransit{\roleP}{\roleQ}{i \in I}{\gtLab[i]}{\tyGround[i]}{\gtGi[i]}{j}
    }
     \label{eq:sf_global_send_cond_3}
  \end{gather}
 We can assume that $\mpM$ is of the form
 \begin{gather}
 \mpPart\roleP{\procout{\roleQ}{\mpLab}{\mpE}{\mpP}}
\mpPar
\mpPart\roleP{\mpH[\roleP]}
\mpPar
\mpPart\roleQ{\mpQ}
\mpPar
\mpPart\roleQ{\mpH[\roleQ]}
\mpPar
\mpM[1]
\label{eq:sf_global_send_M_form}
\\
\mpM[1] =  \prod_{i\in I} (\mpPart{\roleP[i]}{\mpP[i]} \mpPar
  \mpPart{\roleP[i]}{\mpH[i]})
\label{eq:sf_global_send_M1_form}
\\
\eval{\mpE} \val
\label{eq:sf_global_send_M_form_value}
\\
 \mpH[\roleQ] \neq \mpQUnavail
\label{eq:sf_global_csend_M_form_Q_availabe}
\end{gather}
Then by~\eqref{eq:sf_ass_global_typing} and~\autoref{lem:typeinversion}, there
exists $\stEnv; \qEnv$ such that
\begin{gather}
\stEnvAssoc{\gtWithCrashedRoles{\rolesC}{\gtG}}{\stEnv; \qEnv}{\rolesR}
\label{eq:sf_global_send_type_association}
\\
\vdash \procout\roleQ{\mpLab}{\mpE}{\mpP}:\stEnvApp{\stEnv}{\roleP}
\label{eq:sf_global_send_type_out}
\\
 \vdash \mpH[\roleP]:  \stEnvApp{\qEnv}{-, \roleP}%
 \label{eq:sf_global_send_type_out_Q}
 \\
 \vdash \mpQ: \stEnvApp{\stEnv}{\roleQ}
 \label{eq:sf_global_send_type_out_another}
 \\
  \vdash \mpH[\roleQ]: \stEnvApp{\qEnv}{-, \roleQ}%
 \label{eq:sf_global_send_type_out_Q_another}
 \\
 \forall i \in I: \,\,\vdash \mpP_i:\stEnvApp{\stEnv}{\roleP[i]}
 \label{eq:sf_global_send_M1_type}
 \\
 \forall i \in I:  \,\,\vdash \mpH[i]: \stEnvApp{\qEnv}{-, \roleP[i]}%
  \label{eq:sf_global_send_M1_type_Q}
\end{gather}
By~\eqref{eq:sf_global_send_type_out} and 3 of~\autoref{lem:typeinversion}, we have that
\begin{gather}
\stEnvApp{\stEnv}{\roleP} = \stOut{\roleQ}{\mpLab}{\tyGround}\stSeq{\stT}
\label{eq:sf_global_send_type_configuration}
\\
\stT[1] \stSub \stT
\label{eq:sf_global_send_subtyping}
\\
 \vdash \mpP : \stT[1]
 \label{eq:sf_global_send_type_configuration_2}
 \\
 \vdash \mpE:\tyGround
  \label{eq:sf_global_send_type_configuration_3}
\end{gather}
From~\eqref{eq:sf_global_send_cond_3}, by~\autoref{thm:gtype:proj-sound}, there are $\stEnvi; \qEnvi$ and
$\stEnvAnnotGenericSym =
\stEnvOutAnnot{\roleP}{\roleQ}{\stChoice{\stLab[k]}{\tyGround[k]}}$
 such that
\begin{gather}
  k \in I
  \label{eq:sf_global_send_M1_type_index}
  \\
\stEnvAssoc{ \gtWithCrashedRoles{\rolesC}{
      \gtCommTransit{\roleP}{\roleQ}{i \in I}{\gtLab[i]}{\tyGround[i]}{\gtGi[i]}{j}
}}{\stEnvi; \qEnvi}{\rolesR}
\label{eq:sf_global_send_type_association_new}
\\
\stEnv; \qEnv
      \stEnvMoveOutAnnot{\roleP}{\roleQ}{\stChoice{\stLab[k]}{\tyGround[k]}}
      \stEnvi; \qEnvi
\label{eq:sf_global_send_ctx_red}
\end{gather}
Using \eqref{eq:sf_global_send_type_configuration}, \eqref{eq:sf_global_send_M1_type_index}, and \inferrule{\iruleTCtxOut} in~\Cref{fig:gtype:tc-red-rules}, we get
\begin{gather}
\stEnvi = \stEnvUpd{\stEnv}{\roleP}{\stT}
\label{eq:sf_global_send_new_context}
\\
\qEnvi = \stEnvUpd{\qEnv}{\roleP, \roleQ}{
        \stQCons{
          \stEnvApp{\qEnv}{\roleP, \roleQ}
        }{
          \stQMsg{\stLab}{\tyGround}
        }
        }
\label{eq:sf_global_send_new_Q}
\end{gather}
It follows that
 \begin{gather}
 \vdash \mpP : \stEnvApp{\stEnvi}{\roleP} \quad \quad (\text{by}~\eqref{eq:sf_global_send_new_context}, \eqref{eq:sf_global_send_subtyping}, \eqref{eq:sf_global_send_type_configuration_2}\text{ and }\inferrule{{t-sub}})
 \\
  \vdash \mpH[\roleP] : \stEnvApp{\qEnvi}{-, \roleP} \quad  (\text{by}~\eqref{eq:sf_global_send_new_Q}\text{ and }\eqref{eq:sf_global_send_type_out_Q})
  \\
  \vdash \mpQ : \stEnvApp{\stEnvi}{\roleQ} \quad \quad (\text{by}~\eqref{eq:sf_global_send_type_out_another}\text{ and }\eqref{eq:sf_global_send_new_context})
  \\
  \vdash \mpH[\roleQ] : \stEnvApp{\qEnvi}{-, \roleQ} \quad  (\text{by}~\eqref{eq:sf_global_send_new_Q}, \eqref{eq:sf_global_send_type_out_Q_another}\text{ and } 9, 10\text{ of~\autoref{lem:typeinversion}})
  \\
   \forall i \in I: \,\,\vdash \mpP_i:\stEnvApp{\stEnvi}{\roleP[i]} \quad (\text{by}~\eqref{eq:sf_global_send_M1_type}\text{ and }\eqref{eq:sf_global_send_new_context})
 \\
 \forall i \in I:  \,\,\vdash \mpH[i]: \stEnvApp{\qEnvi}{-, \roleP[i]} \quad (\text{by}~\eqref{eq:sf_global_send_M1_type_Q}\text{ and }\eqref{eq:sf_global_send_new_Q})
 \end{gather}
Therefore, by \inferrule{r-send} and \inferrule{{t-sess}}, we can conclude that there exists $\mpMi = \mpPart\roleP{\mpP}
\mpPar
\mpPart\roleP{\mpH[\roleP]}
\mpPar
\mpPart\roleQ{\mpQ}
\mpPar
\mpPart{\roleQ}{\mpH[\roleQ]}\cdot(\roleP,\mpLab(\val))
\mpPar
\mpM[1]$ and $\gtWithCrashedRoles{\rolesCi}{\gtGi}
   =
    \gtWithCrashedRoles{\rolesC}{
      \gtCommTransit{\roleP}{\roleQ}{i \in I}{\gtLab[i]}{\tyGround[i]}{\gtGi[i]}{j}
    }$ such that
   $ \gtWithCrashedRoles{\rolesC}{\gtG}
    \gtMove[
      \stEnvOutAnnotSmall{\roleP}{\roleQ}{\stChoice{\gtLab[j]}{\tyGround[j]}}
    ]{
      \rolesR
    }
   \gtWithCrashedRoles{\rolesCi}{\gtGi}$,
   $\mpM \;\redSend{\roleP}{\roleQ}{\mpLab}\;\mpMi$, and
   $ \gtWithCrashedRoles{\rolesCi}{\gtGi} \vdash \mpMi$.

\item Case \inferrule{\iruleGtMoveIn}: similar to case \inferrule{\iruleGtMoveOut} above, except that we proceed by \inferrule{\iruleGtMoveIn}, \inferrule{r-rcv}, inversion of \inferrule{{t-ext}} (4 of~\autoref{lem:typeinversion}), and~\autoref{lem:substitution}.

\item Case \inferrule{\iruleGtMoveOrph}: by inversion of \inferrule{\iruleGtMoveOut}, we have
\begin{gather}
j \in I
 \label{eq:sf_global_orph_cond_1}
 \\
\gtLab[j] \neq \gtCrashLab
  \label{eq:sf_global_oprh_cond_2}
  \\
 \gtWithCrashedRoles{\rolesC}{\gtCommSmall{\roleP}{\roleQCrashed}{i \in
    I}{\gtLab[i]}{\tyGround[i]}{\gtGi[i]}}
    \gtMove[\stEnvOutAnnotSmall{\roleP}{\roleQ}{\stChoice{\gtLab[j]}{\tyGround[j]}}]{
      \rolesR
    }
    \gtWithCrashedRoles{\rolesC}{\gtGi[j]}
\label{eq:sf_global_orph_cond_3}
  \end{gather}
 We can assume that $\mpM$ is of the form
 \begin{gather}
 \mpPart\roleP{\procout{\roleQ}{\mpLab}{\mpE}{\mpP}}
\mpPar
\mpPart\roleP{\mpH[\roleP]}
\mpPar
\mpPart\roleQ{\mpCrash}
\mpPar
\mpPart\roleQ{\mpQUnavail}
\mpPar
\mpM[1]
\label{eq:sf_global_orph_M_form}
\\
\mpM[1] =  \prod_{i\in I} (\mpPart{\roleP[i]}{\mpP[i]} \mpPar
  \mpPart{\roleP[i]}{\mpH[i]})
\label{eq:sf_global_orph_M1_form}
\end{gather}
Then by~\eqref{eq:sf_ass_global_typing} and~\autoref{lem:typeinversion}, there
exists $\stEnv; \qEnv$ such that
\begin{gather}
\stEnvAssoc{\gtWithCrashedRoles{\rolesC}{\gtG}}{\stEnv; \qEnv}{\rolesR}
\label{eq:sf_global_orph_type_association}
\\
\vdash \procout\roleQ{\mpLab}{\mpE}{\mpP}:\stEnvApp{\stEnv}{\roleP}
\label{eq:sf_global_orph_type_out}
\\
 \vdash \mpH[\roleP]: \stEnvApp{\qEnv}{-, \roleP}
 \label{eq:sf_global_orph_type_out_Q}
 \\
 \vdash \mpCrash: \stEnvApp{\stEnv}{\roleQ}
 \label{eq:sf_global_orph_type_out_another}
 \\
  \vdash \mpQUnavail: \stEnvApp{\qEnv}{-, \roleQ}
 \label{eq:sf_global_orph_type_out_Q_another}
 \\
 \forall i \in I: \,\,\vdash \mpP_i:\stEnvApp{\stEnv}{\roleP[i]}
 \label{eq:sf_global_orph_M1_type}
 \\
 \forall i \in I:  \,\,\vdash \mpH[i]: \stEnvApp{\qEnv}{-, \roleP[i]}
  \label{eq:sf_global_orph_M1_type_Q}
\end{gather}
By~\eqref{eq:sf_global_orph_type_out} and 3 of~\autoref{lem:typeinversion}, we have that
\begin{gather}
\stEnvApp{\stEnv}{\roleP} = \stOut{\roleQ}{\mpLab}{\tyGround}\stSeq{\stT}
\label{eq:sf_global_orph_type_configuration}
\\
\stT[1] \stSub \stT
\label{eq:sf_global_orph_subtyping}
\\
 \vdash \mpP : \stT[1]
 \label{eq:sf_global_orph_type_configuration_2}
\end{gather}
From~\eqref{eq:sf_global_orph_cond_3}, by~\autoref{thm:gtype:proj-sound}, there are $\stEnvi; \qEnvi$ and
$\stEnvAnnotGenericSym =
\stEnvOutAnnot{\roleP}{\roleQ}{\stChoice{\stLab[k]}{\tyGround[k]}}$
 such that
\begin{gather}
  k \in I
  \label{eq:sf_global_orph_M1_type_index}
  \\
\stEnvAssoc{ \gtWithCrashedRoles{\rolesC}{
     \gtGi[j]
}}{\stEnvi; \qEnvi}{\rolesR}
\label{eq:sf_global_orph_type_association_new}
\\
\stEnv; \qEnv
      \stEnvMoveOutAnnot{\roleP}{\roleQ}{\stChoice{\stLab[k]}{\tyGround[k]}}
      \stEnvi; \qEnvi
\label{eq:sf_global_orph_ctx_red}
\end{gather}
Using \eqref{eq:sf_global_orph_type_configuration}, \eqref{eq:sf_global_orph_M1_type_index}, and \inferrule{\iruleTCtxOut} in~\Cref{fig:gtype:tc-red-rules}, we get
\begin{gather}
\stEnvi = \stEnvUpd{\stEnv}{\roleP}{\stT}
\label{eq:sf_global_orph_new_context}
\\
\qEnvi = \stEnvUpd{\qEnv}{\roleP, \roleQ}{
        \stQCons{
          \stEnvApp{\qEnv}{\roleP, \roleQ}
        }{
          \stQMsg{\stLab}{\tyGround}
        }
        }
\label{eq:sf_global_orph_new_Q}
\end{gather}
It follows that
 \begin{gather}
 \vdash \mpP : \stEnvApp{\stEnvi}{\roleP} \quad \quad (\text{by}~\eqref{eq:sf_global_orph_new_context}, \eqref{eq:sf_global_orph_subtyping}, \eqref{eq:sf_global_orph_type_configuration_2}\text{ and }\inferrule{{t-sub}})
 \\
  \vdash \mpH[\roleP] : \stEnvApp{\qEnvi}{-, \roleP} \quad  (\text{by}~\eqref{eq:sf_global_orph_new_Q}\text{ and }\eqref{eq:sf_global_orph_type_out_Q})
  \\
  \vdash \mpCrash : \stEnvApp{\stEnvi}{\roleQ} \quad \quad (\text{by}~\eqref{eq:sf_global_orph_type_out_another}\text{ and }\eqref{eq:sf_global_orph_new_context})
  \\
  \vdash \mpQUnavail : \stEnvApp{\qEnvi}{-, \roleQ} \quad  (\text{by}~\eqref{eq:sf_global_orph_new_Q}\text{ and }\eqref{eq:sf_global_orph_type_out_Q_another})
  \\
   \forall i \in I: \,\,\vdash \mpP_i:\stEnvApp{\stEnvi}{\roleP[i]} \quad (\text{by}~\eqref{eq:sf_global_orph_M1_type}\text{ and }\eqref{eq:sf_global_orph_new_context})
 \\
 \forall i \in I:  \,\,\vdash \mpH[i]: \stEnvApp{\qEnvi}{-, \roleP[i]} \quad (\text{by}~\eqref{eq:sf_global_orph_M1_type_Q}\text{ and }\eqref{eq:sf_global_orph_new_Q})
 \end{gather}
Therefore, by \inferrule{r-send-\mpCrash} and \inferrule{{t-sess}}, we can conclude that there exists $\mpMi = \mpPart\roleP{\mpP}
\mpPar
\mpPart\roleP{\mpH[\roleP]}
\mpPar
\mpPart\roleQ{\mpCrash}
\mpPar
\mpPart{\roleQ}{\mpQUnavail}
\mpPar
\mpM[1]$ and $\gtWithCrashedRoles{\rolesCi}{\gtGi}
   =
    \gtWithCrashedRoles{\rolesC}{
    \gtGi[j]
    }$ such that
   $ \gtWithCrashedRoles{\rolesC}{\gtG}
    \gtMove[
      \stEnvOutAnnotSmall{\roleP}{\roleQ}{\stChoice{\gtLab[j]}{\tyGround[j]}}
    ]{
      \rolesR
    }
   \gtWithCrashedRoles{\rolesCi}{\gtGi}$,
   $\mpM \;\redSend{\roleP}{\roleQ}{\mpLab}\;\mpMi$, and
   $ \gtWithCrashedRoles{\rolesCi}{\gtGi} \vdash \mpMi$.

\item Case \inferrule{\iruleGtMoveCrDe}: by inversion of \inferrule{\iruleGtMoveCrDe}, we have
\begin{gather}
j \in I
 \label{eq:sf_global_det_cond_1}
 \\
\gtLab[j] = \gtCrashLab
  \label{eq:sf_global_det_cond_2}
  \\
 \gtWithCrashedRoles{\rolesC}{
      \gtCommTransit{\roleQCrashed}{\roleP}{i \in I}{\gtLab[i]}{\tyGround[i]}{\gtGi[i]}{j}
    }
    \gtMove[\ltsCrDe{\mpS}{\roleP}{\roleQ}]{\rolesR}
    \gtWithCrashedRoles{\rolesC}{\gtGi[j]}
\label{eq:sf_global_det_cond_3}
  \end{gather}
 We can assume that $\mpM$ is of the form
 \begin{gather}
 \mpPart\roleP{\sum_{i\in I} \procin\roleQ{\mpLab_i(\mpx_i)}\mpP_i}
\mpPar
\mpPart\roleP{\mpH[\roleP]}
\mpPar
\mpPart\roleQ\mpCrash
\mpPar
\mpPart\roleQ\mpQUnavail
\mpPar
\mpM[1]
\label{eq:sf_global_det_M_form}
\\
\mpM[1] =  \prod_{i\in I} (\mpPart{\roleP[i]}{\mpP[i]} \mpPar
  \mpPart{\roleP[i]}{\mpH[i]})
\label{eq:sf_global_det_M1_form}
\\
k \in I
\label{eq:sf_global_det_M_form_index}
\\
 \mpLab[k] = \mpCrashLab
 \label{eq:sf_global_det_M_form_crash}
 \\
 \nexists \mpLab, \mpV: (\roleQ,
\mpLab(\mpV)) \in \mpH[\roleP]
\label{eq:sf_global_det_M_form_empty_Q}
\end{gather}
Then by~\eqref{eq:sf_ass_global_typing} and~\autoref{lem:typeinversion}, there
exists $\stEnv; \qEnv$ such that
\begin{gather}
\stEnvAssoc{\gtWithCrashedRoles{\rolesC}{\gtG}}{\stEnv; \qEnv}{\rolesR}
\label{eq:sf_global_det_type_association}
\\
\vdash \sum_{i\in I}\procin{\roleQ}{\mpLab_i(\mpx_i)}{\mpP_i}:\stEnvApp{\stEnv}{\roleP}
\label{eq:sf_global_det_type_out}
\\
 \vdash \mpH[\roleP]: \stEnvApp{\qEnv}{-, \roleP}
 \label{eq:sf_global_det_type_out_Q}
 \\
 \vdash \mpCrash: \stEnvApp{\stEnv}{\roleQ}
 \label{eq:sf_global_det_type_out_another}
 \\
  \vdash \mpQUnavail: \stEnvApp{\qEnv}{-, \roleQ}
 \label{eq:sf_global_det_type_out_Q_another}
 \\
 \forall i \in I: \,\,\vdash \mpP_i:\stEnvApp{\stEnv}{\roleP[i]}
 \label{eq:sf_global_det_M1_type}
 \\
 \forall i \in I:  \,\,\vdash \mpH[i]: \stEnvApp{\qEnv}{-, \roleP[i]}
  \label{eq:sf_global_det_M1_type_Q}
\end{gather}
It follows directly that
\begin{gather}
\stEnvApp{\stEnv}{\roleQ} = \stStop
\label{eq:sf_global_det_stop}
\\
 \stEnvApp{\qEnv}{-, \roleQ} = \stQUnavail
 \label{eq:sf_global_det_unavail}
\end{gather}

By~\eqref{eq:sf_global_det_type_out}, 4 of~\autoref{lem:typeinversion}, and \inferrule{\iruleStSubIn}, we have that
\begin{gather}
\stEnvApp{\stEnv}{\roleP} = \stExtSum{\roleQ}{i \in J}{\stChoice{\stLab[i]}{\tyGround[i]} \stSeq \stTi[i]}
\label{eq:sf_global_det_type_configuration}
\\
J \subseteq I
\label{eq:sf_global_det_subset}
\\
\forall i \in J:  \stT[i] \stSub \stTi[i]
\label{eq:sf_global_det_subtyping}
\\
 \setcomp{\stLab[l]}{l \in I} \neq \setenum{\stCrashLab}
 \label{eq:sf_global_det_not_crash}
\\
 \nexists j \in I \setminus J: \stLab[j] = \stCrashLab
 \label{eq:sf_global_det_not_crash_2}
 \\
 \forall i \in I: x_i : \tyGround[i] \vdash \mpP[i] : \stT[i]
 \label{eq:sf_global_det_type_configuration_2}
\end{gather}
From \eqref{eq:sf_global_det_not_crash_2} and \eqref{eq:sf_global_det_M_form_crash}, we get
\begin{align}
   k \in J
   \label{eq:sf_global_det_k_in_J_1}
\end{align}
By \eqref{eq:sf_global_det_type_out_Q} and \eqref{eq:sf_global_det_M_form_empty_Q}, we also know
\begin{align}
 \stEnvApp{\qEnv}{\roleQ, \roleP} = \stQEmpty
 \label{eq:sf_global_det_empty_Q}
\end{align}

From~\eqref{eq:sf_global_det_cond_3}, by~\autoref{thm:gtype:proj-sound}, there is $\stEnvi; \qEnvi$ such that
\begin{gather}
\stEnvAssoc{ \gtWithCrashedRoles{\rolesC}{
     \gtGi[j]
}}{\stEnvi; \qEnvi}{\rolesR}
\label{eq:sf_global_det_type_association_new}
\\
\stEnv; \qEnv
      \stEnvMoveAnnot{\ltsCrDe{\mpS}{\roleP}{\roleQ}}
      \stEnvi; \qEnvi
\label{eq:sf_global_det_ctx_red}
\end{gather}

Using \eqref{eq:sf_global_det_type_configuration}, \eqref{eq:sf_global_det_stop}, \eqref{eq:sf_global_det_k_in_J_1},
\eqref{eq:sf_global_det_M_form_crash}, \eqref{eq:sf_global_det_empty_Q}, and \inferrule{\iruleTCtxCrashDetect} in \Cref{fig:gtype:tc-red-rules}, we get
\begin{gather}
\stEnvi =  \stEnvUpd{\stEnv}{\roleP}{\stTi[k]}
\label{eq:sf_global_det_new_context}
\\
\qEnvi = \qEnv
\label{eq:sf_global_det_new_Q}
\end{gather}
It follows that
 \begin{gather}
  \vdash \mpP[k] : \stEnvApp{\stEnvi}{\roleP} \quad \quad (\text{by}~\eqref{eq:sf_global_det_new_context}, \eqref{eq:sf_global_det_subtyping}, \eqref{eq:sf_global_det_type_configuration_2}\text{ and }\inferrule{{t-sub}})
 \\
  \vdash \mpH[\roleP] : \stEnvApp{\qEnvi}{-, \roleP} \quad  (\text{by}~\eqref{eq:sf_global_det_new_Q}\text{ and }\eqref{eq:sf_global_det_type_out_Q})
  \\
  \vdash \mpCrash : \stEnvApp{\stEnvi}{\roleQ} \quad \quad (\text{by}~\eqref{eq:sf_global_det_type_out_another},\eqref{eq:sf_global_det_stop}\text{ and }\eqref{eq:sf_global_det_new_context})
  \\
  \vdash \mpQUnavail: \stEnvApp{\qEnvi}{-, \roleQ} \quad  (\text{by}~\eqref{eq:sf_global_det_new_Q}, \eqref{eq:sf_global_det_type_out_Q_another}\text{ and }\eqref{eq:sf_global_det_unavail})
  \\
   \forall i \in I: \,\,\vdash \mpP_i:\stEnvApp{\stEnvi}{\roleP[i]} \quad (\text{by}~\eqref{eq:sf_global_det_M1_type}\text{ and }\eqref{eq:sf_global_det_new_context})
 \\
 \forall i \in I:  \,\,\vdash \mpH[i]: \stEnvApp{\qEnvi}{-, \roleP[i]} \quad (\text{by}~\eqref{eq:sf_global_det_M1_type_Q}\text{ and }\eqref{eq:sf_global_det_new_Q})
 \end{gather}
Therefore, by \inferrule{r-rcv-$\odot$} and \inferrule{{t-sess}}, we can conclude that there exists $\mpMi = \mpPart\roleP \mpP_k
\mpPar
\mpPart\roleP{\mpH_{\roleP}}
\mpPar
\mpPart\roleQ\mpCrash
\mpPar
\mpPart\roleQ\mpQUnavail
\mpPar
\mpM[1]$ and $\gtWithCrashedRoles{\rolesCi}{\gtGi}
   =
    \gtWithCrashedRoles{\rolesC}{
    \gtGi[j]
    }$ such that
   $ \gtWithCrashedRoles{\rolesC}{\gtG}
   \gtMove[\ltsCrDe{\mpS}{\roleP}{\roleQ}]{\rolesR}
      \gtWithCrashedRoles{\rolesCi}{\gtGi}$,
   $\mpM \;\redRecv{\roleP}{\roleQ}{\mpLab_k}\;\mpMi$, and
   $ \gtWithCrashedRoles{\rolesCi}{\gtGi} \vdash \mpMi$.
   
   \item Case \inferrule{\iruleGtMoveRec}:  follows by~\autoref{lem:recursion-unfolding-typing-session} and inductive hypothesis. 
   
   \item Cases~\inferrule{\iruleGtMoveCtx} and~\inferrule{\iruleGtMoveCtxi}: these two cases do not need to be considered, 
   as the reduction $\gtWithCrashedRoles{\rolesC}{\gtG} \gtMove{\rolesR}$ can always be triggered by the  preceding cases.    
   \qedhere
   \end{itemize}
\end{proof}

\lemSessionDF*
\begin{proof}

Assume that $\gtWithCrashedRoles{\rolesC}{\gtG} \vdash \mpM$ and $\mpM \mpMoveStar[\rolesR] \mpMi \mpNotMove[\rolesR]$, we need to prove that either 
$\mpMi \Rrightarrow \mpPart\roleP\mpNil \mpPar \mpPart\roleP \mpQEmpty$ for some $\roleP$, or $\mpMi \Rrightarrow \prod_{i\in I} (\mpPart{\roleP[i]}{\mpCrash} \mpPar \mpPart{\roleP[i]}{\mpQUnavail})$ with $I \neq \emptyset$.

By applying~\autoref{lem:sr} (subject reduction) repeatedly as needed, there exists a global type 
$\gtWithCrashedRoles{\rolesCi}{\gtGi}$ such that 
$\gtWithCrashedRoles{\rolesC}{\gtG} \!\gtMoveStar[\rolesR]\!
\gtWithCrashedRoles{\rolesCi}{\gtGi}$ and $\gtWithCrashedRoles{\rolesCi}{\gtGi} \vdash \mpMi$. 
Since no further reductions are possible for 
$\mpMi$, by~\autoref{lem:sf} (session fidelity),  
the global type 
$\gtWithCrashedRoles{\rolesCi}{\gtGi}$ cannot be reduced further either, implying that 
$\gtWithCrashedRoles{\rolesCi}{\gtGi}$ must be of the form 
$\gtWithCrashedRoles{\rolesCi}{\stEnd}$. 

Furthermore, using~\inferrule{t-sess} and the proof for~\autoref{lem:proj:df}, we can conclude that 
$\mpMi$ is of the form  $\prod_{i\in I} (\mpPart{\roleP[i]}{\mpPi[i]} \mpPar
  \mpPart{\roleP[i]}{\mpHi[i]})$ 
  such that $\forall i\in I: \, 
\vdash \mpPi[i]:\stEnvApp{\stEnvi}{\roleP[i]}$, 
$\vdash \mpHi[i]: \stEnvApp{\qEnvi}{-, \roleP[i]}$, %
  where $\stEnvi =  \stEnvi[\stEnd] \stEnvComp \stEnvi[\stStopSym]$, with 
$\forall \roleP \in \dom{\stEnvi[\stEnd]}: \stEnvApp{\stEnvi}{\roleP} = \stEnd$, 
$\forall \roleP \in \dom{\stEnvi[\stStopSym]}: \stEnvApp{\stEnvi}{\roleP} = \stStop$, and 
for any $\roleP, \roleQ$, if $\roleQ \in \dom{\stEnvi[\stStopSym]},
 \stEnvApp{\qEnvi}{\cdot, \roleQ} = \stQUnavail$, and otherwise, $\stEnvApp{\qEnvi}{\roleP, \roleQ} = \stQEmpty$. 

Then, by applying \inferrule{t-$\mpQEmpty$}, \inferrule{t-$\mpQUnavail$}, \inferrule{t-$\mpCrash$}, \inferrule{t-$\mpNil$}, 
and the precongruence rules 
$\mpPart\roleP\mpNil
\mpPar
\mpPart\roleP\mpQEmpty
\mpPar
\mpM
\Rrightarrow
\mpM$ and $\mpM \Rrightarrow \mpMi
\text{ and }
\mpMi \Rrightarrow \mpMii
\implies
\mpM
\Rrightarrow
\mpMii$, the thesis follows. 
\qedhere 
\end{proof}

\lemSessionLive*
\begin{proof}

The proof follows the same structure as~\autoref{lem:session_deadlock_free}, but instead uses the definition of live sessions~(\autoref{def:session_live}), the proof for~\autoref{lem:ext-proj-live}, and the relevant typing rules in~\Cref{fig:processes:typesystem}. 
\qedhere 
\end{proof}

\end{document}